\documentclass{article}
\PassOptionsToPackage{numbers}{natbib}
\usepackage[preprint]{neurips_2021}

\usepackage[utf8]{inputenc}

\usepackage[numbers]{natbib}
\usepackage{caption}

\usepackage{booktabs} 
\usepackage[ruled]{algorithm2e} 
\usepackage{pifont}
\usepackage{nicefrac}
\usepackage{color}
\usepackage{xcolor,colortbl}
\usepackage{wrapfig}
\usepackage{graphicx}
\usepackage{amsmath,amssymb,amsthm}
\usepackage{mathtools}
 
\usepackage{bm}
\usepackage{rotating}
\usepackage{multicol}
\usepackage{multirow}
\usepackage{titlesec}
\usepackage{latexsym}
\usepackage{enumitem}
\usepackage{hyperref}[backref=page]
\usepackage{float}
\hypersetup{
     colorlinks   = true,
     linkcolor    = red, 
     urlcolor     = blue, 
	 citecolor    = blue 
}

\newtheorem{thm}{Theorem}
\newtheorem{dfn}{Definition}

\newtheorem{lem}{Lemma}
\newtheorem{ex}{Example}

\newtheorem{prop}{Proposition}
\newtheorem{coro}{Corrollary}

\newtheorem{cond}{Condition}
\newtheorem{claim}{Claim}

\newcounter{newct}


\newcommand{\bv}{\begin{array}}

\newcommand{\appLem}[3]{\vspace{1mm}\noindent{\bf Lemma~\ref{#1}.} {\bf (#2). } {\em #3}}
\newcommand{\appThm}[3]{\vspace{1mm}\noindent{\bf Theorem~\ref{#1}.} {\bf (#2). }{\em #3 \vspace{1mm}}}


\newcommand{\ma}{\mathcal A}
\newcommand{\calH}{\mathcal H}

\newcommand{\vH}{{\vec H}}
\newcommand{\mm}{\mathcal M}

\newcommand{\ml}{\mathcal L}
\newcommand{\mV}{\mathcal V}

\newcommand{\calG}{\mathcal G}

\newcommand{\ra}{\rightarrow}

\newcommand{\cor}{{\overline r}} 

\newcommand{\rank}{\text{Rank}}
\newcommand{\sign}{\text{Sign}}
\newcommand{\signH}{\text{Sign}_{\vH}}
\newcommand{\signs}[1]{\text{Sign}_{#1}}

\newcommand{\others}{\text{others}}

\newcommand{\hist}{\text{Hist}}

\newcommand{\fs}{{\cal S}_{\vec H}}
\newcommand{\fsatomic}{ {\cal S}_{\vec H}^{\circ}}
\newcommand{\pfs}[1]{{\cal S}_{#1}}

\newcommand{\sk}{{\cal S}_{K}}

\newcommand{\Omit}[1]{}

\newcommand{\ba}{{\mathbf A}}

\newcommand{\pba}[1]{{\mathbf A}^{#1}}

\newcommand{\vbb}{{\vec{\mathbf b}}}
\newcommand{\pvbb}[1]{{\vec{\mathbf b}}^{#1}}

\newcommand{\md}[2]{{\dim_{#1,n}^{\max}}(#2)}

\newcommand{\vXp}{{\vec X}_{\vec\pi}}

\newcommand{\piuni}{{\pi}_{\text{uni}}}
\newcommand{\avg}[1]{ \text{avg}(#1)}

\newcommand{\conv}{\text{CH}}

\newcommand{\wmg}{\text{WMG}}

\newcommand{\umg}{\text{UMG}}

\newcommand{\pair}{\text{Pair}}
\newcommand{\scorediff}[1]{\text{Score}^{\Delta}_{#1}}
\newcommand{\wcw}{\text{WCW}}

\newcommand{\score}{\text{Score}}
\newcommand{\icopeland}{\overline{\text{Cd}}_\alpha}
\newcommand{\copeland}{\text{Cd}_\alpha}

\newcommand{\corbucklin}{\overline{\text{Bkln}}}
\newcommand{\plu}{\text{Plu}}
\newcommand{\iplu}{\overline{\text{Plu}}}

\newcommand{\maximin}{\text{MM}}
\newcommand{\imaximin}{\overline{\text{MM}}}
\newcommand{\rp}{\text{RP}}
\newcommand{\irp}{\overline{\text{RP}}}
\newcommand{\schulze}{\text{Sch}}
\newcommand{\ischulze}{\overline{\text{Sch}}}
\newcommand{\stv}{\text{STV}}
\newcommand{\istv}{\overline{\text{STV}}}

\newcommand{\eo}{\text{EO}}
\newcommand{\ties}{\text{Ties}}

\newcommand{\minscore}{\text{MS}}

\newcommand{\violation}[3]{\text{Vio}_{#1}^{#2}(#3)}

\newcommand{\Condorcet}[1]{{\text{Cond}_{#1}}}
\newcommand{\iCondorcet}[1]{\overline{\text{Cond}_{#1}}}

\newcommand{\noCW}{{\text{NCW}}}
\newcommand{\CWlose}{{\text{CWL}}}
\newcommand{\CWwin}{{\text{CWW}}}

\newcommand{\condition}[1]{{\text{C}_{#1}}}
\newcommand{\PU}[2]{{\text{PU}_{#1}(#2)}}
\newcommand{\PULR}[2]{{\text{LR}_{#1}(#2)}}

\newcommand{\rev}[1]{{\text{Rev}\left(#1\right)}}

\newcommand{\poly}{{\mathcal H}}

\newcommand{\polyz}{{\mathcal H}_{\leq 0}}

\newcommand{\ppoly}[1]{{\mathcal H}^{#1}}
\newcommand{\ppolyz}[1]{{\mathcal H}_{\leq 0}^{#1}}

\newcommand{\upoly}{{\mathcal C}}

\newcommand{\upolynint}{{\mathcal C}_n^{\mathbb Z}}

\newcommand{\upolyz}{{\mathcal C}_{\leq 0}}

\newcommand{\upolynoCW}{\pupoly{\noCW}}
\newcommand{\upolyznoCW}{\pupoly{\noCW,\le 0}}

\newcommand{\upolyCWwin}{\pupoly{\CWwin}}
\newcommand{\upolyzCWwin}{\pupoly{\CWwin,\le 0}}

\newcommand{\upolyCWlose}{\pupoly{\CWlose}}
\newcommand{\upolyzCWlose}{\pupoly{\CWlose,\le 0}}

\newcommand{\pupoly}[1]{{\mathcal C}_{#1}}

\newcommand{\aupoly}{{\mathcal C^*}}

\newcommand{\aupolyz}{{\mathcal C}_{\leq 0}^*}
\newcommand{\saupoly}{\hat {\mathcal C}}
\newcommand{\saupolyz}{\hat{\mathcal C}_{\leq 0}}

\newcommand{\cpoly}[1]{{\mathcal H}_{#1}}

\newcommand{\cpolynint}[1]{{\mathcal H}_{#1,n}^{\mathbb Z}}
\newcommand{\cpolyz}[1]{{\mathcal H}_{#1,\leq 0}}

\newcommand{\acpoly}[1]{\hat{\mathcal H}_{#1}}
\newcommand{\acpolyz}[1]{\hat {\mathcal H}_{#1,\leq 0}}

\newcommand{\region}[2]{{\mathcal R}_{#1}^{#2}}

\newcommand{\invert}[1]{\left(#1\right)^\top}

\newcommand{\closure}[1]{\text{Closure}(#1)}

\newcommand{\cwinner}{\text{CW}}
\newcommand{\wcwinner}{\text{WCW}}
\newcommand{\almostCW}{\text{ACW}}

\newcommand{\CC}{\text{\sc CC}}
\newcommand{\CL}{\text{CL}}
\newcommand{\Par}{\text{\sc Par}}

\newcommand{\satmin}[2]{\widetilde{#1}_{#2}^{\min}}
\newcommand{\sat}[1]{{#1}}



\title{The Smoothed Satisfaction of Voting Axioms
}

\author{Lirong Xia\\ RPI \\ xialirong@gmail.com} 

\begin{document}
\maketitle

\begin{abstract}
We initiate the work towards a comprehensive picture of the smoothed satisfaction of  voting axioms, to provide a finer and more realistic foundation for comparing voting rules. We adopt the {\em smoothed social choice framework}~\citep{Xia2020:The-Smoothed},  where an adversary chooses arbitrarily correlated ``ground truth'' preferences for the agents, on top of which random noises are added. We  focus on characterizing the smoothed satisfaction of two well-studied voting axioms:  {\em Condorcet criterion} and {\em participation}. We prove that  for any fixed number of alternatives, when the number of voters $n$ is sufficiently large, the smoothed satisfaction of the Condorcet criterion under a wide range of voting rules is $1$, $1-\exp(-\Theta(n))$, $\Theta(n^{-0.5})$, $ \exp(-\Theta(n))$, or being $\Theta(1)$ and $1-\Theta(1)$ at the same time; and the smoothed satisfaction of participation is  $1-\Theta(n^{-0.5})$.  
Our results  address  open questions by~\citet{Berg1994:On-probability} in 1994, and also  confirm the following high-level message: the Condorcet criterion is a bigger concern than participation under realistic models. 
\end{abstract}

\section{Introduction}
The {\em ``widespread presence of impossibility results''}~\citep{Sen1999:The-Possibility} is one of the most  fundamental and significant challenges in social choice theory. These impossibility results often assert that no ``perfect'' voting rule exists for three or more alternatives~\citep{Arrow63:Social,Gibbard73:Manipulation,Satterthwaite75:Strategy}. Nevertheless, an (imperfect) voting rule must be designed and used in practice for agents to make a collective decision. 
In the social choice literature, the dominant paradigm of doing so has been the {\em axiomatic approach}, i.e., voting rules are designed, evaluated, and compared to each other w.r.t.~their satisfaction of desirable normative properties, known as {\em (voting) axioms}. 

Most definitions of dissatisfaction of voting axioms are based on  worst-case analysis. For example, a voting rule $r$ does not satisfy {\sc Condorcet Criterion} ($\CC$ for short), if there exists a collection of votes, called a profile, where the {\em Condorcet winner} exists but is not chosen by $r$ as a winner. The Condorcet winner is the alternative who beats all other alternatives in their head-to-head competitions. As another example, a voting rule $r$ does not satisfy {\sc Participation} ($\Par$ for short), if there exist a profile and a voter who has incentive to abstain from voting. An instance of  dissatisfaction of  $\Par$ is also  known as the {\em no-show paradox}~\citep{Fishburn1983:Paradoxes}. Unfortunately,  when the number of alternatives $m$ is at least four, no irresolute voting rule satisfies  $\CC$ and  $\Par$ simultaneously~\citep{Moulin1988:Condorcets}. 

While the  classical worst-case analysis of (dis)satisfaction of axioms can be  desirable in high-stakes applications such as political elections, it is often too coarse to serve as practical criteria for comparing different voting rules in more frequent, low-stakes applications of social choice, such as business decision-making~\citep{Bhattacharjya14:Bayesian}, crowdsourcing~\citep{Mao13:Better},  informational retrieval~\citep{Liu11:Learning}, meta-search engines~\citep{Dwork01:Rank}, recommender systems~\citep{Wang2016:Ranking-Oriented}, etc.  A decision maker who desires  both axioms would find it hard to choose between a voting rule that satisfies $\CC$ but not $\Par$, such as Copeland, and a voting rule that satisfies $\Par$ but not $\CC$, such as plurality. A finer and more quantitative measure of satisfaction of axioms is therefore called for.

One natural and classical approach is to measure the likelihood of  satisfaction of  axioms under a probabilistic model of agents' preferences, in particular  the independent and identically distributed (i.i.d.) uniform distribution over all rankings, known as {\em Impartial Culture (IC)} in social choice. This line of research was initiated and established by~\citeauthor{Gehrlein1976:The-probability} in a series of work in the 1970's~\citep{Gehrlein1976:The-probability,Gehrlein1978:Probabilities,Gehrlein1978:Coincidence}, and has become a ``{\em new sub-domain of the theory of social choice}''~\citep{Diss2021:Evaluating}. Some classical results were summarized in the 2011 book by~\citet{Gehrlein2011:Voting}, and  recent progresses can be found in the  2021 book edited by~\citet{Diss2021:Evaluating}.

While this line of work is highly significant and interesting from a theoretical point of view, its practical implications may not be as strong, because most previous work focused on a few specific distributions, especially IC, which has been widely criticized to be unrealistic~(see, e.g., \citep[p.~30]{Nurmi1999:Voting}, \citep[p.~104]{Gehrlein2006:Condorcets}, and~\citep{Lehtinen2007:Unrealistic}). 
Indeed, conclusions drawn under any specific distribution may not hold in practice, as ``{\em all models are wrong}''~\citep{Box1979:Robustness}. 
Technically, characterizing the likelihood of satisfaction of $\CC$ and of $\Par$ are already highly challenging w.r.t.~IC, and despite that~\citet{Berg1994:On-probability} explicitly posed them as open questions in 1994, not much is known beyond a few voting rules. 
Therefore, the following  question largely remains open.

\vspace{2mm}
{\hfill   \textbf{How likely are voting axioms satisfied under realistic models?} \hfill}
\vspace{2mm}

The importance of successfully answering this question is two-fold. First,   it tells us whether the worst-case violation of an axiom   is a significant concern in practice. Second,   it  provides a finer and more quantitative foundation for comparing   voting rules. 

We believe that the {\em smoothed analysis} proposed by~\citet{Spielman2004:Smoothed} provides a promising framework for addressing the question. In this paper, we  adopt the 
{\em smoothed social choice} framework by~\citeauthor{Xia2020:The-Smoothed},  which models the satisfaction of a {\em per-profile} voting axiom $X$  by a function $\sat{X}(r,P)\in\{0,1\}$, where $r$ is a voting rule and $P$ is a profile, such that $r$ satisfies $X$ if $\min_{P}\sat{X}(r,P) = 1$.  Let $\Pi$ denote a set of distributions over all rankings over the $m$ alternatives (denoted by $\ml(\ma)$), which represents the ``ground truth'' preferences for a single agent that the adversary can choose from. Let $n$ denote the number of agents.  Because a higher value of $\sat{X}(r,P)$ is more desirable to the decision maker, the adversary aims at minimizing  expected $\sat{X}(r,P)$ by  choosing $\vec \pi\in\Pi^n$---the  profile $P$ is generated from $\vec \pi$. The smoothed satisfaction of $X$ under $r$ with $n$ agents, denoted by $\satmin{X}{\Pi}(r,n)$,  is defined as follows~\citep{Xia2020:The-Smoothed}:
\begin{equation}
\label{dfn:s-sat}
\satmin{X}{\Pi}(r,n) = \inf\nolimits_{\vec\pi\in\Pi^n}\Pr\nolimits_{P\sim\vec\pi}\sat{X}(r,P)
\end{equation}
Notice that agents' ground truth preferences can be arbitrarily correlated, while the noises are independent, which is a standard assumption in the literature and in practice~\citep{Xia2020:The-Smoothed}.

\begin{ex}[\bf\boldmath Smoothed $\CC$ under plurality]
\label{ex:sCC-plu}
Let $X=\CC$ and $r=\iplu$ denote the irresolute plurality rule, which chooses all alternatives that are ranked at the top most often as the (co-)winners. Suppose there are three alternatives, denoted by $\ma = \{1,2,3\}$, and suppose   $\Pi = \{\pi^1,\pi^2\}$, where $\pi^1$ and $\pi^2$ are distributions shown in Table~\ref{tab:sCC-plu}.\\
\begin{minipage}[t][][b]{0.5\textwidth}
Then, we have $\satmin{\CC}{\Pi}(\iplu,n) = \inf\nolimits_{\vec\pi\in\{\pi^1,\pi^2\}^n}\Pr\nolimits_{P\sim\vec\pi}\sat{\CC}(\iplu,P)$. When $n=2$, the adversary has four choices of $\vec \pi$, i.e., $\{(\pi^1,\pi^1),(\pi^1,\pi^2),(\pi^2,\pi^1),(\pi^2,\pi^2)\}$. 
\end{minipage}
\hfill
\begin{minipage}[t][][b]{0.45\textwidth}
\centering
\begin{tabular}{|@{\ }c@{\ }|c|c|c|c|c|c|c| }
\hline & \small $123$& \small $132$& \small $231$& \small $321$& \small $213$& \small $312$ \\

\hline $\pi^1$& $1/4$& $1/4$&$1/8$& $1/8$& $1/8$&$1/8$  \\

\hline $\pi^2$& $1/8$& $1/8$&$3/8$& $1/8$& $1/8$& $1/8$ \\

\hline
\end{tabular} 
\captionof{table}{\small $\Pi$ in Example~\ref{ex:sCC-plu}.\label{tab:sCC-plu}}
\end{minipage}

Each $\vec\pi$ leads to a distribution over the set of all  profiles of two agents, i.e., $\ml(\ma)^2$. We have $\satmin{\CC}{\Pi}(\iplu,2)=1$, because $\CC$ is satisfied at all profiles of two agents. As we will see later in Example~\ref{ex:thm-sCC-pos}, for all sufficiently large $n$,  $\satmin{\CC}{\Pi}(\iplu,n)= \exp(-\Theta(n))$.  
\end{ex}


\subsection{\bf Our Contributions}
We initiate the work towards a comprehensive picture of smoothed satisfaction of voting axioms under commonly-studied voting rules, by focusing on  $\CC$ and $\Par$ in this paper  due to their importance, popularity, and incompatibility~\citep{Moulin1988:Condorcets}.  
Recall that $m$ is the number of alternatives and  $n$ is the number of agents.   Our technical contributions are two-fold.  

\noindent{\bf First, smoothed satisfaction of $\CC$ (Theorem~\ref{thm:sCC-scoring} and~\ref{thm:sCC-MRSE}).} We prove that, under mild assumptions, for any fixed $m\ge 3$ and any sufficiently large $n$, the smoothed satisfaction of $\CC$ under a wide range of voting rules is $1$, $1-\exp(-\Theta(n))$, $\Theta(n^{-0.5})$, $ \exp(-\Theta(n))$, or being $\Theta(1)$ and $1-\Theta(1)$ at the same time (denoted by $\Theta(1)\wedge(1-\Theta(1))$).  The  $1-\exp(-\Theta(n))$ case is positive news, because it states that $\CC$ is satisfied almost surely when $n$ is large, regardless of the adversary's choice. The remaining three cases are negative news, because they state that the adversary can make $\CC$ to be violated with non-negligible probability, no matter how large $n$ is.

 \noindent{\bf Second, smoothed satisfaction of $\Par$ (Theorems~\ref{thm:sPar-mm-rp-sch},   \ref{thm:sPar-copeland}, \ref{thm:sPar-MRSE}, \ref{thm:sPar-Cond-Pos}).} We prove that, under mild assumptions, for any fixed $m\ge 3$ and any  sufficiently large $n$, the smoothed satisfaction of $\Par$ under a wide range of voting rules is $1-\Theta(n^{-0.5})$. These are positive news, because they state that $\Par$ is satisfied almost surely for large $n$, regardless of the adversary's choice. While this message  may not be surprising at a high level, as the probability for a single agent to change the winner vanishes as $n\ra\infty$,  the theorems are useful and non-trivial, as they provide asymptotically tight rates.

In particular, straightforward corollaries of our theorems to IC address  open questions posed by~\citet{Berg1994:On-probability} in 1994, and also provides a mathematical justification of two common beliefs related to $\Par$: first, IC exaggerates the likelihood for paradoxes to happen, and second, the dissatisfaction of $\Par$  is not a significant concern in practice~\citep{Lepelley2001:Scoring}, especially when it is compared to our results on smoothed $\CC$. Table~\ref{tab:summary} summarizes corollaries of our results under some commonly-studied voting rules w.r.t.~IC as well as the satisfaction of $\CC$ and $\Par$  on Preflib data~\citep{Mattei13:Preflib}.
\begin{table}[htp]
\centering
\caption{\small Satisfaction of $\CC$ and $\Par$ w.r.t.~IC and w.r.t.~315 Preflib profiles of linear orders under elections category. 
Experimental results  are presented in  Appendix~\ref{app:exp}. \label{tab:summary}}
{
\begin{tabular}{|@{\ }c@{\ }|  c @{\ } |  c@{\ } |@{\ }c@{\ } |@{\ }c@{\ }| c@{\ } |c |@{\ }c@{\ } |@{\ }c@{\ } |@{\ }c @{\ }|@{\ }c @{\ }|}
\cline{2-11}
\multicolumn{1}{@{}c@{}|}{}& \small  \bf Axiom &\small \bf  Plu. &\small \bf  Borda & \small\bf  Veto&\small\bf   STV &\small\bf  Black &\small\bf   MM&\small \bf  Sch.&\small\bf    RP&\small\bf    Copeland$_{0.5}$\\
\cline{2-11}
\hline
\multirow{ 2}{*}{\bf Theory} &  $\CC$  & \multicolumn{4}{@{}c@{}|}{   $\Theta(1)\wedge(1-\Theta(1))$} & \multicolumn{5}{@{  }c@{ }|}{  always satisfied}\\
\cline{2-11}
  & $\Par$ & \multicolumn{3}{@{}c@{}|}{ always satisfied} & \multicolumn{6}{@{}c@{}|}{  $1-\Theta\left(n^{-0.5}\right)$}\\
\hline
\hline
\multirow{ 2}{*}{\bf  Preflib} &$\CC$ &    96.8\%  &  92.4\% &  74.2\% & 99.7\% &  100\% &  100\%  &  100\% &  100\% &  100\% \\
\cline{2-11}
&$\Par$  & 100\% &  100\% &  100\% & 99.7\% & 99.4\% & 100\%  &  100\% &  100\% &  99.7\% \\
\hline
\end{tabular}
}
\end{table}

Table~\ref{tab:summary} provides a more quantitative way of comparing voting rules. Suppose the decision maker puts 50\% weight (or any fixed non-zero ratio) on both $\CC$ and $\Par$, and assume that the preferences are generated from IC. Then, when $n$ is sufficiently large, the last five voting rules in the table (that satisfy $\CC$) outperform  the first five voting rules in the table (the first four satisfies $\Par$).


 \noindent{\bf Beyond $\CC$ and $\Par$.} Theorems~\ref{thm:sCC-scoring}--\ref{thm:sPar-Cond-Pos}  are proved by (non-trivial) applications of a {\em categorization lemma} (Lemma~\ref{lem:categorization}), which characterizes smoothed satisfaction of a large class of axioms that can be represented by unions of finitely many polyhedra, including $\CC$ and $\Par$. We believe that Lemma~\ref{lem:categorization} is a promising tool for analyzing other axioms in future work. 


\subsection{Related Work and Discussions} 
\label{sec:related-work}
\noindent {\bf The  Condorcet criterion ($\CC$)} was proposed by~\citeauthor{Condorcet1785:Essai} in 1785~\citep{Condorcet1785:Essai}, has been one of the most classical and well-studied  axioms, and has {\em ``nearly universal acceptance''}~\cite[p.~46]{Saari1995:Basic}. $\CC$ is satisfied by many commonly-studied voting rules,  except positional scoring rules~\citep{Fishburn74:Paradoxes}  and multi-round-score-based elimination rules, such as STV.  
 Most previous work focused on characterizing the {\em Condorcet efficiency}, which is the probability for the Condorcet winner to win  conditioned on its existence~\citep{Fishburn1974:Simple,Fishburn1974:Aspects,Paris1975:Plurality,Gehrlein1978:Coincidence,Newenhizen1992:The-Borda}. 
 Beyond positional scoring rules, the study was mostly based  on computer simulations, see, e.g., \citep{Fishburn1976:An-analysis,Fishburn1977:An-analysis,Merrill1985:A-statistical,Nurmi1992:An-Assessment}. 
 
\noindent{\bf The participation axiom ($\Par$)}  was motivated by the {\em no-show paradox}~\citep{Fishburn1983:Paradoxes} and was proved to be incompatible with $\CC$ for every $m\ge 4$~\citep{Moulin1988:Condorcets}. The likelihood of $\Par$ under commonly studied voting rules w.r.t.~IC was posed as an open question by~\citet{Berg1994:On-probability} in 1994,  and has been investigated  in a series of works including~\citep{Lepelley1996:The-likelihood,Lepelley2001:Scoring,Wilson2007:Probability}, see~\cite[Chapter~4.2.2]{Gehrlein2011:Voting}. In particular, \citet{Lepelley2001:Scoring} analyzed the likelihood of various no-show paradoxes for three alternatives under {\em scoring runoff rules}, which includes STV, w.r.t.~IC and other distributions, and ``{\em strongly believe that the no-show paradox is not an important flaw of the scoring run-off voting systems}''. 

\noindent{\bf Our work vs.~previous work on $\CC$ and $\Par$.} Our results address open questions  by~\citet{Berg1994:On-probability} about the likelihood of satisfaction of $\CC$ and $\Par$ in two dimensions: first, we conduct smoothed analysis, which extends i.i.d.~models and is believed to be significantly more general and realistic. 
Second, our results cover a wide range of voting rules whose likelihood of satisfaction under $\CC$ or $\Par$ even w.r.t.~IC were not  mathematically characterized before, including $\CC$ under STV, and $\Par$ under maximin, Copeland, ranked pairs, Schulze, and Black's rule.   
While all results in this paper assume that the number of alternatives $m$ is fixed,  they are already more general than many previous work that focused on $m=3$.


\noindent{\bf Smoothed analysis.} There is a large body of literature on the applications of smoothed analysis to computational problems~\citep{Spielman2009:Smoothed}. Its main idea, i.e., the worst average-case analysis, has been proposed and investigated in other disciplines as well. For example, it is the central idea in frequentist statistics (as in the {\em frequentist expected loss} and {\em minimax decision rules}~\citep{Berger85:Statistical}) and is also closely related to the {\em min of means} criteria in decision theory~\citep{Gilboa1989:Maxmin}.  

Recently, \citet{Baumeister2020:Towards} and~\citet{Xia2020:The-Smoothed} independently proposed to conduct smoothed analysis in social choice. We adopt the framework in the latter work, though our motivation and goal  are quite different. We aim at providing a comprehensive picture of  smoothed  satisfaction of voting axioms, while \citep{Xia2020:The-Smoothed} focused on analyzing smoothed likelihood of Condorcet's voting paradox and the ANR impossibility on {\em anonymity} and {\em neutrality}. 
On the technical level, while  Lemma~\ref{lem:categorization} is a straightforward corollary of~\cite[Theorem~2]{Xia2021:How-Likely},  applications of results like Lemma~\ref{lem:categorization} can be highly non-trivial and problem dependent as commented in~\citep{Xia2021:How-Likely}, which is the case of this paper.  We believe that Lemma~\ref{lem:categorization}'s main merit is conceptual, as it provides a general categorization of smoothed satisfaction of a large class of per-profile axioms beyond $\CC$ and $\Par$ for future work.  

\section{Preliminaries}
\label{sec:prelim}
For any  $q\in\mathbb N$, we let $[q]=\{1,\ldots,q\}$. Let $\ma=[m]$ denote the set of $m\ge 3$ {\em alternatives}. Let $\ml(\ma)$ denote the set of all linear orders over $\ma$. Let $n\in\mathbb N$ denote the number of agents (voters). Each agent uses a linear order $R\in\ml(\ma)$ to represent his or her preferences, called a {\em vote}, where $a\succ_R b$ means that the agent prefers alternative $a$ to alternative $b$. The vector of $n$ agents' votes, denoted by $P$, is called a {\em (preference) profile}, sometimes called an $n$-profile. The set of $n$-profiles for all $n\in\mathbb N$ is denoted by $\ml(\ma)^* = \bigcup_{n =1}^{\infty} \ml(\ma)^n$.  A {\em fractional} profile is a   profile $P$ coupled with a possibly non-integer and/or negative weight vector $\vec \omega_P=(\omega_R:R\in P)\in{\mathbb R}^{n}$ for the votes in $P$. It follows that a non-fractional profile is a fractional profile with uniform weight, namely $\vec \omega_P = \vec 1$.  Sometimes  the weight vector is omitted when it is clear from the context or when $\vec\omega_P=\vec 1$. 

For any (fractional) profile $P$, let $\hist(P)\in {\mathbb R}_{\ge 0}^{m!}$ denote the anonymized profile of $P$, also called the {\em histogram} of $P$, which contains the total weight of every  linear order in $\ml(\ma)$ according to $P$.  An {\em irresolute voting rule} $\cor:\ml(\ma)^*\ra (2^{\ma}\setminus \{\emptyset\})$ maps a profile to a non-empty set of winners in $\ma$. A {\em resolute} voting rule $r$ is a special irresolute voting rule that always chooses a single alternative as the (unique) winner. 
We say that a  voting rule $r$ is a {\em refinement} of another voting rule $\cor$, if for every profile $P$, $r(P)\subseteq \cor(P)$.

\noindent{\bf (Un)weighted majority graphs and (weak) Condorcet winners.}  For any (fractional) profile $P$ and any pair of alternatives $a,b$, let $ P[a\succ b]$ denote the total weight of votes in $P$ where $a$ is preferred to $b$. Let $\wmg(P)$ denote the {\em weighted majority graph} of $P$, whose vertices are $\ma$ and whose weight on edge $a\ra b$ is $w_P(a,b) = P[a\succ b] - P[b\succ a]$. Let $\umg(P)$ denote the  {\em unweighted majority graph}, which is the unweighted directed graph that is obtained from  $\wmg(P)$ by keeping the edges with strictly positive weights. Sometimes a distribution $\pi$ over $\ml(\ma)$ is viewed as a fractional profile, where for each $R\in\ml(\ma)$ the weight on $R$ is $\pi(R)$. In such cases, we let $\wmg(\pi)$ denote the weighted majority graph of the {fractional} profile represented by $\pi$.  

The {\em Condorcet winner} of a profile $P$ is the alternative that only has outgoing edges in $\umg(P)$. A {\em weak Condorcet winner} is an alternative that does not have incoming edges in $\umg(P)$. Let $\cwinner(P)$ and $\wcw(P)$ denote the set of Condorcet winners and weak Condorcet winners in $P$, respectively. Notice that $\cwinner(P)\subseteq \wcw(P)$ and $|\cwinner(P)|\le 1$. The domain of $\cwinner(\cdot)$ and $\wcw(\cdot)$ can be naturally extended to all weighted or unweighted directed graphs.

For example, a distribution $\hat\pi$, $\wmg(\hat\pi)$, and $\umg(\hat\pi)$ for $m=3$ are illustrated in Figure~\ref{fig:ex-m3}. 
We have $\cwinner(\hat\pi) = \emptyset$ and $\wcw(\hat\pi) = \{1,2\}$. As another example, let $\piuni$ denote the uniform distribution over $\ml(\ma)$. Then, the weight on every edge in $\wmg(\piuni)$ is $0$ and $\umg(\piuni)$ does not contain any edge. 
\begin{figure}[htp]
\noindent\resizebox{1\textwidth}{!}
{
$\hat\pi = \left\{\begin{array}{ll}1\succ 2\succ 3 &\text{w.p. }1/4\\
2\succ 1\succ 3 &\text{w.p. }1/4\\
\text{other ranking} &\text{w.p. }1/8\\\end{array}\right.\Longrightarrow \wmg(\hat\pi) = $
\begin{minipage}{0.15\linewidth}
\includegraphics[width = \linewidth]{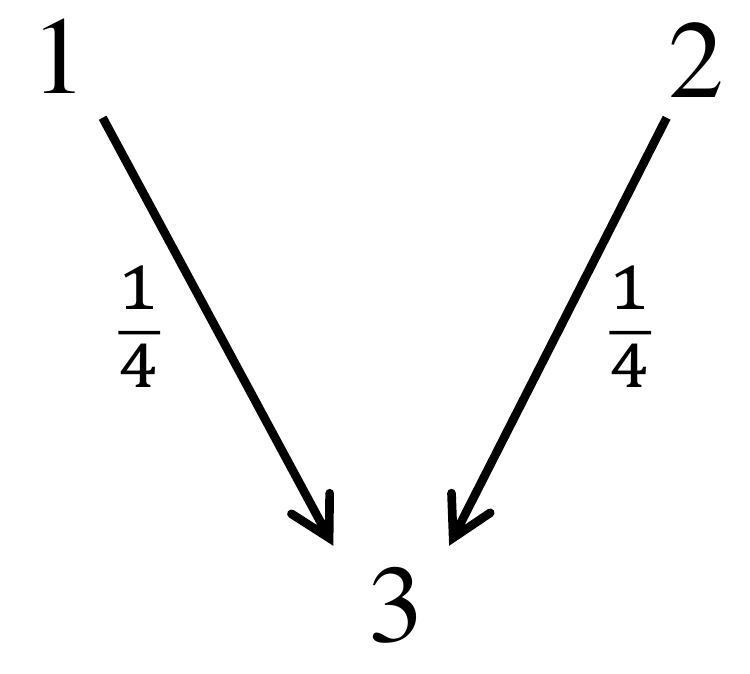}
\end{minipage}
$\Longrightarrow \umg(\hat\pi) =$
\begin{minipage}{0.15\linewidth}
\includegraphics[width = \linewidth]{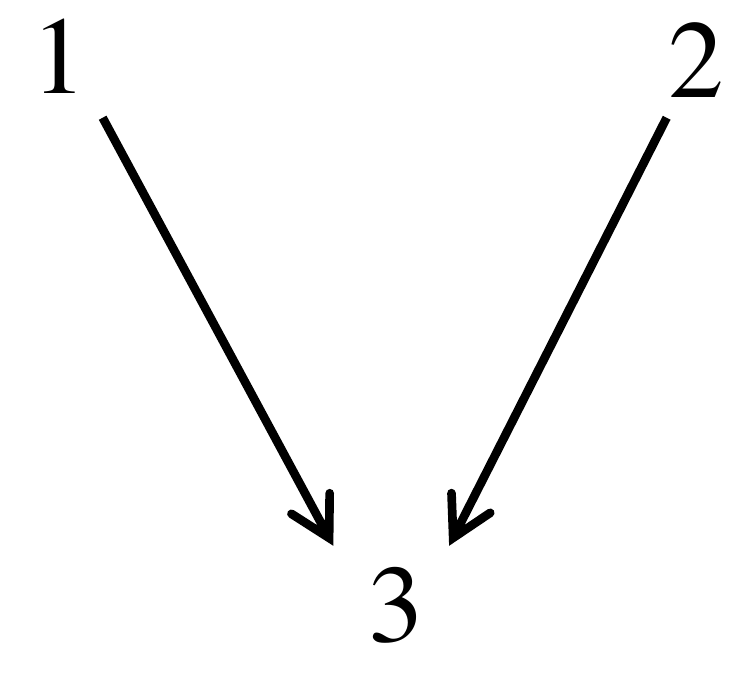}
\end{minipage}
} 
\caption{\small  $\hat\pi$, $\wmg(\hat\pi)$ (only positive edges are shown), and $\umg(\hat\pi)$.\label{fig:ex-m3}}
\end{figure}

Due to the space constraint, we  focus on presenting smoothed $\CC$ on positional scoring rules and MRSE rules in the main text, whose irresolute versions are defined below.  Their resolute versions can be obtained by applying a tie-breaking mechanism  on the co-winners. See Section~\ref{app:more-rule} for definitions of other rules studied in  Section~\ref{sec:CC} for $\Par$. 

\noindent{\bf Integer positional scoring rules.}  An {\em (integer) positional scoring rule}  $\cor_{\vec s}$  is characterized by an integer scoring vector $\vec s=(s_1,\ldots,s_m)\in{\mathbb Z}^m$ with $s_1\ge s_2\ge \cdots\ge s_m$ and $s_1>s_m$. For any alternative $a$ and any linear order $R\in\ml(\ma)$, we let $\vec s(R,a)=s_i$, where $i$ is the rank of $a$ in $R$. Given a profile $P$ with weights $\vec \omega_P$,  the  positional scoring rule $\cor_{\vec s}$ chooses all alternatives $a$ with maximum $\sum_{R\in P}\omega_R\cdot \vec s(R,a)$. For example, {\em plurality} uses the scoring vector $(1,0,\ldots,0)$, {\em Borda} uses the scoring vector $(m-1,m-2,\ldots,0)$, and {\em veto} uses the scoring vector $(1,\ldots,1,0)$. 



\noindent{\bf Multi-round score-based elimination (MRSE) rules.}  An irresolute MRSE  rule $\cor$ for $m$ alternatives is defined by a vector of  $m-1$ rules $(\cor_2,\ldots, \cor_{m})$, where for every $2\le i\le m$, $\cor_i$ is a  positional scoring  rule over $i$ alternatives that outputs a {\em total preorder} over them in the decreasing order of   their scores. Given a profile $P$, $\cor(P)$ is selected in $m-1$ rounds. For each $1\le i\le m-1$, in round $i$, a loser (an alternative with the lowest score) under $\cor_{m+1-i}$ is eliminated. 
We  use the {\em parallel-universes tie-breaking (PUT)}~\citep{Conitzer09:Preference} to select winners---an alternative $a$ is a winner if there is a way to break ties among the losers in each round, so that $a$ is the remaining alternative after $m-1$ rounds. If an MRSE rule $\cor$ only uses integer position scoring rules, then it is called an {\em int-MRSE rule}. Commonly studied int-MRSE rules include {\em STV}, which uses  plurality in each round, {\em Coombs}, which uses veto   in each round, and {\em Baldwin's rule}, which uses Borda in each round. 

\begin{ex}[\bf Irresolute STV]
\label{ex:PUTSTV}
 Figure~\ref{fig:PUfigure} illustrates the execution of irresolute STV, 
\begin{minipage}[t][][b]{0.25\textwidth}
denoted by $\istv$, under $\piuni$ (the uniform distribution) and $\hat\pi$ (the distribution in Figure~\ref{fig:ex-m3}), where each node represents the (tied) losers of the corresponding round, and each edge represents the loser to be eliminated. We have $\istv(\piuni) = \{1,2,3\}$ and $\istv(\hat\pi) = \{1,2\}$.
\end{minipage}
\ \ 
\begin{minipage}[t][][b]{0.7\textwidth}
\includegraphics[width =  \textwidth]{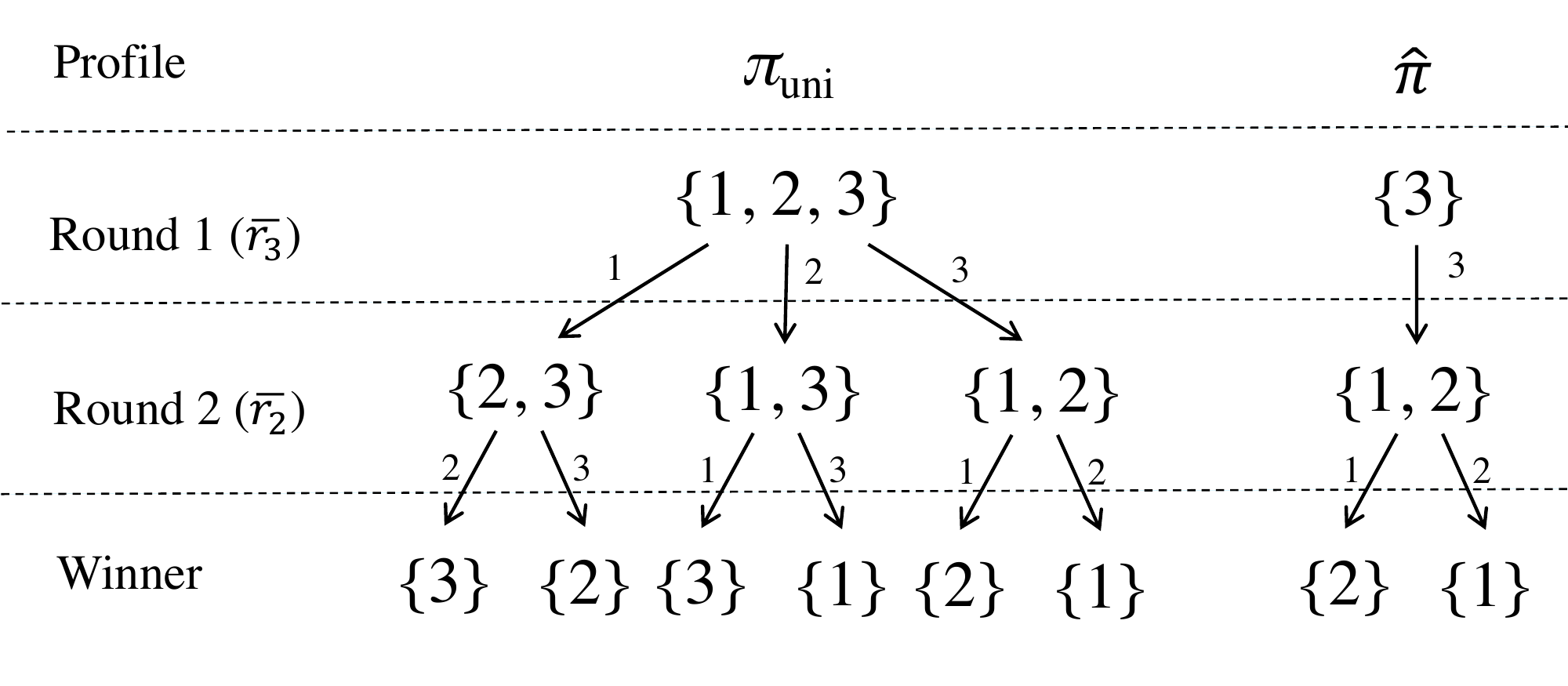}
\captionof{figure}{\small $\istv$ under $\piuni$ and $\hat\pi$ (defined in Figure~\ref{fig:ex-m3}).}
\label{fig:PUfigure}
\end{minipage}
\end{ex}

%


\noindent{\bf Axioms of voting.}  We focus on {\em per-profile axioms}~\citep{Xia2020:The-Smoothed} in this paper. A per-profile  axiom   is defined as a function $\sat{X}$ that maps a voting rule $\cor$ and a profile $P$ to $\{0,1\}$, where $0$ (respectively $1$)  means that $\cor$ dissatisfies/violates (respectively, satisfies) the axiom  at  $P$. Then, the classical (worst-case) satisfaction of the axiom under $\cor$ is defined to be $\min_{P\in \ml(\ma)^*} \sat{X}(\cor,P)$. 

For example, a (resolute or irresolute)  rule $\cor$ satisfies $\CC$, if  $\min_{P\in \ml(\ma)^*} \sat{\CC}(\cor,P)  =1$, where $\sat{\CC}(\cor,P)=1$ if and only if either (1) there is no Condorcet winner under $P$, or (2) the Condorcet winner is a co-winner of $P$ under $\cor$. A resolute  rule $r$ satisfies $\Par$, if $\min_{P\in \ml(\ma)^*} \sat{\Par}(r,P)  =1$, where 
$\sat{\Par}(r,P)=1$ if and only if no voter has incentive to abstain from voting. 
Formally, let $P=(R_1,\ldots,R_n)$, then  
$ [\sat{\Par}(r,P)=1]\Longleftrightarrow \left[\forall j\le n, r(P)\succeq_{R_j} r(P -R_j)\right],$ 
where $P -R_j$ is the $(n-1)$-profile that is obtained from $P$ by removing the $j$-th vote. 
For any pair of alternatives $a$ and $b$, we write $\{a\}\succeq _{R_j} \{b\}$ if and only if agent $j$, whose preferences are $R_j$, prefers $a$ to $b$. 
See Appendix~\ref{app:per-profile-axioms} for a  list of $13$ well-studied per-profile axioms and one non-per-profile axiom.


\noindent{\bf Smoothed satisfaction of axioms.}   Given a per-profile axiom $X$, a set $\Pi$ of distributions over $\ml(\ma)$, a voting rule $\cor$, and $n\in\mathbb N$, the {\em smoothed satisfaction of $X$} under $\cor$ with $n$ agents, denoted by $\satmin{X}{\Pi}(\cor,n)$, is defined in Equation~(\ref{dfn:s-sat}) in the Introduction. We note that the ``min'' in the superscript means that the adversary aims at minimizing the satisfaction of $X$. 
Formally, $\Pi$ is part of the single-agent preference model defined as follows. 


\begin{dfn}[\bf Single-Agent Preference Model~\citep{Xia2020:The-Smoothed}]A {\em single-agent preference model} is denoted by $\mm=(\Theta,\ml(\ma),\Pi)$, where $\Theta$ is the parameter space, $\ml(\ma)$ is the sample space, and $\Pi$ consists of distributions indexed by $\Theta$. $\mm$ is {\em strictly positive} if there exists $\epsilon>0$ such that the probability of any linear order under any distribution in $\Pi$ is at least $\epsilon$.  
$\mm$ is {\em closed} if $\Pi$ (which is a subset of the probability simplex in $\mathbb R^{m!}$) is a closed set in $\mathbb R^{m!}$. 
\end{dfn}
Example~\ref{ex:sCC-plu} illustrates a strictly positive and closed single-agent preference model for $m=3$, where $\Pi=\{\pi^1,\pi^2\}$ and $\epsilon = 1/8$. Other examples can be found in~\cite[Example~2 in the appendix]{Xia2020:The-Smoothed}. 




\section{The Smoothed Satisfaction of $\CC$ and $\Par$}
\label{sec:CC}


\noindent{\bf Smoothed $\CC$ under Integer Positional Scoring Rules.}
To present the results, we first define {\em almost Condorcet winners (ACW)} of a profile $P$, which are the two alternatives (whenever they exist) that are tied in the UMG and beat all other alternatives in head-to-head competitions. 
\begin{dfn}[\bf Almost Condorcet Winners]
\label{dfn:almost-Condorcet-winners}
For any unweighted directed graph $G$ over $\ma$, a pair of alternatives $a,b$ are {\em almost Condorcet winners (ACWs)}, denoted by $\almostCW(G)$, if (1) $a$ and $b$ are tied in $G$, and (2) for any other alternative $c\notin\{a,b\}$, $G$ has $a\ra c$ and $b\ra c$.  For any   profile $P$, let $\almostCW(P)=\almostCW(\umg(P))$.
\end{dfn}

For example, $1$ and $2$ are ACWs of $\hat\pi$ (as a fractional profile) in Figure~\ref{fig:ex-m3}.  By definition, for any profile $P$, $|\almostCW(P)|$ is either $0$ or $2$, and when it is $2$, $\wcwinner(P) = \almostCW(P)$. 

We now present a full characterization of  smoothed $\CC$ under integer positional scoring rules.

 

\begin{thm}[\bf \boldmath Smoothed $\CC$: Integer Positional Scoring Rules]
\label{thm:sCC-scoring}
For any fixed $m\ge 3$, let $\mm= (\Theta,\ml(\ma),\Pi)$ be a strictly positive and closed single-agent preference model, let $\cor_{\vec s}$ be an irresolute integer positional scoring rule, and let $r_{\vec s}$ be a refinement of $\cor_{\vec s}$. For any $n\ge 8m+49$ with $2\mid n$,  

\begin{centering}
{$\satmin{\CC}{\Pi}(r_{\vec s},n) = \left\{\begin{array}{@{}ll}
1- \exp(-\Theta(n)) &\text{if } \forall \pi\in\conv(\Pi),  |\wcw(\pi)|\times |\cor_{\vec s}(\pi)\cup \wcw(\pi)|\le 1\\
\Theta(n^{-0.5}) &\text{if}  
\begin{cases}
\text{(1) }\forall  \pi\in \conv(\Pi), \cwinner(\pi)\cap (\ma\setminus \cor_{\vec s}(\pi))=\emptyset \text{ and}\\
\text{(2) }  \exists \pi\in \conv(\Pi)\text{ s.t. } |\almostCW(\pi)\cap (\ma\setminus \cor_{\vec s}(\pi))|=2
\end{cases}
\\
\exp(-\Theta(n)) &\text{if }\exists \pi\in \conv(\Pi)\text{ s.t. } \cwinner(\pi)\cap (\ma\setminus \cor_{\vec s}(\pi))\ne \emptyset\\
\Theta(1)\wedge(1-\Theta(1)) &\text{otherwise}
\end{array}\right.$
}
\end{centering}

\noindent For any $n\ge 8m+49$ with $2\nmid n$, 

\begin{centering}
{$\satmin{\CC}{\Pi}(r_{\vec s},n) = \left\{\begin{array}{@{}ll}
1- \exp(-\Theta(n)) &\text{same as the }2\mid n\text{ case}\\
\exp(-\Theta(n)) &\text{if }\exists \pi\in \conv(\Pi)\text{ s.t.}
\begin{cases}
\text{(1) }\cwinner(\pi)\cap (\ma\setminus \cor_{\vec s}(\pi))\ne \emptyset \text{ or} \\
\text{(2) }  |\almostCW(\pi)\cap (\ma\setminus \cor_{\vec s}(\pi))|=2
\end{cases}
\\
\Theta(1)\wedge (1-\Theta(1)) &\text{otherwise}
\end{array}\right.$
}
\end{centering}
\end{thm}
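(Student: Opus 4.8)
The plan is to derive the entire four-way characterization from the categorization lemma (Lemma~\ref{lem:categorization}) applied to a polyhedral description of the event $\sat{\CC}(\cor_{\vec s},P)=1$, and then to match each order of magnitude the lemma can output with one of the stated conditions on $\conv(\Pi)$. First I would express the set $\calH$ of histograms at which $\CC$ is satisfied as a finite union of polyhedra: writing the majority margin of a histogram $\vec h$ as the linear functional $w_{\vec h}(a,b)=\sum_R h_R(\mathbf{1}[a\succ_R b]-\mathbf{1}[b\succ_R a])$, we have $\vec h\in\calH$ iff either $\vec h$ has no strict Condorcet winner --- which unfolds as $\bigcup_{g}\bigcap_{c\in\ma}\{\vec h: w_{\vec h}(c,g(c))\le 0\}$ over all maps $g:\ma\to\ma$ with $g(c)\ne c$ --- or some $c$ is a strict Condorcet winner with strictly maximum $\vec s$-score, i.e.\ $w_{\vec h}(c,b)>0$ for all $b\ne c$ and $\sum_R h_R\vec s(R,c)>\sum_R h_R\vec s(R,a)$ for all $a\ne c$. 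Both $\vec s$-scores and majority margins are linear in $\vec h$, so $\calH$ is a finite union of polyhedra, and $\{\sat{\CC}(r_{\vec s},\cdot)=1\}$ differs from $\calH$ only inside the lower-dimensional union of ``tied top score'' hyperplanes; in the lower-bound arguments below I will only use that ``no Condorcet winner'' forces $\sat{\CC}(r_{\vec s},\cdot)=1$ for any rule, and in the upper-bound arguments I will exhibit explicit adversary strategies on which passing to a refinement of $\cor_{\vec s}$ can only decrease satisfaction.

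Feeding $\calH$ to Lemma~\ref{lem:categorization} gives that, for all $n$ beyond an explicit threshold, $\satmin{\CC}{\Pi}(r_{\vec s},n)=\inf_{\vec\pi\in\Pi^n}\Pr_{P\sim\vec\pi}[\hist(P)\in\calH]$ equals exactly one of $1$, $1-\exp(-\Theta(n))$, $\Theta(n^{-k/2})$, $\exp(-\Theta(n))$, $0$, or $\Theta(1)\wedge(1-\Theta(1))$, the choice being governed by how $\conv(\Pi)$ sits relative to $\calH$ --- whether it meets the relative interior of $\calH$, only its boundary, or avoids its closure --- together with lattice conditions on the constraints active on that boundary. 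The only relevant lattice is generated by the majority margins: since each voter contributes $\pm 1$ to $w_P(a,b)=2P[a\succ b]-n$, one has $w_P(a,b)\equiv n\pmod{2}$, so the tie hyperplane $\{\vec h: w_{\vec h}(a,b)=0\}$ is reachable by $\hist(P)$ iff $2\mid n$; this is the sole origin of the parity split between the two displays. Two of the six categories are ruled out at once: $0$ cannot occur because strict positivity gives $\Pr[\text{no CW}]\ge\epsilon^n>0$ for every $\vec\pi$ (Condorcet cycles exist since $m\ge 3$), and $1$ cannot occur because for large $n$ one can build an $n$-profile with a Condorcet winner losing $\cor_{\vec s}$, which has probability at least $\epsilon^n$ under a suitable $\vec\pi$; the same estimates also furnish the matching $\exp(-\Theta(n))$ lower bound in the relevant case.

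It remains to identify the geometric trigger for each of the four surviving categories and to check it coincides with the stated condition. The $1-\exp(-\Theta(n))$ case is exactly ``$\conv(\Pi)$ disjoint from the closure of the complement of $\calH$''; that closure is $\bigcup_{a\ne c}\{\vec h: w_{\vec h}(c,b)\ge 0\ \forall b\ne c,\ \sum_R h_R\vec s(R,a)\ge\sum_R h_R\vec s(R,c)\}$, and a short computation shows $\pi$ avoids it iff $|\wcw(\pi)|=0$ or $\wcw(\pi)=\cor_{\vec s}(\pi)$ is a singleton, i.e.\ iff $|\wcw(\pi)|\times|\cor_{\vec s}(\pi)\cup\wcw(\pi)|\le 1$. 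The $\exp(-\Theta(n))$ case is ``$\conv(\Pi)$ meets a full-dimensional cell of the complement of $\calH$'', a cell which is nonempty iff some $\pi\in\conv(\Pi)$ has a strict Condorcet winner that is strictly out-scored, i.e.\ $\cwinner(\pi)\cap(\ma\setminus\cor_{\vec s}(\pi))\ne\emptyset$; when $2\nmid n$ it is additionally triggered by any $\pi$ that is an ACW configuration with both ACWs outside $\cor_{\vec s}(\pi)$, because then --- the separating tie hyperplane being unreachable --- one of the two ACWs is a strict Condorcet winner out-scored by a third alternative almost surely, which is clause~(2) of the odd display. The $\Theta(n^{-0.5})$ case (only for $2\mid n$) is clause~(1) --- no full-dimensional bad cell is met, so the adversary cannot force violation faster than at a polynomial rate --- together with clause~(2): concentrating $\vec\pi$ near a $\pi$ with $|\almostCW(\pi)\cap(\ma\setminus\cor_{\vec s}(\pi))|=2$ makes $\hist(P)$ fall outside $\calH$ exactly when the single margin $w_P(a,b)$ vanishes, an event of probability $\Theta(n^{-0.5})$ by the local central limit theorem (the variance is $\Theta(n)$ by strict positivity). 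Anything covered by none of these is the ``otherwise'' case, which Lemma~\ref{lem:categorization} forces to be $\Theta(1)\wedge(1-\Theta(1))$.

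The delicate step will be this last translation, and in particular the matching lower bound $\satmin{\CC}{\Pi}(r_{\vec s},n)=\Omega(n^{-0.5})$ in the $\Theta(n^{-0.5})$ case: it reduces to showing that under clause~(1), for \emph{every} $\vec\pi$ one has $\Pr[\text{no CW}]=\Omega(n^{-0.5})$, which I would prove by a case analysis on $\wcw(\bar\pi)$ for $\bar\pi=\tfrac{1}{n}\sum_i\pi_i$ --- if $|\wcw(\bar\pi)|\ne 1$, or its unique element is not a strict Condorcet winner of $\bar\pi$, the probability is already $\Theta(1)$; if the unique element is a strict Condorcet winner then by clause~(1) it wins $\cor_{\vec s}$ and $\CC$ still holds with probability $\Theta(1)$ --- the local CLT being invoked only in the remaining single-tie-hyperplane sub-case to pin the rate at $\Theta(n^{-0.5})$. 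The rest is bookkeeping: confirming the four conditions are exhaustive and pairwise exclusive, propagating the parity distinction to obtain the two separate displays, verifying that the only non-trivial polynomial rate that can arise is $n^{-0.5}$ (there is never more than one active majority-margin hyperplane), and tracing the explicit threshold $n\ge 8m+49$ through the hypotheses of Lemma~\ref{lem:categorization} and the concentration and anti-concentration estimates used above.
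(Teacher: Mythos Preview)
Your plan is essentially the paper's: represent $\{\sat{\CC}=1\}$ as a finite union of polyhedra, feed it to the categorization lemma, and read off the four regimes from how $\conv(\Pi)$ meets the closure of the ``Condorcet winner loses'' region. The paper does exactly this, but factors the argument through an intermediate result (Lemma~\ref{lem:sCC-GISR}) valid for all minimally continuous int-GISRs, phrased in terms of the abstract conditions $\condition{\text{RS}},\condition{\text{RD}},\condition{\text{NRS}}$; Theorem~\ref{thm:sCC-scoring} then follows by specializing those conditions to scoring rules via Claim~\ref{claim:closure-CWW-CWL-scoring}. Your direct attack is not wrong, just less modular.

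Two places where your sketch is thinner than it needs to be. First, the refinement step: you observe that $\{\sat{\CC}(r_{\vec s},\cdot)=1\}$ and your $\calH$ differ only on tied-top-score hyperplanes, but ``lower-dimensional difference'' does not by itself guarantee the categorization lemma outputs the same rate on both sets---you need a matching \emph{upper} bound on $\satmin{\CC}{\Pi}(r_{\vec s},n)$, and your remark that refinement can only decrease satisfaction points the wrong way for that. The paper closes this by sandwiching $\sat{\CC}(r_{\vec s},\cdot)$ between $\sat{\CC^*}(\cor_{\vec s},\cdot)$ (Condorcet winner is the \emph{unique} $\cor_{\vec s}$ winner) and $\sat{\CC}(\cor_{\vec s},\cdot)$, and then proving (Claim~\ref{claim:Pi-prime-equivalence}) that the two outer sets yield identical $\alpha_n^*,\beta_n$; this is where minimal continuity of $\cor_{\vec s}$ is actually used. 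Second, two small slips: in your $\Theta(n^{-0.5})$ paragraph it is falling \emph{inside} $\calH$ (not outside) that requires $w_P(a,b)=0$, since when both ACWs lose the scoring rule any nonzero margin produces a losing Condorcet winner; and the threshold $8m+49$ does not come from Lemma~\ref{lem:categorization} but from the explicit construction (Claim~\ref{claim:CW-score-winner-diff}) of an $n$-profile whose Condorcet winner differs from its unique scoring winner, needed to rule out the ``always satisfied'' case.
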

\noindent{\bf Generality.}  We believe that Theorem~\ref{thm:sCC-scoring} is quite general, as it can be applied to {\em any} refinement of {\em any} irresolute integer positional scoring rule (i.e., using {\em any} tie-breaking mechanism) w.r.t.~{\em any} $\Pi$ that satisfies mild conditions.  The power of Theorem~\ref{thm:sCC-scoring} is that it converts complicated probabilistic arguments about smoothed $\CC$ to deterministic arguments about properties of (fractional) profiles in $\conv(\Pi)$, i.e., $\cor_{\vec s}(\pi)$, $\cwinner(\pi)$, $\almostCW(\pi)$, and $\wcw(\pi)$,  which are much easier to check. In particular, Theorem~\ref{thm:sCC-scoring} can be easily applied to i.i.d.~distributions (including IC) as shown in Example~\ref{ex:thm-sCC-pos} below.

\noindent{\bf Intuitive explanations of the conditions.} While the conditions for the cases in Theorem~\ref{thm:sCC-scoring} may appear technical, they have intuitive explanations. Take the $2\mid n$ case for example. 
{\bf \boldmath The $1- \exp(-\Theta(n))$ case} happens if every $\pi\in\conv(\Pi)$ is a ``robust'' instance  of $\CC$ satisfaction,
in the sense that after any small perturbation is introduced to $\pi$, it is still an instance of $\CC$ satisfaction. For {\bf \boldmath the $\Theta(n^{-0.5})$ case,} condition (1) states that every $\pi\in\conv(\Pi)$ is an instance of $\CC$ satisfaction, and condition (2) requires that some $\pi\in\conv(\Pi)$ corresponds to a ``non-robust'' instance of $\CC$ satisfaction, in the sense that after a small perturbation $\vec \eta$ is added to $\pi$, $\CC$ is violated at $\pi+\vec\eta$. 
{\bf \boldmath The $\exp(-\Theta(n))$ case} happens if there exists a ``robust'' instance of $\CC$ dissatisfaction $\pi\in\conv(\Pi)$, in the sense that after any small perturbation is introduced to $\pi$, it is still an instance of $\CC$ dissatisfaction. {\bf \boldmath The $\Theta(1)\wedge (1-\Theta(1))$} case holds if none of the other cases hold.
 
\noindent{\bf \boldmath Odd vs.~even $n$.} The $2\nmid n$ case  has similar explanations. The main difference  is that when $2\nmid n$, the UMG of any $n$-profile must be a complete graph. Therefore, when $\almostCW(\pi)\ne\emptyset$, with high probability an alternative in $\almostCW(\pi)$ is the Condorcet winner in the randomly-generated $n$-profile. Then, the $\Theta(n^{-0.5})$ case in  $2\mid n$ becomes part of the $\exp(-\Theta(n))$ case in $2\nmid n$.

\begin{ex} [\bf Applications of Theorem~\ref{thm:sCC-scoring} to plurality]
\label{ex:thm-sCC-pos}
In the setting of Example~\ref{ex:sCC-plu}, we apply Theorem~\ref{thm:sCC-scoring} to any sufficiently large $n$ with $2\mid n$ and any refinement of irresolute plurality, denoted by $\plu$, for the following sets of distributions.\\
$\bullet$ $\Pi=\{\pi^1,\pi^2\}$. We have $\satmin{\CC}{\Pi}(\plu,n)=\exp(-\Theta(n))$,   because let $\pi'  = \frac{3\pi^1+\pi^2}{4}$, we have $\cwinner(\pi')= \wcw(\pi')=\{2\}$, $\almostCW(\pi')=\emptyset$, and $\iplu(\pi' )=\{1\}$.\\
$\bullet$  $\Pi_{\text{IC}} = \{\piuni\} $, i.e.,  smoothed $\CC$ becomes likelihood of $\CC$ w.r.t.~IC.  We have $\satmin{\CC}{\Pi_{\text{IC}}}(\plu ,n)=\Theta(1)\wedge (1-\Theta(1))$, because $\cwinner(\piuni)=\emptyset$, $\wcw(\piuni)=\{1,2,3\}$, and $\almostCW(\piuni)=\emptyset$.

%

\end{ex}


\noindent{\bf Smoothed $\CC$ under int-MRSE Rules.} 
Smoothed $\CC$ under an MRSE rule $\cor$ depends on whether the positional scoring rules it uses satisfy the {\sc Condorcet loser ($\CL$)} criterion, which requires that the Condorcet loser, whenever it exists, never wins. The Condorcet loser is the alternative that loses to all head-to-head competitions. For any voting rule $\cor$, we write  $\sat{\CL}(\cor)=1$ if and only if $\cor$ satisfies {\sc Condorcet loser}. 

To present the result, we first define {\em parallel universes} under an MRSE rule $\cor$ at $\vec x\in\mathbb R^{m!}$, denoted by $\PU{\cor}{\vec x}$, to be the set of all elimination orders in the execution of $\cor$ at $\vec x$. Then, for any alternative $a$, let the {\em possible losing rounds}, denoted by $\PULR{\cor}{\vec x,a}\subseteq [m-1]$, be the set of all rounds in the parallel universes where $a$ drops out. The formal definitions can be found in Definition~\ref{dfn:PU-LR} in Appendix~\ref{app:proof-thm:sCC-MRSE}.

\begin{ex}
\label{ex:PU} In the setting of Example~\ref{ex:PUTSTV}, we let $\cor = \istv$.  $\PU{\istv}{\piuni}$ consists of   linear orders that correspond  to all paths from the root to leaves in Figure~\ref{fig:PUfigure}. Therefore,  $\PU{\istv}{\piuni} = \ml(\ma)$. For every $a\in\ma$, $\PULR{\istv}{\piuni,a}$ corresponds to the rounds where $a$ is in a node of that round in Figure~\ref{fig:PUfigure}. Therefore, for every $a\in\ma$, we have $\PULR{\istv}{\piuni,a}= \{1,2\}$.

For $\hat \pi$ in Figure~\ref{fig:ex-m3}, we have:
 $\PU{\istv}{\hat\pi} = \{[3\rhd 1\rhd 2], [3\rhd 2\rhd 1]\}\footnote{We use $\rhd$ to indicate the elimination order to avoid confusion with $\succ$.} , \PULR{\istv}{\hat\pi,1} =\PULR{\istv}{\hat\pi,2}  = \{2\},\text{ and }\PULR{\istv}{\hat\pi,3}  = \{1\}$.
\end{ex}
We are now ready to present the $2\mid n$ case of our characterization of smoothed $\CC$ under MRSE rules.  The full version  can be found in Appendix~\ref{app:proof-thm:sCC-MRSE}.

%

\begin{thm}[\bf \boldmath Smoothed $\CC$: int-MRSE rules, $2\mid n$]
\label{thm:sCC-MRSE}
For any fixed $m\ge 3$, let $\mm= (\Theta,\ml(\ma),\Pi)$ be a strictly positive and closed single-agent preference model, let $\cor =(\cor_2,\ldots,\cor_{m})$ be an  int-MRSE rule and let $r$ be a refinement of $\cor $. For any $n\in\mathbb N$ with $2\mid n$, we have
 
\begin{centering}
{$\satmin{\CC}{\Pi}(r,n) = \left\{\begin{array}{@{}ll}
1 &\text{if \ } \forall 2\le i\le m, \sat{\CL}(\cor_i)=1\\
1- \exp(-\Theta(n)) &\text{if} 
\begin{cases}
\text{(1) }\exists 2\le i\le m\text{ s.t. }\sat{\CL}(\cor_i)=0\text{ and }\\
\text{(2) } \forall \pi\in\conv(\Pi), \forall  a\in \wcw(\pi), \forall i^*\in \PULR{\cor}{\pi,a}, \\
\hfill \text{we have } \sat{\CL}(\cor_{m+1-i^*})=1 
\end{cases}
\\
\Theta(n^{-0.5}) &\text{if}  
\begin{cases}
\text{(1) }\forall  \pi\in \conv(\Pi), \cwinner(\pi)\cap (\ma\setminus \cor (\pi))=\emptyset \text{ and} \\
\text{(2) }  \exists \pi\in \conv(\Pi)\text{ s.t. } |\almostCW(\pi)\cap (\ma\setminus \cor (\pi))|=2
\end{cases}
\\
\exp(-\Theta(n)) &\text{if \ }\exists \pi\in \conv(\Pi)\text{ s.t. } \cwinner(\pi)\cap (\ma\setminus \cor (\pi))\ne \emptyset\\
\Theta(1)\wedge(1-\Theta(1)) &\text{otherwise}
\end{array}\right.$
}
\end{centering}
\end{thm}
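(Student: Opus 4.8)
The plan is to derive Theorem~\ref{thm:sCC-MRSE} from the categorization lemma (Lemma~\ref{lem:categorization}) in exactly the way Theorem~\ref{thm:sCC-scoring} is proved: represent the set of profiles at which $\CC$ is violated as a finite union of polyhedra in the histogram space $\mathbb R^{m!}$, feed this to Lemma~\ref{lem:categorization}, and translate the geometric dichotomy it returns into the stated conditions on $\conv(\Pi)$.

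\textbf{Step 1 (polyhedral representation).} For a refinement $r$ of the int-MRSE rule $\cor$, write $\{P:\sat{\CC}(r,P)=0\}=\bigcup_{c\in\ma}\big(\calQ_c\cap\calW_c\big)$, where $\calQ_c=\{\vec x:\ w_{\vec x}(c,d)>0\text{ for all }d\ne c\}$ is the open polyhedral cone of histograms on which $c$ is the strict Condorcet winner, and $\calW_c=\{\vec x:\ c\notin r(\vec x)\}$ (rules are applied to histograms via the fractional-profile convention). Since $\cor$ is score-based with PUT, $c\in\cor(\vec x)$ iff some elimination order (parallel universe) keeps $c$ through all $m-1$ rounds; expanding the rounds, each order contributes a finite conjunction of weak or strict linear inequalities among the relevant round scores, so $\{\vec x:c\in\cor(\vec x)\}$, and hence $\calW_c$, is a finite union of polyhedra. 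A refinement $r$ alters $\calW_c$ only on the finite union of score-tie hyperplanes where $\cor$ is multi-valued, and these are generically distinct from the pairwise-tie hyperplanes that drive the $\Theta(n^{-0.5})$ rate, so the outcome does not depend on which $r$ we pick --- a situation Lemma~\ref{lem:categorization} is designed to tolerate. Thus the violation event is a finite union of polyhedra, Lemma~\ref{lem:categorization} applies, and $\satmin{\CC}{\Pi}(r,n)$ is one of $1$, $1-\exp(-\Theta(n))$, $\Theta(n^{-0.5})$, $\exp(-\Theta(n))$, $\Theta(1)\wedge(1-\Theta(1))$, determined by how these polyhedra meet $\conv(\Pi)$: the value is $1$ if $\conv(\Pi)$ avoids all of them; it is $\exp(-\Theta(n))$ (almost-sure violation) if some full-dimensional violation polyhedron contains a relative-interior point of $\conv(\Pi)$; it is $\Theta(n^{-0.5})$ if no full-dimensional one does but a violation set meets $\conv(\Pi)$ inside a pairwise-tie hyperplane --- feasible only because $2\mid n$ forces even pairwise margins; and it is $\Theta(1)\wedge(1-\Theta(1))$ otherwise.

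\textbf{Step 2 (translating the conditions).} (a) $\conv(\Pi)$ avoids every violation polyhedron iff the violation event is empty (strict positivity makes every profile have positive probability, so a value of $1$ forces emptiness), and emptiness is equivalent, by the structural characterization of int-MRSE rules, to $\cor$ never eliminating the Condorcet winner, i.e., to $\sat{\CL}(\cor_i)=1$ for all $2\le i\le m$ --- the first branch. (b) A full-dimensional violation polyhedron through a relative-interior point of $\conv(\Pi)$ exists iff there is $\pi\in\conv(\Pi)$ with $\cwinner(\pi)\cap(\ma\setminus\cor(\pi))\ne\emptyset$: such $\pi$ lies in the open cone $\calQ_c$, and one argues (by exhibiting a parallel universe in which $c$ is eliminated by a strict score margin) that $c\notin\cor(\pi)$ persists under small perturbations, so a profile drawn around $n\pi$ violates $\CC$ with probability $1-\exp(-\Theta(n))$; the reverse implication is the ``only if'' direction of the lemma --- this is the $\exp(-\Theta(n))$ branch. (c) Assume no such $\pi$ exists. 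Then the only route to a violation is for a strict Condorcet winner to emerge from an almost-Condorcet-winner pair $\almostCW(\pi)=\{a,b\}$: with $2\mid n$ the $a$--$b$ margin is $0$ at $\pi$ and even-valued, so it becomes positive with probability $\approx1/2$ (making, say, $a$ the strict CW while the other margins stay as at $n\pi$) and stays $0$ --- leaving $\CC$ trivially satisfied --- with probability $\Theta(n^{-0.5})$ by a local central limit estimate. Whether the emerged winner is selected is governed by its possible losing rounds: if for every weak Condorcet winner $a$ at every $\pi\in\conv(\Pi)$ and every $i^*\in\PULR{\cor}{\pi,a}$ the round-$i^*$ rule $\cor_{m+1-i^*}$ satisfies $\CL$, then $a$ --- which is the Condorcet winner among the round-$i^*$ survivors once it becomes strict --- survives that round, so $\CC$ is violated with probability only $\exp(-\Theta(n))$, giving the $1-\exp(-\Theta(n))$ branch; if moreover some $\almostCW(\pi)=\{a,b\}$ has both $a,b\notin\cor(\pi)$ (so by PUT a strict winner on either side of the tie is unselected), the violation probability is pinned near $1$ and the satisfaction probability is exactly $\Theta(n^{-0.5})$; any remaining configuration leaves the violation probability bounded away from both $0$ and $1$, i.e., $\Theta(1)\wedge(1-\Theta(1))$. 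Matching the geometric invariants returned by Lemma~\ref{lem:categorization} against these equivalences, and checking that the five conditions are mutually exclusive and exhaustive, yields the theorem.

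\textbf{Main obstacle.} The crux is Step 2: establishing the ``robustness/boundary'' dichotomies precisely --- that $\cwinner(\pi)\cap(\ma\setminus\cor(\pi))\ne\emptyset$ for some $\pi\in\conv(\Pi)$ is exactly a full-dimensional violation touching $\conv(\Pi)$, that the $\wcw$/$\PULR$/$\CL$ conditions exactly separate the empty, codimension-one, and intermediate geometries (and are mutually exclusive), and the structural fact that $\sat{\CL}(\cor_i)=1$ for all $i$ is equivalent to $\cor$ being Condorcet-consistent together with its local counterpart for a single round $i^*$. All of these hinge on a careful analysis of how the int-MRSE elimination process with PUT behaves under perturbations of the profile near degenerate configurations (pairwise ties in the majority graph and score ties within rounds), and on the exact interaction between the Condorcet-loser property of a component scoring rule and the (non)survival of a newly strict Condorcet winner in the round where it is at risk --- this is where essentially all of the rule-specific effort lies, paralleling but going beyond the scoring-rule analysis behind Theorem~\ref{thm:sCC-scoring}.
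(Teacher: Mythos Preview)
Your proposal follows essentially the same strategy as the paper, but the paper factors the argument through an intermediate result, Lemma~\ref{lem:sCC-GISR}, which characterizes smoothed $\CC$ for \emph{any} minimally continuous int-GISR in terms of four abstract conditions ($\condition{\text{AS}}$, $\condition{\text{RS}}$, $\condition{\text{RD}}$, $\condition{\text{NRS}}$) defined via the closures of the ``Condorcet-winner-wins'' and ``Condorcet-winner-loses'' regions in $\mathbb R^{m!}$. The MRSE-specific work then reduces to three concrete claims: MRSE rules are minimally continuous (Proposition~\ref{prop:common-min-cont}); $\condition{\text{AS}}(\cor,n)$ holds iff every $\cor_i$ satisfies $\CL$ (Claim~\ref{claim:MRSE-AS}); and the two closures admit explicit descriptions in terms of $\wcw(\pi)$, $\PULR{\cor}{\pi,a}$, and $\CL$ (Claim~\ref{claim:closure-CWW-CWL-MRSE}). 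Your Step~2 essentially tries to establish these claims inline, and your ``main obstacle'' paragraph correctly identifies where the real work lies --- in particular the bidirectional characterization of $\closure{\region{\CWlose}{\cor}}$, whose $\Leftarrow$ direction requires an explicit construction of a perturbation under which the would-be Condorcet winner is eliminated in a round where $\CL$ fails.

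There is, however, one genuine gap. Your handling of refinements --- ``score-tie hyperplanes are generically distinct from pairwise-tie hyperplanes, so the outcome does not depend on which $r$ we pick'' --- is not a valid argument: the hyperplanes need not be distinct (for instance, the round-$(m{-}1)$ score tie under $\cor_2$ \emph{is} a pairwise tie), and in any case genericity of hyperplane arrangements does not by itself control probabilities at the $\Theta(n^{-0.5})$ scale when the adversary may place $\pi$ exactly on such intersections. The paper instead uses a sandwich: it introduces a stricter criterion $\CC^*$ (the Condorcet winner must be the \emph{unique} $\cor$-winner), observes that $\satmin{\CC^*}{\Pi}(\cor,n)\le\satmin{\CC}{\Pi}(r,n)\le\satmin{\CC}{\Pi}(\cor,n)$ for every refinement $r$, and then proves --- using the minimal continuity of $\cor$ --- that the two bounds have the same asymptotic characterization (Claim~\ref{claim:Pi-prime-equivalence} in the proof of Lemma~\ref{lem:sCC-GISR}). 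This sandwich is the missing idea you need to make the refinement step rigorous.
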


The most interesting cases are the $1$ case and the $1- \exp(-\Theta(n))$ case. The $1$ case happens when  all positional scoring rules used in $\cor$ satisfy {\sc Condorcet loser}. In this case, if the Condorcet winner exists, then it cannot be a loser in any round, which means that it is the unique winner under $\cor$. 
 The $1- \exp(-\Theta(n))$ case happens when (1) the $1$ case does not happen, and (2) for every distribution $\pi\in\conv(\Pi)$, every weak Condorcet winner $a$, and every possible losing round $i^*$ for $a$, the positional scoring rule used in round $i^*$, i.e.~$\cor_{m+1-i^*}$, must satisfy {\sc Condorcet loser}. (2)  guarantees that  when a small permutation is added to $\pi$, if a weak Condorcet winner $a$ becomes the Condorcet winner, then it will be the unique winner under $\cor$. 

\begin{ex}[\bf Applications of Theorem~\ref{thm:sCC-MRSE} to STV]\label{ex:CC-MRSE}
In the setting of Example~\ref{ex:PU},  let  $\stv$ denote an arbitrary refinement of $\istv=(\cor_2,\cor_3)$. The 1 case does not hold for sufficiently large $n$, because $\cor_3$ (plurality) does not satisfy {\sc Condorcet loser}.

When $\Pi_{\text{IC}}= \{\piuni\}$, Theorem~\ref{thm:sCC-MRSE} implies that for any sufficiently large $n$ with $2\mid n$, the $\Theta(1)\wedge(1-\Theta(1))$ case holds. The $1- \exp(-\Theta(n))$ case does not hold, because its condition (2) fails:  $1\in \wcw(\piuni)$ and round $1$ is a possible losing round for alternative $1$ (i.e., $1\in \PULR{\istv}{\piuni,1}$), yet $\cor_3$ does not satisfy {\sc Condorcet loser}. The $\Theta(n^{-0.5})$ case does not hold, because its condition (2) fails: $\almostCW(\piuni)=\emptyset$. The $\exp(-\Theta(n))$ case does not hold because $\cwinner(\piuni) = \emptyset$.

\end{ex}


Like Theorem~\ref{thm:sCC-scoring}, Theorem~\ref{thm:sCC-MRSE} can also be easily applied to  i.i.d.~distributions. Like Example~\ref{ex:CC-MRSE}, we have the following corollary w.r.t.~IC, which corresponds to $\Pi_{\text{IC}} = \{\piuni\}$. 

\begin{coro}[\bf Likelihood of $\CC$ under int-MRSE rules w.r.t.~IC]
\label{Coro:sCC-STV-IC}
For any fixed $m\ge 3$, any refinement $r$ of any int-MRSE rule $\cor$, and any $n\in\mathbb N$, 

\begin{center}
$\Pr\nolimits_{P\sim(\piuni)^n}(\sat{\CC}(r,P)=1) = \left\{\begin{array}{ll}
1 &\text{if } \forall 2\le i\le m, \sat{\CL}(\cor_i)=1\\
\Theta(1)\wedge(1-\Theta(1)) &\text{otherwise}
\end{array}\right.$
\end{center}
\end{coro}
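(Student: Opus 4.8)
The plan is to obtain Corollary~\ref{Coro:sCC-STV-IC} as a direct instantiation of the full version of Theorem~\ref{thm:sCC-MRSE} at the single-agent preference model with $\Pi=\Pi_{\text{IC}}=\{\piuni\}$. Since $\Pi_{\text{IC}}$ is a singleton, $\conv(\Pi_{\text{IC}})=\{\piuni\}$, so deciding which of the five cases of the theorem applies reduces to evaluating, for the single fractional profile $\piuni$, the quantities $\cor(\piuni)$, $\cwinner(\piuni)$, $\wcw(\piuni)$, $\almostCW(\piuni)$, and $\PULR{\cor}{\piuni,a}$ for $a\in\ma$.

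First I would record the elementary facts about $\piuni$. Every edge of $\wmg(\piuni)$ has weight $0$, so $\umg(\piuni)$ has no edges; hence $\cwinner(\piuni)=\emptyset$, $\wcw(\piuni)=\ma$, and (since $m\ge 3$ leaves a third alternative that no alternative beats) $\almostCW(\piuni)=\emptyset$. Next I would prove the key structural fact $\PULR{\cor}{\piuni,a}=[m-1]$ for every $a\in\ma$. The argument is a symmetry/marginalization observation: restricting the uniform fractional profile $\piuni$ to any subset $S\subseteq\ma$ yields a positive scalar multiple of the uniform fractional profile on $\ml(S)$, so in every round of $\cor$ all surviving alternatives are tied under that round's positional scoring rule, and PUT permits eliminating any one of them. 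Iterating, $\PU{\cor}{\piuni}$ is the set of all elimination orders, i.e.\ all of $\ml(\ma)$ (matching the $m=3$ computation in Example~\ref{ex:PU}), and in particular any $a$ can be eliminated in any round $i\in[m-1]$.

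Given these facts, the case analysis of Theorem~\ref{thm:sCC-MRSE} collapses. If $\sat{\CL}(\cor_i)=1$ for every $2\le i\le m$, the first case applies and gives $\satmin{\CC}{\Pi_{\text{IC}}}(r,n)=1$, i.e.\ $\Pr_{P\sim(\piuni)^n}(\sat{\CC}(r,P)=1)=1$. Otherwise fix $j\in\{2,\ldots,m\}$ with $\sat{\CL}(\cor_j)=0$. Then: the $\exp(-\Theta(n))$ case is excluded since $\cwinner(\piuni)=\emptyset$; the $\Theta(n^{-0.5})$ case is excluded since $\almostCW(\piuni)=\emptyset$; and the $1-\exp(-\Theta(n))$ case is excluded because, taking any $a\in\ma=\wcw(\piuni)$ and the round $i^*=m+1-j\in[m-1]=\PULR{\cor}{\piuni,a}$, its condition~(2) would force $\sat{\CL}(\cor_{m+1-i^*})=\sat{\CL}(\cor_j)=1$, a contradiction. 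Hence the ``otherwise'' case applies, giving $\Theta(1)\wedge(1-\Theta(1))$. For odd $n$ the same reasoning works with the $2\nmid n$ half of the full theorem, which merely folds the $\Theta(n^{-0.5})$ case into the $\exp(-\Theta(n))$ case, both already vacuous here.

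The main obstacle I expect is not the case bookkeeping but establishing $\PULR{\cor}{\piuni,a}=[m-1]$ rigorously from the formal definitions of $\PU$ and $\PULR$: one must check that eliminating a tied alternative from a uniform fractional profile again produces a uniform fractional profile on the remaining alternatives, so that ties genuinely persist through all $m-1$ rounds and PUT realizes every elimination order. Everything else --- the values of $\cwinner$, $\wcw$, $\almostCW$ at $\piuni$ and the exclusion of the three intermediate cases --- is immediate once that fact is in hand.
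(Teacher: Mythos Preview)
Your proposal is correct and follows essentially the same approach as the paper: apply Theorem~\ref{thm:sCC-MRSE} with $\Pi=\{\piuni\}$, compute $\cwinner(\piuni)=\emptyset$, $\wcw(\piuni)=\ma$, $\almostCW(\piuni)=\emptyset$, and $\PULR{\cor}{\piuni,a}=[m-1]$, then run the case analysis exactly as in Example~\ref{ex:CC-MRSE} (which the paper explicitly cites as the template for this corollary). Your justification that $\PULR{\cor}{\piuni,a}=[m-1]$ via the persistence of uniformity under restriction is the natural general argument behind the $m=3$ computation the paper gives in Example~\ref{ex:PU}.
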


\noindent{\bf Proof sketches for Theorem~\ref{thm:sCC-scoring} and~\ref{thm:sCC-MRSE}.}  In light of various multivariate  central limit theorems (CLTs), 
when $n$ is large, the profile is approximately $n\cdot \pi^*$ for  $\pi^*=(\sum_{j=1}^n \pi_j)/n\in \conv(\Pi)$  with high probability. Despite this high-level intuition, the conditions of the cases are  quite differently from smoothed $\CC$ by definition. To see this, note that (i) the adversary may not be able to set any agent's ground truth preferences to be $\pi^*\in\conv(\Pi)$, because $\pi^*$ may not be in $\Pi$ as shown in Example~\ref{ex:thm-sCC-pos}, and (ii) in the definition of smoothed $\CC$, agent $j$'s vote is a random variable distributed as $\pi_j$, instead of the fractional vote $\pi_j$. Standard CLTs can probably be applied to prove the $1- \exp(-\Theta(n)) $ case and the $\Theta(1)\wedge(1-\Theta(1))$ case, but they are too coarse for other cases.

To address this challenge, we model the satisfaction of $\CC$ by the union of multiple polyhedra $\upoly$ as exemplified in Section~\ref{sec:cat-lemma}. This converts the smoothed $\CC$ problem to a {\em PMV-in-$\upoly$} problem~\citep{Xia2021:How-Likely} (Definition~\ref{dfn:PMV-in-C}). Then, we refine~\cite[Theorem~2]{Xia2021:How-Likely} to prove a categorization lemma (Lemma~\ref{lem:categorization}), and apply it to obtain Lemma~\ref{lem:sCC-GISR} that characterizes smoothed $\CC$ for a large class of voting rules called {\em generalized irresolute scoring rules (GISRs)}~\cite{Freeman2015:General,Xia15:Generalized} (Definition~\ref{dfn:GISR} in Appendix~\ref{app:dfn-GISR}). Finally, we apply Lemma~\ref{lem:sCC-GISR} to integer positional scoring rules and int-MRSE rules to obtain Theorem~\ref{thm:sCC-scoring} and Theorem~\ref{thm:sCC-MRSE}. The full proof can be found in Appendix~\ref{app:proof-thm:sCC-scoring} and~\ref{app:proof-thm:sCC-MRSE}, respectively.
$\hfill\Box$


\noindent{\bf The smoothed satisfaction of $\Par$.} Due to the space constraint, we   briefly introduce our characterizations of smoothed $\Par$ under commonly-studied voting rules defined in Appendix~\ref{app:more-rule}, 
which 
belong to a large class of voting rules called {\em generalized scoring rules (GSRs)}~\citep{Xia08:Generalized} (Definition~\ref{dfn:GISR} in Appendix~\ref{app:dfn-GISR}).  
Formal statements and  proofs of the theorems can be found in Appendix~\ref{app:proof-thm:sPar-mm-rp-sch}--\ref{app:proof-thm:sPar-Cond-Pos}.

\noindent{\bf Theorems~\ref{thm:sPar-mm-rp-sch},   \ref{thm:sPar-copeland}, \ref{thm:sPar-MRSE}, \ref{thm:sPar-Cond-Pos} (Smoothed $\Par$ under commonly-studied rules). }{\em For any fixed $m\ge 4$,  any GSR  $r$ that is a refinement of maximin, STV, Schulze,    ranked pairs, Copeland, any int-MRSE, or any Condocetified positional scoring rule, and any strictly positive and closed $\Pi$ over $\ml(\ma)$ with $\piuni\in \conv(\Pi)$, there exists   $N\in\mathbb N$ such that for every $n\ge N$,   $\satmin{ \Par}{\Pi}(r,n ) = 1-  \Theta({\frac{1}{\sqrt n}})$.}

In fact, if $\piuni\not\in \conv(\Pi)$, then smoothed $\Par$ converges to $1$ at a faster rate, which is more positive news, as shown in Lemma~\ref{lem:sPar-GSR} (Appendix~\ref{app:proof-lem:sPar-GSR}).

\section{Beyond $\CC$ and $\Par$: The Categorization Lemma}
\label{sec:cat-lemma}
In this section, we present a general   lemma that characterizes smoothed satisfaction of per-profile axioms that can be represented by unions of polyhedra, including $\CC$ and $\Par$. To develop intuition, we start with an example of modeling $\CC$ under irresolute plurality as the union of the following two types of polyhedra in $\mathbb R^{m!}$.

$\bullet$ {\bf\boldmath $\upolynoCW$}  represents that there is no Condorcet winner, which is the union of polyhedra $\ppoly{G}$, where $G$ is an unweighted graph over $\ma$ that does not have a Condorcet winner, as exemplified in Example~\ref{ex:HG}.

$\bullet$ {\bf\boldmath $\upolyCWwin$} represents that the Condorcet winner exists and also wins the plurality election, which is the union of polyhedra $\ppoly{a}$ for every $a\in \ma$, that represents $a$ being the Condorcet winner as well as a $\iplu$ co-winner, as exemplified in Example~\ref{ex:H-a}. 
\begin{ex} [\bf \boldmath $\ppoly{G}$]
\label{ex:HG}
 Let $m=3$ and let $x_{abc}$ denote the number of $[a\succ b\succ c]$ votes in a profile. The following figure shows  $G$ (left) and $\ppoly{G} $ (right). 

$G=$\begin{minipage}{0.15\linewidth}
\includegraphics[width = \textwidth]{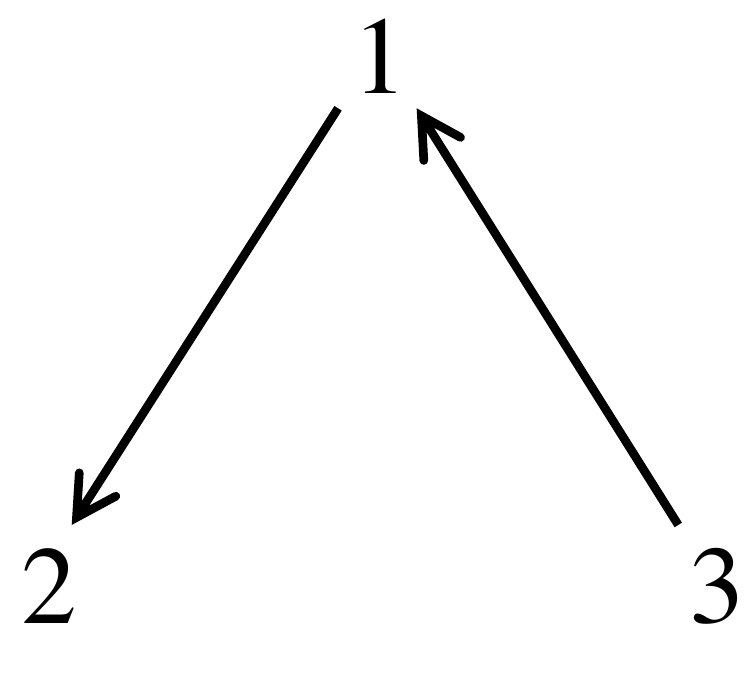}
\end{minipage}
$\Longleftrightarrow$
\begin{minipage}{0.7\linewidth}
\raggedleft
\begin{align}
& (x_{213} + x_{231} + x_{321}) - (x_{123} + x_{132} + x_{312}) \le -1 \label{ex:HG1}\\
& (x_{123} + x_{132} + x_{213}) - (x_{231} + x_{321} + x_{312}) \le -1 \label{ex:HG2}\\
& (x_{132} + x_{312} + x_{321}) - (x_{123} + x_{213} + x_{231}) \le 0 \label{ex:HG3}\\
& (x_{123} + x_{213} + x_{231}) - (x_{132} + x_{312} + x_{321}) \le  0\label{ex:HG4}
\end{align}
\end{minipage}

Among the four inequalities, (\ref{ex:HG1}) represents the $1\ra 2$ edge in $G$,  (\ref{ex:HG2}) represents the $3\ra 1$ edge in $G$, and (\ref{ex:HG3}) and (\ref{ex:HG4}) represent the tie between $2$ and $3$ in $G$.
\end{ex}
 
\begin{ex} [\bf \boldmath $\ppoly{a}$]
\label{ex:H-a} Let $m=3$.   $\ppoly{1} $ is the polyhedron represented by the following four inequalities:
\begin{align*}
  \begin{rcases}  & (x_{213} + x_{231} + x_{321}) - (x_{123} + x_{132} + x_{312}) \le -1 \\
& (x_{231} + x_{321} + x_{312}) - (x_{123} + x_{132} + x_{213})  \le -1  \end{rcases} & \ 1\text{ is the Condorcet winner } \\
 \begin{rcases}
& (x_{213} +  x_{231}  ) - (x_{123} +  x_{132}  )  \le 0  \\
& ( x_{321} + x_{312}) - (x_{123} +  x_{132}  )  \le 0
\end{rcases}& \ 1\text{ is a $\iplu$ co-winner } 
\end{align*}
\end{ex}

It is not hard to see that $\iplu$ satisfies \CC{} at a profile $P$ if and only if  $\hist(P)$ is in  $\upoly=\upolynoCW\cup \upolyCWwin$, where
$\upolynoCW = {\bigcup\nolimits_{G:\cwinner(G) = \emptyset}\ppoly{G}}  \text{ and }  \upolyCWwin= {\bigcup\nolimits_{a\in \ma} \ppoly{a}}$.  An  example of $\Par$ under Copeland can be found in Appendix~\ref{sec:modeling-par}.
 In general, the satisfaction of a wide range of axioms   can be represented by   unions  of finitely many polyhedra. 
Then, the smoothed satisfaction problem   reduces to the lower bound of the following PMV-in-$\upoly$ problem. 

\begin{dfn}[\bf\boldmath The PMV-in-$\upoly$ problem~\citep{Xia2021:How-Likely}]
\label{dfn:PMV-in-C}
Given $q,I\in\mathbb N$,  $\upoly = \bigcup_{i\le I}\cpoly{i}$, where $\forall i\le I$,  $\cpoly{i}\subseteq \mathbb R^q$ is a polyhedron, and a set $\Pi$ of distributions over $[q]$, we are interested in 
$$\text{\bf the upper bound }\sup\nolimits_{\vec\pi\in\Pi^n}\Pr(\vXp \in \upoly)\text{, and {\bf the  lower bound}}\inf\nolimits_{\vec\pi\in\Pi^n}\Pr(\vXp \in \upoly),$$
where $\vXp$ is the $(n,q)$-{\em Poisson multinomial variable (PMV)} that corresponds to the histogram of $n$ independent random variables distributed as $\vec\pi$.
\end{dfn}

%


See Example~\ref{ex:pmv} in Appendix~\ref{app:cat-lemma-formal} for an example of PMV. The following lemma provides an asymptotic characterization on the lower bound of the PMV-in-$\upoly$ problem.
\begin{lem}[\bf Categorization lemma, simplified]\label{lem:categorization} 
For any PMV-in-$\upoly$ problem and any $n\in\mathbb N$,  $\inf_{\vec\pi\in\Pi^n}\Pr (\vXp \in \upoly)$ is $0$, $\exp(-\Theta(n))$, $\text{poly}^{-1}(n)$, $\Theta(1)\wedge(1-\Theta(1))$, $1-\text{poly}^{-1}(n)$, $1- \exp(-\Theta(n))$, or $1$.
\end{lem}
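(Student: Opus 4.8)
The plan is to reduce this to the cited result \cite[Theorem~2]{Xia2021:How-Likely}, which gives an analogous dichotomy/categorization for the PMV-in-$\upoly$ problem, and then argue that the refinement we need here (collapsing some of their finer cases, or excluding the $\Theta(n^{-c})$ for $c\ne 1/2$ that a naive statement might allow) holds because a union of \emph{finitely many polyhedra} is a semialgebraic set of a very special shape. So the first step is to recall exactly what \cite[Theorem~2]{Xia2021:How-Likely} says for the lower bound $\inf_{\vec\pi\in\Pi^n}\Pr(\vXp\in\upoly)$ and identify which of its output categories can actually occur once $\upoly$ is restricted to be a finite union of polyhedra (as opposed to a more general union of "bodies'' allowed there).

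The key steps, in order, are as follows. \emph{Step 1:} For a single polyhedron $\cpoly{i}\subseteq\mathbb{R}^q$, analyze $\inf_{\vec\pi\in\Pi^n}\Pr(\vXp\in\cpoly{i})$ via a local CLT for Poisson multinomial variables: writing $\pi^*=\frac1n\sum_j\pi_j\in\conv(\Pi)$, the PMV $\vXp$ concentrates in an $O(\sqrt n)$-neighborhood of $n\pi^*$, and on that scale it is approximated by a (possibly degenerate) Gaussian on the affine hull of $\ml(\ma)$. The probability of landing in $\cpoly{i}$ is then governed by the position of $n\pi^*$ relative to the \emph{faces} of $\cpoly{i}$: if $n\pi^*$ is deep in the interior (distance $\gg\sqrt n$), the probability is $1-\exp(-\Theta(n))$; if it is deep outside, $\exp(-\Theta(n))$; if it sits on the relative boundary, the Gaussian mass of a polyhedral cone gives a $\Theta(1)$ constant strictly between $0$ and $1$; and the intermediate "within $O(\sqrt n)$ of a face" regime produces the $\Theta(n^{-1/2})$ rate when exactly one binding inequality has a non-integer gap forcing an integrality mismatch. \emph{Step 2:} Take the adversary's infimum: the adversary picks $\vec\pi$ (equivalently, picks $\pi^*\in\conv(\Pi)$, plus second-order freedom in how the $\pi_j$'s vary) to \emph{minimize} $\Pr(\vXp\in\upoly)$. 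Because $\conv(\Pi)$ is compact and the geometry above depends piecewise-analytically on $\pi^*$, the infimum is attained (or approached) at a point whose relation to the finitely many facet hyperplanes of the $\cpoly{i}$'s is one of finitely many "types,'' and each type yields exactly one of the listed rates. \emph{Step 3:} Handle the union by inclusion–exclusion / union bound: $\Pr(\vXp\in\upoly)$ is squeezed between $\max_i\Pr(\vXp\in\cpoly{i})$ and $\sum_i\Pr(\vXp\in\cpoly{i})$, and since there are finitely many $i$, the rate of the union equals the best (largest) of the individual rates, which is again one of the seven categories. \emph{Step 4:} Verify closure of the set $\{0,\exp(-\Theta(n)),\mathrm{poly}^{-1}(n),\Theta(1)\wedge(1-\Theta(1)),1-\mathrm{poly}^{-1}(n),1-\exp(-\Theta(n)),1\}$ under both "$\max$ over finitely many" and "$\inf$ over a compact parameter set,'' so that combining Steps 2–3 stays within the list; here the only subtlety is ruling out, e.g., a $\Theta(n^{-1})$ or $\Theta(n^{-3/2})$ lower bound, which is exactly where the \emph{polyhedral} (piecewise-linear, rational-facet) structure is used — a single codimension-one integrality obstruction contributes $n^{-1/2}$, and higher-codimension coincidences are avoided because the adversary, minimizing, will not gratuitously stack independent constraints when a shallower configuration already drives the probability lower.

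\textbf{Main obstacle.} The delicate point is Step 1 in the regime that produces $\Theta(n^{-1/2})$ versus a genuine constant: distinguishing "$n\pi^*$ lies on a facet but the facet's defining inequality has integer slack structure so a $\Theta(1)$ fraction of the Gaussian-approximated lattice points still land inside'' from "the facet forces a parity/integrality mismatch that only an $O(1)$-width band of lattice points can satisfy, giving $\Theta(n^{-1/2})$.'' This requires the \emph{lattice} refinement of the local CLT (an Edgeworth-type expansion on the sublattice spanned by differences of standard basis vectors in $\mathbb{Z}^{m!}$), not just the continuous CLT, together with a careful bookkeeping of which facets of $\upoly$ are "integer-active.'' Fortunately this is precisely the technical content already established in \cite[Theorem~2]{Xia2021:How-Likely}; our Lemma~\ref{lem:categorization} is obtained by specializing that theorem to $\upoly$ a finite union of polyhedra and checking that the two most extreme of its output categories ($0$ and $1$, which there can arise from degenerate supports) together with the five "generic'' rates are exactly the seven listed, with no others surviving the specialization. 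The remaining work is then purely bookkeeping: matching the hypotheses of \cite[Theorem~2]{Xia2021:How-Likely} (strict positivity and closedness of $\Pi$ enter here to guarantee the Gaussian approximation is non-degenerate in the relevant directions and the infimum is attained) and invoking it.
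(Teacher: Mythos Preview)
Your Step~3 contains a genuine gap that breaks the argument for the three ``close to $1$'' cases. The claim that ``the rate of the union equals the best (largest) of the individual rates'' is false in the regime where $\Pr(\vXp\in\upoly)$ is near~$1$. Consider $q=2$ and $\upoly=\cpoly{1}\cup\cpoly{2}$ with $\cpoly{1}=\{x_1-x_2\le 0\}$ and $\cpoly{2}=\{x_1-x_2\ge 0\}$. If $\pi^*$ lies on the hyperplane $x_1=x_2$, each $\Pr(\vXp\in\cpoly{i})$ is $\Theta(1)\wedge(1-\Theta(1))$, yet $\Pr(\vXp\in\upoly)=1$. More generally, a union of polyhedra can cover all of $\mathbb{Z}^q$ except a low-dimensional face even when no single constituent does, so the $1-\mathrm{poly}^{-1}(n)$ and $1-\exp(-\Theta(n))$ rates cannot be read off from the individual $\cpoly{i}$'s via your sandwich $\max_i\Pr(\cdot)\le\Pr(\cup)\le\sum_i\Pr(\cdot)$. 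Your Step~4 closure claim fails for the same reason.

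The paper's proof avoids this by a complement trick that you are missing entirely. It constructs an \emph{almost complement} $\upoly^*$: a finite union of polyhedra with $\upoly\cap\upoly^*=\emptyset$ and $\mathbb{Z}^q\subseteq\upoly\cup\upoly^*$ (obtained by flipping and shifting one inequality from each $\cpoly{i}$). Since $\vXp$ is integer-valued, $\inf_{\vec\pi}\Pr(\vXp\in\upoly)=1-\sup_{\vec\pi}\Pr(\vXp\in\upoly^*)$. Now \cite[Theorem~2]{Xia2021:How-Likely} is applied \emph{twice}: its $\inf$ part on $\upoly$ yields the $0$, $\exp(-\Theta(n))$, and $\Theta(n^{(\beta_n-q)/2})$ cases; its $\sup$ part on $\upoly^*$ yields $\sup\Pr(\vXp\in\upoly^*)\in\{0,\exp(-\Theta(n)),\Theta(n^{(\alpha_n^*-q)/2})\}$, which after taking $1-(\cdot)$ gives the $1$, $1-\exp(-\Theta(n))$, and $1-\mathrm{poly}^{-1}(n)$ cases. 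The middle case $\Theta(1)\wedge(1-\Theta(1))$ is exactly when $\beta_n=\alpha_n^*=q$. No direct union-of-polyhedra analysis is needed.

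A smaller point: you misread ``$\mathrm{poly}^{-1}(n)$'' as specifically $n^{-1/2}$ and then spend effort ruling out $n^{-1}$ and $n^{-3/2}$. In the full version the rate is $\Theta(n^{(\beta_n-q)/2})$ with $0\le\beta_n<q$, so all half-integer exponents from $-1/2$ down to $-q/2$ are permitted; there is nothing to rule out.
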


The full version of Lemma~\ref{lem:categorization} (Appendix~\ref{app:cat-lemma-formal}) also  characterizes   the condition for each case, the degree of polynomial, and   $\sup_{\vec\pi\in\Pi^n}\Pr (\vXp \in \upoly)$.    
Lemma~\ref{lem:categorization}'s main merit is conceptual,  as it  categorizes the smoothed likelihood into  seven cases for quantitative comparisons, summarized in the increasing order in the table below, 
which are   {\em 0}, {\em very unlikely (VU)}, {\em unlikely (U)}, {\em medium (M)}, {\em likely (L)}, {\em very likely (VL)}, and {\em 1}. The first three cases ($0$, VU, U) are negative news, where the adversary can set the ground truth so that  the axiom is  almost surely violated  in large elections ($n\ra \infty$).  The last three cases (L, VL, and $1$) are positive news, because  the axiom is satisfied almost surely in large elections, regardless of the adversary's choice. The M case can be interpreted positively or negatively, depending on the context.
\begin{table}[htp]
\resizebox{1\textwidth}{!}{
\begin{tabular}{|@{\ }c@{\ }|@{\ }c@{\ }|@{\ }c@{\ }|@{\ }c@{\ }|@{\ }c@{\ }|@{\ }c@{\ }|@{\ }c@{\ }|@{\ }c@{\ }|}
\hline Name& \bf 0& {\bf  VU}& {\bf  U}& {\bf  M}& {\bf L}& {\bf  VL}& {\bf  1}\\
\hline Lem.~\ref{lem:categorization}& 0 & $\exp(-\Theta(n))$& $\text{poly}^{-1}(n)$& $\Theta(1)\wedge(1-\Theta(1))$& $1-\text{poly}^{-1}(n)$ & $1- \exp(-\Theta(n))$ & 1\\
\hline
\end{tabular}
}
\end{table}
 


%
%
%

\section{Future work} 
There are many  open questions for future work. What are the smoothed $\CC$ and smoothed $\Par$ for voting rules not sutdied in this paper, such as Bucklin? What is the smoothed satisfaction of $\Par$ when a group of agents can simultaneously abstain from voting~\cite{Lepelley2001:Scoring}?  More generally, we believe that drawing a comprehensive picture of smoothed satisfactions of other voting axioms and/or paradoxes, such as those described in Appendix~\ref{app:per-profile-axioms},  is an important, promising, and challenging mission, and the categorization lemma (Lemma~\ref{lem:categorization}) can be a useful conceptual and technical tool to start with.

\subsection*{Acknowledgments}
We thank   anonymous reviewers for helpful  comments. This work is supported by NSF \#1453542, ONR \#N00014-17-1-2621, and a gift fund from Google.  
\bibliographystyle{plainnat}
\bibliography{/Users/administrator/GGSDDU/references}
\newpage
\tableofcontents
\newpage

\appendix

\section{Definitions of More Voting Rules}
\label{app:more-rule}

\paragraph{\bf WMG-based rules.} A voting rule is said to be {\em weighted-majority-graph-based (WMG-based)} if its winners only depend on the WMG of the input profile. In this paper we consider the following commonly-studied WMG-based irresolute rules.
\begin{itemize}
\item {\bf Copeland.} The Copeland rule is parameterized by a number $0\le \alpha\le 1$, and is therefore denoted by  $\icopeland$. For any   profile $P$, an alternative $a$ gets $1$ point for each other alternative it beats in  head-to-head competitions, and gets $\alpha$ points for each tie.  $\icopeland$ chooses all alternatives with the highest total score as   winners. 
\item  {\bf Maximin.} For each alternative $a$, its {\em min-score}  is defined to be $\minscore_P(a)=\min_{b\in\ma}w_P(a,b)$. Maximin, denoted by $\imaximin$, chooses all alternatives with the max min-score as  winners.
\item {\bf Ranked pairs.} Given a profile $P$, an alternative $a$ is a winner under ranked pairs (denoted by $\irp$) if there exists a way to fix edges in $\wmg(P)$ one by one in a non-increasing order w.r.t.~their weights (and sometimes break ties), unless it creates a cycle with previously fixed edges, so that after all edges are considered, $a$ has no incoming edge. This is known as the {\em parallel-universes tie-breaking (PUT)}~\citep{Conitzer09:Preference}. 
\item {\bf Schulze.} The {\em strength} of any directed path in the WMG  is defined to be the minimum weight on   single edges along the path. For any pair of alternatives $a,b$, let $s[a,b]$ denote the highest weight among all paths from $a$ to $b$. Then, we write $a\succeq b$ if and only if $s[a,b]\ge s[b,a]$, and~\citet{Schulze11:New} proved that the strict version of this binary relation, denoted by $\succ$, is transitive. The Schulze rule, denoted by $\ischulze$, chooses all alternatives $a$ such that for all other alternatives $b$, we have $a\succeq b$. 
\end{itemize}
\paragraph{\bf Condorcetified (integer) positional scoring rules.}  The rule is defined by an integer scoring vector $\vec s\in{\mathbb Z}^m$ and is denoted by $\iCondorcet{ \vec s}$, which selects the Condorcet winner when it exits, and otherwise uses $\cor_{\vec s}$ to select the (co)-winners. For example, {\em Black's rule}~\citep{Black58:Theory} is the Condorcetified Borda rule.

\section{Per-Profile and Non-Per-Profile Axioms}
\label{app:per-profile-axioms}
In this section, we provide an (incomplete) list of $14$ commonly-studied  per-profile axioms and one commonly-studied non-per-profile axiom that we do not see a clear per-profile representation.

\paragraph{\bf Per-Profile Axioms.}
We present the definitions of the per-profile axioms in the alphabetical order. Their equivalent $\sat{X}$ definition is often straightforward unless explicitly discussed below.
\begin{enumerate}
\item {\sc Anonymity}  states that the winner is insensitive to the identities of the voters. It is a per-profile axiom as shown in~\citep{Xia2020:The-Smoothed}.
\item {\sc Condorcet criterion} is a per-profile axiom as discussed in the Introduction.
\item {\sc Condorcet loser} requires that a {\em Condorcet loser}, which is the alternative who {\em loses} to every head-to-head competition with other alternatives, should not be selected as the winner. It is a per-profile axiom in the same sense as $\CC$.
\item {\sc Consistency} requires that for any profile $P$ and any sub-profile $P'$ of $P$,  if $r(P') = r(P\setminus P')$, then $r(P) = r(P')$. Therefore, for any profile $P$, we can define
$$\left[\sat{\sc Consistency}(r,P)=1\right]\Longleftrightarrow \left[\forall P'\subset P,   \left[ r(P') = r(P\setminus P') \right]\Rightarrow \left[r(P) =r(P')\right] \right]$$

\item {\sc Group-Non-Manipulable} is defined similarly to {\sc Non-Manipulable} below, except that multiple voters are allowed to simultaneously change their votes, and after doing so, at least one of them strictly prefers the old winner.
\item {\sc Independent of clones} requires that the winner does not change when {\em clones} of an alternative is introduced. The clones and the original alternative must be ranked consecutively in each vote.  Let ${\sc IoC}$   denote {\sc Independent of clones}. For any profile $P$, we let $\sat{\sc IoC}(r,P) =1$ if and only if for every alternative $a$ and every profile $P'$ obtain from $P$ by introducing clones of $a$, we have $r(P) = r(P')$.


\item {\sc Majority criterion} requires that any alternative that is ranked at the top place in more than $50\%$ of the votes must be selected as the winner.  {\em Majority criterion} is stronger than {\sc Condorcet criterion}.
\item {\sc Majority loser} requires that any alternative who is ranked at the bottom place in more than $50\%$ of the votes should not be selected as the winner. {\sc Majority loser} is weaker than {\sc Condorcet loser}.
\item {\sc Monotonicity} requires raising up the position of the current winner in any vote will not cause it to lose. Let {\sc Mono} denote {\sc Monotonicity}. One way to define $\sat{\sc Mono}$ is the following.Let $\sat{\sc Mono}^1(r,P) = 1$ if and only if for every profile $P'$ that is obtained from $P$ by raising the position of $r(P)$ in one vote, we have $r(P')=r(P)$. Another definition is: $\sat{\sc Mono}^2(r,P) = 1$ if and only if for every profile $P'$ that is obtained from $P$ by raising the position of $r(P)$ in arbitrarily many votes, we have $r(P')=r(P)$. Notice that the classical (worst-case) {\sc Monotonicity} is satisfied if and only if $\min_{P}\sat{\sc Mono}^1(r,P) =1$ or equivalently, $\min_{P}\sat{\sc Mono}^2(r,P) =1$. The smoothed satisfaction of $\min_{P}\sat{\sc Mono}^1$ might be different from $\min_{P}\sat{\sc Mono}^2$, which is beyond the scope of this paper.

\item {\sc Neutrality} states that the winner is insensitive to the identities of the alternatives. It is a per-profile axiom as shown in~\citep{Xia2020:The-Smoothed}.

\item {\sc Non-Manipulable} requires that no agent has incentive to unilaterally change his/her vote to improve the winner w.r.t.~his/her true preferences. More precisely, for any profile $P=(R_1,\ldots,R_n)$,  we have
$$\left[\sat{\sc Non-Manipulable}(r,P)=1\right]\Leftrightarrow \left[\forall j\le n, \forall R_j'\in\ml(\ma), r(P)\succeq_{R_j} r(P\cup\{R_j'\}\setminus \{R_j\})\right]$$
\item {\sc Participation}  is a per-profile axiom as discussed in the Introduction.

\item {\sc Reversal symmetry} requires that the winner of any profile should not be the winner when all voters' rankings are inverted.
\end{enumerate}

\paragraph{\bf Non-Per-Profile Axiom(s).} We were not able to model {\sc Non-Dictatorship (ND)}  as a per-profile axiom studied in this paper. A voting rule is not a dictator if for each $j\le n$, there exists a profile $P$ whose winner is not ranked at the top of agent $j$'s preferences. 

\section{Materials for Section~\ref{sec:cat-lemma}: The Categorization Lemma}
\label{app:categorization-lemma}
While the categorization lemma (Lemma~\ref{lem:categorization}) was presented after Theorems~\ref{thm:sCC-scoring} through~\ref{thm:sPar-Cond-Pos} in the main text, the proofs of the theorems depend on the lemma. Therefore, we present materials for the categorization letter before the proofs for the theorems in the appendix.

 
\subsection{Modeling Satisfaction  of $\Par$ as A Union of Polyhedra}
\label{sec:modeling-par}

\begin{wrapfigure}{R}{0.35\textwidth}
\centering
\includegraphics[width = \linewidth]{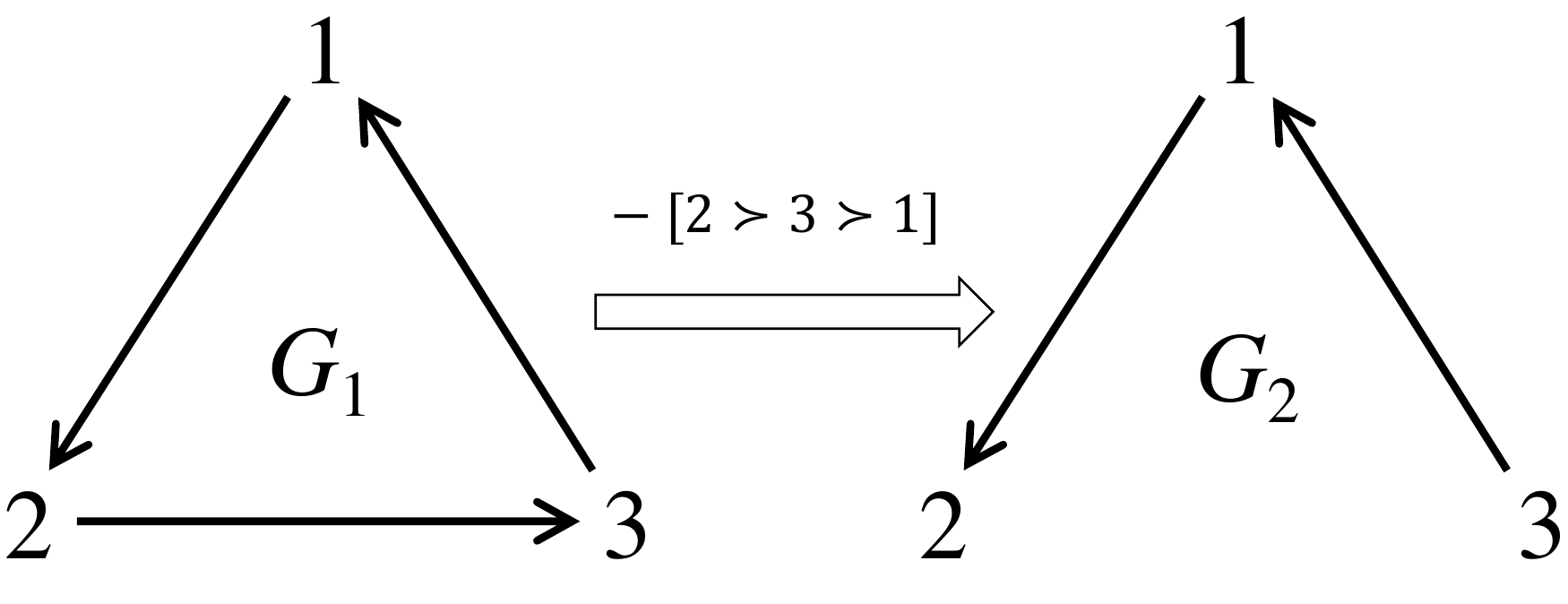}
\caption{\small $G_1$, $G_2$, and $R$. \label{fig:ex-Par-Copeland}}
\end{wrapfigure}

\paragraph{\bf \boldmath $\Par$ under Copeland$_\alpha$.} 
We now show how to approximately model the satisfaction of $\Par$ under Copeland$_\alpha$. For every pair of unweighted directed graphs $G_1,G_2$ over $\ma$ and every $R\in\ml(\ma)$, we define a polyhedron $\ppoly{G_1,R,G_2}$  to represent the histograms of profile $P$ that contains an $R$-vote,  $G_1=\umg(P)$, and $G_2=\umg(P\setminus\{R\})$. 
The linear inequalities used to specify the UMGs of $P$ and $(P\setminus\{R\})$ are similar to $\ppoly{G}$ defined above, as illustrated in the following example.

\begin{ex} Let $m=3$,  $R = [2\succ 3\succ 1]$, and let $G_1,G_2$ denote the graphs  in Figure~\ref{fig:ex-Par-Copeland}. $\ppoly{G_1,R,G_2}$ is represented by the following  inequalities.


\begin{align}
- x_{231} \le -1 \ \ \ & \label{ex:HParG3} \\
\begin{rcases}
&(x_{213} + x_{231} + x_{321}) - (x_{123} + x_{132} + x_{312})\le -1 \\
&(x_{123} + x_{132} + x_{213}) - (x_{231} + x_{321} + x_{312}) \le -1 \\
&(x_{132} + x_{312} + x_{321}) - (x_{123} + x_{213} + x_{231}) \le -1  
\end{rcases}&  \label{ex:HParG1} \\
\begin{rcases}
& (x_{213} + x_{231}-1 + x_{321}) - (x_{123} + x_{132} + x_{312}) \le -1\\
& (x_{123} + x_{132}+ x_{213}) - (x_{231}-1  + x_{321} + x_{312}) \le -1\\
& (x_{132} + x_{312} + x_{321}) - (x_{123} + x_{213} + x_{231}-1) \le 0\\
& (x_{123} + x_{213} + x_{231}-1) - (x_{132} + x_{312} + x_{321}) \le  0
\end{rcases}&  \label{ex:HParG2}
\end{align}

(\ref{ex:HParG3}) guarantees that  $P$ contains an $R$-vote.  
The three inequalities in (\ref{ex:HParG1}) represent  $\umg(P) = G_1$, and the four inequalities in (\ref{ex:HParG2}) represent  $\umg(P) = G_2$. 
\end{ex}
We do not require $x_R$'s to be non-negative, which does not affect the results of the paper, because the histograms of randomly-generated profiles are always non-negative.

By enumerating  $G_1$, $R$, and $G_2$ that correspond to a violation of $\Par$, the polyhedra that represent  satisfaction of $\Par$  under Copeland$_\alpha$ are:
$$
\upoly = \bigcup\nolimits_{G_1,R,G_2:\text{Copeland}_{\alpha}(G_1)  \succeq_R  \text{Copeland}_{\alpha}(G_2)} \ppoly{G_1,R,G_2}
$$

\subsection{Formal Statement of the Categorization Lemma and Proof}
\label{app:cat-lemma-formal}
We first introduce notation for polyhedra. Given $q\in\mathbb N, L\in\mathbb N$, an $L\times q$ integer matrix $\ba$, a  $q$-dimensional row vector $\vbb$,  we define  

$\hfill\begin{array}{ll}
\poly \triangleq \left\{\vec x\in {\mathbb R}^q: \ba\cdot \invert{\vec x}\le \invert{\vec b}\right\}, &\polyz \triangleq \left\{\vec x\in {\mathbb R}^q: \ba\cdot \invert{\vec x}\le \invert{\vec 0}\right\} \end{array}\hfill$

That is, $\poly$ is the polyhedron represented by $\ba$ and ${\vbb}$ and $\polyz$ is the {\em characteristic cone} of $\poly$.    

\begin{ex}[\bf\boldmath Poisson multinomial variable (PMV) $\vXp$]
\label{ex:pmv} In the setting of Example~\ref{ex:sCC-plu}, we have $q = m! = 6$. Let $n=2$ and $\vec \pi = (\pi^2,\pi^1)$. $\vXp$ is the histogram of two random variables $Y_1,Y_2$ over $[q]$, where $Y_1$ (respectively, $Y_2$) is distributed as $\pi^2$ (respectively, $\pi^1$).

For example, let $\vec x\in \{0,1,2\}^{\ml(\ma)}$ denote the vector whose $123$ and $231$ components are $1$ and all other components are $0$. We have $\Pr(\vXp = \vec x)=\frac 14\times \frac 38 + \frac 18\times \frac 18=\frac{7}{64}$.
\end{ex}

\begin{dfn}[\bf Almost complement] 
\label{dfn:almost-complement}
Let $\upoly$ denote a union of finitely many polyhedra. We say that a union of finitely many polyhedra $\upoly^*$ is an {\em almost complement} of $\upoly$, if (1) $\upoly\cap \upoly^*  =\emptyset$ and (2) ${\mathbb Z}^q \subseteq \upoly\cup \upoly^*$.
\end{dfn}
$\upoly^*$ is called an ``almost complement'' (instead of ``complement'') of $\upoly$ because $\aupoly\cup \upoly\ne \mathbb R^q$. Effectively, $\upolyz^*$ can be viewed as the complement of $\upoly$ when only integer vectors are concerned. It it not hard to see that $\upoly$ is an almost complement of $\upoly^*$.  The following result states that the characteristic cones of $\upoly$ and $\upoly^*$, which may overlap, cover $\mathbb R^q$.

\begin{prop}\label{prop:almost-complement-union}
For any union of finitely many polyhedra $\upoly$ and any almost complement $\upoly^*$ of $\upoly$, we have $\upolyz\cup \upolyz^*={\mathbb R}^q$.
\end{prop}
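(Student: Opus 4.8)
Write $\upoly=\bigcup_{i\le I}\cpoly{i}$ and $\upoly^{*}=\bigcup_{j\le J}\cpoly{j}^{*}$, where each $\cpoly{i}=\{\vec x\in\mathbb R^{q}:\ba_i\cdot\invert{\vec x}\le\invert{\vbb_i}\}$ and each $\cpoly{j}^{*}=\{\vec x\in\mathbb R^{q}:\ba_j^{*}\cdot\invert{\vec x}\le\invert{(\vbb_j^{*})}\}$ is a polyhedron with integer constraint matrix, so that $\upolyz=\bigcup_{i\le I}\cpolyz{i}$ with $\cpolyz{i}=\{\vec x:\ba_i\cdot\invert{\vec x}\le\invert{\vec 0}\}$, and similarly for $\upolyz^{*}$. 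The plan is to argue by contradiction: assume $\upolyz\cup\upolyz^{*}\ne\mathbb R^{q}$, and manufacture an \emph{integer} point lying outside $\upoly\cup\upoly^{*}$, which contradicts condition (2) of Definition~\ref{dfn:almost-complement}.

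First I would note that each $\cpolyz{i}$ and each $\cpolyz{j}^{*}$ is a closed polyhedral cone (the preimage of a closed orthant under a continuous linear map), so $\upolyz\cup\upolyz^{*}$ is closed, hence its complement is a nonempty open subset of $\mathbb R^{q}$; by density of $\mathbb Q^{q}$ it contains a rational vector $\vec w$. Clearing denominators gives $\vec z:=N\vec w\in\mathbb Z^{q}$ for a suitable $N\in\mathbb Z_{>0}$, and since each of the finitely many cones is invariant under multiplication by positive scalars, $\vec z\notin\cpolyz{i}$ for every $i\le I$ and $\vec z\notin\cpolyz{j}^{*}$ for every $j\le J$.

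Then I would run a ``ray escapes'' argument along the integer ray $\{t\vec z:t\in\mathbb Z_{\ge0}\}\subseteq\mathbb Z^{q}$. For a fixed $i$, $\vec z\notin\cpolyz{i}$ means $(\ba_i\invert{\vec z})_k>0$ for some row index $k$, hence $(\ba_i\invert{(t\vec z)})_k=t\,(\ba_i\invert{\vec z})_k\to+\infty$ as $t\to\infty$, so $t\vec z\notin\cpoly{i}$ once $t$ exceeds some threshold $T_i$ (this holds whether or not $\cpoly{i}$ is empty); likewise $t\vec z\notin\cpoly{j}^{*}$ for $t>T_j^{*}$. Taking $T=1+\max(\{T_i:i\le I\}\cup\{T_j^{*}:j\le J\})$, which is a finite maximum, the point $T\vec z$ lies in $\mathbb Z^{q}$ but in neither $\upoly$ nor $\upoly^{*}$, contradicting $\mathbb Z^{q}\subseteq\upoly\cup\upoly^{*}$. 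Hence $\upolyz\cup\upolyz^{*}=\mathbb R^{q}$.

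The one genuinely delicate step is the passage from a possibly irrational witness direction $\vec v\notin\upolyz\cup\upolyz^{*}$ to an integer direction $\vec z$ with the same property, since it is only for integer points that the hypothesis $\mathbb Z^{q}\subseteq\upoly\cup\upoly^{*}$ is available; this is exactly what the openness of the complement of the (finitely many, closed) cones, together with density of $\mathbb Q^{q}$ and scale-invariance of cones, buys us. The remaining ingredient --- that a direction not in the characteristic cone of a polyhedron forces any ray in that direction to eventually leave the polyhedron --- is routine and I would state it directly.
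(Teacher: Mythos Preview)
Your proof is correct and follows the same contradiction strategy as the paper: produce an integer vector outside $\upoly\cup\upoly^*$. The execution differs slightly: the paper picks an arbitrary point $\vec x$ in the complement of the cones, takes an $L_\infty$-ball $B_\delta$ around it still disjoint from the cones, and then argues that $nB_\delta$ is disjoint from $\upoly\cup\upoly^*$ for large $n$ and (having radius $>1$) contains an integer point. You instead use openness of the complement and density of $\mathbb Q^q$ to get a rational direction, scale to an integer direction $\vec z$, and escape along the integer ray $t\vec z$. Your route is a touch cleaner, since the paper's step ``$nB_\delta$ does not overlap $\upoly\cup\upoly^*$'' implicitly needs a compactness/uniform-positivity argument over $B_\delta$ that the paper does not spell out, whereas your single-direction argument avoids any uniformity issue. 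One small fix: to guarantee $T\vec z\in\mathbb Z^q$ you need $T\in\mathbb Z_{>0}$, so take $T$ to be any integer exceeding $\max_i T_i$ and $\max_j T_j^*$ rather than $1+\max(\cdots)$ literally.
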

\begin{proof} Suppose for the sake of contradiction that $\upolyz\cup \upolyz^*\ne {\mathbb R}^q$. Let $\vec x\in {\mathbb R}^q\setminus (\upolyz\cup \upolyz^*)$ with $|\vec x|_1 =1$. Because $\upolyz$ and $\upolyz^*$ are unions of polyhedra, there exists an $\delta>0$ neighborhood $B_\delta = \{\vec x'\in{\mathbb R}^q: |\vec x' - \vec x|_\infty\le \delta\}$ of $\vec x$ in $\mathbb R^q$ that is $\eta>0$ away from $\upolyz\cup \upolyz^*$. Therefore, there exists $n\in \mathbb N$ with $n>\frac 1\delta$ such that $nB_\delta = \{n\vec x': \vec x'\in B_\delta\}$ do not overlap $\upoly\cup \upoly^*$. Because the radius of $nB_\delta$ is larger than 1, there exists an integer vector in $nB_\delta$, which contradicts the assumption that ${\mathbb Z}^q\subseteq \upoly\cup \upoly^*$.
\end{proof}

W.l.o.g., in this paper we assume that all polyhedra   are represented  by integer matrices $\ba$ where the entries of each row are coprimes, which means that the greatest common divisor of all entries in the row is $1$. 
For any $\upoly = \bigcup_{i\le  I}\cpoly{i}$ where $\cpoly{i}$ is the polyhedron characterized by integer matrices $\ba_i$ with coprime entries and $\vbb_i$, its almost complement always exists and is not unique. Let us define an specific almost complement of $\upoly$ that will be commonly used in this paper. 
\begin{dfn}[\bf Standard almost complement] Let $\upoly=\cup_{i\le I} \cpoly{i}$ denote a union of $I$ rational polyhedra characterized by $\ba_i$ and $\vbb_i$, we define its {\em standard almost complement}, denoted by $\saupoly$, as follows.
$$\saupoly = \bigcup\nolimits_{ \vec a_i\in \ba_i: \forall i\le I} \bigcap\nolimits_{i\le I} \left\{\vec x\in{\mathbb R}^q: -\vec a_i\cdot \vec x\le -b_i'-1\right\},$$
where $\vec a_i$ is a row in $\ba_i$ and $b_i'$ is the corresponding component in $\vbb_i$.  We write $\saupoly = \bigcup_{i^*\le \hat I}\acpoly{i^*}$, where $\hat I\in\mathbb N$ and each $\acpoly{i^*}$ is a rational polyhedron.
\end{dfn}
It is not hard to verify that $\saupoly$ is indeed an almost complement of $\upoly$. Let us take a look at a simple example for $q=2$.

\begin{ex}\label{ex:almost-complement} Let $\upoly = \cpoly{1}\cup \cpoly{2}$, where $\cpoly{1} = \left\{\vec x\in {\mathbb R}^2:\left[\begin{array}{rr}-1&0\\2&-1\end{array}\right]\cdot \invert{\vec x}\le \left[\begin{array}{r}0\\-2\end{array}\right]\right\}$ and $\cpoly{2} = \left\{\vec x\in {\mathbb R}^2:\left[\begin{array}{rr}-1&2\\1&-2\end{array}\right]\cdot \invert{\vec x}\le \left[\begin{array}{r}8\\8\end{array}\right]\right\}$. It follows that $\saupoly = \acpoly{1} \cup \acpoly{2} \cup \acpoly{3} \cup \acpoly{4}$, where 
\begin{align*}
\acpoly{1} = \left\{\vec x\in {\mathbb R}^2:\left[\begin{array}{rr}1&0\\1&-2\end{array}\right]\cdot \invert{\vec x}\le \left[\begin{array}{r}-1\\-9\end{array}\right]\right\}, 
&\acpoly{2}  = \left\{\vec x\in {\mathbb R}^2:\left[\begin{array}{rr}1&0\\-1&2\end{array}\right]\cdot \invert{\vec x}\le \left[\begin{array}{r}-1\\-9\end{array}\right]\right\}\\
\acpoly{3}  = \left\{\vec x\in {\mathbb R}^2:\left[\begin{array}{rr}-2&1\\1&-2\end{array}\right]\cdot \invert{\vec x}\le \left[\begin{array}{r}1\\-9\end{array}\right]\right\}, 
&\acpoly{4} = \left\{\vec x\in {\mathbb R}^2:\left[\begin{array}{rr}-2&1\\-1&2\end{array}\right]\cdot \invert{\vec x}\le \left[\begin{array}{r}1\\-9\end{array}\right]\right\}\\
\end{align*}

Figure~\ref{fig:ex-cstar} (a) shows $\upoly$ and $\saupoly$. Figure~\ref{fig:ex-cstar} (b) shows $\upolyz$ and  $\saupolyz$, where $\cpoly{2}$ is a one-dimensional polyhedron, i.e.,~a straight line. Note that $\upoly\cup \saupoly\ne \mathbb R^q$ and $\upolyz\cup \saupolyz = \mathbb R^q$. 

\begin{figure}[htp]
\centering
\begin{tabular}{cc }
\includegraphics[width = 0.4\textwidth]{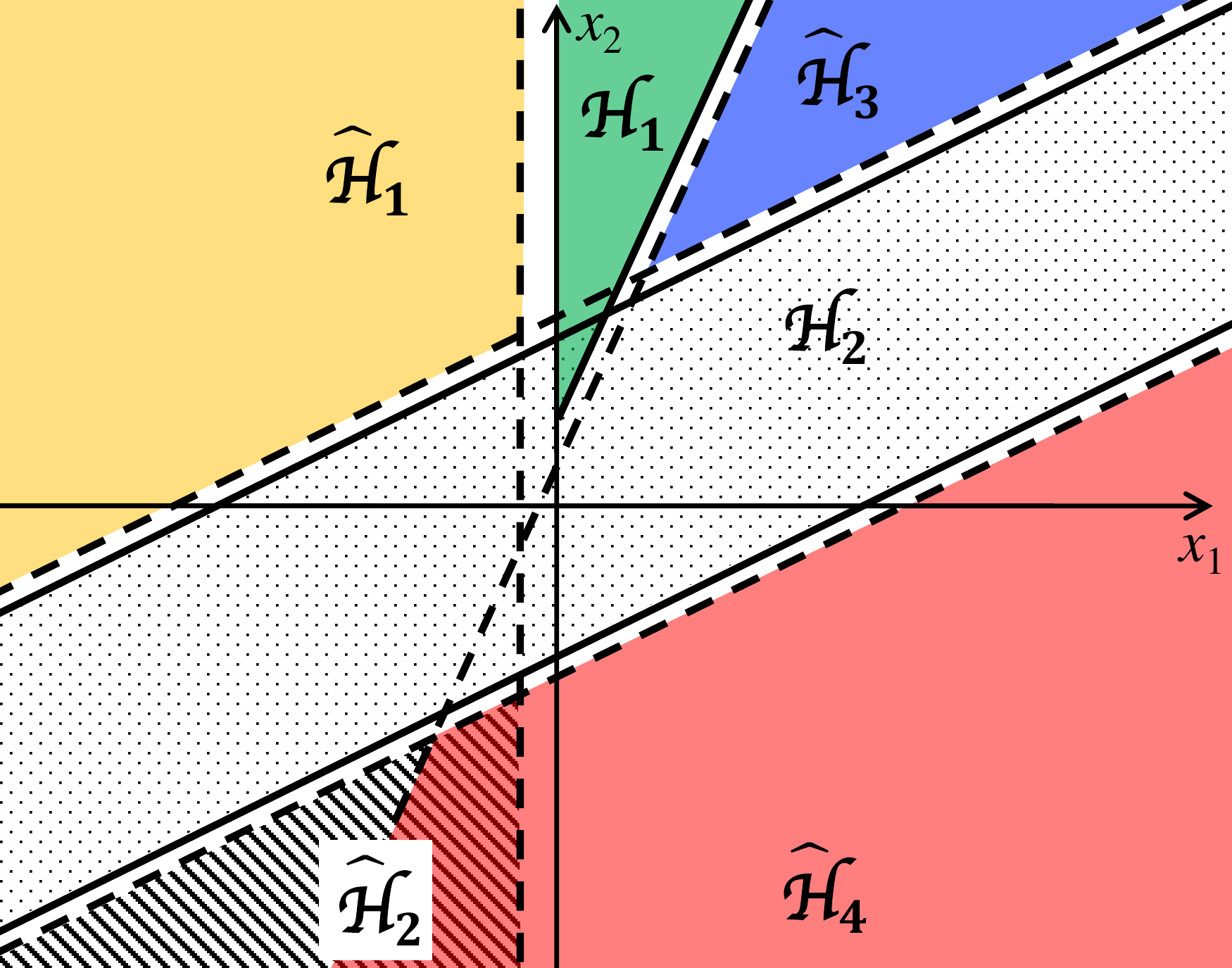}& 
\includegraphics[width = 0.4\textwidth]{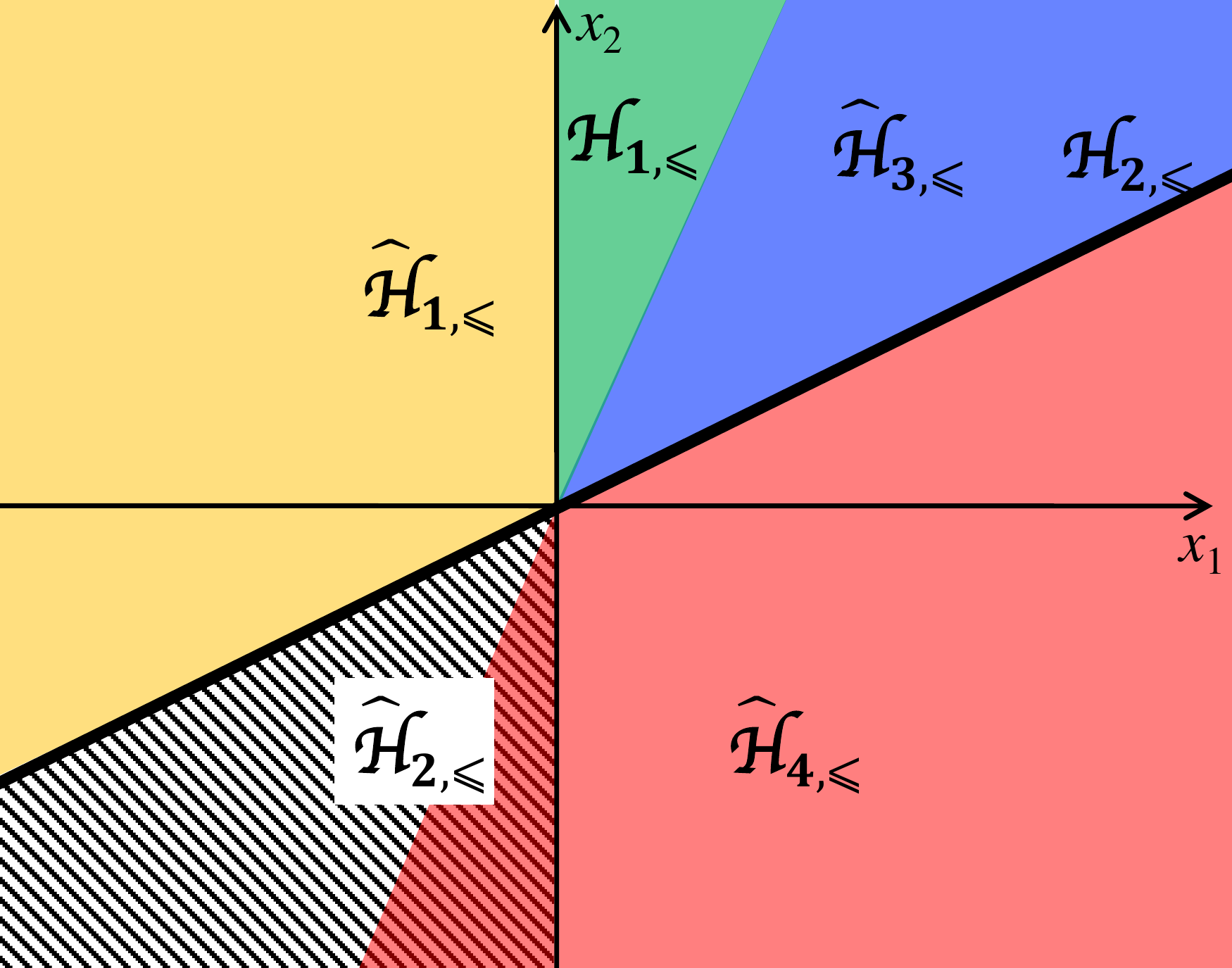}\\
(a) $\upoly$ and $\saupoly$. & (b) $\upolyz$ and $\saupolyz$. 
\end{tabular}
\caption{\small In (a), $\upoly = \cpoly{1}\cup \cpoly{2}$, where $\cpoly{1}$ is the green area and $\cpoly{2}$ is a shaded area, and $\saupoly = \acpoly{1} \cup \acpoly{2} \cup \acpoly{3} \cup \acpoly{4}$, where $\acpoly{2}$ is a shaded area, and $\acpoly{1}$, $\acpoly{3}$, and $\acpoly{4}$ are the yellow, red, and blue areas, respectively. In (b), $\upolyz \cup \saupolyz = \mathbb R^q$, where $\acpoly{2}$ is a straight line. \label{fig:ex-cstar}}
\end{figure}
\end{ex}

To present the categorization lemma, we recall the definitions of $\alpha_n$,  $\beta_n$, and Theorem~2 in~\citep{Xia2021:How-Likely}. We first define the {\em activation graph}.

\begin{dfn}[\bf Activation graph~\citep{Xia2021:How-Likely}]\label{dfn:activation-graph} For each $ \Pi $, $\cpoly{i}$, and $n\in\mathbb N$, the {\em activation graph}, denoted by $\calG_{\Pi,\upoly,n}$,  is defined to be the complete bipartite graph with two sets of vertices $\conv(\Pi)$ and $\{\cpoly{i}:i\le I\}$, and the weight on the edge $(\pi,\cpoly{i})$ is defined as follows.
 $$w_n(\pi,\cpoly{i})\triangleq\left\{\begin{array}{ll}-\infty&\text{if } \cpolynint{i} = \emptyset\\
-\frac{n}{\log n} &\text{otherwise, if }  \pi\notin \cpolyz{i} \\
\dim(\cpolyz{i})& \text{otherwise}
 \end{array}\right.,$$
where $\cpolynint{i}$ is the set of non-negative integer vectors in $\cpoly{i}$ whose $L_1$ norm is $n$. 
\end{dfn}
Definition~\ref{dfn:activation-graph} slightly abuses notation, because its vertices $\{\cpoly{i}:i\le I\}$ are not explicitly indicated in the subscript of $\calG_{\Pi,\upoly,n}$. This does not cause confusion when they are clear from the context.

When $\cpolynint{i} = \emptyset$ we say that $\cpoly{i}$ is {\em inactive (at $n$)}, and when $\cpolynint{i} \ne \emptyset$ we say that $\cpoly{i}$ is {\em active (at $n$)}. In addition, if the weight on any edge $(\pi,\cpoly{i})$ is positive, then we say that $\pi$ is {\em active} and  is {\em activated} by $\cpoly{i}$ (which must be active at $n$).
 
Roughly speaking, for any sufficiently large $n$ and  $\vec\pi = (\pi_1,\ldots,\pi_n)\in\Pi^n$, let $\pi = \frac 1n \sum_{j=1}^n \pi_j$, then \cite[Theorem~1]{Xia2021:How-Likely} implies
$$\Pr(\vXp\in \cpoly{i}) \approx n^{w_n(\pi,\cpoly{i})-q}$$
It follows that $\Pr(\vXp\in \upoly)$ is mostly determined by the heaviest weight on edges connected to $\pi$, denoted by  $\md{\upoly}{ \pi}$, which is formally  defined as follows:
$$\md{\upoly}{\pi}\triangleq\max\nolimits_{i\le I} w_n (\pi, \cpoly{i})$$
Then, a max-(respectively, min-) adversary aims to choose $\vec\pi=(\pi_1,\ldots,\pi_n)\in \Pi^n$ to maximize (respectively, minimize) $\md{\upoly}{\frac 1n\sum_{j=1}^n \pi_j}$, which are characterized by  $\alpha_n$ (respectively, $\beta_n$) defined as follows.
\begin{align*}
&\alpha_{n} \triangleq \max\nolimits_{\pi\in\conv(\Pi)} \md{\upoly}{\pi}\\
&\beta_{n} \triangleq \min\nolimits_{\pi\in\conv(\Pi)} \md{\upoly}{\pi}
\end{align*}
We further define the following notation that will be frequently used in the proofs of this paper.
 Let  $\upolynint$ denote the set of all non-negative integer vectors in $\upoly$ whose $L_1$ norm is $n$.  That is,
$$\upolynint = \bigcup\nolimits_{i\le I} \cpolynint{i}$$
By definition, $\upolynint = \emptyset$ if and only if all $\cpoly{i}$'s are inactive at $n$. Therefore, we have 
$$(\alpha_n = -\infty) \Longleftrightarrow (\beta_n = -\infty)\Longleftrightarrow (\upolynint = \emptyset) $$
For completeness, we recall~\citep[Theorem~2]{Xia2021:How-Likely} below.

\paragraph{\bf \boldmath Theorem~2 in~\citep{Xia2021:How-Likely} (Smoothed likelihood of PMV-in-$\upoly$).}{\em
Given any $q, I\in\mathbb N$, any closed and strictly positive $\Pi$ over $[q]$, and any set $\upoly = \bigcup_{i\le I}\cpoly{i}$ that is the union of finitely many polyhedra with integer matrices, for any $n\in\mathbb N$, 
\begin{align*}
&\sup_{\vec\pi\in\Pi^n}\Pr\left(\vXp \in \upoly\right)=\left\{\begin{array}{ll}0 &\text{if } \alpha_n = -\infty\\
\exp(-\Theta(n)) &\text{if } -\infty<\alpha_n<0\\
\Theta\left(n^{\frac{\alpha_n-q}{2}}\right) &\text{otherwise (i.e. }\alpha_n\ge 0\text{)}
\end{array}\right.,\\
&\inf_{\vec\pi\in\Pi^n}\Pr\left(\vXp \in \upoly\right)=\left\{\begin{array}{ll}0 &\text{if } \beta_n = -\infty\\
\exp(-\Theta(n)) &\text{if } -\infty< \beta_n <0\\
\Theta\left(n^{\frac{\beta_n-q}{2}}\right) &\text{otherwise (i.e. } \beta_n\ge 0
\text{)}\end{array}\right..
\end{align*}
}

For any almost complement $\upoly^*$ of $\upoly$, let $\alpha_n^*$ and $\beta_n^*$  denote the counterparts of $\alpha_n$ and $\beta_n$ for $\aupoly$, respectively. We note that  $\alpha_n^*$ and $\beta_n^*$ depend on the polyhedra used to representation  $\aupoly$. We are now ready to present the full version of the categorization lemma as follows.

\appLem{lem:categorization}{Categorization Lemma, Full Version}
{Given any $q, I\in\mathbb N$, any closed and strictly positive $\Pi$ over $[q]$,  any  $\upoly = \bigcup_{i\le I}\cpoly{i}$ and its almost complement $\aupoly = \bigcup_{i^*\le I^*}\cpoly{i^*}^*$,  for any $n\in\mathbb  N$, 
\begin{align*}
&\inf_{\vec\pi\in\Pi^n}\Pr\left(\vXp \in \upoly\right)=\left\{\begin{array}{ll}0 &\text{if } \beta_n = -\infty\\
\exp(-\Theta(n)) &\text{if }  -\infty< \beta_n<0\\
\Theta\left(n^{\frac{\beta_n-q}{2}}\right) &\text{if }  0\le \beta_n<q\\
\Theta(1)\wedge(1-\Theta(1)) &\text{if }\alpha^*_n=\beta_n=q\\
1-\Theta\left(n^{\frac{\alpha^*_n-q}{2}}\right) & \text{if }  0\le \alpha^*_n<q\\ 
1- \exp(-\Theta(n)) &\text{if } -\infty< \alpha^*_n <0\\
1 &\text{if } \alpha^*_n=\infty\end{array}\right.
\end{align*}
\begin{align*}
&\sup_{\vec\pi\in\Pi^n}\Pr\left(\vXp \in \upoly\right)=\left\{\begin{array}{ll}0 &\text{if } \alpha_n = -\infty\\
\exp(-\Theta(n)) &\text{if } -\infty< \alpha_n <0\\
\Theta\left(n^{\frac{\alpha_n-q}{2}}\right) &\text{if } 0\le \alpha_n<q\\
\Theta(1)\wedge(1-\Theta(1)) &\text{if } \alpha_n=\beta^*_n=q\\
1-\Theta\left(n^{\frac{\beta_n^*-q}{2}}\right) & \text{if } 0\le \beta^*_n<q\\
1- \exp(-\Theta(n)) &\text{if } -\infty< \beta^*_n<0\\
1 &\text{if }  \beta^*_n = -\infty
\end{array}\right.
\end{align*}
}

\begin{proof}
We present the proof for the $\inf$ part of Lemma~\ref{lem:categorization} and the proof for the $\sup$ part is similar. Notice that ${\mathbb Z}^q \subseteq  \upoly\cup \aupoly$, we have:
$$\inf\nolimits_{\vec\pi\in\Pi^n}\Pr\left(\vXp \in \upoly\right)= 1-\sup\nolimits_{\vec\pi\in\Pi^n}\Pr\left(\vXp \in \aupoly\right)$$
The proof is done by combining the $\inf$ part of~\cite[Theorem~2]{Xia2021:How-Likely} (applied to $\upoly$) and one minus the $\sup$ part of~\cite[Theorem~2]{Xia2021:How-Likely} (applied to $\aupoly$).
\begin{itemize}
\item {\bf \boldmath The $0$, $\exp(-\Theta(n))$ and $\Theta\left(n^{\frac{\beta_n-q}{2}}\right)$ cases}  follow after the corresponding $\inf$ part of~\cite[Theorem~2]{Xia2021:How-Likely} applied to $\upoly$.
\item {\bf \boldmath The $\Theta(1)\wedge(1-\Theta(1))$ case.} The condition of this case implies that the polynomial bounds in the $\inf$ part of~\cite[Theorem~2]{Xia2021:How-Likely} (applied to $\upoly$) hold, which means that $\inf_{\vec\pi\in\Pi^n}\Pr\left(\vXp \in \upoly\right)=\Theta(1)$, and the polynomial bounds in the $\sup$ part of~\cite[Theorem~2]{Xia2021:How-Likely} (applied to $\aupoly$) hold, which means that 
$$\inf\nolimits_{\vec\pi\in\Pi^n}\Pr\left(\vXp \in \upoly\right)= 1-\sup\nolimits_{\vec\pi\in\Pi^n}\Pr\left(\vXp \in \aupoly\right) = 1-\Theta(1)$$
\item {\bf \boldmath The $1-\Theta\left(n^{\frac{\alpha^*_n-q}{2}}\right)$,  $1-\exp(-\Theta(n))$, and $1$ cases}  follow after one minus the $\sup$ part of~\cite[Theorem~2]{Xia2021:How-Likely} (applied to $\aupoly$). 
\end{itemize}
\end{proof}
\begin{wrapfigure}{R}{0.45\textwidth}
\centering
\includegraphics[width = 0.45\textwidth]{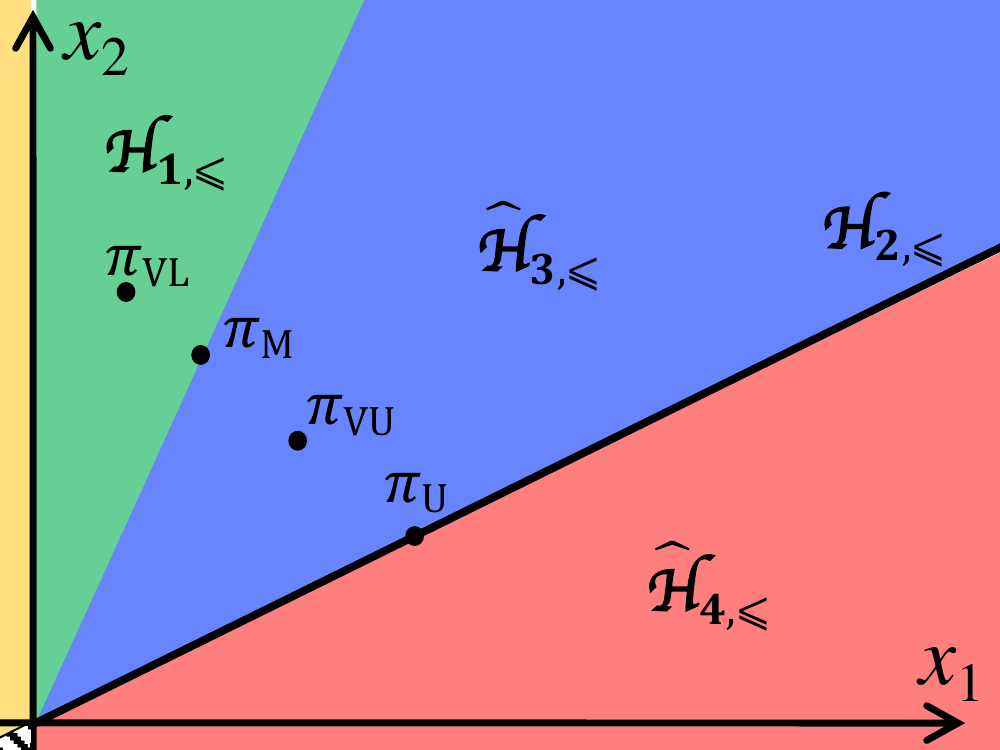}
\caption{\small An Illustration of $\pi_{\text{VU}}$, $\pi_{\text{U}}$, $\pi_{\text{M}}$, and $\pi_{\text{VL}}$ for the $\inf$ part of Lemma~\ref{lem:categorization}. \label{fig:ex-categorization-lemma}}
\end{wrapfigure}

\paragraph{\bf Remarks.} The conditions for all, except $0$ and $1$, cases are different between $\sup$ and $\inf$ parts of the lemma. Moreover, the degrees of polynomial in the L and U cases may be different between $\sup$ and $\inf$ parts.  Let us use the setting in Example~\ref{ex:almost-complement} and Figure~\ref{fig:ex-categorization-lemma}  to illustrate the conditions for the $\inf$ case. For the purpose of illustration, we assume that all polyhedra in $\upoly$ and $\upoly^*$ are active at $n$.

$\bullet$ {\bf \boldmath The $0$ (respectively, $1$) case} holds when no non-negative integer with $L_1$ norm $n$ is in $\upoly$ (respectively, in $\aupoly$). 

$\bullet$ {\bf \boldmath The VU case.} Given that the $0$ and $1$ cases do not hold, the VU case holds when $\conv(\Pi)$ contains a distribution $\pi_{\text{VU}}$ that is not in $\upolyz$. Notice that $\upolyz$ is a closed set and $\upolyz\cup \aupolyz=\mathbb R^q$. This means that $\pi_{\text{VU}}$ is an interior point of $\aupolyz$. For example, in Figure~\ref{fig:ex-categorization-lemma}, $\pi_{\text{VU}}$ is not in $\upolyz$ and is an interior point of $\acpolyz{3}$.

$\bullet$ {\bf The U case} holds when  $\conv(\Pi)\subseteq \upolyz$, and $\conv(\Pi)$ contains a distribution $\pi_{\text{U}}$ that lies on a (low-dimensional) boundary of $\upolyz$. For example, in Figure~\ref{fig:ex-categorization-lemma}, $\pi_{\text{U}}$ lies in a $1$-dimensional polyhedron $\cpolyz{2}\subseteq \upolyz$, and is not in any $2$-dimensional polyhedron in $\upolyz$.

$\bullet$ {\bf The M case} holds when the U case does not hold, and $\conv(\Pi)$ contains a distribution  $\pi_{\text{M}}$ that lies in the intersection of a $q$-dimensional subspace of $\upolyz$ and a $q$-dimensional subspace of $\aupolyz$. For example, in Figure~\ref{fig:ex-categorization-lemma}, $\pi_{\text{U}}$ lies in $\cpolyz{1}$ and $\acpolyz{3}$, both of which are $2$-dimensional.

$\bullet$ {\bf The L case holds} when every distribution in $\conv(\Pi)$ is in a $q$-dimensional subspace of $\upolyz$, and there exists $\pi_{\text{L}}\in \conv(\Pi)$ that  lies in a (low-dimensional) boundary of $\aupolyz$.  No such $\pi_{\text{L}}$ exists in Figure~\ref{fig:ex-categorization-lemma}'s example, but if we apply Lemma~\ref{lem:categorization} to $\aupoly$, then $\pi_{\text{U}}$ in Figure~\ref{fig:ex-categorization-lemma} is an example of $\pi_{\text{L}}$ for $\aupoly$.

$\bullet$ {\bf The VL case holds} when every distribution in $\conv(\Pi)$ is an inner point of $\upolyz$. For example, in Figure~\ref{fig:ex-categorization-lemma}, $\pi_{\text{VL}}$ is an inner point of $\cpolyz{1}\subseteq \upoly$.

\section{ GISRs and Their Algebraic Properties}
\label{app:algebraic-GISR}

\subsection{Definition of GISRs} 
\label{app:dfn-GISR}
All irresolute voting rules  studied in this paper are generalized irresolute scoring rules (GISR)~\citep{Freeman2015:General,Xia15:Generalized}, whose resolute versions are known as {\em generalized scoring rules (GSRs)}~\citep{Xia08:Generalized}. We recall the definition of GISRs based on separating hyperplanes~\citep{Xia09:Finite,Mossel13:Smooth}. 

For any real number $x$,  let $\sign(x)\in\{+,-,0\}$ denote the sign of $x$. Given a set of $K$ hyperplanes in the $q$-dimensional Euclidean space, denoted by $\vH = (\vec h_1,\ldots,\vec h_K)$, for any $\vec x \in {\mathbb R}^q$, we let $\sign_\vH(\vec x) = (\sign(\vec x\cdot \vec h_1),\ldots, \sign(\vec x\cdot \vec h_K))$. In other words, for any $k\le K$, the $k$-th component of $\sign_\vH(\vec x)$ equals to $0$, if $\vec p$ lies in hyperplane $\vec h_k$; and it equals to $+$ (respectively, $-$) if  $\vec p$ lies in the positive (respectively, negative) side of  $\vec h_k$.
Each element in $\{+,-,0\}^K$ is called a {\em signature}.

\begin{dfn}[\bf Generalized irresolute scoring rule (GISR)]
\label{dfn:GISR} A {\em generalized irresolute scoring rule (GISR)} $\cor$ is defined by (1) a set of $K\ge 1$ hyperplanes $\vH = (\vec h_1,\ldots,\vec h_K)\in ({\mathbb R}^{m!})^K$ and (2) a function $g:\{+,-,0\}^K\ra (2^\ma\setminus\emptyset)$. For any  profile $P$, we let $\cor(P) = g(\sign_\vH(\hist(P)))$. $\cor$ is called an {\em integer GISR (int-GISR)} if ${\vH}\in ({\mathbb Z}^{m!})^K$.  If for all profiles $P$, we have $|\cor(P)|=1$, then $\cor$ is called a generalized scoring rule (GSR). Int-GSRs are defined similarly to int-GISRs.
\end{dfn}

\begin{dfn} [\bf Feasible and atomic signatures]
Given integer $\vH$ with $K = |\vH|$, let $\sk =  \{+,-,0\}^K$. A signature $\vec t\in\sk$ is {\em  feasible}, if there exists $\vec x\in \mathbb R^{d}$  such that $\sign_\vH(\vec x) = \vec t$. 
Let $\fs \subseteq \sk$ denote the set of all feasible signatures.

A signature $\vec t$ is called an {\em atomic signature} if and only if $\vec t\in \{+,-\}^K$. Let $\fsatomic$ denote the set of all feasible atomic signatures.
\end{dfn}
The domain of any GISR $\cor$ can be naturally extended to $\mathbb R^{m!}$ and to $\fs$. Specifically, for any $\vec t\in\fs$ we let $\cor(\vec t) = g(\vec t)$. It suffices to define $g$ on the feasible signatures, i.e., $\fs$.

Notice that the same voting rule can be   represented by different combinations of $(\vH,g)$. In the following  section we recall int-GISR representations of the voting rules studied in this paper. 

%

\subsection{Commonly-Studied Voting Rules as GISRs}
\label{sec:common-GISR}
As discussed in~\citep{Xia2021:How-Likely}, the irresolute versions of  Maximin, Copeland$_\alpha$, Ranked Pairs, and Schulze belong to the class of {\em edge-order-based (\em{EO}-based)} rules, which are defined over the weak order on edges in $\wmg(P)$.  We recall its formal definition below.

\begin{dfn}[\bf Edge-order-based rules]
\label{dfn:eo-rule}
A (resolute or irresolute) voting rule $\cor$ is {\em edge-order-based (EO-based)}, if for any pair of profiles $P_1$ and $P_2$ such that for every combination of four different alternatives $\{a,b,c,d\}\subset\ma$, $\left[ w_{P_1}(a,b)\ge w_{P_1}(c,d)\right]\Leftrightarrow \left[ w_{P_2}(a,b)\ge w_{P_2}(c,d)\right]$, we have $\cor(P_1) = \cor(P_2)$.
\end{dfn}

All EO-based rules can be represented by a GISR using a set of hyperplanes that represents the orders over WMG edges. We first recall pairwise difference vectors as follows.

\begin{dfn}[\bf Pairwise difference vectors~\citep{Xia2020:The-Smoothed}]\label{dfn:pairdiff} For any pair of different alternatives $a,b$, let $\pair_{a,b}$ denote the $m!$-dimensional vector indexed by rankings in $\ml(\ma)$: for any $R\in\ml(\ma)$, the $R$-component of $\pair_{a,b}$ is $1$ if $a\succ_R b$; otherwise it is $-1$. 
\end{dfn}

We now define the hyperplanes for edge-order-based rules.
\begin{dfn}[\bf \boldmath $\vH_{\eo}$]
\label{dfn:heo}
$ \vH_{\eo}$ consists of $m(m-1)\choose 2$ hyperplanes indexed by $\vec h_{e_1,e_2}$, where $e_1=(a_1,a_2)$ and $e_2=(a_2,b_2)$ are two different pairs of alternatives, such that 
$$\vec h_{e_1,e_2} = \pair_{a_1,b_1}-\pair_{a_2,b_2}$$
\end{dfn}
That is, for any (fractional) profile $P$, $\vec h_{e_1,e_2}\cdot \hist(P)\le 0$ if and only if the weight on $e_1$ in $\wmg(P)$ is no more than the weight on $e_2$ in $\wmg(P)$. Therefore,   given $\sign_{\vH_{\eo}}(P)$, we can compare the weights on pairs of edges, which leads to the weak order on edges in $\wmg(P)$ w.r.t.~their weights. Consequently, for any profile $P$, $\sign_{\vH}(P)$ contains enough information to determine the (co-)winners under any edge-order-based rules. Formally, the GISR representations of these rules used in this paper are defined by $\vH_{\eo}$ and the following $g$ functions that mimic the procedures of choosing the winner(s). 
\begin{dfn}
\label{dfn:some-GISR}
Let $\overline{\maximin}$, $\overline{\copeland}$, $\overline{\rp}$, $\overline{\schulze}$ denote the int-GISRs defined by $ \vH_{\eo}$ and the following $g$ functions. Given a feasible signature $\vec t\in\pfs{\vH_{\eo}}$, 
\begin{itemize}
\item  {\bf \boldmath $g_{\maximin}$} first picks a representative edge $e_a$ whose weight is no more than all other outgoing edges of $a$, then compare the weights of $e_a$'s for all alternatives and choose alternatives $a$ whose $e_a$ has the highest weight as the winners.
\item {\bf \boldmath $g_{\copeland}$} compares weights on pairs of edges $a\ra b$ and $b\ra a$, and then calculate the Copeland$_\alpha$ scores accordingly. The winners are the alternatives with the highest Copeland$_\alpha$ score.
\item {\bf \boldmath $g_{\rp}$} mimics the execution of PUT-Ranked Pairs, which only requires information about the weak order over edges w.r.t.~their weights in WMG.
\item {\bf \boldmath $g_{\schulze}$} first computes an edge $e_p$ with the minimum weight on any given directed path $p$, then for each pair of alternatives $a$ and $b$, computes an edge $e_{(a,b)}$ that represents the strongest edge among all paths from $a$ to $b$. $g_{\schulze}$ then mimics Schulze to select the winner(s). 
\end{itemize}
\end{dfn}

While Copeland can be represented by $\vH_{\eo}$ and $g_{\copeland}$ as in the definition above, in this paper we use another set of hyperplanes, denoted by $\vH_{\copeland}$, that represents the UMG of the profile. The reason is that in this way any refinement of $\copeland$ would break ties according to the UMG of the profile, which is needed in the proof of Theorem~\ref{thm:sPar-copeland}. 

\begin{dfn}[\bf \boldmath $\icopeland$ as a GISR]
\label{dfn:copeland-GISR} $\icopeland$ is represented by $\vH_{\copeland}$ and $g_{\copeland}$ defined as follows. 
For every pair of different alternatives $(a,b)$, $\vH_{\copeland}$ contains a hyperplane $\vec h_{(a,b)} = \pair_{a,b}-\pair_{b,a}$. For any profile $P$, $g_{\copeland}$ first computes the outcome of each head-to-head elections between alternatives $a$ and $b$ by checking $\vec h_{(a,b)}\cdot\hist(P)$, then calculate the Copeland$_\alpha$ score, and finally choose all alternatives with the maximum score as the winners.
\end{dfn}

The GISR representation of MRSE rules is based on the fact that the winner(s) can be computed from comparing the scores between any pair of alternatives $(a,b)$ after a set of alternatives $B$ is removed. This idea is formalized in the following definition. For any $R\in\ml(\ma)$ and any $B\subset \ma$, let $R|_{\ma\setminus B}$ denote the linear order over $(\ma\setminus B)$ that is obtained from $R$ by removing alternatives in $B$.

\begin{dfn}[\bf \boldmath MRSE rules as GISRs]
\label{dfn:MRSE-GISR} Any MRSE $\cor=(\cor_2,\ldots,\cor_{m})$ is represented by $\vH$ and $g_{\cor}$ defined as follows. 
Given an int-MRSE rule $\cor = (\cor_2,\ldots,\cor_m)$, for any pair of alternatives $a,b$ and any subset of alternatives $B\subseteq(\ma\setminus\{a,b\})$, we let $\scorediff{B,a,b}$ denote the vector, where for every $R\in\ml(\ma)$, the $R$-th component of $\pair_{B,a,b}$ is $s_i^{m-|B|}-s_j^{m-|B|}$, where $i$ and $j$ are the ranks of $a$ and $b$ in $R|_{\ma\setminus B}$, respectively.

For any pair of different alternatives $\{a,b\}\subseteq (\ma\setminus B)$, $\vH$ contains a hyperplane $\scorediff{B,a,b}$. For any profile $P$, $g_{\cor}$ mimics $\cor$ to compute the PUT winners based on whether $\vec h_{(B,a,b)}\cdot\hist(P)$ is $<0$, $=0$, or $>0$.
\end{dfn}
In fact, the GISR representation of $\cor$ in Definiton~\ref{dfn:MRSE-GISR} corresponds to the {\em PUT structure}~\citep{Xia2021:How-Likely}, which we do not discuss in this paper for simplicity of presentation. Any GSR refinement of $\cor$, denoted by $r$, uses the same $\vH$ in Definiton~\ref{dfn:MRSE-GISR} and a different $g$ function that always chooses a single loser to be eliminated in each round. The constraint is, for any profile $P$, the break-tie mechanisms used in $g$  only depends on $\sign_{\vH}(P)$ (but not any other information contained in $P$).  For example, lexicographic tie-breaking w.r.t.~a fixed order over alternatives is allowed but using the first agent's vote to break ties is not allowed.

\subsection{Minimally Continuous GISRs}
\label{sec:common-min-cont}

Next, we define (minimally) continuous GISR in a similar way as~\citet{Freeman2015:General}, except that in this paper the domain of GISR is $\mathbb R^{m!}$ (in contrast to $\mathbb R^{m!}_{\ge 0}$ in~\citep{Freeman2015:General}).  
\begin{dfn}[\bf (Minimally) continuous GISR] 
\label{dfn:min-cont-gisr}
A GISR $\cor$ is {\em continuous}, if for any $\vec x\in\mathbb R^{m!}$, any alternative $a$, and any sequence of vectors $(\vec x_1,\vec x_2\ldots)$ that converges to $\vec x$,  
$$\left[\forall j\in\mathbb N, a\in \cor(\vec x_j)\right]\Longrightarrow [a\in \cor(\vec x)]$$

A GISR $\cor$ is called {\em minimally continuous}, if it is continuous and there does not exist a continuous GISR $\cor^*$ such that (1) for all $\vec x\in\mathbb R^{m!}$, $\cor^*(\vec x)\subseteq \cor(\vec x)$, and (2) the inclusion is strict for some $\vec x$.
\end{dfn}
Equivalently, a continuous GISR $\cor$ is minimally continuous if and only if the (fractional) profiles with unique winners is a dense subset of $\mathbb R^{m!}$. That is, for any vector in $\mathbb R^{m!}$, there exists a sequence of profiles with unique winners that converge to it. As commented by~\citet{Freeman2015:General}, many commonly-studied irresolute voting rules are continuous GISRs. It is not hard to verify that positional scoring rules and MRSE rules are minimally continuous GISRs, which is formally proved in the following proposition.


\begin{prop} \label{prop:common-min-cont}
Positional scoring rules and MRSE rules are minimally continuous.
\end{prop}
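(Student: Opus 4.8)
\begin{sketch}
The plan is to invoke the reformulation recorded right after Definition~\ref{dfn:min-cont-gisr}: a continuous GISR $\cor$ is minimally continuous iff the fractional profiles (vectors in $\mathbb R^{m!}$) with a unique winner form a dense subset of $\mathbb R^{m!}$. So for each of the two families it suffices to show (a) the irresolute rule is a continuous GISR, and (b) $\{\vec x\in\mathbb R^{m!}:|\cor(\vec x)|=1\}$ is dense. Both parts run on the same engine: under any of these rules the ``score'' of an alternative at a fractional profile $\vec x$ is a \emph{linear} functional of $\vec x$, so every elementary tie is the vanishing of such a functional, and every such functional has a nonzero normal vector.

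For continuity I would use that ``$a\in\cor(\vec x)$'' is cut out by finitely many \emph{weak} linear inequalities, hence stable under limits. For a positional scoring rule $\cor_{\vec s}$, let $\vec f_a\in\mathbb R^{m!}$ have $R$-component $\vec s(R,a)$, so the score of $a$ at $\vec x$ is $\vec x\cdot\vec f_a$ and $a\in\cor_{\vec s}(\vec x)$ iff $\vec x\cdot(\vec f_a-\vec f_b)\ge 0$ for all $b\in\ma$; these pass to the limit of any $\vec x_j\to\vec x$ with $a\in\cor_{\vec s}(\vec x_j)$ for every $j$. For an int-MRSE rule $\cor=(\cor_2,\dots,\cor_m)$ I would use its GISR form from Definition~\ref{dfn:MRSE-GISR}: $a\in\cor(\vec x)$ iff some elimination order $c_1\rhd\dots\rhd c_{m-1}$ with remaining alternative $a$ is valid at $\vec x$, i.e.\ for each round $i$, writing $B_i=\{c_1,\dots,c_{i-1}\}$, we have $\scorediff{B_i,c_i,d}\cdot\vec x\le 0$ for every $d\in(\ma\setminus B_i)\setminus\{c_i\}$ (so $c_i$ is a lowest-scorer in round $i$). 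Given $\vec x_j\to\vec x$ with $a\in\cor(\vec x_j)$ for all $j$, only finitely many elimination orders exist, so one fixed order is valid at $\vec x_j$ for infinitely many $j$; along that subsequence the finitely many weak inequalities pass to the limit, so that order is valid at $\vec x$, giving $a\in\cor(\vec x)$.

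For density I would show the complement of $\{|\cor(\vec x)|=1\}$ is contained in a finite union of hyperplanes through the origin, each with a nonzero normal vector; since the complement of such a union is open and dense in $\mathbb R^{m!}$, this yields (b). For $\cor_{\vec s}$: if $|\cor_{\vec s}(\vec x)|\ge 2$, say distinct $a,b$ both win, then $\vec x\cdot(\vec f_a-\vec f_b)=0$, and $\vec f_a\ne\vec f_b$ because the ranking with $a$ first and $b$ last has $\vec f_a$-component $s_1$ and $\vec f_b$-component $s_m$, with $s_1>s_m$. For MRSE: let $T$ be the union, over all $B\subsetneq\ma$ with $|\ma\setminus B|\ge 2$ and all distinct $a,b\in\ma\setminus B$, of the hyperplanes $\{\vec x:\scorediff{B,a,b}\cdot\vec x=0\}$; if $\vec x\notin T$ then in every state reachable by the elimination process the surviving alternatives have pairwise distinct $\cor_{|\ma\setminus B|}$-scores, so every round has a unique loser, the run is deterministic, and $|\cor(\vec x)|=1$. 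Each $\scorediff{B,a,b}$ is nonzero since the ranking placing $a$ first and $b$ last among $\ma\setminus B$ contributes $s^{m-|B|}_1-s^{m-|B|}_{m-|B|}>0$ to it.

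The positional-scoring half is essentially immediate once the score is written as $\vec x\cdot\vec f_a$. The work, and the main obstacle, is the MRSE bookkeeping: faithfully rendering the parallel-universes tie-breaking of Definition~\ref{dfn:MRSE-GISR} as ``validity of an elimination order at $\vec x$ is a conjunction of finitely many weak inequalities in the $\scorediff{B,a,b}$'s'' (so the subsequence-and-limit step is legitimate), and checking that avoiding all those hyperplanes indeed forces a fully deterministic run. Both ultimately rest on the one structural fact that a positional scoring vector gives strictly more weight to the top position than to the bottom.
\end{sketch}
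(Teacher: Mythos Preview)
Your continuity arguments are correct and match the paper's. The gap is in minimal continuity: the equivalence you invoke from the remark after Definition~\ref{dfn:min-cont-gisr} (a continuous $\cor$ is minimally continuous iff the unique-winner set is dense) fails in the $\Leftarrow$ direction. Take one hyperplane $\vH=\{\vec h_1\}$, $m\ge 3$, and set $g(+)=\{1\}$, $g(-)=\{2\}$, $g(0)=\{1,2,3\}$: this GISR is continuous with dense unique-winner set (the complement of a single hyperplane), yet replacing $g(0)$ by $\{1,2\}$ yields a strict continuous refinement. In the language of Proposition~\ref{prop:char-continuity}, density forces $|\cor(\vec t)|=1$ on atomic signatures but does not force $\cor(\vec t)=\bigcup_{\vec t'\in\fsatomic,\,\vec t'\unlhd\vec t}\cor(\vec t')$ at non-atomic $\vec t$. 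Your hyperplane-avoidance argument shows only that \emph{some} alternative is the unique winner near $\vec x$; it does not show that any \emph{prescribed} co-winner $a\in\cor(\vec x)$ can be made the unique winner nearby, and that is exactly what is needed.

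The paper's own proof does not rely on that remark. It establishes the per-co-winner statement directly: for every $\vec x$ and every $a\in\cor(\vec x)$ there is a sequence $\vec x_j\to\vec x$ with $\cor(\vec x_j)=\{a\}$, which immediately rules out any strict continuous refinement. For $\cor_{\vec s}$ it sets $\vec x_j=\vec x+\tfrac{1}{j}\hist(P)$ with $P$ the $(m{-}1)$-profile of cyclic shifts of $[a\succ\others]$, making $a$ the strict score leader; for an MRSE rule it picks a parallel universe $O$ ending in $a$ and sets $\vec x_j=\vec x-\tfrac{1}{j}\hist(P)$ for a profile $P$ tailored so that $O$ becomes the unique elimination order. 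Your sketch can be repaired by proving this stronger, per-alternative density rather than mere global density of the unique-winner set.
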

\begin{proof}
Let $\vec s = (s_1,\ldots,s_m)$ denote the scoring vector. We first prove that $\cor_{\vec s}$ is continuous. For any $\vec x\in\mathbb R^{m!}$, any $a\in\ma$, and any sequence $(\vec x_1,\vec x_2,\ldots)$ that converges to $\vec x$ such that for all $j\ge 1$, $a\in \cor(\vec x_j)$, we have that for every $b\in \ma$, $\vec s(\vec x_j, a)\ge \vec s(\vec x_j, b)$. Notice that $\vec s(\vec x_j, a)$ (respectively, $\vec s(\vec x_j, b)$) converges to $\vec s(\vec x, a)$  (respectively, $\vec s(\vec x, b)$). Therefore, $\vec s(\vec x, a)\ge \vec s(\vec x, b)$, which means that $a\in \cor_{\vec s}(\vec x)$, i.e., $\cor_{\vec s}$ is continuous.

To prove that $\cor_{\vec s}$ is minimally continuous, it suffices to prove that for any $\vec x\in\mathbb R^{m!}$ and any $a\in \cor_{\vec s}(\vec x)$, there exists a sequence $(\vec x_1,\vec x_2,\ldots)$ that converges to $\vec x$ such that  for all $j\ge 1$, $\cor(\vec x_j) = \{a\}$. Let $\sigma$ denote an arbitrary cyclic permutation among $\ma\setminus\{a\}$ and $P$ denote the following $(m-1)$-profile.
$$P = \left\{\sigma^i(a\succ \others):1\le i\le m-1\right\}$$
Then, for every $j\in \mathbb N$, we let  $\vec x_j = \vec x + \frac 1j \hist(P)$. It is easy to check that $\cor(\vec x_j) = \{a\}$, which proves the minimal continuity of $\cor_{\vec s}$. 

Let $\cor = (\cor_2,\ldots,\cor_m)$ denote  the MRSE rule. We will use notation in Section~\ref{app:proof-thm:sCC-MRSE} to prove the proposition for $\cor$. We first prove that $\cor $ is continuous. Let  $\vec x\in\mathbb R^{m!}$,  $a\in\ma$, and  $(\vec x_1,\vec x_2,\ldots)$ be a sequence that converges to $\vec x$ such that for all $j\ge 1$, $a\in \cor(\vec x_j)$. Because the number of different parallel universes is finite (more precisely, $m!$), there exists a subsequence of $(\vec x_1,\vec x_2,\ldots)$, denoted by $(\vec x_1',\vec x_2',\ldots)$, and a parallel universe $O\in \ml(\ma)$ where $a$ is ranked in the last position (i.e., $a$ is the winner), such that for all $j\in\mathbb N$, $O$ is a parallel universe when executing $\cor$ on $\vec x_j'$. Therefore, for all $1\le i\le m-1$, in round $i$, $O[i]$ has the lowest $\cor_{m+1-i}$ score in $\vec x_j'|_{O[i,m]}$ among alternatives in $O[i,m]$. It follows that $O[i]$ has the lowest $\cor_{m+1-i}$ score in $\vec x|_{O[i,m]}$ among alternatives in $O[i,m]$, which means that $O$ is also a parallel universe when executing $\cor$ on $\vec x$. This proves that $\cor $ is continuous.

The proof of minimal continuity of $\cor$ is similar to the proof for positional scoring rules presented above.  For any $\vec x\in\mathbb R^{m!}$ and any $a\in \cor_{\vec s}(\vec x)$, let $O$ denote a parallel universe where $a$ is ranked in the last position. Let $P$ denote the following profile of $(m-1)!+(m-2)!+\cdots+2!$ votes, where $O$ is the unique parallel universe. 
$$P = \bigcup\nolimits_{i=1}^{m-1}\left\{O[1]\succ \cdots\succ O[i]\succ R_i: \forall R_i\in \ml(O[i+1,m])\right\}$$

For any $j\in \mathbb N$, let $\vec x_j = \vec x - \frac 1j \hist(P)$. It is not hard to verify that $(\vec x_1,\vec x_2,\ldots)$ converges to $\vec x$, and for every $1\le i\le m-1$ and every $j\in \mathbb N$, alternative $O[i]$ is the unique loser in round $i$, where $- \frac 1j \hist(P)$ is used as the tie-breaker.  This means that for all $j\in \mathbb N$, $\cor(\vec x_j) = \{a\}$, which proves the minimal continuity of $\cor$.
\end{proof}

\subsection{Algebraic Properties of GISRs}
We first define the refinement relationship among (feasible or infeasible) signatures. 

\begin{dfn}[\bf \boldmath Refinement relationship  $\unlhd$]
\label{dfn:operators}
For any pair of  signatures $\vec t_1,\vec t_2\in\sk$,   we say that $\vec t_1$ {\em refines} $\vec t_2$, denoted by $\vec t_1\unlhd \vec t_2$, if for every $k\le K$, if $[\vec t_2]_k\ne 0$ then $[\vec t_1]_k = [\vec t_2]_k$. If  $\vec t_1\unlhd\vec t_2$ and $\vec t_1\ne \vec t_2$, then we say that $\vec t_1$ {\em strictly refines} $\vec t_2$, denoted by $\vec t_1\lhd \vec t_2$.
\end{dfn}

In words, $\vec t_1$ refines $\vec t_2$ if $\vec t_1$ differs from $\vec t_2$ only on the $0$ components in $\vec t_2$. By definition,  $\vec t_1$ refines itself.  Next, given $\vH$ and a feasible signature $\vec t$,  we  define a polyhedron $\ppoly{\vH,\vec t}$ to represent profiles whose signatures are $\vec t$. 
\begin{dfn}[\bf\boldmath $\ppoly{\vH,\vec t}$  ($\ppoly{\vec t}$ in short)]
\label{dfn:poly-H-t}
For any $\vH = (\vec h_1,\ldots,\vec h_K)\in (\mathbb R^{d})^K$ and any $\vec t\in \fs$, we let  $\pba{\vec t}=\left[\begin{array}{c}\pba{\vec t}_{+}\\ \pba{\vec t}_{-}\\ \pba{\vec t}_{0} \end{array}\right]$, where 
\begin{itemize}
\item $\pba{\vec t}_{+}$ consists of a row $-\vec h_i$ for each $i\le K$ with $t_i = +$.
\item $\pba{\vec t}_{-}$ consists of a row $\vec h_i$ for each $i\le K$ with $t_i = -$.
\item  $\pba{\vec t}_{0}$ consists of two rows $-\vec h_i$ and $\vec h_i$ for each $i\le K$ with $t_i = 0$.
\end{itemize}
Let $\pvbb{\vec t} = [\underbrace{-\vec 1}_{\text{for }\pba{\vec t}_{+}},\underbrace{-\vec 1}_{\text{for }\pba{\vec t}_{-}},\underbrace{\vec 0}_{\text{for }\pba{\vec t}_{0}}]$.  The corresponding polyhedron is denoted by $\ppoly{\vH,\vec t}$, or $\ppoly{\vec t}$  in short when $\vH$ is clear from the context. 
\end{dfn}

The following proposition follows immediately after the definition.
\begin{prop}
\label{prop:refine-oplus-property}
Given $\vH$, for any pair of feasible signatures $\vec t_1,\vec t_2\in \fs$, $\vec t_1\unlhd \vec t_2$ if and only if $\ppolyz{\vec t_1}\supseteq \ppolyz{\vec t_2}$.
\end{prop}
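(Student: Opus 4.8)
The plan is to prove the two implications separately, working coordinate-by-coordinate on the $K$ entries of a signature, since both the relation $\unlhd$ and the characteristic cone $\ppolyz{\cdot}$ are defined entrywise. The first step is to unpack Definition~\ref{dfn:poly-H-t} together with the definition of $\polyz$: a vector $\vec x$ lies in $\ppolyz{\vec t}$ if and only if for every $k\le K$ we have $\vec h_k\cdot\invert{\vec x}\ge 0$ when $t_k=+$, $\vec h_k\cdot\invert{\vec x}\le 0$ when $t_k=-$, and $\vec h_k\cdot\invert{\vec x}=0$ when $t_k=0$ (the last being the conjunction of the two rows $-\vec h_k$ and $\vec h_k$, both with right-hand side $0$).

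For the ``$\Rightarrow$'' direction I would argue directly. Assume $\vec t_1\unlhd\vec t_2$ and let $\vec x\in\ppolyz{\vec t_2}$; fix $k$ and check the constraint that $\ppolyz{\vec t_1}$ imposes at coordinate $k$. If $[\vec t_1]_k=0$, then $\vec t_1\unlhd\vec t_2$ forces $[\vec t_2]_k=0$ (else $[\vec t_1]_k=[\vec t_2]_k\ne 0$), so $\vec h_k\cdot\invert{\vec x}=0$ as required. If $[\vec t_1]_k=+$, then either $[\vec t_2]_k=+$, giving $\vec h_k\cdot\invert{\vec x}\ge 0$, or $[\vec t_2]_k=0$, giving $\vec h_k\cdot\invert{\vec x}=0\ge 0$; the case $[\vec t_1]_k=-$ is symmetric. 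Hence $\vec x\in\ppolyz{\vec t_1}$, so $\ppolyz{\vec t_1}\supseteq\ppolyz{\vec t_2}$. Note this half uses no feasibility assumption.

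For ``$\Leftarrow$'' I would argue by contraposition, and this is where feasibility of $\vec t_2$ is used. Suppose $\vec t_1\not\unlhd\vec t_2$, so there is a $k$ with $[\vec t_2]_k\ne 0$ and $[\vec t_1]_k\ne[\vec t_2]_k$. Since $\vec t_2\in\fs$, pick $\vec x$ with $\sign_\vH(\vec x)=\vec t_2$; then $\vec x\in\ppolyz{\vec t_2}$ because strict signs imply the non-strict cone inequalities. But at coordinate $k$ we have $\vec h_k\cdot\invert{\vec x}>0$ if $[\vec t_2]_k=+$ (resp. $<0$ if $[\vec t_2]_k=-$), while the constraint $\ppolyz{\vec t_1}$ imposes there is $\vec h_k\cdot\invert{\vec x}\le 0$ or $\vec h_k\cdot\invert{\vec x}=0$ (since $[\vec t_1]_k\in\{-,0\}$, resp. $\{+,0\}$), which is violated. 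So $\vec x\notin\ppolyz{\vec t_1}$, hence $\ppolyz{\vec t_1}\not\supseteq\ppolyz{\vec t_2}$, completing the contrapositive.

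There is no genuinely hard step here; the proposition is essentially a bookkeeping statement translating the ternary symbols $\{+,-,0\}$ into the cone constraints $\{\ge 0,\le 0,=0\}$. The only point meriting care is that the feasibility hypothesis on $\vec t_2$ is really needed for the reverse implication — without a witness point realizing the strict signature $\vec t_2$ one could not separate $\ppolyz{\vec t_2}$ from $\ppolyz{\vec t_1}$ — whereas the forward implication holds for arbitrary (even infeasible) signatures.
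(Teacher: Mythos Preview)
Your proof is correct and matches the paper's approach: the paper states that the proposition ``follows immediately after the definition'' and gives no further argument, so your coordinatewise unpacking is exactly the elaboration the paper omits. Your observation that feasibility of $\vec t_2$ is needed only for the $\Leftarrow$ direction (to produce a witness $\vec x$ with $\sign_\vH(\vec x)=\vec t_2$) is also correct and worth noting.
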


\begin{prop}[\bf Algebraic characterization of (minimal) continuity]
\label{prop:char-continuity} A GISR $\cor$ is continuous, if and only if 
$$\forall\vec t\in\fs,\text{ we have } \cor(\vec t)\supseteq\bigcup\nolimits_{{\vec t}'\in\fs: {\vec t}'\unlhd \vec t}\cor({\vec t}')$$
$\cor$ is minimally continuous, if and only if 
$$ \forall \vec t\in\fs,\text{ we have } \cor(\vec t)=\bigcup\nolimits_{{\vec t}'\in\fsatomic: {\vec t}'\unlhd \vec t}\cor(\vec t')\text{, and (2) }\forall\vec t\in \fsatomic, \text{ we have }|\cor(\vec t)|=1$$
\end{prop}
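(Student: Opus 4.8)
The plan is to translate the topological notion of continuity (as defined in Definition \ref{dfn:min-cont-gisr}) into the combinatorial language of signatures, using the polyhedra $\ppoly{\vec t}$ and their characteristic cones $\ppolyz{\vec t}$ from Definition \ref{dfn:poly-H-t}, together with the key fact (Proposition \ref{prop:refine-oplus-property}) that $\vec t_1 \unlhd \vec t_2$ iff $\ppolyz{\vec t_1}\supseteq \ppolyz{\vec t_2}$. The crucial geometric observation I would establish first is this: for a feasible signature $\vec t$, the set of points with signature $\vec t$, call it $\region{\vH}{\vec t}=\{\vec x:\sign_\vH(\vec x)=\vec t\}$, has closure equal to $\bigcup_{\vec t'\unlhd \vec t}\region{\vH}{\vec t'}$ restricted to feasible $\vec t'$; equivalently, a sequence $\vec x_j$ with constant signature $\vec t$ converges to $\vec x$ if and only if $\sign_\vH(\vec x)\unlhd\vec t$ (and a sequence converging to $\vec x$ can have constant signature $\vec t$ only when $\sign_\vH(\vec x)\unlhd\vec t$, while conversely every $\vec t$ refining $\sign_\vH(\vec x)$ is realized by some sequence converging to $\vec x$, as long as $\vec t$ is feasible). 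This is because passing to the boundary can only turn strict inequalities $\vec x\cdot\vec h_k>0$ or $<0$ into equalities $=0$, never the reverse.

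Given that observation, the continuity characterization follows directly. For the ``only if'' direction: fix $\vec t\in\fs$ and $\vec t'\in\fs$ with $\vec t'\unlhd\vec t$, and suppose $a\in\cor(\vec t')$. Pick any $\vec x$ with $\sign_\vH(\vec x)=\vec t$ and a sequence $\vec x_j\to\vec x$ with $\sign_\vH(\vec x_j)=\vec t'$ for all $j$ (such a sequence exists by the observation). Then $a\in\cor(\vec x_j)$ for all $j$, so continuity of $\cor$ forces $a\in\cor(\vec x)=\cor(\vec t)$, giving $\cor(\vec t)\supseteq\bigcup_{\vec t'\unlhd\vec t}\cor(\vec t')$. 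For the ``if'' direction: suppose the refinement inclusion holds for all feasible $\vec t$, and let $\vec x_j\to\vec x$ with $a\in\cor(\vec x_j)$ for all $j$. Since there are finitely many signatures, pass to a subsequence on which $\sign_\vH(\vec x_j)$ is a constant signature $\vec t'$; then $\vec t'$ is feasible, $\vec t'\unlhd\sign_\vH(\vec x)=:\vec t$ by the observation, and $a\in\cor(\vec t')\subseteq\cor(\vec t)=\cor(\vec x)$, so $\cor$ is continuous.

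For the minimal-continuity characterization I would argue as follows. First, condition (2) (every atomic feasible signature has a singleton winner set) together with condition (1) says precisely that $\cor$ is the ``smallest'' continuous extension of its restriction to atomic signatures: atomic signatures correspond to full-dimensional regions, and by the continuity characterization any continuous GISR is determined on all of $\fs$ by its values on the atomic signatures via $\cor(\vec t)=\bigcup_{\vec t'\in\fsatomic,\,\vec t'\unlhd\vec t}\cor(\vec t')$. So if $\cor$ is continuous and satisfies (1)--(2), any continuous $\cor^*$ with $\cor^*(\vec x)\subseteq\cor(\vec x)$ everywhere must, on atomic signatures, have $\cor^*(\vec t')\subseteq\cor(\vec t')$ with $|\cor(\vec t')|=1$ and $\cor^*(\vec t')\neq\emptyset$, hence $\cor^*(\vec t')=\cor(\vec t')$ on all atomic $\vec t'$; then the forced formula $\cor^*(\vec t)\supseteq\bigcup_{\vec t'\unlhd\vec t}\cor^*(\vec t')=\cor(\vec t)$ combined with $\cor^*(\vec t)\subseteq\cor(\vec t)$ gives $\cor^*=\cor$, so $\cor$ is minimal. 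Conversely, if $\cor$ is minimally continuous: continuity plus the characterization gives $\cor(\vec t)\supseteq\bigcup_{\vec t'\in\fsatomic,\,\vec t'\unlhd\vec t}\cor(\vec t')$; if this inclusion were strict for some $\vec t$, or if some atomic $\vec t'$ had $|\cor(\vec t')|>1$, I would construct a strictly smaller continuous GISR $\cor^*$ (e.g.\ by removing a spurious alternative from $\cor(\vec t)$ not supported by any refining atomic region, or by trimming an atomic signature's winner set to a single alternative and closing up under the refinement formula, checking that the result is still continuous via the characterization), contradicting minimality. I expect the main obstacle to be the last step: verifying carefully that the trimmed map $\cor^*$ is still a \emph{continuous} GISR, i.e.\ that removing winners at some signatures does not break the refinement-closure property at other signatures — this requires checking that one can always trim ``from the top down'' in the refinement order in a consistent way, and that atomic signatures can be reduced to singletons without creating an empty winner set anywhere. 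The boundary/closure observation in the first paragraph, though geometrically intuitive, also needs a careful proof that every feasible refinement $\vec t'$ of $\sign_\vH(\vec x)$ is actually attained by a sequence converging to $\vec x$, which uses feasibility of $\vec t'$ to pick a witness point and then slide toward $\vec x$ along the segment.
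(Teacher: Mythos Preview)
Your proposal is correct and follows essentially the same route as the paper's proof: translate continuity into the refinement condition via subsequence-with-constant-signature arguments and the construction $\vec x + \tfrac1j\vec x'$, and handle minimal continuity by showing that the rule is forced to agree with the union-over-atomic-refinements formula. One small remark: in your first paragraph the direction of $\unlhd$ in the closure statement is flipped (it should read $\vec t\unlhd\sign_\vH(\vec x)$, as you correctly use later), and the ``main obstacle'' you anticipate in the only-if direction for minimal continuity dissolves if, as the paper does, you define $\cor^*$ all at once by first picking a singleton $\cor^*(\vec t')\subseteq\cor(\vec t')$ for each atomic $\vec t'$ and then setting $\cor^*(\vec t)=\bigcup_{\vec t'\in\fsatomic,\,\vec t'\unlhd\vec t}\cor^*(\vec t')$ for all $\vec t$; continuity of $\cor^*$ is then immediate from the continuity characterization and transitivity of $\unlhd$.
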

The ``continuity'' part of Proposition~\ref{prop:char-continuity} states that for any feasible signature $\vec t$ and its refinement $\vec t'$, we must have $\cor(\vec t')\subseteq \cor(\vec t)$. The ``minimal continuity'' part states that any minimally continuous GISR is uniquely determined by its winners under atomic signatures (where a single winner is chosen for any atomic signature).

\begin{proof}{\bf The ``if'' part for continuity.}  Suppose for the sake of contradiction that there exists $\vec t\in\fs$ such that  $\cor(\vec t)\supseteq\bigcup_{{\vec t}'\in\fsatomic: {\vec t}'\unlhd \vec t}\cor(\vec t')$ but $\cor$ is not continuous. This means that there exists $\vec x\in {\mathbb R}^{m!}$ with $\signH(\vec x)=\vec t$, an infinite sequence $(\vec x_1,\vec x_2,\ldots )$ that converge to $\vec x$, and an alternative $a\notin \cor(\vec x)$, such that for every $j\in\mathbb N$, $a\in \cor(\vec x_j)$. Because the total number of (feasible) signatures is finite, there exists an infinite subsequence of $(\vec x_1,\vec x_2,\ldots)$, denoted by $(\vec x_1',\vec x_2',\ldots)$, and  $\vec t'\in \fs$ such that for all $j\in\mathbb N$ we have $\signH(\vec x_j')=\vec t'$. Note that $(\vec x_1',\vec x_2',\ldots)$ also converges to $\vec x$. Therefore, the following holds for every $k\le K$.
\begin{itemize}
\item If $t_k'=0$, then for every $j\in\mathbb N$ we have $\vec h_k\cdot \vec x_j =0$, which means that  $\vec h_k\cdot \vec x= 0$,  i.e.~$t_k=0$. 
\item If $t_k'=+$, then for every $j\in\mathbb N$ we have $\vec h_k\cdot \vec x_j >0$, which means that  $\vec h_k\cdot \vec x \ge0$,  i.e.~$t_k\in \{0,+\}$. 
\item Similarly, if $t_k'=-$, then for every $j\in\mathbb N$ we have $\vec h_k\cdot \vec x_j <0$, which means that  $\vec h_k\cdot \vec x \le0$,  i.e.~$t_k\in \{0,-\}$.
\end{itemize}
This means that $\vec t'\unlhd \vec t$. Recall that we have assumed $\cor(\vec t)\supseteq\bigcup_{{\vec t}'\in\fs: {\vec t}'\unlhd \vec t}\cor(\vec t')$, which means that $a\in\cor(\vec t')\subseteq \cor(\vec t)=\cor(\vec x)$. This contradicts the assumption that $a\notin \cor(\vec x)$.

{\bf The ``only if'' part for continuity.} Suppose for the sake of  contradiction that $\cor$ is continuous but there exists $\vec t\in\fs$ such that $\bigcup_{\vec t'\in\fs: \vec t'\unlhd \vec t}\cor(\vec t')\not\subseteq \cor(\vec t)$. This means that there exist  $\vec t'\lhd \vec t$ and an alternative $a$ such that $a\in \cor(\vec t')$ but $a\notin \cor(\vec t)$. Because both $\vec t$ and $\vec t'$ are feasible, there exists $\vec x, \vec x'\in\mathbb R^{m!}$ such that $\signH(\vec x) = \vec t$ and $\signH(\vec x') = \vec t'$. It is not hard to verify that the infinite sequence $(\vec x +\vec x', \vec x +\frac 12\vec x', \vec x +\frac 13\vec x', \ldots)$ converge to $\vec x$, and for every $j\in\mathbb N$, $\signH(\vec x +\frac 1j\vec x')=\vec t'$, which means that $a\in \cor(\vec x+\frac 1j\vec x')$. By continuity of $\cor$ we have $a\in \cor(\vec x) = \cor(\vec t)$, which  contradicts the assumption that $a\notin\cor(\vec t)$.

{\bf The ``if'' part for minimal continuity.} To simplify the presentation, we formally define refinements of GISRs as follows.
\begin{dfn} [\bf Refinements of GISRs] Let $\cor^*$ and $\cor$ be a pair of GISR such that for every $\vec x\in\mathbb R^{m!}$, $\cor^*(\vec x) \subseteq \cor(\vec x)$. $\cor^*$ is called a {\em refinement} of $\cor$. If additionally there exists $\vec x\in\mathbb R^{m!}$ such that $\cor^*(\vec x) \subset \cor(\vec x)$, then $\cor^*$ is called a {\em strict refinement} of $\cor$.
\end{dfn}
Suppose for every $\vec t\in \fs$  we have $\cor(\vec t)=\bigcup_{{\vec t}'\in\fsatomic: {\vec t}'\unlhd \vec t}\cor(\vec t')$, and for every $\vec t\in \fsatomic$ we have $|\cor(\vec t)|=1$.  By the ``continuity'' part proved above, $\cor$ is continuous. To prove that $\cor$ is minimally continuous, suppose for the sake of contradiction that $\cor$ has  a strict refinement, denoted by  $\cor^*$. Clearly for every atomic feasible signature $\vec t\in\fsatomic$ we have $\cor^*(\vec t) = \cor(\vec t)$. Therefore, by the ``continuity'' part proved above, for every feasible signature $\vec t\in\fs$, we have 
$$\cor^*(\vec t)\supseteq\bigcup_{{\vec t}'\in\fs: {\vec t}'\unlhd \vec t}\cor^*(\vec t')\supseteq \bigcup_{{\vec t}'\in\fsatomic: {\vec t}'\unlhd \vec t}\cor^*(\vec t')=\bigcup_{{\vec t}'\in\fsatomic: {\vec t}'\unlhd \vec t}\cor (\vec t') =\cor(\vec t),$$
which contradicts the assumption that $\cor^*$ is a strict refinement of $\cor$.

{\bf The ``only if'' part for minimal continuity.}  Suppose $\cor$ is a minimally continuous GISR. We define another GISR $\cor^*$ as follows.
\begin{itemize}
\item For every $\vec t\in\fsatomic$ we let $\cor^*(\vec t)\subseteq \cor(\vec t)$ and $|\cor^*(\vec t)|=1$.
\item For every $\vec t \in \fs$, we let $\cor^*(\vec t) = \bigcup\nolimits_{\vec t'\in\fsatomic: \vec t'\unlhd \vec t}\cor^*(\vec t')$.
\end{itemize}
By the continuity part proved above, $\cor^*$ is continuous. It is not hard to verify that $\cor^*$ refines $\cor$. Therefore, if either condition for minimal continuity does not hold, then $\cor^*$  is a strict refinement of $\cor$, which contradicts the minimality of $\cor$.

This proves Proposition~\ref{prop:char-continuity}.
\end{proof}

Next, we prove some properties about $\ppoly{\vec t}$ that will be frequently used in the proofs of this paper. The proposition has three parts. Part (i) characterizes profiles $P$ whose histogram is in  $\ppoly{\vec t}$; part (ii) characterizes vectors in $\ppolyz{\vec t}$; and part (iii) states that for every atomic signature $\vec t$, $\ppolyz{\vec t}$ is a full dimensional cone in $\mathbb R^{m!}$.
\begin{claim}[\bf Properties of \boldmath $\ppoly{\vec t}$]\label{claim:poly-t}
Given integer $\vH$, any $\vec t\in\fs$,  
\begin{enumerate}[label=(\roman*)]
\item for any integral profile $P$,   $\hist(P)\in \ppoly{\vec t}$ if and only if  $\signH(\hist(P)) = \vec t$;
\item for any $\vec x\in\mathbb R^{m!}$,  $\hist(\vec x)\in \ppolyz{\vec t}$ if and only if $ \vec t\unlhd \signH(\vec x) $;  
\item if $\vec t\in \fsatomic$ then  $\dim(\ppolyz{\vec t})= m!$.
\end{enumerate}
\end{claim}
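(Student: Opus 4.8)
The plan is to prove the three parts in order, each by unwinding the definition of $\ppoly{\vec t}$ (Definition~\ref{dfn:poly-H-t}) and translating the block structure of $\pba{\vec t}$ into sign conditions on the inner products $\vec h_k\cdot\vec x$. Throughout I treat $\hist(P)$ (or just $\vec x$) as the relevant point in $\mathbb R^{m!}$.

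For part (i), I would observe that by construction $\hist(P)\in\ppoly{\vec t}$ is equivalent to the conjunction of: $-\vec h_k\cdot\hist(P)\le -1$ for each $k$ with $t_k=+$; $\vec h_k\cdot\hist(P)\le -1$ for each $k$ with $t_k=-$; and $\vec h_k\cdot\hist(P)=0$ (from the two opposing rows) for each $k$ with $t_k=0$. For a genuine integral profile $P$, each $\vec h_k$ is an integer vector and $\hist(P)$ is an integer vector, so $\vec h_k\cdot\hist(P)\in\mathbb Z$; hence $\vec h_k\cdot\hist(P)\ge 1 \iff \vec h_k\cdot\hist(P)>0$ and likewise $\vec h_k\cdot\hist(P)\le -1\iff\vec h_k\cdot\hist(P)<0$. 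Therefore the system above is equivalent to $\sign(\vec h_k\cdot\hist(P))=t_k$ for every $k$, i.e.\ $\signH(\hist(P))=\vec t$. This is where integrality of $\vH$ and of profiles is essential — the ``$\pm 1$'' slack in $\pvbb{\vec t}$ is exactly what converts the open sign conditions into closed (rational polyhedral) ones without changing the integer points.

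For part (ii), the characteristic cone $\ppolyz{\vec t}$ is obtained by replacing the right-hand side $\pvbb{\vec t}$ with $\vec 0$, so $\vec x\in\ppolyz{\vec t}$ iff $-\vec h_k\cdot\vec x\le 0$ for $t_k=+$, $\vec h_k\cdot\vec x\le 0$ for $t_k=-$, and $\vec h_k\cdot\vec x=0$ for $t_k=0$; equivalently $\vec h_k\cdot\vec x\ge 0$ when $t_k=+$, $\le 0$ when $t_k=-$, $=0$ when $t_k=0$. I would then note this says precisely: whenever $t_k\ne 0$, the sign $\sign(\vec h_k\cdot\vec x)$ is either $0$ or equal to $t_k$ — which is exactly the condition $\vec t\unlhd\signH(\vec x)$ of Definition~\ref{dfn:operators} (read with $\vec t_1=\vec t$, $\vec t_2=\signH(\vec x)$: wherever $\vec t_2\ne 0$ we need $\vec t_1$ to agree, contrapositively wherever $\vec t$ is nonzero $\signH(\vec x)$ must agree). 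Care is needed to get the direction of $\unlhd$ right, since it is not symmetric; I would spell out the logical equivalence component-by-component to avoid an off-by-direction slip. (This part also essentially re-proves Proposition~\ref{prop:refine-oplus-property}, so I could alternatively cite it, but a direct argument is cleaner.)

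For part (iii), suppose $\vec t\in\fsatomic$, so $t_k\in\{+,-\}$ for all $k$. Feasibility gives a point $\vec x_0$ with $\signH(\vec x_0)=\vec t$, i.e.\ $\vec h_k\cdot\vec x_0>0$ for $t_k=+$ and $<0$ for $t_k=-$; since these are finitely many strict inequalities, they define an open subset of $\mathbb R^{m!}$ containing $\vec x_0$, hence a full-dimensional region, and by part (ii) this region lies in $\ppolyz{\vec t}$ (the strict inequalities imply the weak ones). Therefore $\dim(\ppolyz{\vec t})=m!$. The main obstacle in the whole claim is minor — it is just keeping the direction of the refinement relation $\unlhd$ straight in part (ii), and remembering that integrality of both $\vH$ and the profile is what licenses the passage between strict sign conditions and the $\le -1$ inequalities in part (i); there is no deep difficulty, the statement is essentially a bookkeeping lemma that packages the sign-vector/polyhedron dictionary for use in later proofs.
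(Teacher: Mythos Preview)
Your proposal is correct and follows essentially the same definition-unfolding approach as the paper: part~(i) via integrality converting strict signs to $\le -1$ inequalities, part~(ii) by matching the homogeneous inequalities to the refinement relation, and part~(iii) by taking a feasible point and noting the strict inequalities cut out an open neighborhood. If anything, your part~(iii) is slightly cleaner than the paper's, which unnecessarily restricts to a non-negative point in $\ppoly{\vec t}$.
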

\begin{proof}
Part (i) follows after the definition. More precisely, $\signH(\hist(P)) = \vec t$ if and only if for every $k\le K$, (1) $t_k= +$ if and only if $\vec h_k\cdot \hist(P)>0$, which is equivalent to $-\vec h_k\cdot \hist(P)\le -1$  because $\vec h_k\in\mathbb Z^{m!}$; (2) likewise,  $t_k= -$ if and only if $\vec h_k\cdot \hist(P)\le -1$, and (3) if $t_k= 0$ if and only if  $\vec h_k\cdot \hist(P)\le 0$ and  $-\vec h_k\cdot \hist(P)\le 0$. This proves Part (i).

Part (ii) also follows after the definition. More precisely, $ \vec x\in \ppolyz{\vec t}$  if and only if for every $k\le K$, (1) $t_k= +$ if and only if $-\vec h_k\cdot \vec x\le 0$, which is equivalent to $[\signH(\vec x)]_k\in \{0,+\}$; (2) likewise,  $t_k= -$ if and only if $\vec h_k\cdot \vec x\le 0$, which is equivalent to $[\signH(\vec x)]_k\in \{0,-\}$, and  (3) if $t_k= 0$ if and only if  $\vec h_k\cdot \vec x\le 0$ and  $-\vec h_k\cdot \vec x\le 0$, which is equivalent to $[\signH(\vec x)]_k=0$. This is equivalent to  $ \vec t\unlhd \signH(\vec x) $.

We now prove Part (iii).  Suppose $\vec t\in \fsatomic$. Let $\vec x\in \ppoly{\vec t}\cap {\mathbb R}_{\ge 0}^{m!}$ denote an arbitrary non-negative vector whose existence is guaranteed by the assumption that $\vec t\in \fsatomic$. Therefore, for every $k\le K$, either $\vec h_k\cdot \vec x\le -1$ or $-\vec h_k\cdot \vec x\le -1$, which means that there exists $\delta>0$ such that any $\vec x'$ with $|\vec x'-\vec x|_\infty<\delta$, we have $\vec h_k\cdot \vec x<0$ or $-\vec h_k\cdot \vec x<0$. This means that $\vec x$ is an interior  point of $\ppolyz{\vec t}$ in $\mathbb R^{m!}$, which implies that $\dim(\ppolyz{\vec t})=m!$.
\end{proof}


\section{Materials for Section~\ref{sec:CC}: Smoothed  {\sc Condorcet Criterion}}
\subsection{Lemma~\ref{lem:sCC-GISR} and Its Proof}
\label{app:lem-GISR}
For any GISR $\cor$, we first define $\region{\CWwin}{\cor}$ (respectively, $\region{\CWlose}{\cor}$) that corresponds to fractional profiles where a Condorcet winner exists and is a co-winner  (respectively, not a co-winner) under $\cor$. $\CWwin$ (respectively, $\CWlose$) stands for ``Condorcet winner wins'' (respectively, ``Condorcet winner loses'').
\begin{align*}
&\region{\CWwin}{\cor} = \{\vec x\in{\mathbb R}^{m!}: \cwinner(\vec x)\cap  \cor(\vec x)\ne \emptyset\}\\
&\region{\CWlose}{\cor} = \{\vec x\in{\mathbb R}^{m!}: \cwinner(\vec x)\cap (\ma\setminus \cor(\vec x))\ne\emptyset\}
\end{align*}

For any set $\region{}{}\subseteq {\mathbb R}^{m!}$, let $\closure{\region{}{}}$ denote the {\em closure} of $\region{}{}$ in ${\mathbb R}^{m!}$, that is, all points in $\region{}{}$ and their limiting points.  
 Next, we introduce    four conditions to present Lemma~\ref{lem:sCC-GISR} below. 
\begin{dfn}\label{dfn:conditions-CC-theorem}
 Given a GISR $\cor$ and $n\in\mathbb N$, we define the following conditions, where $\vec x\in \mathbb R^{m!}$.
\begin{itemize} 
\item {\bf\boldmath Always satisfaction: $\condition{\text{AS}}(\cor, n)$} holds if and only if for all $P\in\ml(\ma)^n$,  $\sat{\CC}(\cor,P)=1$.
\item {\bf\boldmath Robust satisfaction:  $\condition{\text{RS}}(\cor, \vec x)$}  holds if and only if $\vec x\notin  \closure{\region{\CWlose}{\cor}}$.
\item {\bf\boldmath Robust dissatisfaction: $\condition{\text{RD}}(\cor, \vec x)$}  holds if and only if $\cwinner(\vec x)\cap (\ma\setminus \cor(\vec x))\ne \emptyset$.
\item {\bf\boldmath Non-Robust satisfaction: $\condition{\text{NRS}}(\cor, \vec x)$}  holds if and only if $\almostCW(\vec x)\ne \emptyset$ and $\vec x \notin \closure{\region{\CWwin}{\cor}}$.
\end{itemize}
\end{dfn}
In words, $\condition{\text{AS}}(\cor, n)$ means that $\cor$ always satisfies $\CC$ for $n$ agents. Robust satisfaction  $\condition{\text{RS}}(\cor, \vec x)$ states that $\vec x$ is away from the  dissatisfaction instances (i.e., $\region{\CWlose}{\cor}$) by a constant margin. Robust dissatisfaction  $\condition{\text{RD}}(\cor, \vec x)$ states that the Condorcet winner exists under $\vec x$ and is not a co-winner under $\cor$. Robust satisfaction and robust dissatisfaction are not ``symmetric'', because there are two sources of satisfaction: (1) no Condorcet winner exists and (2)  the Condorcet winner exists and is also a  winner, while there is only one source of dissatisfaction:  the Condorcet winner exists but is not a  winner.

The intuition behind Non-Robust satisfaction  $\condition{\text{NRS}}(\cor, \vec x)$ may not  be immediately clear by definition.  It is called ``satisfaction'', because  $\almostCW(\vec x)\ne\emptyset$  implies that $\cwinner(\vec x)=\emptyset$, which means that $\cor$ satisfies $\CC$ at $\vec x$. The reason behind ``non-robust'' is that when a small perturbation $\vec x'$ is introduced, $\umg(\vec x+\vec x')$ often contains a Condorcet winner that is not a co-winner under $\vec x$, because $\vec x$ is constantly far away from $\region{\CWwin}{\cor}$. 

\begin{ex}[\bf The four conditions in Definition~\ref{dfn:conditions-CC-theorem}]
\label{ex:cond-CC-thm}
Let $m=3$ and $n=14$.  Table~\ref{tab:ex-dist} illustrates four distributions, their UMG, the irresolute plurality winners, and their (dis)satisfaction of the four conditions introduced defined in Definition~\ref{dfn:conditions-CC-theorem}.  $\pi^1,\pi^2$, and $\pi'$ are the same as in Example~\ref{ex:sCC-plu} and~\ref{ex:thm-sCC-pos}.  Notice that $\pi'$ is a linear combination of $\pi^1$ and $\pi^2$.
\begin{table}[htp]
\centering
\resizebox{\textwidth}{!}{
\begin{tabular}{|@{\ }c@{\ }|@{\ }c@{\ }|@{\ }c@{\ }|@{\ }c@{\ }|@{\ }c@{\ }|@{\ }c@{\ }|@{\ }c@{\ }|@{\ }c@{\ }|@{\ }c@{\ }|@{\ }c@{\ }|@{\ }c@{\ }|@{\ }c@{\ }|@{\ }c@{\ }|}
\hline & \small $123$& \small $132$& \small $231$& \small $321$& \small $213$& \small $312$& UMG& $\iplu$  winner(s) & $\condition{\text{AS}}$& $\condition{\text{RS}}$& $\condition{\text{RD}}$& $\condition{\text{NRS}}$\\

\hline $\pi^1$& $\frac 14$& $\frac 14$& $\frac 18$& $\frac 18$& $\frac 18$& $\frac 18$ & \begin{minipage}{0.1\linewidth}
\includegraphics[width=1\textwidth]{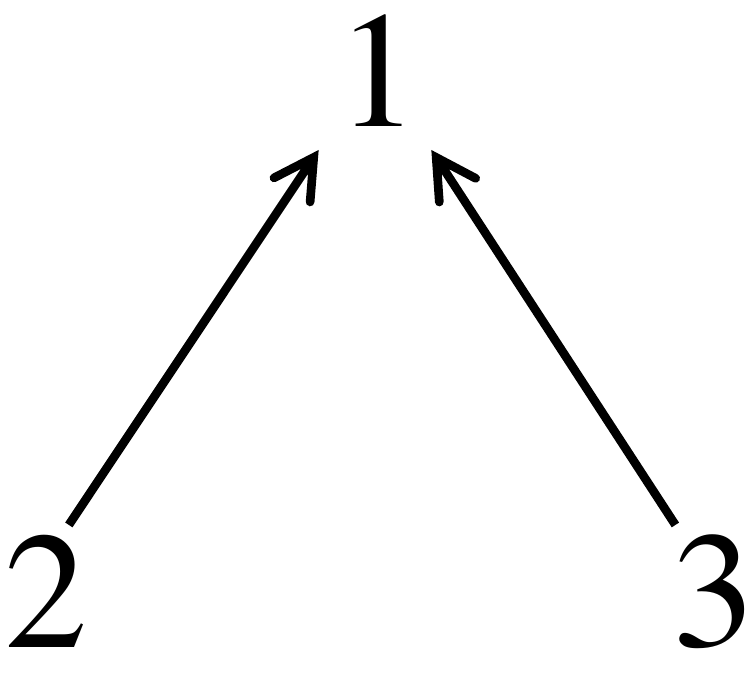}
\end{minipage} & $\{1\}$ & N& N& N&Y\\

\hline $\pi^2$& $\frac 18$& $\frac 18$& $\frac 38$& $\frac 18$& $\frac 18$& $\frac 18$ & \begin{minipage}{0.1\linewidth}
\includegraphics[width=1\textwidth]{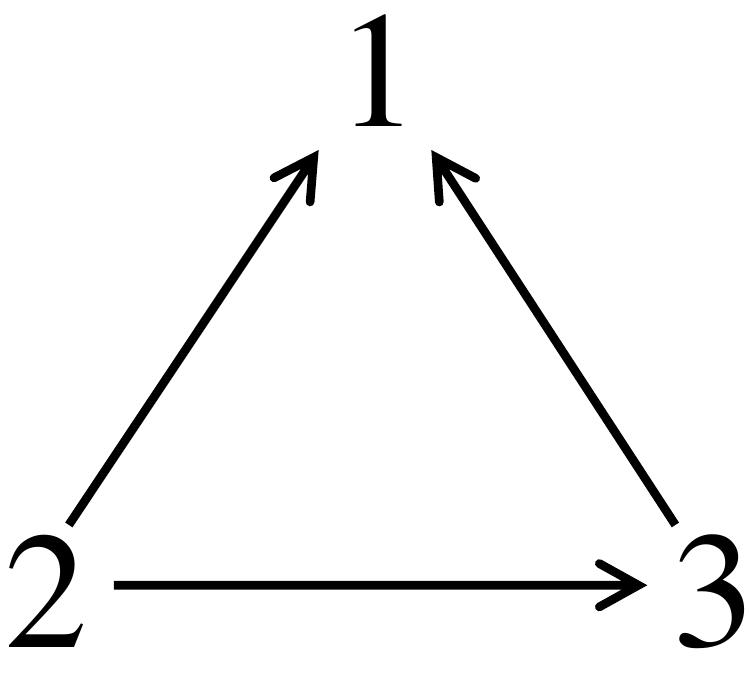}
\end{minipage} & $\{2\}$& N& Y& N&N\\

\hline $\piuni$& $\frac 16$& $\frac 16$& $\frac 16$& $\frac 16$& $\frac 16$& $\frac 16$ & \begin{minipage}{0.1\linewidth}
\includegraphics[width=1\textwidth]{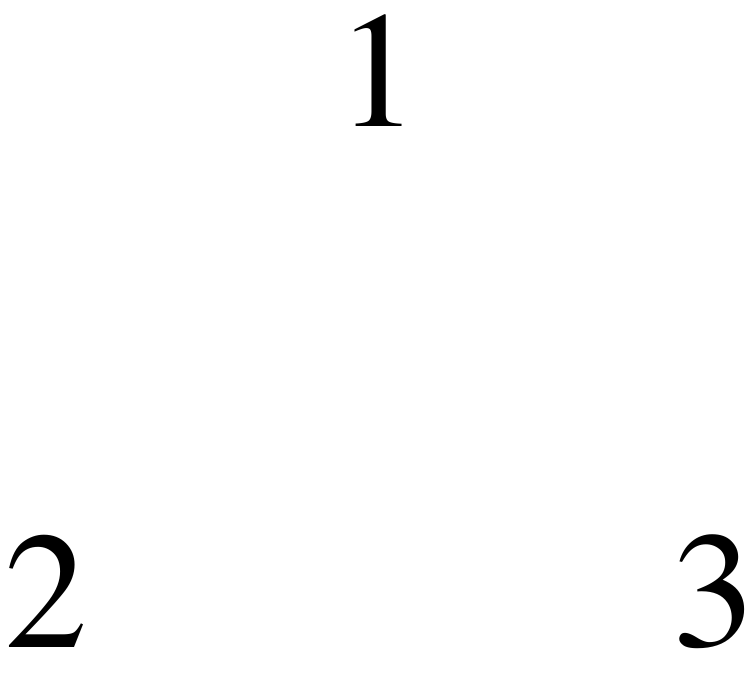}
\end{minipage} & $\{1,2,3\}$& N& N& N&N\\

\hline $\frac{3\pi^1+\pi^2}{4}$& $\frac{7}{32}$& $\frac{7}{32}$& $\frac {3}{16}$& $\frac 18$& $\frac 18$& $\frac 18$ & \begin{minipage}{0.1\linewidth}
\includegraphics[width=1\textwidth]{fig/umg-pi1.pdf}
\end{minipage} & $\{1 \}$& N& N& Y&N\\


\hline
\end{tabular}
}
\caption{Distributions and their (dis)satisfaction of conditions in Definition~\ref{dfn:conditions-CC-theorem}.\label{tab:ex-dist}}
\end{table}

Let $P_{14}$ denote the $14$-profile $\{6\times [1\succ 2\succ 3], 4\times [2\succ 3\succ 1], 4\times [2\succ 1\succ3]\}$. It is not hard to verify that alternative $2$ is the Condorcet winner under $P_{14}$ and $\iplu(P_{14}) = \{1\}$. Therefore, $\condition{\text{AS}}(\iplu, 14)=N$.

\begin{itemize}

\item {\boldmath $\pi^1$.} $\condition{\text{RS}}(\iplu, \pi^1)=N$. To see this, let $\vec x'$ denote the vector that corresponds to the single-vote profile $\{ 2\succ 3\succ 1\}$. For any sufficiently small $\delta>0$, $\pi^1+\delta\vec x'\in \region{\CWlose}{\iplu}$, because $2$ is the Condorcet winner and $1$ is the unique plurality winner.  $\condition{\text{RD}}(\iplu, \pi^1)=N$ because $\cwinner(\pi^1)=\emptyset$. $\condition{\text{NRS}}(\iplu, \pi^1)=Y$ because $\almostCW(\pi^1)=\{2,3\}$, and for any $\vec x'\in\mathbb R^{6}$ and any $\delta>0$ that is sufficiently small, in $\pi^1+\delta\vec x'$ we have that $2$ or $3$ is Condorcet winner and $1$ is the unique plurality winner, which means that $\pi^1+\delta\vec x'\not\in \region{\CWwin}{\cor}$.

\item {\boldmath $\pi^2$.} $\condition{\text{RS}}(\iplu, \pi^2)=Y$ because the plurality score of $2$ is strictly higher than the plurality score of any other alternative, which means that  for any $\vec x'\in\mathbb R^{m!}$, for any  $\delta>0$ that is sufficiently small, $2$ is  the Condorcet winner as well as the unique plurality winner in $\pi^2+ \delta\vec x'$. This means that $\pi^2$ is not in the closure of vectors where $\CC$ is violated.   $\condition{\text{RD}}(\iplu, \pi^2)=N$ because $\cwinner(\pi^2)\cap (\ma\setminus \iplu(\pi^2)) = \{2\}\cap \{1,3\}=\emptyset$. $\condition{\text{NRS}}(\iplu, \pi^2)=N$ because $\almostCW(\pi^2)=\emptyset$.

\item {\boldmath $\piuni$.} $\condition{\text{RS}}(\iplu, \piuni)=N$. To see this, let $\vec x'$ denote the vector that corresponds to the $14$-profile $P_{14}$ defined earlier in this example to prove $\condition{\text{AS}}(\iplu, 14)=N$. For any $\delta>0$ that is sufficiently small, we have $\piuni+\delta\vec x'\in \region{\CWlose}{\iplu}$, because $2$ is the Condorcet winner  and $1$ is the unique plurality winner.  $\condition{\text{RD}}(\iplu, \piuni)=N$ because $\cwinner(\piuni)=\emptyset$. $\condition{\text{NRS}}(\iplu, \piuni)=N$ because $\almostCW(\piuni)=\emptyset$.

\item {\boldmath $\frac{3\pi^1+\pi^2}{4}$.} Let $\pi' = \frac{3\pi^1+\pi^2}{4}$. $\condition{\text{RS}}(\iplu, \pi')=N$ because  $\pi'\in \region{\CWlose}{\iplu}$.  $\condition{\text{RD}}(\iplu, \pi')=Y$ because $\cwinner(\pi')\cap (\ma\setminus \iplu(\pi')) = \{2\}\cap \{2,3\}\ne \emptyset$. $\condition{\text{NRS}}(\iplu, \pi')=N$ because $\almostCW(\pi')=\emptyset$.

\end{itemize}
 
\end{ex}

For any condition $Y$, we use $\neg Y$ to indicate that $Y$ does not hold.  For example, $\neg \condition{\text{AS}}(\cor, n)$ means that $\condition{\text{AS}}(\cor, n)$ does not hold, i.e., there exists $P\in\ml(\ma)^n$ with $\sat{\CC}(\cor,P)=0$. A  GISR rule $r_1$  is a {\em refinement} of another voting rule $r_2$, if for all $\vec x\in\mathbb R^{m!}$, we have $r_1(\vec x) \subseteq r_2(\vec x)$. We note that while the four conditions in Definition~\ref{dfn:conditions-CC-theorem} are not mutually exclusive by definition, they provide a complete characterization of smoothed $\CC$ under any refinement of any minimally continuous int-GISR as shown in the lemma below.

\begin{lem}[\bf Smoothed $\CC$: Minimally Continuous Int-GISRs]
\label{lem:sCC-GISR}{
For any fixed $m\ge 3$, let $\mm= (\Theta,\ml(\ma),\Pi)$ be a strictly positive and closed single-agent preference model, let $\cor$ be a minimally continuous int-GISR and let $r$ be a refinement of $\cor$. 
For any $n\in\mathbb N$ with $2\mid n$, we have
$$\satmin{\CC}{\Pi}(r,n) = \left\{\begin{array}{ll}
1&\text{if }\condition{\text{AS}}(\cor, n)\\
1- \exp(-\Theta(n)) &\text{if } \neg\condition{\text{AS}}(\cor, n)\text{ and } \forall \pi\in \conv(\Pi),  \condition{\text{RS}}(\cor, \pi)\\
\Theta(n^{-0.5}) &\text{if } 
\left\{\begin{array}{l}   \text{(1) }  \forall \pi\in \conv(\Pi),  \neg \condition{\text{RD}}(\cor, \pi) \text{ and} \\
\text{(2) }  \exists \pi\in \conv(\Pi) \text{ s.t. }  \condition{\text{NRS}}(\cor, \pi)\end{array}\right.\\
\exp(-\Theta(n)) &\text{if }\exists \pi\in \conv(\Pi)\text{ s.t. } \condition{\text{RD}}(\cor, \pi)\\
\Theta(1)\wedge(1-\Theta(1)) &\text{otherwise}
\end{array}\right.$$
For any $n\in\mathbb N$ with $2\nmid n$, we have
$$\satmin{\CC}{\Pi}(r,n) = \left\{\begin{array}{ll}
1&\text{same as the }2\mid n\text{ case}\\
1- \exp(-\Theta(n)) &\text{same as the }2\mid n\text{ case}\\
\exp(-\Theta(n)) &\text{if }\exists \pi\in \conv(\Pi)\text{ s.t. } \condition{\text{RD}}(\cor, \pi)\text{ or } \condition{\text{NRS}}(\cor, \pi)\\
\Theta(1)\wedge(1-\Theta(1)) &\text{otherwise}
\end{array}\right.$$
}
\end{lem}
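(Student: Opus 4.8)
The plan is to reduce $\satmin{\CC}{\Pi}(r,n)$ to a PMV-in-$\upoly$ problem and apply the categorization lemma (Lemma~\ref{lem:categorization}), then translate the quantities $\beta_n$ and $\alpha_n^\ast$ that it produces back into the four conditions of Definition~\ref{dfn:conditions-CC-theorem}. First, for the modeling: writing $\vH_r$ for the hyperplanes of the int-GISR $r$, I would work with the augmented set $\vH^\ast=\vH_r\cup\{\pair_{a,b}-\pair_{b,a}:a\ne b\}$, so that every feasible signature $\vec t\in\fs$ (w.r.t.\ $\vH^\ast$) determines both $r(\vec t)$ and $\umg$, hence $\sat{\CC}(r,\vec t)\in\{0,1\}$. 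Setting $\upoly=\bigcup_{\vec t:\sat{\CC}(r,\vec t)=1}\ppoly{\vec t}$ and taking $\aupoly=\bigcup_{\vec t:\sat{\CC}(r,\vec t)=0}\ppoly{\vec t}$ as an almost complement, Claim~\ref{claim:poly-t}(i) gives $\hist(P)\in\upoly\Leftrightarrow\sat{\CC}(r,P)=1$ for integral $P$, so $\satmin{\CC}{\Pi}(r,n)=\inf_{\vec\pi\in\Pi^n}\Pr(\vXp\in\upoly)$ with $q=m!$ and Lemma~\ref{lem:categorization} applies.

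The heart of the argument is translating $\beta_n,\alpha_n^\ast$ into the conditions, and the structural input is minimal continuity of $\cor$ (Proposition~\ref{prop:char-continuity}): $\cor$ is pinned down by its singleton winners on atomic signatures, any refinement $r$ agrees with $\cor$ there, and the two differ only on non-atomic signatures (lower-dimensional cones). Together with the elementary fact that a Condorcet winner under $\vec t$ stays the Condorcet winner under every $\vec t'\unlhd\vec t$ (its outgoing edges are strict), I would argue, for $\pi\in\conv(\Pi)$ with $\vec s=\signH(\pi)$: \emph{(i)} $\condition{\text{RS}}(\cor,\pi)$ forces $\ppoly{\vec t'}\subseteq\upoly$ with $\dim\ppolyz{\vec t'}=q$ for every active atomic $\vec t'\unlhd\vec s$, hence $\md{\upoly}{\pi}=q$; if this holds for all $\pi$ then $\beta_n=q$, and a symmetric argument on $\aupoly$ gives $\alpha_n^\ast<0$, so the lemma returns $1-\exp(-\Theta(n))$; \emph{(ii)} $\condition{\text{RD}}(\cor,\pi)$ forces the Condorcet winner $a\notin\cor(\vec t')$, hence $\ppoly{\vec t'}\subseteq\aupoly$, for every atomic $\vec t'\unlhd\vec s$, so $\pi$ lies in no $\ppolyz{\vec t}$ with $\ppoly{\vec t}\subseteq\upoly$ and $\md{\upoly}{\pi}<0$, while $\upoly$ is active for large $n$ (the unanimous profile always satisfies $\CC$), giving $-\infty<\beta_n<0$, i.e.\ $\exp(-\Theta(n))$ — in particular the $0$ regime of Lemma~\ref{lem:categorization} never occurs; \emph{(iii)} $\condition{\text{AS}}(\cor,n)$ is exactly $\alpha_n^\ast=\infty$: if $\cor$ satisfies $\CC$ at all $n$-profiles then on any Condorcet-winner signature all atomic refinements must pick the Condorcet winner, so $\cor$ (hence $r$) does too, and the converse uses strict positivity of $\Pi$ so that any violating profile has positive probability.

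For the $\Theta(n^{-0.5})$ case I would use $\condition{\text{NRS}}(\cor,\pi)$: $\almostCW(\pi)\ne\emptyset$ makes the tie between the two ACWs $a,b$ a single $0$-coordinate of $\vec s$, and $\pi\notin\closure{\region{\CWwin}{\cor}}$ means that breaking that tie either way produces a Condorcet winner not in $\cor$, so every atomic $\vec t'\unlhd\vec s$ has $\ppoly{\vec t'}\subseteq\aupoly$, while $\ppoly{\vec s}$ itself (codimension one, as there is no Condorcet winner at $\vec s$) lies in $\upoly$; this yields $\md{\upoly}{\pi}=q-1$, and condition (1) ($\neg\condition{\text{RD}}$ everywhere) rules out any deeper obstruction, so $\beta_n=q-1$ and $\Theta(n^{(\beta_n-q)/2})=\Theta(n^{-0.5})$. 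The remaining "otherwise" case is shown to satisfy $\alpha_n^\ast=\beta_n=q$ (ties at signatures realized by $\conv(\Pi)$ can be broken into either satisfaction or dissatisfaction, each with probability $\Theta(1)$), giving $\Theta(1)\wedge(1-\Theta(1))$; one also checks that the $1-\mathrm{poly}^{-1}(n)$ and general-degree $\mathrm{poly}^{-1}(n)$ regimes of Lemma~\ref{lem:categorization} cannot arise, again because "strict Condorcet edges persist under refinement" makes the nearby dissatisfaction regions full-dimensional and the only codimension-one satisfaction boundaries are ACW ties. For the parity split: when $2\nmid n$ the weighted majority graph of any $n$-profile has nonzero (odd) edge weights, so feasible UMG-signatures are atomic on the $\pair_{a,b}-\pair_{b,a}$ hyperplanes; hence the codimension-one signatures underlying the $\condition{\text{NRS}}$ analysis are inactive at odd $n$ and their atomic refinements (one ACW becoming a genuine Condorcet winner outside $\cor$) are active, collapsing that contribution into $\alpha_n^\ast=q$ and merging $\Theta(n^{-0.5})$ into $\exp(-\Theta(n))$, exactly as stated.

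The main obstacle I anticipate is the book-keeping between $\cor$ and its refinement $r$: the conditions are about $\cor$ but $\upoly,\aupoly$ and the "active at $n$" counts are about $r$, so one must show the leading-order behavior is refinement-independent (via minimal continuity and the persistence of strict Condorcet edges) while simultaneously pinning the exact polynomial degree $q-1$ — and not $q-2$ or less — in the $\Theta(n^{-0.5})$ regime, which is precisely where the almost-Condorcet-winner geometry and the parity of $n$ carry the weight of the proof.
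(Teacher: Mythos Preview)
Your overall plan—encode $\CC$-satisfaction polyhedrally, invoke Lemma~\ref{lem:categorization}, and translate $\beta_n,\alpha_n^\ast$ into the four conditions via minimal continuity—is exactly the paper's strategy, and your case-by-case reading of the conditions (including why the $1-\mathrm{poly}^{-1}(n)$ regime cannot occur, and how odd $n$ collapses the $\Theta(n^{-0.5})$ case) is on target. The paper's decomposition differs from yours, though: rather than augmenting $\vH$ with pairwise hyperplanes and indexing by a single signature, it writes $\upoly=\upolynoCW\cup\upolyCWwin$ with almost complement $\upolyCWlose$, where $\upolynoCW=\bigcup_{G:\cwinner(G)=\emptyset}\ppoly{G}$ is indexed by UMGs alone and $\upolyCWwin,\upolyCWlose$ by pairs $(a,\vec t)$ with $\vec t\in\fs$. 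This decouples the parity-sensitive dimension count (done purely on UMGs, Claim~\ref{claim:upoly1}) from the $\cor$-dependent part (reduced to membership in $\closure{\region{\CWwin}{\cor}}$ and $\closure{\region{\CWlose}{\cor}}$, Claim~\ref{claim:chara-Pi-CWW-CWL-n}); your unified decomposition couples them, which is why you end up juggling UMG-geometry and $\vH_r$-geometry simultaneously in the $\Theta(n^{-0.5})$ case. (Relatedly, your sentence ``$\ppoly{\vec s}$ itself (codimension one)'' is not quite right: $\vec s=\signH(\pi)$ may have many zeros; what you need is a \emph{refinement} of $\vec s$ keeping only the $(a,b)$-tie zero.)

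There is, however, a genuine gap in how you handle the refinement $r$. You write $\vH_r$ ``for the hyperplanes of the int-GISR $r$'', but the lemma allows $r$ to be \emph{any} refinement of $\cor$—not necessarily a GISR at all—so $\sat{\CC}(r,\cdot)$ need not be constant on the pieces $\ppoly{\vec t}$ for non-atomic $\vec t$, and your $\upoly$ is simply not well-defined as a union of polyhedra. The paper does not tighten the hypothesis on $r$; instead it first proves the lemma for $r=\cor$, then introduces the stricter criterion $\CC^\ast$ (the Condorcet winner, if it exists, must be the \emph{unique} $\cor$-winner) and sandwiches
\[
\satmin{\CC^\ast}{\Pi}(\cor,n)\ \le\ \satmin{\CC}{\Pi}(r,n)\ \le\ \satmin{\CC}{\Pi}(\cor,n),
\]
both bounds depending only on $\cor$. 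It then shows (Claim~\ref{claim:Pi-prime-equivalence}, again via minimal continuity) that the polyhedral models for $\CC^\ast(\cor)$ and $\CC(\cor)$ yield the same activation sets $\Pi_{\cdot,n}$, hence identical asymptotics. This sandwich is precisely the missing device for the ``$\cor$ vs.\ $r$ book-keeping'' you correctly flag as the main obstacle; without it (or an equivalent), your argument covers only GISR refinements.
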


Lemma~\ref{lem:sCC-GISR} can be applied to a wide range of  resolute voting rules because it works for any refinement $r$ (i.e., using any tie-breaking mechanism) of any minimally continuous GISR (which include all voting rules discussed in this paper).  Notice that $r$ is not required to be a GISR,   the L case and the $0$ case never happen, and the conditions of all cases depend on $\cor$ but not $r$.

\begin{ex}[\bf Applications of Lemma~\ref{lem:sCC-GISR} to plurality]
\label{ex:thm1}
Continuing the setting of Example~\ref{ex:cond-CC-thm}, we let $\plu$ denote any  refinement of $\iplu$. We first apply the $2\mid n$ part of Lemma~\ref{lem:sCC-GISR} to the following four cases of $\Pi$ for sufficiently large $n$ using Table~\ref{tab:ex-dist}. The first three cases correspond  to i.i.d.~distributions, i.e., $|\Pi|=1$. In particular, $\Pi=\{\piuni\}$ corresponds to IC. 
\begin{itemize}
\item $\Pi=\{\pi^1,\pi^2\}$. We have $\satmin{\CC}{\Pi}(\plu ,n)=\exp(-\Theta(n))$, that is, the VU case holds. This is because let $\pi' = \frac{3\pi^1+\pi^2}{4}$, we have $\pi'\in \conv(\Pi)$ and $\condition{\text{RS}}(\iplu, \pi')=N$ according to Table~\ref{tab:ex-dist}.
\item $\Pi_1=\{\pi^1\}$. We have $\satmin{\CC}{\Pi_1}(\plu ,n)=\Theta(n^{-0.5})$, that is, the U case holds.

\item $\Pi_2=\{\pi^2\}$. We have $\satmin{\CC}{\Pi_2}(\plu ,n)=1- \exp(-\Theta(n))$, that is, the VL case holds.
\item $\Pi_{\text{IC}}=\{\piuni\}$. We have $\satmin{\CC}{\Pi_{\text{IC}}}(\plu ,n)=\Theta(1)\wedge (1-\Theta(1))$, that is, the M case holds.

\end{itemize}
When $2\nmid n$ and $\Pi_1=\{\pi^1\}$, we have $\satmin{\CC}{\Pi_1}(\plu ,n)=\exp(-\Theta(n))$, that is, the VU case holds.
\end{ex}

\paragraph{\bf Intuitive explanations.} The conditions   in Lemma~\ref{lem:sCC-GISR} can be explained as follows. Take the $2\mid n$ case for example. In light of various multivariate  central limit theorems, the histogram of the randomly-generated profile when the adversary chooses $\vec \pi = (\pi_1,\ldots,\pi_n)$ is concentrated in a $\Theta(n^{-0.5})$ neighborhood of $\sum_{j=1}^n \pi_j$, denoted by $B_{\vec\pi}$. Let $\avg{\vec\pi} = \frac 1n \sum_{j=1}^n \pi_j$, which means that $\avg{\vec\pi}\in\conv(\Pi)$. The condition for the $1$ case is straightforward. Suppose the $1$ case does not happen, then the VL case happens if all distributions in $\conv(\Pi)$, which includes $\avg{\vec\pi}$, are far from instances of dissatisfaction, so that no instance of dissatisfaction is in $B_{\vec\pi}$.  Suppose the VL case does not happen. The U case happens if the min-adversary can find a non-robust satisfaction instance ($\condition{\text{NRS}}(\cor, \pi)$) but cannot find a robust dissatisfaction instance ($\neg\condition{\text{RD}}(\cor, \pi)$). And if the min-adversary can find a robust dissatisfaction instance ($\condition{\text{RD}}(\cor, \pi)$), then $B_{\vec\pi}$ does not contain any instance of satisfaction, which means that the VU case happens. All remaining cases are M cases.

\paragraph{\bf \boldmath Odd vs.~even $n$.} The $2\nmid n$ case also admits a similar explanation. The main difference  is that when $2\nmid n$, the UMG of any $n$-profile must be a complete graph, i.e., no alternatives are tied in the UMG. Therefore, when $\condition{\text{NRS}}(\cor, \pi)$ is satisfied, a Condorcet winner (who is one of the two ACWs in $\pi$) must exist  and constitutes an instance of robust dissatisfaction when $2\nmid n$. On the other hand, it is possible that the two ACWs in $\pi$ are tied in an $n$-profile when $2\mid n$, which constitutes a case where $\CC$ is satisfied because the Condorcet winner does not exist. This 
 happens with probability $\Theta(n^{-0.5})$. This difference leads to the $\Theta(n^{-0.5})$ case  when $2\mid n$, and it becomes part of the $\exp(-\Theta(n))$ case when $2\nmid n$ .

\paragraph{\bf Proof sketch.} Before presenting the formal proof in the following subsection, we present a proof sketch here.  

We first  prove  the  special case $r=\cor$, which is done by applying Lemma~\ref{lem:categorization} in the following three steps.  {\bf Step 1.} Define $\upoly$ that characterizes the satisfaction of $\CC$ under $\cor$, and an almost complement $\aupoly$ of $\upoly$. In fact, we will let $\upoly = \upolynoCW\cup \upolyCWwin$ as in Section~\ref{sec:cat-lemma} and Section~\ref{sec:modeling-par}, and prove that one choice of $\aupoly$ is   the union of polyhedra that represent profiles where the Condorcet winner exists but is not an $\cor$ co-winner.    {\bf Step 2.}  Characterize  $\alpha^*_n$ and $\beta_n$, which is technically the most involved part due to the generality of the theorem. {\bf Step 3.} Formally apply Lemma~\ref{lem:categorization}.

Then,  let $r$ denote an arbitrary refinement of $\cor$. We define a slightly different version of $\CC$, denoted by $\sat{\CC^*}$, whose satisfaction under $\cor $  will be used as a lower bound on the satisfaction of $\CC$ under $r$. For any GISR $\cor$  and any profile $P$, we define
$$\sat{\CC^*}(\cor,P)=\left\{\begin{array}{ll}
1&\text{if } \cwinner(P)=\emptyset \text{ or } \cwinner(P) = \cor(P)\\
0& \text{otherwise}
\end{array}\right.$$
Compared to $\sat{\CC}$,   $\sat{\CC^*}$ rules out profiles $P$ where a Condorcet winner exists and is not the unique winner under $\cor$. Therefore,  for any $\vec \pi\in\Pi^n$, we have 
$$\Pr\nolimits_{P\sim\vec \pi}(\sat{\CC^*}(\cor,P)=1)\le \Pr\nolimits_{P\sim\vec \pi}(\sat{\CC}(r,P)=1) \le \Pr\nolimits_{P\sim\vec \pi}(\sat{\CC}(\cor,P)=1)$$
Then, we prove that smoothed $\CC^*$, i.e.,  $\satmin{\CC^*}{\Pi}(\cor,n)$, asymptotically matches  $\satmin{\CC}{\Pi}(\cor,n)$, which concludes the proof of Lemma~\ref{lem:sCC-GISR}. 

\subsubsection{Proof of Lemma~\ref{lem:sCC-GISR}}
\label{app:proof-lem:sCC-GISR}
 
\begin{proof} The $1$ cases of the theorem is trivial. {\bf \boldmath In the rest of the proof, we assume that the $1$ case does not hold.} That is, there exists an $n$-profile $P$ such that $\cwinner(P)$ exists but is not in $\cor(P)$. We will prove that the theorem holds for any $n>N_{\cor}$, where $N_{\cor}\in\mathbb N$  is a constant that only depends on $\cor$ that will be defined later (in Definition~\ref{dfn:N-cor}). This is without loss of generality, because when $n$ is bounded above by a constant, the $1$ case belongs to the U case (i.e., $\Theta(n^{-0.5}) $) and the VU case (i.e., $\exp(-\Theta(n)) $).

Let $\cor$ be defined by $\vH$ and $g$. We first  prove the theorem for the special case where $r=\cor$, and then show how to modify the proof for general $r$.  For any irresolute voting rule $\cor$, we recall that $\sat{\CC}(\cor,P) = 1$ if and only if either $P$ does not have a Condorcet winner, or the Condorcet winner is a co-winner under $\cor$.

\paragraph{\bf\boldmath Proof for the special case $r=\cor$.} Recall that in this case $\cor$ is a minimally continuous GISR. In light of Lemma~\ref{lem:categorization}, the proof proceeds in the following three steps. {\bf Step 1.} Define $\upoly$ that characterizes the satisfaction of {\sc Condorcet Criterion} of $\cor$ and an almost complement $\aupoly$ of $\upoly$. {\bf Step 2.}  Characterize $\Pi_{\upoly,n}$,   $\Pi_{\aupoly,n}$, $\beta_n$, and $\alpha^*_n$. {\bf Step 3.} Apply Lemma~\ref{lem:categorization}.

\paragraph{\bf \boldmath Step 1: Define $\upoly$ and $\aupoly$.} The definition is similar to the ones presented in Section~\ref{sec:cat-lemma} for plurality. We will define $\upoly = \upolynoCW\cup \upolyCWwin$, where $\upolynoCW$ represents the histograms of profiles that do not have a Condorcet winner, and $\upolyCWwin$ represents histograms of profiles where a Condorcet winner exists and is a co-winner under $\cor$. $\upolynoCW$ is similar to the set defined in~\cite[Proposition~5 in the Appendix]{Xia2020:The-Smoothed}. For completeness we recall its definition using the notation of this paper.  

Recall that $\pair_{a,b}$ is the pairwise difference vector defined in Definition~\ref{dfn:pairdiff}. It follows that for any profile $P$ and any pair of alternatives $a,b$, $\pair_{a,b}\cdot \hist(P)>0$ if and only if there is an edge $a\ra b$ in $\umg(P)$; $\pair_{a,b}\cdot \hist(P)=0$ if and only if $a$ and $b$ are tied in $\umg(P)$.
Then, we use $\pair_{a,b}$'s to define polyhedra that characterize histograms of profiles whose UMGs equal to a given graph $G$.

\begin{dfn}[\bf \boldmath $\ppoly{G}$]
Given an unweighted directed graph $G$ over $\ma$, let  $\pba{G} = \left[\begin{array}{c} \pba{G}_{\text{edge}}\\ \pba{G}_\text{tie}\end{array}\right]$, where  $\pba{G}_{\text{edge}}$ consists of rows $\pair_{b,a}$ for all edges $a\ra b\in G$, and $\pba{G}_{\text{edge}}$ consists of two rows $\pair_{b,a}$ and $\pair_{a,b}$ for each tie $\{a,b\}$ in $G$. Let $\pvbb{G} = [\underbrace{-\vec 1}_{\text{for }\pba{G}_{\text{edge}}},\underbrace{\vec 0}_{\text{for }\pba{G}_\text{tie}}]$ and 
$$\ppoly{G} = \left\{\vec x\in {\mathbb R}^{m!}: \pba{G}\cdot \invert{\vec x}\le \invert{\pvbb{G}} \right\}$$\end{dfn}

Next, we define polyhedra indexed by an alternative $a$ and a feasible signature $\vec t\in \fs$ that characterize the histograms of profiles $P$ where $a$ is the  Condorcet winner and $\signH(P) = \vec t$.
\begin{dfn}[\boldmath $\ppoly{a, \vec t}$]
\label{dfn:poly-a-t}
Given $\vH = (\vec h_1,\ldots,\vec h_K)\in (\mathbb R^{d})^K$,  $a\in\ma$, and $\vec t\in \fs$, we let  $\pba{a,\vec t}=\left[\begin{array}{l}\pba{\text{CW} = a}\\ \pba{\vec t}\end{array}\right]$, where  $\pba{\text{CW} = a}$ consists of pairwise difference vectors $\pair_{b,a}$ for each alternative $b\ne a$, and  $\pba{\vec t}$ is the matrix used to define  $\ppoly{\vec t}$ in Definition~\ref{dfn:poly-H-t}.
Let $\pvbb{a,\vec t} = [\underbrace{-\vec 1}_{\text{for }\pba{\text{CW}=a}},\underbrace{\pvbb{\vec t}}_{\text{for }\pba{\vec t}} ]$ and
$$\ppoly{a, \vec t} = \{\vec x\in {\mathbb R}^{m!}: \pba{a,\vec t}\cdot \invert{\vec x}\le \invert{\pvbb{a,\vec t}} \}$$
\end{dfn}

Next, we use $\ppoly{G}$ and $\ppoly{a,\vec t}$ as building blocks to define $\upoly = \upolynoCW\cup\upolyCWwin$  and an almost complement of $\upoly$, denoted by $\upolyCWlose$. At a high level,  $\upolynoCW$ corresponds to the profiles where no Condorcet winner exists ($\noCW$ represents ``no Condorcet winner''), $\upolyCWwin$ corresponds to profiles where the Condorcet winner exists and is also an $\cor$ co-winner ($\CWwin$ represents ``Condorcet winner wins''), and $\upolyCWlose$ corresponds to profiles where the Condorcet winner exists and is not an $\cor$ co-winner ($\CWlose$ represents ``Condorcet winner loses''). 
\begin{dfn}[\bf \boldmath $\upoly$ and $\upolyCWlose$]\label{dfn:C-CWlose-GISR}
Given an int-GISR characterized by $\vH$ and $g$, we define
\begin{align*}
&\upoly = \upolynoCW\cup\upolyCWwin, \hspace{3mm}\text{where }\upolynoCW = \bigcup\nolimits_{G: \cwinner(G)=\emptyset}\ppoly{G} \text{ and }\upolyCWwin = \bigcup\nolimits_{a\in\ma, \vec t\in \fs: a\in\cor(\vec t)}\ppoly{a,\vec t}\\
&\upolyCWlose = \bigcup\nolimits_{a\in\ma, \vec t\in \fs: a\notin\cor(\vec t)}\ppoly{a,\vec t}
\end{align*}
\end{dfn}
We note that  some $\ppoly{a,\vec t}$ can be empty.  To see  that $\upolyCWlose$ is indeed an almost complement of $\upoly = \upolynoCW\cup \upolyCWwin$, we note that $\upoly\cap \upolyCWlose = \emptyset$, and for any integer vector $\vec x$, 
\begin{itemize}
\item  if $\vec x$ does not have a Condorcet winner then $\vec x\in \upolynoCW\subseteq \upoly$; 
\item if $\vec x$ has a Condorcet winner $a$, which is also an $\cor$ co-winner, then $\vec x\in \ppoly{a,\signH(\vec x)}\subseteq \upolyCWwin\subseteq \upoly$;
\item otherwise $\vec x$ has a Condorcet winner $a$, which is not an $\cor$ co-winner. Then $\vec x\in \ppoly{a,\signH(\vec x)}\subseteq   \upolyCWlose$.
\end{itemize}
Therefore, ${\mathbb Z}^q \subseteq \upoly\cup \upolyCWlose$. 

\paragraph{\bf \boldmath Step 2: Characterize $\Pi_{\upoly,n}$,   $\Pi_{\upolyCWlose,n}$, $\beta_n$, and $\alpha^*_n$.} Recall that  $\beta_n$ and $\alpha_n^*$ are defined by   $\md{\upoly}{\pi}$ and $\md{\upolyCWlose}{\pi}$ for   $\pi\in\conv(\Pi)$ as follows:
\begin{align*}
&\beta_n = \min\nolimits_{\pi\in \conv(\Pi)} \md{\upoly}{\pi} =\min\nolimits_{\pi\in \conv(\Pi)}\max\left( \md{\upolynoCW}{\pi},\md{\upolyCWwin}{\pi}\right)\\
&\alpha_n^* = \max\nolimits_{\pi\in\conv(\Pi)} \md{\upolyCWlose}{\pi}
\end{align*} 
For convenience, we let $\Pi_{\upoly,n}$  denote the distributions in $\conv(\Pi)$, each of which is connected to an edge with positive weight in the activation graph (Definition~\ref{dfn:activation-graph}). Formally, we have the following definition. 
\begin{dfn}[\boldmath $\Pi_{\upoly,n}$]
\label{dfn:Pi-C-n}
Given a set of distributions $\Pi$ over $q$, $\upoly = \bigcup_{i\le I}\cpoly{i}$,  and $n\in\mathbb N$, let  
$$\Pi_{\upoly,n} = \{\pi\in\conv(\Pi): \exists  i\le I\text{ s.t. } \cpolynint{i}\ne\emptyset \text{ and } \pi\in\cpolyz{i}\}$$
\end{dfn}

Table~\ref{tab:dim-max-summary} gives an overview  of the rest of the proof in Step 2, which characterizes $\md{\upoly}{\pi}$ and $\md{\upolyCWlose}{\pi}$ by the membership of $\pi\in\conv(\Pi)$ in $\Pi_{\upolynoCW,n},\Pi_{\upolyCWwin,n}$, and $\Pi_{\upolyCWlose,n}$, respectively, where $n\ge N_\cor$ for a constant $N_\cor$ that will be defined momentarily (in Definition~\ref{dfn:N-cor}). 

\begin{table}[htp]
\centering
\begin{tabular}{|@{\ }l@{\ }|@{\ }c@{\ }|@{\ }c@{\ }|@{\ }c@{\ }|@{\ }c@{\ }|@{\ }c@{\ }|@{\ }c@{\ }|}
\hline
$\pi\in \Pi_{\upolynoCW,n}$  & $\ast$& $\ast$ & N& Y& Y& N\\
\hline $\pi\in \Pi_{\upolyCWwin,n}$  & Y & Y&N& N& N& N\\
\hline $\pi\in \Pi_{\upolyCWlose,n}$   & Y & N&Y& Y& N& N\\
\hline $\md{\upolynoCW}{\pi}$ (Claim~\ref{claim:upoly1})& $\ast$ &$\ast$& $-\frac{n}{\log n}$& $m!$ or $m!-1$& $m!$  &\multirow{4}{*}{N/A} \\
\cline{1-6} $\md{\upolyCWwin}{\pi}$ (Claim~\ref{claim:chara-Pi-CWW-CWL-n}) & $m!$& $m!$& $\le -\frac{n}{\log n}$ &$<0$& $<0$ & \\
\cline{1-6} \begin{tabular}{@{}l@{}}$\md{\upoly}{\pi}=$\\ $\max\left( \md{\upolynoCW}{\pi},\md{\upolyCWwin}{\pi}\right)$\end{tabular} & $m!$& $m!$& $-\frac{n}{\log n}$&  $\md{\upolynoCW}{\pi}$ & $m!$& \\
\cline{1-6} $\md{\upolyCWlose}{\pi}$ (Claim~\ref{claim:chara-Pi-CWW-CWL-n})& $m!$&$-\frac{n}{\log n}$& $m!$ & $m!$ & $-\frac{n}{\log n}$&  \\
\hline
\end{tabular}
\caption{\small $\md{\upoly}{\pi}$ and $\md{\upolyCWlose}{\pi}$ for $\CC$ for  $\pi\in\conv(\Pi)$ and sufficiently large $n$.\label{tab:dim-max-summary}}
\end{table}
We will first specify $N_\cor$ in Step 2.1. Then in Step 2.2, we will characterize $\Pi_{\upolynoCW,n}$ and $\md{\upolynoCW}{\pi}$ in Claim~\ref{claim:upoly1}, and characterize  $\Pi_{\upolyCWwin,n}$, $\md{\upolyCWwin}{\pi}$, $\Pi_{\upolyCWlose,n}$, and $\md{\upolyCWlose}{\pi}$ in Claim~\ref{claim:chara-Pi-CWW-CWL-n}. Finally, in Step 2.3 we will verify $\md{\upoly}{\pi}$ and $\md{\upolyCWlose}{\pi}$ in Table~\ref{tab:dim-max-summary}.

\paragraph{\bf \boldmath Step 2.1.~Specify $N_\cor$.} We first prove the following claim, which provides a sufficient condition for a polyhedron to be active for sufficiently large $N$.

\begin{claim}\label{claim:poly-active-at-all-n} For any polyhedron $\poly$ characterized by integer matrix $\ba$ and $\vbb \le \vec 0$, if $\dim(\polyz)=m!$ and $\poly\cap {\mathbb R}_{>0}^{m!}\ne\emptyset$, then there exists $N\in\mathbb N$ such that for all $n\ge N$, $\poly$ is active at $n$.
\end{claim}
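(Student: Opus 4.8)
The plan is to exhibit, for all sufficiently large $n$, an explicit non-negative integer vector in $\poly$ whose $L_1$ norm is exactly $n$, which is precisely what it means for $\poly$ to be active at $n$. First I would pick a witness point $\vec y \in \poly\cap{\mathbb R}_{>0}^{m!}$, whose existence is hypothesized, and a point $\vec z$ in the interior of the characteristic cone $\polyz$ relative to ${\mathbb R}^{m!}$, whose existence follows from $\dim(\polyz)=m!$; by rescaling we may take $\vec z$ to have all positive coordinates as well (adding a small positive multiple of the all-ones vector keeps $\vec z$ in the open cone since $\polyz$ is full-dimensional, or one argues directly that a full-dimensional cone meeting the positive orthant has an interior ray there). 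The key geometric fact is that $\vec y + t\vec z \in \poly$ for every $t\ge 0$: this is the standard decomposition property of polyhedra (a point plus a recession direction stays in the polyhedron), using $\ba\cdot\invert{\vec z}\le\invert{\vec 0}$ and $\ba\cdot\invert{\vec y}\le\invert{\vbb}$.

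Next I would convert this ray of real solutions into integer solutions of prescribed norm. Since $\vec z$ is in the \emph{open} cone, every vector sufficiently close to $\vec z$ (in particular, nearby rational, hence integer-after-scaling, directions) also lies in $\polyz$; concretely there is an integer vector $\vec w$ with all positive coordinates and $\ba\cdot\invert{\vec w}\le\invert{\vec 0}$ — obtained by rounding a large multiple of $\vec z$ and correcting within the open-cone neighborhood. Let $s = |\vec w|_1$, a positive integer. Then for every non-negative integer $k$, the vector $\vec y + k\vec w$ lies in $\poly$ and has $L_1$ norm $|\vec y|_1 + ks$ (all coordinates are positive, so $L_1$ norm is additive). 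This realizes every value in the arithmetic progression $|\vec y|_1 + s{\mathbb Z}_{\ge 0}$, but I still need \emph{every} large $n$, not just those in one residue class mod $s$.

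To close the gap I would also round $\vec y$ itself, or rather find $s$ many ``base'' points with consecutive norms. Here is the cleaner route: since $\vec y$ has all positive coordinates, pick any coordinate index $i$ and note $\vec y + c\,\vec e_i \in \poly$ for every $c\ge 0$ provided $\vec e_i$ is a recession direction — but $\vec e_i$ need not be one. So instead use $\vec w$ for the ``bulk'' and adjust the residue by choosing among finitely many starting integer points. Concretely, for each residue $\rho\in\{0,1,\dots,s-1\}$ one finds an integer point $\vec y_\rho\in\poly\cap{\mathbb R}_{\ge 0}^{m!}$ with $|\vec y_\rho|_1 \equiv \rho \pmod s$: take $\vec y_\rho = \lfloor \vec y\rfloor + \vec u_\rho$ where $\vec u_\rho\in\{0,1\}^{m!}$ is a $0/1$ correction chosen so that $\vec y_\rho\in\poly$ (possible because $\vec y$ is interior to $\poly$ once we also perturb into the cone) and the norm hits the right residue — since adding single basis vectors changes $|\cdot|_1$ by $1$, all residues mod $s$ are attainable among the $2^{m!}$ corrections as long as $m!\ge s$, and if $m!<s$ we first enlarge $\vec w$ is not an option, so instead we note $\vec w$ can be replaced by $\vec w + \vec e_i - \vec e_i$... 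Actually the robust fix: the set $\{|\vec y + k\vec w + \vec v|_1 : k\in{\mathbb Z}_{\ge0},\ \vec v\in F\}$ for a suitable finite set $F$ of non-negative integer recession-compatible shifts covers all large integers; such $F$ exists because the recession cone of $\poly$, being full-dimensional, contains integer vectors realizing $\gcd=1$ combinations of $L_1$-norm increments. Set $N$ to be the largest integer not so covered plus one. I expect \textbf{this last step — pinning down every residue class and hence every large $n$, rather than just an arithmetic progression — to be the main obstacle}, since it requires a careful (but standard, Chicken-McNugget-flavored) argument that the integer points of a full-dimensional rational polyhedron meeting the positive orthant have $L_1$-norms covering a full final segment of ${\mathbb N}$; the geometric content (recession direction plus decomposition) is routine.
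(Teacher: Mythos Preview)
Your approach can be pushed through, but you are working much harder than necessary because you never use the hypothesis $\vbb\le\vec 0$. That hypothesis is the whole point: it implies $\poly$ is closed under scaling by any factor $\lambda\ge 1$, since $\ba\cdot\invert{\lambda\vec x}=\lambda\,\ba\cdot\invert{\vec x}\le\lambda\invert{\vbb}\le\invert{\vbb}$. The paper's argument exploits this directly. First it observes (via Minkowski--Weyl and $\dim(\polyz)=m!$) that $\poly$ is full-dimensional, so together with $\poly\cap\mathbb{R}_{>0}^{m!}\ne\emptyset$ there is an interior point $\vec x$ with an $L_\infty$ ball $B\subseteq\poly\cap\mathbb{R}_{>0}^{m!}$. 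Then for each large $n$ it simply scales by $\lambda=n/|\vec x|_1\ge 1$: the scaled ball $\lambda B$ still lies in $\poly\cap\mathbb{R}_{>0}^{m!}$, its center has $L_1$-norm exactly $n$, and its $L_\infty$-radius grows linearly in $n$, so once $n$ is large enough the ball contains the integer vector obtained by flooring the first $m!-1$ coordinates of $\lambda\vec x$ and putting the remainder in the last coordinate. That vector is non-negative, integer, has $L_1$-norm $n$, and lies in $\poly$.

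This completely sidesteps the residue-class issue you flagged as the main obstacle: there is no need for recession directions with coprime norms, Chicken--McNugget arguments, or a finite family of base points. Your route via integer recession vectors and the numerical semigroup of their $L_1$-norms is not wrong in principle (a full-dimensional rational cone does contain integer vectors with consecutive $L_1$-norms once you go deep enough into the interior), but it is the hard way around, and your write-up never quite nails down the base integer point in $\poly$ either. The moral: when a hypothesis like $\vbb\le\vec 0$ appears, ask what invariance it buys---here, scale-monotonicity, which makes ``hit every large $n$'' a one-line rounding.
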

\begin{proof} By Minkowski-Weyl theorem (see e.g., \citep[p.~100]{Schrijver1998:Theory}),  $\poly = \mV+\polyz$, where $\mV$ is a finitely generated polyhedron. Therefore, any affine space containing $\poly$ can be shifted to contain $\polyz$, which means that $\dim(\poly)\ge \dim(\polyz) = m!$. Because $\poly\cap {\mathbb R}_{>0}^{m!}\ne\emptyset$, it contains an interior point (inner point with an full dimensional neighborhood), denoted by $\vec x$, whose $\delta$ neighborhood (for some $0<\delta<1$) in $L_\infty$ is contained in $\poly\cap {\mathbb R}_{>0}^{m!}$. Let $B$ denote the $\delta$ neighborhood of $\vec x$.   Let $N=\frac{m!|\vec x|_1}{\delta }$. Then, because $\vbb\le \vec 0$ and $\frac{N}{|\vec x|_1}\ge 1$, for every $n> N$ and every $\vec x'\in B$ we have 
$$\ba \cdot \invert{\frac{n}{|\vec x|_1}\vec x'}<\frac{n}{|\vec x|_1}\invert{\vbb}\le \invert{\vbb}$$
This means that  $\frac{n}{|\vec x|_1}B\subseteq \poly\cap {\mathbb R}_{>0}^{m!}$. Moreover, it is not hard to verify that $\frac{n}{|\vec x|_1}B$ contains the following non-negative integer $n$ vector 
$$\left(\left\lfloor \frac{n}{|\vec x|_1} x_1 \right\rfloor, \ldots, \left \lfloor \frac{n}{|\vec x|_1} x_{m!-1} \right\rfloor, n-\sum_{i=1}^{m!-1} \left\lfloor \frac{n}{|\vec x|_1} x_{i} \right\rfloor\right)$$
This proves Claim~\ref{claim:poly-active-at-all-n}.
\end{proof}

We now define the constant $N_\cor$ used throughout the proof.
\begin{dfn}[{\bf \boldmath $N_{\cor}$}]\label{dfn:N-cor}
Let $N_{\cor}$ denote a number that is larger than $m^4$ and the maximum $N$ obtain from applying Claim~\ref{claim:poly-active-at-all-n} to all polyhedra $\poly$ in $\upolynoCW$, $\upolyCWwin$, or $\upolyCWlose$ where $\dim(\polyz)=m!$ and $\poly\cap {\mathbb R}_{>0}^{m!}\ne\emptyset$. 
\end{dfn}
\paragraph{\bf \boldmath Step 2.2.~Characterize $\Pi_{\upolynoCW,n}$, $\Pi_{\upolyCWwin,n}$,  and $\Pi_{\upolyCWlose,n}$.} 
\begin{claim}[\bf \boldmath Characterizations of $\Pi_{\upolynoCW,n}$ and $\md{\upolynoCW}{\pi}$]
\label{claim:upoly1}
For any $n\ge m^4$ such that $\neg\condition{\text{AS}}(\cor, n)$ and any distribution $\pi$ over $\ma$, we have 
\begin{itemize}
\item{if $2\mid n$,} then $\pi\in \Pi_{\upolynoCW,n}$ if and only if $\cwinner(\pi)=\emptyset$, and
$$\md{\upolynoCW}{\pi} = \left\{\begin{array}{ll}-\frac{n}{\log n} &\text{if }\cwinner(\pi)\ne \emptyset\\
m! - 1 & \text{if }\almostCW(\pi)\ne \emptyset\\
m! & \text{otherwise (i.e. }\cwinner(\pi)\cup \almostCW(\pi)=\emptyset\text{)}\\
\end{array}\right.$$
\item{if $2\nmid n$,} then $\pi\in \Pi_{\upolynoCW,n}$ if and only if $\cwinner(\pi)\cup \almostCW(\pi)=\emptyset$, and
$$\md{\upolynoCW}{\pi} = \left\{\begin{array}{ll}-\frac{n}{\log n} &\text{if }\cwinner(\pi)\cup \almostCW(\pi)\ne \emptyset\\
m! & \text{otherwise (i.e. }\cwinner(\pi)\cup \almostCW(\pi)=\emptyset\text{)}\\
\end{array}\right.$$
\end{itemize}
\end{claim}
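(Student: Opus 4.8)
Since $\upolynoCW=\bigcup_{G:\cwinner(G)=\emptyset}\ppoly{G}$, both halves of the claim reduce to understanding, for each unweighted digraph $G$ over $\ma$ with $\cwinner(G)=\emptyset$, two things: (i) whether $\ppoly{G}$ is active at $n$, and (ii) whether $\pi\in\ppolyz{G}$ and, if so, the value $\dim(\ppolyz{G})$. For (ii) I would first record the translation obtained by unwinding the defining inequalities of $\ppoly{G}$ (which are $\pair_{b,a}\cdot\vec x\le -1$ for each edge $a\to b$ of $G$ and $\pair_{a,b}\cdot\vec x\le 0$, $\pair_{b,a}\cdot\vec x\le 0$ for each tie of $G$): since $\pair_{b,a}\cdot\hist(\pi)=-w_\pi(a,b)$, one gets that $\pi\in\ppolyz{G}$ holds exactly when $G$ is a \emph{completion} of $\umg(\pi)$, i.e.\ every strict edge of $\umg(\pi)$ is kept in $G$ while every tie of $\umg(\pi)$ is either kept or oriented arbitrarily in $G$. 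For (i), parity of $n$ is decisive: if $2\nmid n$ then $w_P(a,b)\ne0$ for every $n$‑profile $P$, so $\umg(P)$ is always a tournament and only tournament $G$ can be active; if $2\mid n$ all $G$ can occur. In both cases, for $n\ge m^4$ a McGarvey‑type construction (two mutually reversed votes contributing $+2$ to each desired edge, one extra base linear order when an odd total is needed, and padding by reverse pairs) realizes every $G$ of the allowed type as the exact UMG of a profile of exactly $n$ votes, using at most $1+m(m-1)\le m^4$ essential votes; this shows $\ppoly{G}$ is active at $n\ge m^4$ iff ($2\mid n$) or ($G$ is a tournament).

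\textbf{The combinatorial core.} The crux, and what I expect to be the main obstacle, is: \emph{if $\umg(\pi)$ has no Condorcet winner and $\almostCW(\pi)=\emptyset$, then $\umg(\pi)$ has a tournament completion with no Condorcet winner.} I would prove this by letting $U$ be the set of vertices of $\umg(\pi)$ with no incoming strict edge; any two vertices of $U$ are forced to be tied in $\umg(\pi)$, so $U$ spans a clique of ties. A completion has no Condorcet winner iff every vertex has in‑degree $\ge1$, and every vertex outside $U$ already has one, so it suffices to orient ties so that each vertex of $U$ gains an incoming edge. If $|U|\ge3$, orient the internal clique of $U$ along a Hamiltonian cycle; if $|U|=0$ any completion works; if $|U|=1$, note an isolated unprotected vertex would be a Condorcet winner, so the one vertex of $U$ is tied with some vertex, which we orient toward it; if $|U|=2$, the absence of an ACW pair forbids the two vertices of $U$ from being tied only with each other, which is exactly the obstruction, and otherwise one of them is tied with a third vertex, allowing both to be supplied. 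Granting this, membership follows: for $2\mid n$, $G=\umg(\pi)$ is itself a valid active completion iff $\cwinner(\pi)=\emptyset$ (and if $\cwinner(\pi)\ne\emptyset$ every completion inherits that Condorcet winner), giving $\pi\in\Pi_{\upolynoCW,n}\iff\cwinner(\pi)=\emptyset$; for $2\nmid n$ one needs a \emph{tournament} completion with no Condorcet winner, which by the core statement exists iff $\cwinner(\pi)\cup\almostCW(\pi)=\emptyset$, the converse directions being immediate since a Condorcet winner is preserved by every completion, and if $\almostCW(\pi)=\{a,b\}$ then orienting the tie $\{a,b\}$ either way in a tournament makes one of $a,b$ a Condorcet winner.

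\textbf{Dimension count and assembly.} For the $\md{\upolynoCW}{\pi}$ values I would first establish $\dim(\ppolyz{G})=m!-\#\{\text{ties of }G\}$: the tie‑constraints are cut by the vectors $\pair_{a,b}$, which are linearly independent because the pairwise‑margin map $\mathbb R^{m!}\to\mathbb R^{\binom m2}$ is onto (again McGarvey), and a histogram realizing $G$ exactly lies in the relative interior of $\ppolyz{G}$, so no inequality constraint cuts further. Hence, when $\pi\in\Pi_{\upolynoCW,n}$, $\md{\upolynoCW}{\pi}=m!$ minus the minimum number of ties over active no‑CW completions $G$ of $\umg(\pi)$. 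If $2\nmid n$ only tournament completions are active, so this value is $m!$ whenever $\pi\in\Pi_{\upolynoCW,n}$, i.e.\ whenever $\cwinner(\pi)\cup\almostCW(\pi)=\emptyset$. If $2\mid n$: when $\almostCW(\pi)=\emptyset$ (and $\cwinner(\pi)=\emptyset$) the core statement produces a tournament completion with no Condorcet winner, so the minimum is $0$ and the value is $m!$; when $\almostCW(\pi)=\{a,b\}$ no tournament completion avoids a Condorcet winner so the minimum is $\ge1$, while keeping exactly $\{a,b\}$ tied and orienting all other ties yields a valid completion with one tie and no Condorcet winner (every $c\notin\{a,b\}$ loses to $a$), so the value is $m!-1$. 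Finally, when $\pi\notin\Pi_{\upolynoCW,n}$ for the relevant parity, some tournament with no Condorcet winner is still active for $m\ge3$ and every active no‑CW $G$ fails $\pi\in\ppolyz{G}$, so $\md{\upolynoCW}{\pi}=-\frac{n}{\log n}$. Collecting these cases yields exactly the two displayed formulas.
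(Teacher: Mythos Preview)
Your proposal is correct and follows essentially the same approach as the paper. The paper packages your translation of $\ppolyz{G}$, the dimension formula $\dim(\ppolyz{G})=m!-\#\text{ties}(G)$, and the activity characterization as a cited claim from prior work, and your ``combinatorial core'' (existence of a tournament completion with no Condorcet winner iff $\cwinner\cup\almostCW=\emptyset$) is exactly the paper's auxiliary claim, proved by the same case analysis on $|U|$ (which equals $|\wcwinner(\umg(\pi))|$); the assembly into the three displayed values is then identical.
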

\begin{proof}In the proof we assume that $n\ge m^4$.  We first recall the following characterization of $\ppoly{G}$, where part (i)-(iii) are due to~\cite[Claim 3 in the Appendix]{Xia2020:The-Smoothed} and part (iv) follows after~\cite[Claim 6 in the Appendix]{Xia2020:The-Smoothed}.
\begin{claim}[\bf \boldmath Properties of $\ppoly{G}$~\citep{Xia2020:The-Smoothed}]
\label{claim:h-umg}
For any UMG $G$, 
\begin{enumerate}[label=(\roman*)]
\item for any integral profile $P$,  $\hist(P)\in \ppoly{G}$ if and only if $G = \umg(P)$;
\item for any $\vec x\in\mathbb R^{m!}$,  $\vec x\in \ppolyz{G}$ if and only if $\umg(\vec x)$ is a subgraph of $G$.  
\item $\dim(\ppolyz{G})= m! - \ties(G)$.
\item For any $n\ge m^4$, $\ppoly{G}$ is active at $n$ if (1) $n$ is even, or (2) $n$ is odd and $G$ is a complete graph.
\end{enumerate}
\end{claim}
\paragraph{\bf \boldmath The $2\mid n$ case.} By Claim~\ref{claim:h-umg} (iv), when $n\ge m^4$ and $2\mid n$, every $\ppoly{G}$ is active. This means that $\pi\in \Pi_{\upolynoCW,n}$ if and only if $\pi\in\ppolyz{G}$ for some graph $G$ that does not have a Condorcet winner.  According to Claim~\ref{claim:h-umg} (ii),  this holds if and only if there exists  a supergraph of $\umg(\pi)$ (which can be $\umg(\pi)$ itself) that not have a Condorcet winner, which is equivalent to $\umg(\pi)$ does not have a Condorcet winner, i.e.~$\cwinner(\pi)=\emptyset$. It follows that  $\md{\upolynoCW}{\pi}=-\frac{n}{\log n}$ if and only if $\cwinner(\pi)\ne \emptyset$.

To characterize the $m!-1$ case and the $m!$ case for $\md{\upolynoCW}{\pi}$, we first prove the following claim to characterize graphs whose complete supergraphs all have Condorcet winners.
 
\begin{claim}
\label{claim:comp-sup-has-noCW}
For any unweighted directed graph $G$ over $\ma$, the following conditions are equivalent. (1) Every complete supergraph of  $G$ has a Condorcet winner. (2) $\cwinner(G)\cup \almostCW(G)\ne \emptyset$.
\end{claim}
\begin{proof}
We first prove (1)$\Rightarrow$(2) in the following three cases.
\begin{itemize}
\item {\bf\boldmath Case 1: $|\wcwinner(G)|=1$.} In this case we must have $\cwinner(G)=\wcwinner(G)$, otherwise  there exists an alternative $b$ that is different from the weak Condorcet winner, denoted by $a$, such that $a$ and $b$ are tied in $G$. Notice that $b$ is not a weak Condorcet winner. Therefore, we can complete $G$ by adding $b\ra a$ and breaking other ties arbitrarily, and it is not hard to see that the resulting graph does not have a Condorcet winner, which is a contradiction. 

\item {\bf\boldmath Case 2: $|\wcwinner(G)|=2$.}  Let $\wcwinner(G) = \{a,b\}$. We note that $a$ and $b$ are not tied with any other alternative. Otherwise for the sake of contradiction suppose $a$ is tied with $c\ne b$. Then, we can extend $G$ to a complete graph by assigning $c\ra a$ and $a\ra b$. The resulting complete graph does not have a Condorcet winner, which is a contradiction. This means that $a$ and $b$ are the almost Condorcet winners, and hence (2) holds.

\item {\bf\boldmath Case 3: $|\wcwinner(G)|\ge 3$.}  In this case, we can assign directions of edges between $\wcwinner(G)$ to form a cycle, and then assign arbitrary direction to other missing edges in $G$ to form a complete graph, which does not have a Condorcet winner and is thus a contradiction. 
\end{itemize} 
(2)$\Rightarrow$(1) is straightforward.  If $\cwinner(G)\ne \emptyset$, then any supergraph of $G$ has the same Condorcet winner. If $\almostCW(G) = \{a,b\}\ne \emptyset$, then any complete supergraph of $G$ either has $a$ as the Condorcet winner or has $b$ as the Condorcet winner. This proves Claim~\ref{claim:comp-sup-has-noCW}.
\end{proof}

\noindent {\bf \boldmath The $\md{\upolynoCW}{\pi}=m!-1$ case when $2\mid n$.}   Suppose $\almostCW(\pi)=\{a,b\}$. Let $G^*$ denote a supergraph of $\umg(\pi)$ where ties in $\umg(\pi)$ except $\{a,b\}$ are broken arbitrarily. By Claim~\ref{claim:h-umg} (ii), $\pi\in \ppolyz{G^*}$ and by Claim~\ref{claim:h-umg} (iii), $\ppolyz{G^*} = m!-1$. Recall from Claim~\ref{claim:h-umg} (iv) that $\ppoly{G^*}$ is active at $n$ because we assumed that $n>m^4$. Therefore, $\md{\upolynoCW}{\pi}\ge m!-1$. To see that $\md{\upolynoCW}{\pi}\le m!-1$, we note that for every graph $G$ that does not have a Condorcet winner such that $\pi\in \ppolyz{G}$. By Claim~\ref{claim:h-umg} (ii), $G$ is a supergraph of $\umg(\pi)$. This means that $G$ is not a complete graph,  because by Claim~\ref{claim:comp-sup-has-noCW}, any complete supergraph of $\umg(\pi)$ must have a Condorcet winner. It follows that $\ties(G)\ge 1$ and by Claim~\ref{claim:h-umg} (iii), $\ppolyz{G} \le m!-1$. Therefore, $\md{\upolynoCW}{\pi}=m!-1$.

\noindent {\bf \boldmath The $\md{\upolynoCW}{\pi}=m!$ case  when $2\mid n$.} Suppose $\cwinner(\pi)\cup \almostCW(\pi)=\emptyset$. By Claim~\ref{claim:comp-sup-has-noCW} there exists a complete supergraph $G$ of $\umg(\pi)$ that does not have a Condorcet winner, which means that $\ppoly{G}\subseteq\upolynoCW\subseteq \upoly$. We have $\pi\in \ppolyz{G}$ (Claim~\ref{claim:h-umg} (ii)),  $\dim( \ppolyz{G}) = m!$ (Claim~\ref{claim:h-umg} (iii)), and $\ppoly{G}$ is active at $n$ (Claim~\ref{claim:h-umg} (iv)). Therefore, $\md{\upolynoCW}{\pi}=m!$.

\paragraph{\bf \boldmath The $2\nmid n$ case.}   By Claim~\ref{claim:h-umg} (iv), when $n\ge m^4$ and $2\nmid n$, $\ppoly{G}$ is active if and only if $G$ is a complete graph.  It follows from Claim~\ref{claim:h-umg} (ii) that $\pi\in \Pi_{\upolynoCW,n}$ if and only if $\pi\in\ppolyz{G}$, where $G$ is complete supergraph of $\umg(\pi)$ that does not have a Condorcet winner.  By  Claim~\ref{claim:h-umg} (iii), $\dim(\ppolyz{G})=m!$. Therefore, by Claim~\ref{claim:comp-sup-has-noCW}, $\pi\in \Pi_{\upolynoCW,n}$ if and only if  $\cwinner(\pi)\cup \almostCW(\pi)=\emptyset$. Moreover, whenever $\pi\in \Pi_{\upolynoCW,n}$  we have  {$\md{\upolynoCW}{\pi}=m!$}.

This proves Claim~\ref{claim:upoly1}.
\end{proof}

Recall that we have assumed the $1$ case of the theorem does not hold, that is, $\neg\condition{\text{AS}}(\cor, n)$. The following claim characterizes $\Pi_{\upolyCWwin,n}$, $\md{\upolyCWwin}{\pi}$,   $\Pi_{\upolyCWlose,n}$, and $\md{\upolyCWlose}{\pi}$, when $\neg\condition{\text{AS}}(\cor, n)$.

\begin{claim}[\bf \boldmath Characterizations of  $\Pi_{\upolyCWwin,n}$, $\md{\upolyCWwin}{\pi}$,   $\Pi_{\upolyCWlose,n}$, and $\md{\upolyCWlose}{\pi}$]
\label{claim:chara-Pi-CWW-CWL-n}
Given any strictly positive $\Pi$ and any minimally continuous int-GISR $\cor$,  for any $n\ge N_\cor$ (see Definition~\ref{dfn:N-cor}) such that $\neg\condition{\text{AS}}(\cor, n)$ and any $\pi\in\conv(\Pi)$, 
\begin{align*}
\left[\pi\in  \Pi_{\upolyCWwin,n}\right]&\Leftrightarrow \left[\pi\in\closure{\region{\CWwin}{\cor}}\right]\Leftrightarrow \left[\md{\upolyCWwin}{\pi} = m!\right], \text{ and}\\
\left[\pi\in  \Pi_{\upolyCWlose,n}\right]&\Leftrightarrow \left[\pi\in\closure{\region{\CWlose}{\cor}}\right]\Leftrightarrow \left[\md{\upolyCWlose}{\pi} = m!\right]
\end{align*}
\end{claim}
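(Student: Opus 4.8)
The plan is to prove each of the two three-term chains as a cycle of implications. For the first, I will establish
$$[\pi\in\closure{\region{\CWwin}{\cor}}]\ \Rightarrow\ [\md{\upolyCWwin}{\pi}=m!]\ \Rightarrow\ [\pi\in\Pi_{\upolyCWwin,n}]\ \Rightarrow\ [\pi\in\closure{\region{\CWwin}{\cor}}],$$
and the second chain follows from the same argument with $\upolyCWlose$, $\region{\CWlose}{\cor}$, and the condition ``$a\notin\cor(\vec t)$'' replacing ``$a\in\cor(\vec t)$'' throughout. Two of the three implications are short. The implication $[\md{\upolyCWwin}{\pi}=m!]\Rightarrow[\pi\in\Pi_{\upolyCWwin,n}]$ is immediate from Definitions~\ref{dfn:activation-graph} and~\ref{dfn:Pi-C-n}: a weight of $m!>0$ on an edge of the activation graph can only be realized by some $\ppoly{a,\vec t}$ (with $a\in\cor(\vec t)$) that is active at $n$ and has $\pi\in\ppolyz{a,\vec t}$, which is precisely the defining condition of $\Pi_{\upolyCWwin,n}$. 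For $[\pi\in\Pi_{\upolyCWwin,n}]\Rightarrow[\pi\in\closure{\region{\CWwin}{\cor}}]$, I note that every point of $\ppoly{a,\vec t}$ (Definition~\ref{dfn:poly-a-t}) has $a$ as its Condorcet winner, from the $\pba{\text{CW}=a}$ rows, and signature $\vec t$, so its winner set is $g(\vec t)\ni a$; hence $\ppoly{a,\vec t}\subseteq\region{\CWwin}{\cor}$. Since $\region{\CWwin}{\cor}$ is invariant under multiplication by positive scalars and $\pi$ lies in the characteristic cone of the nonempty polyhedron $\ppoly{a,\vec t}$, for any $\vec y\in\ppoly{a,\vec t}$ the points $\tfrac1\lambda\vec y+\pi=\tfrac1\lambda(\vec y+\lambda\pi)$ lie in $\region{\CWwin}{\cor}$ and converge to $\pi$ as $\lambda\to\infty$.

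The substantive step is $[\pi\in\closure{\region{\CWwin}{\cor}}]\Rightarrow[\md{\upolyCWwin}{\pi}=m!]$. I would fix a sequence $\vec x_j\to\pi$ inside $\region{\CWwin}{\cor}$ and, by pigeonhole over the finitely many pairs (alternative, feasible signature), pass to a subsequence along which every $\vec x_j$ has the same Condorcet winner $a$ and the same signature $\vec s=\signH(\vec x_j)$, so $a\in\cor(\vec x_j)=g(\vec s)$. By minimal continuity of $\cor$ and Proposition~\ref{prop:char-continuity}, $a\in g(\vec s)=\bigcup_{\vec s'\in\fsatomic:\vec s'\unlhd\vec s}g(\vec s')$, so there is a feasible \emph{atomic} $\vec s'\unlhd\vec s$ with $a\in\cor(\vec s')$. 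Fixing $\vec y$ with $\signH(\vec y)=\vec s'$, I claim that for all sufficiently small $\epsilon>0$ the vector $\vec z:=\vec x_j+\epsilon\vec y$ still has $a$ as its strict Condorcet winner and has $\signH(\vec z)=\vec s'$: on the nonzero coordinates of $\vec s$ the strict sign of $\vec x_j$ is preserved, on the zero coordinates the sign of $\vec z$ is forced to that of $\vec y$, and the head-to-head margins of $\vec x_j$ are strict. Hence all defining inequalities of $\ppolyz{a,\vec s'}$ hold strictly at $\vec z$, which (as in Claim~\ref{claim:poly-t}(iii)) gives $\dim(\ppolyz{a,\vec s'})=m!$ and $\ppoly{a,\vec s'}\neq\emptyset$. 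Moreover $\pi\in\ppolyz{a,\vec s'}$: each $\vec x_j$ satisfies $\pair_{b,a}\cdot\vec x_j\le 0$ for every $b\neq a$ and $\vec s'\unlhd\signH(\vec x_j)$, so $\vec x_j\in\ppolyz{a,\vec s'}$ by Claim~\ref{claim:poly-t}(ii), and $\ppolyz{a,\vec s'}$ is closed.

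It then remains to see that $\ppoly{a,\vec s'}$ is active at every $n\ge N_\cor$. Because $\pi>0$ (strict positivity of $\Pi$) lies in the characteristic cone $\ppolyz{a,\vec s'}$ and $\vec z$ is an interior point of that cone, $\pi+\delta\vec z$ is, for small $\delta>0$, both an interior point of $\ppolyz{a,\vec s'}$ and coordinatewise positive; scaling it up until all margins are at least $1$ yields a point of $\ppoly{a,\vec s'}$ in $\mathbb R_{>0}^{m!}$. Thus $\ppoly{a,\vec s'}$ is one of the polyhedra appearing in the union defining $\upolyCWwin$ (Definition~\ref{dfn:C-CWlose-GISR}) with full-dimensional characteristic cone and nonempty intersection with the positive orthant, so by Claim~\ref{claim:poly-active-at-all-n} and the definition of $N_\cor$ (Definition~\ref{dfn:N-cor}) it is active at every $n\ge N_\cor$. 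Combining activeness with $\pi\in\ppolyz{a,\vec s'}$, $\dim(\ppolyz{a,\vec s'})=m!$, and $a\in\cor(\vec s')$ gives $w_n(\pi,\ppoly{a,\vec s'})=m!$, hence $\md{\upolyCWwin}{\pi}=m!$. For the $\CWlose$ chain the only change is in the minimal-continuity step: $a\notin g(\vec s)$ forces $a\notin g(\vec s')$ for \emph{every} atomic refinement $\vec s'\unlhd\vec s$, so it suffices to take any feasible atomic refinement (one exists since $g(\vec s)\neq\emptyset$), and the resulting $\ppoly{a,\vec s'}$ lies in the union defining $\upolyCWlose$. I expect the main obstacle to be precisely this construction of $\vec s'$ and $\vec z$: producing a single perturbation that simultaneously makes the polyhedron full-dimensional, keeps $\pi$ in its characteristic cone, and places a point in the positive orthant, since this is where minimal continuity of $\cor$, strict positivity of $\Pi$, and the book-keeping behind $N_\cor$ must all be used together. (The hypothesis $\neg\condition{\text{AS}}(\cor,n)$ is carried over from the surrounding proof; the argument above does not appear to require it.)
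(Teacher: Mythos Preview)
Your proposal is correct and follows essentially the same approach as the paper: fix a subsequence with constant Condorcet winner $a$ and signature, use minimal continuity (Proposition~\ref{prop:char-continuity}) to pass to a feasible atomic refinement, perturb to exhibit a full-dimensional polyhedron $\ppoly{a,\vec s'}$ meeting $\mathbb R_{>0}^{m!}$, and invoke Claim~\ref{claim:poly-active-at-all-n} with Definition~\ref{dfn:N-cor} for activeness. Your organization as a three-implication cycle is a bit more economical than the paper's four separate implications, and your argument for $[\pi\in\Pi_{\upolyCWwin,n}]\Rightarrow[\pi\in\closure{\region{\CWwin}{\cor}}]$ via the cone invariance of $\region{\CWwin}{\cor}$ (so that $\tfrac1\lambda(\vec y+\lambda\pi)\to\pi$ stays in $\region{\CWwin}{\cor}$) is more direct than the paper's detour through an atomic refinement at that step; your closing remark that $\neg\condition{\text{AS}}(\cor,n)$ is not actually used in this claim is also accurate.
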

\begin{proof} We first prove properties of $\ppoly{a, \vec t}$ in the following claim, which has three parts. Part (i) states that $\ppoly{a, \vec t}$ characterizes histograms of the profiles whose signature is $\vec t$ and where alternative $a$ is the Condorcet winner. Part (ii) characterizes the characteristic cone of $\ppoly{a, \vec t}$. Part (iii) characterizes the dimension of the characteristic cone for some cases.

\begin{claim}[\bf \boldmath Properties of $\ppoly{a,\vec t}$]
\label{claim:poly-a-t}
Given  $\vH$, for any $a\in\ma$ and any $\vec t\in \fs$,  
\begin{enumerate}[label=(\roman*)]
\item for any integral profile $P$,  $\hist(P)\in \poly^{a,\vec t}$ if and only if $a$ is the Condorcet winner under $P$ and $\signH(P)=\vec t$;
\item for any  $\vec x\in\mathbb R^{m!}$,  $\vec x\in {\ppolyz{a,\vec t}}$ if and only if $a$ is a weak Condorcet winner under $\vec x$ and $\vec t\unlhd \signH(\vec x)$; 
\item if $\vec t\in \fsatomic$ and $\ppoly{a,\vec t} \ne\emptyset$, then $\dim(\ppolyz{a,\vec t})= m!$.
\end{enumerate}
\end{claim}
\begin{proof}
Part (i) follows after the definition. More precisely, $\pba{\text{CW}=a}\cdot \invert{\hist(P)}\le \invert{- \vec 1}$ if and only if $a$ is the Condorcet winner under $P$, and by Claim~\ref{claim:poly-t} (i), $\pba{\vec t}\cdot \invert{\hist(P)}\le \invert{\pvbb{\vec t}}$ if and only if $\signH(\hist(P))=\vec t$.

Part (ii) also follows after the definition. $\pba{\text{CW}=a}\cdot \invert{\vec x}\le \invert{\vec 0}$ if and only if $a$ is a weak  Condorcet winner under $P$, and by Claim~\ref{claim:poly-t} (ii), $\pba{\vec t}\cdot \invert{\vec x}\le \invert{\vec 0}$ if and only if $\vec t\unlhd \signH(\vec x)$.

To prove Part (iii), suppose $\vec x\in \ppoly{a,\vec t}$.  Because $\vec t\in\fsatomic$, we have $\pvbb{a,\vec t}=-\vec 1$ (Definition~\ref{dfn:poly-a-t}). Therefore,  there exists $\delta>0$ such that for all vector $\vec x'$ such that $|\vec x'-\vec x|_1<\delta$, $\pba{a,\vec t}\cdot\invert{\vec x'}<\invert{\vec 0}$, which means that $\vec x'\in \ppolyz{a,\vec t}$. Therefore, $\ppolyz{a,\vec t}$ contains the $\delta$ neighborhood of $\vec x$, whose dimension is $m!$. This means that $\dim(\ppolyz{a,\vec t})=m!$.
\end{proof}

\paragraph{\bf \boldmath $\left[\pi\in  \Pi_{\upolyCWwin,n}\right]\Leftarrow \left[\pi\in\closure{\region{\CWwin}{\cor}}\right]$.} Suppose $\pi\in\closure{\region{\CWwin}{\cor}}$ and let $(\vec x_1,\vec x_2,\ldots)$ denote an infinite sequence in $\region{\CWwin}{\cor}$ that converges to $\pi$. Because the number of alternatives and the number of feasible signatures are finite, there exists an infinite subsequence $(\vec x_1',\vec x_2',\ldots)$ such that (1) there exists $a\in\ma$ such that for all $j\in\mathbb N$, $\cwinner(\vec x_j') =\{a\}$, and (2) there exists $\vec t\in \fs$ such that $a\in\cor(\vec t)$ and for all $j\in\mathbb N$, $\signH(\vec x_j') =\vec t$. Because $\cor$ is minimally continuous, by Proposition~\ref{prop:char-continuity}, there exists a feasible atomic refinement of $\vec t$, denoted by $\vec t_a\in\fsatomic$, such that $\cor(\vec t_a)=\{a\}$. Therefore, to prove that $\pi\in  \Pi_{\upolyCWwin,n}$, it suffices to prove that (i) for every $n>N_\cor$, $\ppoly{a,\vec t_a}$ is active, and (ii) $\pi\in \ppolyz{a,\vec t_a}$, which will be done as follows. 

{\bf \boldmath (i) $\ppoly{a,\vec t_a}$ is active.} By Claim~\ref{claim:poly-active-at-all-n}, it suffices to prove that $\ppoly{a,\vec t_a}\cap \mathbb R_{> 0}^{m!}\ne \emptyset$. This  is proved by explicitly constructing a vector in $\ppoly{a,\vec t_a}\cap \mathbb R_{\ge 0}^{m!}$ as follows. Because $\vec t_a$ is feasible, there exists $\vec x^a\in\mathbb R^{m!}$ such that $\signH(\vec x^a) = \vec t_a$. Recall that $\pi$ is strictly positive and $(\vec x_1',\vec x_2',\ldots)$ converges to $\pi$, there exists $j\in\mathbb N$ such that $\vec x_j'>\vec 0$. For any $\delta>0$, let $\vec x_\delta = \vec x_j'+\delta\vec x^a$. We let $\delta>0$ denote a sufficiently small number  such that the following two conditions hold.
\begin{itemize}
\item $\vec x_\delta>\vec 0$. The existence of such $\delta$ follows after noticing that $\vec x_j'>\vec 0$.
\item $\cwinner(\vec x_\delta) = \{a\}$. The existence of such $\delta$ is due to the assumption that $\cwinner(\vec x_j') =\{a\}$, which means that $\pba{\text{CW}=a}\cdot\invert{\vec x_j'}<\invert{\vec 0}$, where $\pba{\text{CW}=a}$ is defined in Definition~\ref{dfn:poly-a-t}. Therefore, for any sufficiently small $\delta>0$ we have $\pba{\text{CW}=a}\cdot\invert{\vec x_\delta}<\invert{\vec 0}$, which means that $a$ is the Condorcet winner under $\vec x_\delta$.
\end{itemize}
Because $\vec t_a$ is a refinement of $\vec t$, we have $\signH(\vec x_\delta) = \vec t_a$. Therefore, $\vec x_\delta\in \ppoly{a,\vec t_a}\cap{\mathbb R}_{> 0}^{m!}$. Following  Claim~\ref{claim:poly-active-at-all-n} and the definition of $N_\cor$ (Definition~\ref{dfn:N-cor}), we have that $\ppoly{a,\vec t_a}$ is active for all $n> N_\cor$.

{\bf \boldmath (ii) $\pi\in \ppolyz{a,\vec t_a}$.} Because for all $j\in\mathbb N$, $\pba{\text{CW}=a}\cdot\invert{\vec x_j'}<\invert{\vec 0}$ and $(\vec x_1',\vec x_2',\ldots)$ converge to $\pi$, we have $\pba{\text{CW}=a}\cdot\invert{\pi}\le \invert{\vec 0}$, which means that $a$ is a weak Condorcet winner under $\pi$. It is not hard to verify that for every $k\le K$, if $t_k=+$ (respectively, $-$  and  $0$), then we have $[\signH(\pi)]_k \in \{0,+\}$ (respectively, $\{0,-\}$  and  $\{0\}$). Therefore, $\vec t\unlhd \signH(\pi)$, which means that $\vec t_a\unlhd \signH(\pi)$ because $\vec t_a\unlhd \vec t$. By Claim~\ref{claim:poly-a-t} (ii), we have $\pi\in\ppolyz{a,\vec t_a}$. 

\paragraph{\bf \boldmath $\left[\pi\in  \Pi_{\upolyCWwin,n}\right]\Rightarrow \left[\pi\in\closure{\region{\CWwin}{\cor}}\right]$.} Suppose $\pi\in  \Pi_{\upolyCWwin,n}$, which means that there exists $a\in\ma$ and $\vec t\in \fs$ such that $ \pi\in \ppolyz{a,\vec t}$, $a\in \cor(\vec t)$, $\cwinner(\vec t)=\{a\}$, and $\ppoly{a,\vec t}$ contains a non-negative integer $n$-vector, denoted by $\vec x'$. By Proposition~\ref{prop:char-continuity}, because $\cor$ is minimally continuous, there exists $\vec t_a\in \fsatomic$ such that $\vec t_a\unlhd \vec t$ and $\cor(\vec t_a) = \{a\}$.  Let $\vec x^*\in \ppoly{\vec t_a}$ denote an arbitrary vector, which is guaranteed to exist because $\vec t_a\in \fsatomic$. Because $\vec x'\in \ppoly{a,\vec t}$, we have $\pba{\text{CW}=a}\cdot\invert{\vec x'}  \le \invert{-\vec 1}$. Therefore, there exists $\delta_a$ such that $\pba{\text{CW}=a}\cdot\invert{\vec x'+\delta_a\vec x^*}  < \invert{\vec 0}$. Let $\vec x = \vec x' + \delta_a\vec x^*$. Recall that $\pi\in\ppolyz{a,\vec t}$, which means that $\pba{\text{CW}=a}\cdot\invert{\pi}  \le \invert{\vec 0}$. Therefore, for all $\delta>0$ we have 
$$\pba{\text{CW}=a}\cdot\invert{\pi+\delta\vec x} =\pba{\text{CW}=a}\cdot\invert{\pi}+\delta\pba{\text{CW}=a}\cdot\invert{\vec x}  < \invert{\vec 0},$$
which means that $\cwinner(\pi+\delta\vec x) = \{a\}$. It is not hard to verify that $\signH(\pi+\delta\vec x) = \vec t_a$, which means that $\cor(\pi+\delta\vec x) = \{a\}$. Consequently, for every $\delta>0$ we have $\pi+\delta\vec x\in \region{\CWwin}{\cor}$. Notice that the sequence $(\pi+\vec x, \pi+\frac12 \vec x,\ldots)$ converges to $\pi$. Therefore, $\pi\in \closure{\region{\CWwin}{\cor}}$.

\paragraph{\bf \boldmath  $\left[\pi\in\closure{\region{\CWwin}{\cor}}\right]\Rightarrow \left[\md{\upolyCWwin}{\pi} = m!\right]$.} Continuing the proof of the $\left[\pi\in  \Pi_{\upolyCWwin,n}\right]\Rightarrow \left[\pi\in\closure{\region{\CWwin}{\cor}}\right]$ part, because $\pi$ is strictly positive and $(\pi+\vec x, \pi+\frac12 \vec x,\ldots)$ converges to $\pi$, there exists $j\in \mathbb N$ such that $\pi+\frac1j \vec x >\vec 0$. Recall that $\cwinner(\pi+\frac1j \vec x)=\{a\}$, $\signH(\pi+\frac1j \vec x)=\vec t_a$, and $\vec t_a$ is atomic, we have
$$\pba{\text{CW}=a}\cdot\invert{ \pi+\frac1j \vec x}<\invert{\vec 0}\text{ and }\pba{\vec t_a}\cdot\invert{ \pi+\frac1j \vec x}<\invert{\vec 0}$$ 
Therefore, there exists $\ell>0$ such that  
$$\pba{\text{CW}=a}\cdot\invert{\ell(\pi+\frac1j \vec x)}\le \invert{-\vec 1}\text{ and }\pba{\vec t_a}\cdot\invert{\ell(\pi+\frac1j \vec x)}\le\invert{-\vec 1},$$
which means that $\ell(\pi+\frac1j \vec x)\in \ppoly{a,\vec t_a}\cap \mathbb R_{> 0}^{m!}\ne \emptyset$. by Claim~\ref{claim:poly-a-t} (iii), we have $\md{\upolyCWwin}{\pi} = m!$.

\paragraph{\bf \boldmath  $\left[\md{\upolyCWwin}{\pi} = m!\right]\Rightarrow \left[\pi\in  \Pi_{\upolyCWwin,n}\right]$}  follows after the definition of $\Pi_{\upolyCWwin,n}$. More concretely, $\md{\upolyCWwin}{\pi} = m!$ means that there exists a polyhedron $\poly\subseteq \upolyCWwin$ such that the weight on the edge $(\pi,\poly)$ in the activation graph is $m!$, which implies that $\pi\in  \Pi_{\upolyCWwin,n}$.

The proofs for $\Pi_{\upolyCWlose,n}$ and $\md{\upolyCWlose}{\pi}$  are similar to the proofs for $\Pi_{\upolyCWwin,n}$ and $\md{\upolyCWwin}{\pi}$. For completeness, we include the full proofs below.

\paragraph{\bf \boldmath $\left[\pi\in  \Pi_{\upolyCWlose,n}\right]\Leftarrow \left[\pi\in\closure{\region{\CWlose}{\cor}}\right]$.} Suppose $\pi\in\closure{\region{\CWlose}{\cor}}$ and let $(\vec x_1,\vec x_2,\ldots)$ denote an infinite sequence in $\region{\CWlose}{\cor}$ that converges to $\pi$. Because the number of alternatives and the number of feasible signatures are finite, there exists an infinite subsequence $(\vec x_1',\vec x_2',\ldots)$ such that (1) there exists $a\in\ma$ such that for all $j\in\mathbb N$, $\cwinner(\vec x_j') =\{a\}$, and (2) there exists $\vec t\in \fs$ such that $a\notin\cor(\vec t)$ and for all $j\in\mathbb N$, $\signH(\vec x_j') =\vec t$. Let $b\in \cor(\vec t)$ denote an arbitrary winner. Because $\cor$ is minimally continuous, by Proposition~\ref{prop:char-continuity}, there exists a feasible atomic refinement of $\vec t$, denoted by $\vec t_b$, such that $\cor(\vec t_b)=\{b\}$. Therefore, to prove that $\pi\in  \Pi_{\upolyCWlose,n}$, it suffices to show that (i) for every $n>N$, $\ppoly{a,\vec t_b}$ is active, and (ii) $\pi\in \ppolyz{a,\vec t_b}$. 

{\bf \boldmath (i) $\ppoly{a,\vec t_b}$ is active.} We will apply Claim~\ref{claim:poly-active-at-all-n} to prove that $\ppoly{a,\vec t_b}$ is active at every $n>N$. In fact, it suffices to prove that $\ppoly{a,\vec t_b}\cap \mathbb R_{>0}^{m!}\ne \emptyset$. This will be proved by explicitly constructing a vector in $\ppoly{a,\vec t_b}\cap \mathbb R_{>0}^{m!}$ as follows. Because $\vec t_b$ is feasible, there exists $\vec x^b\in\mathbb R^{m!}$ such that $\signH(\vec x^b) = \vec t_b$. Recall that $\pi$ is strictly positive and $(\vec x_1',\vec x_2',\ldots)$ converges to $\pi$, there exists $j\in\mathbb N$ such that $\vec x_j'>\vec 0$. For any $\delta>0$, let $\vec x_\delta = \vec x_j'+\delta\vec x^b$. We let $\delta>0$ denote a sufficiently small number  such that the following two conditions hold.
\begin{itemize}
\item $\vec x_\delta>\vec 0$. The existence of such $\delta$ follows after noticing that $\vec x_j'>\vec 0$.
\item $\cwinner(\vec x_\delta) = \{a\}$. The existence of such $\delta$ is due to the assumption that $\cwinner(\vec x_j') =\{a\}$, which means that $\pba{\text{CW}=a}\cdot\invert{\vec x_j'}<\invert{\vec 0}$, where $\pba{\text{CW}=a}$ is defined in Definition~\ref{dfn:poly-a-t}. Therefore, for any sufficiently small $\delta>0$ we have $\pba{\text{CW}=a}\cdot\invert{\vec x_\delta}<\invert{\vec 0}$, which means that $a$ is the Condorcet winner under $\vec x_\delta$.
\end{itemize}
Because $\vec t_b$ is a refinement of $\vec t$, we have $\signH(\vec x_\delta) = \vec t_b$. Therefore, $\vec x_\delta\in \ppoly{a,\vec t_b}\cap{\mathbb R}_{> 0}^{m!}$. Following  Claim~\ref{claim:poly-active-at-all-n} and the definition of $N_\cor$ (Definition~\ref{dfn:N-cor}), we have that $\ppoly{a,\vec t_a}$ is active for all $n> N_\cor$.

{\bf \boldmath (ii) $\pi\in \ppolyz{a,\vec t_b}$.} Because for all $j\in\mathbb N$, $\pba{\text{CW}=a}\cdot\invert{\vec x_j'}<\invert{\vec 0}$ and $(\vec x_1',\vec x_2',\ldots)$ converge to $\pi$, we have $\pba{\text{CW}=a}\cdot\invert{\pi}\le \invert{\vec 0}$, which means that $\pi$ is a weak Condorcet winner. It is not hard to verify that for every $k\le K$, if $t_k=+$ (respectively, $-$  and  $0$), then we have $[\signH(\pi)]_k \in \{0,+\}$ (respectively, $\{0,-\}$  and  $\{0\}$). Therefore, $\vec t\unlhd \signH(\pi)$, which means that $\vec t_b\unlhd \signH(\pi)$ because $\vec t_b\unlhd \vec t$. It follows that $\pba{\vec t_b}\cdot\invert{\pi}\le \invert{\vec 0}$. This means that $\pi\in\ppolyz{a,\vec t_b}$. 

\paragraph{\bf \boldmath $\left[\pi\in  \Pi_{\upolyCWlose,n}\right]\Rightarrow \left[\pi\in\closure{\region{\CWlose}{\cor}}\right]$.} Suppose $\pi\in  \Pi_{\upolyCWlose,n}$, which means that there exists $a\in\ma$ and $\vec t\in \fs$ such that $ \pi\in \ppolyz{a,\vec t}\subseteq \upolyCWlose$, $a\notin \cor(\vec t)$, $\cwinner(\pi)=\{a\}$, and $\ppoly{a,\vec t}$ contains a non-negative integer $n$-vector, denoted by $\vec x'$. Let $b\in \cor(\vec t)$ denote an arbitrary co-winner. By Proposition~\ref{prop:char-continuity}, because $\cor$ is minimally continuous, there exists $\vec t_b\in \fsatomic$ such that $\vec t_b\unlhd \vec t$ and $\cor(\vec t_b) = \{b\}$.  Let $\vec x^*\in \ppoly{\vec t_b}$ denote an arbitrary vector whose existence is guaranteed by the assumption that $\vec t_b\in \fsatomic$. Because $\vec x'\in \ppoly{a,\vec t}$, we have $\pba{\text{CW}=a}\cdot\invert{\vec x'}  \le \invert{-\vec 1}$. Therefore, there exists $\delta_a$ such that $\pba{\text{CW}=a}\cdot\invert{\vec x'+\delta_a\vec x^*}  < \invert{\vec 0}$. Let $\vec x = \vec x' + \delta_a\vec x^*$. Recall that $\pi\in\ppolyz{a,\vec t}$, which means that $\pba{\text{CW}=a}\cdot\invert{\pi}  \le \invert{\vec 0}$. Therefore, for all $\delta>0$ we have 
$\pba{\text{CW}=a}\cdot\invert{\pi+\delta\vec x}  < \invert{\vec 0}$, which means that $\cwinner(\pi+\delta\vec x) = \{a\}$. It is not hard to verify that $\signH(\pi+\delta\vec x) = \vec t_b$, which means that $\cor(\pi+\delta\vec x) = \{b\}$. This means that  for every $\delta>0$ we have $\pi+\delta\vec x\in \region{\CWlose}{\cor}$. Notice that $\pi$ is the limit of the sequence $(\pi+\vec x, \pi+\frac12 \vec x,\ldots)$. Therefore, $\pi\in \closure{\region{\CWlose}{\cor}}$.

\paragraph{\bf \boldmath  $\left[\pi\in\closure{\region{\CWlose}{\cor}}\right]\Rightarrow \left[\md{\upolyCWlose}{\pi} = m!\right]$.} Continuing the proof of the $\left[\pi\in  \Pi_{\upolyCWlose,n}\right]\Rightarrow \left[\pi\in\closure{\region{\CWlose}{\cor}}\right]$ part, because $\pi$ is strictly positive and $(\pi+\vec x, \pi+\frac12 \vec x,\ldots)$ converges to $\pi$, there exists $j\in \mathbb N$ such that $\pi+\frac1j \vec x >\vec 0$. Recall that $\cwinner(\pi+\frac1j \vec x)=\{a\}$, $\signH(\pi+\frac1j \vec x)=\vec t_b$, and $\vec t_b$ is atomic, which means that $\pba{\text{CW}=a}\cdot\invert{ \pi+\frac1j \vec x}<\invert{\vec 0}$ and $\pba{\vec t_b}\cdot\invert{ \pi+\frac1j \vec x}<\invert{\vec 0}$. Therefore, there exists $\ell>0$ such that  
$$\pba{\text{CW}=a}\cdot\invert{\ell(\pi+\frac1j \vec x)}\le \invert{-\vec 1}\text{ and }\pba{\vec t_b}\cdot\invert{\ell(\pi+\frac1j \vec x)}\le\invert{-\vec 1},$$
which means that $\ell(\pi+\frac1j \vec x)\in \ppoly{a,\vec t_b}\cap \mathbb R_{> 0}^{m!}\ne \emptyset$. by Claim~\ref{claim:poly-a-t} (iii), we have $\md{\upolyCWlose}{\pi} = m!$.

\paragraph{\bf \boldmath  $\left[\md{\upolyCWlose}{\pi} = m!\right]\Rightarrow \left[\pi\in  \Pi_{\upolyCWlose,n}\right]$}  follows after the definition.

This proves Claim~\ref{claim:chara-Pi-CWW-CWL-n}.
\end{proof}

We are now ready to verify Table~\ref{tab:dim-max-summary} column by column as follows.
\begin{itemize}
\item {\bf \boldmath $\ast$YY:} $\md{\upoly}{\pi}=\max(\md{\upolynoCW}{\pi},\md{\upolyCWwin}{\pi})$, and by  Claim~\ref{claim:chara-Pi-CWW-CWL-n} we have $\md{\upolyCWwin}{\pi}=m!$. The 
$\md{\upolyCWlose}{\pi}$ part also follows after  Claim~\ref{claim:chara-Pi-CWW-CWL-n}.
\item {\bf \boldmath $\ast$YN:} The $\md{\upoly}{\pi}$ part follows after  Claim~\ref{claim:chara-Pi-CWW-CWL-n}. Recall that we have assumed $\neg\condition{\text{AS}}(\cor, n)$. This means that there exists an $n$-profile $P$ such that $\cwinner(P)\ne\emptyset$ and $\cwinner(P)\not\subseteq \cor(P)$. Let $\{ a\}=\cwinner(P)$ and $\vec t = \signH(P)$. It follows that $\hist(P)\in \calH_n^{a,\vec t,\mathbb Z}\ne \emptyset$ and $\ppoly{a,\vec t}\subseteq \upolyCWlose$. Because $\pi\not\in \Pi_{\upolyCWlose,n}$, according to the definition of the activation graph (Definition~\ref{dfn:activation-graph}),   the weight on the edge $(\pi, \ppoly{a,\vec t})$ is $-\frac{n}{\log n}$, and the weight on any edge connected to $\pi$ is not positive. Therefore, $\md{\upolyCWlose}{\pi}=-\frac{n}{\log n}$.
\item {\bf \boldmath NNY:} The $\md{\upoly}{\pi}$ part follows after  the definition. The $\md{\upolyCWlose}{\pi}$ part follows after  Claim~\ref{claim:chara-Pi-CWW-CWL-n}. 
\item {\bf \boldmath YNY:} Recall that the ``N'' means that  $\pi\notin \Pi_{\upolyCWwin,n}$, which implies that   $\md{\upolyCWwin}{\pi}<0$. Therefore, $\md{\upoly}{\pi} = \max(\md{\upolynoCW}{\pi},\md{\upolyCWwin}{\pi})$, which means that $\md{\upoly}{\pi}=\md{\upolynoCW}{\pi}$. The $\md{\upolyCWlose}{\pi}$ part follows after Claim~\ref{claim:chara-Pi-CWW-CWL-n}.
\item {\bf \boldmath YNN:} We first prove the $\md{\upoly}{\pi}$ part. Because in this case $\pi\in \Pi_{\upolynoCW,n}$ and $\pi\notin \Pi_{\upolyCWwin,n}$, by the definition of $\Pi_{\upolynoCW,n}$ and $\Pi_{\upolyCWwin,n}$,  we have $\md{\upolynoCW}{\pi}\ge 0$ and $\md{\upolyCWwin}{\pi}\le -\frac{n}{\log n}$.  Therefore, $\md{\upoly}{\pi}=\md{\upolynoCW}{\pi}$. It suffices to prove that $\md{\upolynoCW}{\pi}=m!$. Recall from Proposition~\ref{prop:almost-complement-union} that 
$$ \upolyznoCW \cup \upolyzCWwin\cup \upolyzCWlose=\mathbb R^{m!}$$
Therefore, there exists a polyhedron $\poly$ in $\upolynoCW$, $\upolyCWwin$, or $\upolyCWlose$ such that $\pi\in \polyz$ and $\dim(\polyz)=m!$.  We now prove that  $\poly$ is indeed active.  Because $\pi$ is strictly positive and $\polyz$ is convex, $\polyz$ contains an interior point in $\mathbb R_{>0}^{m!}$, denoted by $\vec x$.  Formally, let $\vec x' $ denote an arbitrary interior point of $\polyz$. It is not hard to verify that for some sufficiently small $\delta>0$, $\vec x =\dfrac{\pi+\delta\vec x'}{1+\delta}\in \mathbb R_{>0}^{m!}$ is an interior point of $\polyz$.

Suppose $\poly$ is characterized by $\ba$ and $\vbb$. Then, we have $\ba\cdot\invert{\vec x}<\invert{\vec 0}$. Therefore, there exists $\ell>0$ such that  $\ba\cdot\invert{\ell \vec x}\le \invert{\vbb}$, which means that $\ell \vec x\in \poly\cap {\mathbb R}_{>0}^{m!}\ne\emptyset$. By Claim~\ref{claim:poly-active-at-all-n} and the definition of $N_\cor$ (Definition~\ref{dfn:N-cor}), $\poly$ is active at every $n>N_\cor$. 

Recall that in the YNN  case we have $\pi\notin \Pi_{\upolyCWwin,n} $ and $\pi\notin \Pi_{\upolyCWlose,n} $. Therefore, $\poly\subseteq \upolynoCW$, which means that $\md{\upolynoCW}{\pi}=m!= \md{\upoly}{\pi}$. Following a similar reasoning as in the ``$\ast$YN'' case, we have $\md{\upolyCWlose}{\pi} = -\frac{\log n}{n}$.

\item {\bf \boldmath NNN:} This case is impossible because as proved in the ``YNN'' case, for all $n>N_\cor$,  $\pi\notin \Pi_{\upolyCWwin,n}$ and  $\pi\notin \Pi_{\upolyCWlose,n}$ implies that $\pi\in \Pi_{\upolynoCW,n}$.
\end{itemize}

\paragraph{\bf \boldmath Step 3: Apply Lemma~\ref{lem:categorization}.}  In this step, we apply the $\inf$ part of Lemma~\ref{lem:categorization} by combining and simplifying conditions in Table~\ref{tab:dim-max-summary}.

\begin{itemize}
\item {\bf \boldmath The $0$  case} never holds when $n\ge m^4$, because any complete graph is the UMG of some $n$-profile~\cite[Claim 6 in the Appendix]{Xia2020:The-Smoothed}. In particular, any complete graph where there is no Condorcet winner is the UMG of an $n$-profile.
\item {\bf \boldmath The $1$  case} holds if and only if $\cor$ satisfies $\CC$ for all $n$ profile $P$, i.e.~$\condition{\text{AS}}(\cor,n)$.
\item {\bf \boldmath The VU case.} According to the $\inf$ part of Lemma~\ref{lem:categorization}, the VU case holds if and only if $\beta_n = -\frac{n}{\log n}$. Note that  we do not need to assume $\condition{\text{AS}}(\cor,n)$ in the VU case. According to Table~\ref{tab:dim-max-summary}, $\beta_n = -\frac{n}{\log n}$ if and only if there exists $\pi\in\conv(\Pi)$ such that  the ``NNY'' column holds. Recall that the ``NNN'' column is impossible for any $n>N_\cor$. Therefore, the ``NNY'' column holds for $\pi\in\conv(\Pi)$ if and only if  $\pi\notin \Pi_{\upolynoCW,n}$ and  $\pi\notin \Pi_{\upolyCWwin,n}$, which is equivalent to the following condition by Claim~\ref{claim:chara-Pi-CWW-CWL-n} 
\begin{equation}
\label{eq:VU-case-Ctwo-Cstar}
\pi\notin \Pi_{\upolynoCW,n}\text{ and }\pi\notin  \closure{\region{\CWwin}{\cor}}
\end{equation}
Next, we simplify (\ref{eq:VU-case-Ctwo-Cstar})  for $2\mid n$ and $2\nmid n$,  respectively.
\begin{itemize}
\item {\bf \boldmath $2\mid n$.} By the $2\mid n$ part of Claim~\ref{claim:upoly1},  $\pi\notin \Pi_{\upolynoCW,n}$ if and only if  $\pi$ has a Condorcet winner. We   prove that in this case (\ref{eq:VU-case-Ctwo-Cstar}) is equivalent to:
\begin{equation}\label{eq:VU-simplified}
\cwinner(\pi)\cap (\ma\setminus \cor(\pi))\ne \emptyset
\end{equation}
{\bf \boldmath (\ref{eq:VU-case-Ctwo-Cstar})$\Rightarrow$(\ref{eq:VU-simplified}).}  Suppose $\pi$ has a Condorcet winner, denoted by $a$, and (\ref{eq:VU-case-Ctwo-Cstar}) holds. For the sake of contradiction suppose that (\ref{eq:VU-simplified}) does not hold, which means that $a\in\cor(\pi)$. Then, following a similar construction as in the proof of Claim~\ref{claim:chara-Pi-CWW-CWL-n}, the minimal continuity of $\cor$ implies that there exist  $\vec t_a\in \fsatomic$ with $\vec t_a\unlhd \signH(\pi)$ and $\cor(\vec t_a)=\{a\}$, and $\vec x\in \ppoly{\vec t_a}$ such that for every $\delta>0$ we have $\pi+\delta\vec x\in \region{\CWwin}{\cor}$. Then  $(\pi+\vec x, \pi+\frac12 \vec x,\ldots)$ converges to $\pi$, which contradicts the assumption that $\pi\notin  \closure{\region{\CWwin}{\cor}}$.

{\bf \boldmath (\ref{eq:VU-simplified})$\Rightarrow$(\ref{eq:VU-case-Ctwo-Cstar}).} Let $a\in \cwinner(\pi)\cap (\ma\setminus \cor(\pi))$, which means that $\{a\}=\cwinner(\pi)$ and $a\notin\cor(\pi)$. Suppose for the sake of contradiction that (\ref{eq:VU-case-Ctwo-Cstar}) does not hold.  Due to Claim~\ref{claim:upoly1}, we have  $\pi\notin \Pi_{\upolynoCW,n}$.  Therefore, $\pi\in \closure{\region{\CWwin}{\cor}}$. This means that there exists a sequence $(\vec x_1,\vec x_2,\ldots)$ in $\region{\CWwin}{\cor}$ that converge to $\pi$. It follows that  there exists $j^*\in\mathbb N$ such that for all $j>j^*$, $a$ is the Condorcet winner under $\vec x_j$, which means that $a\in \cor(\vec x_j)$ because $\vec x_j\in \region{\CWwin}{\cor}$. Therefore, by the continuity of $\cor$, we have $a\in \cor(\pi)$, which means that $\cwinner(\pi)\cap (\ma\setminus \cor(\pi))= \emptyset$. This is a contradiction to (\ref{eq:VU-simplified}).

Therefore, when $2\mid n$, the VU case holds if and only if  there exists $\pi\in\conv(\Pi)$ such that (\ref{eq:VU-simplified})  holds, which is as described in the statement of the theorem, i.e.
$$\exists \pi\in \conv(\Pi)\text{ s.t. } \condition{\text{RD}}(\cor, \pi)$$

\item {\bf \boldmath $2\nmid n$.} By the $2\nmid n$ part of Claim~\ref{claim:upoly1},  $\pi\notin \Pi_{\upolynoCW,n}$ is equivalent to $\cwinner(\pi)\cup \almostCW(\pi)\ne \emptyset$, i.e.~either $\cwinner(\pi)\ne\emptyset$ or $\almostCW(\pi)\ne \emptyset$. When $\cwinner(\pi)\ne\emptyset$, as in the $2\mid n$ case, (\ref{eq:VU-case-Ctwo-Cstar}) becomes (\ref{eq:VU-simplified}).  When $\almostCW(\pi)\ne \emptyset$, (\ref{eq:VU-case-Ctwo-Cstar}) becomes $\condition{\text{NRS}}(\cor,\pi)=1$. Therefore, when $2\nmid n$ the VU case holds if and only if the condition in the statement of the theorem holds, i.e.
$$\exists \pi\in \conv(\Pi)\text{ s.t. } \condition{\text{RD}}(\cor, \pi)\text{ or }\condition{\text{NRS}}(\cor, \pi)$$
\end{itemize}
\item {\bf \boldmath The U case.}  According to the $\inf$ part of Lemma~\ref{lem:categorization}, the U case holds if and only if $0\le \beta_n <m!$. According to Table~\ref{tab:dim-max-summary}, $0\le \beta_n <m!$ if and only if 
\begin{itemize}
\item []{\bf (i)}  for every $\pi\in\conv(\Pi)$ the NNY column of Table~\ref{tab:dim-max-summary} does not hold, and 
\item []{\bf (ii)}  there exists $\pi\in \conv(\Pi)$ such that  the  YNY  column of Table~\ref{tab:dim-max-summary} holds  and $\md{\upolynoCW}{\pi}<m!$. 
\end{itemize}
Part (ii) can be simplified as follows. By Claim~\ref{claim:upoly1},  $\md{\upolynoCW}{\pi}<m!$  if and only if $2\mid n$ and $\almostCW(\pi)\ne \emptyset$, and in this case $\md{\upolynoCW}{\pi}=m!-1$. We show that it suffices to additionally require that $\pi\notin \Pi_{\upolyCWwin,n}$ (i.e.~the ``N''), or in other words, given $\md{\upolynoCW}{\pi}=m!-1$, $\pi\notin \Pi_{\upolyCWwin,n}$ implies $\pi\in \Pi_{\upolyCWlose,n}$ (i.e.~the second ``Y''). Suppose for the sake of contradiction that $\md{\upolynoCW}{\pi}=m!-1$, $\pi\notin \Pi_{\upolyCWwin,n}$, and $\pi\notin \Pi_{\upolyCWlose,n}$. Notice that this corresponds to the  ``YNN'' column in Table~\ref{tab:dim-max-summary}, which means that $\md{\upolynoCW}{\pi}=m!$, which is a contradiction. By Claim~\ref{claim:chara-Pi-CWW-CWL-n}, $\pi\notin \Pi_{\upolyCWwin,n}$ if and only if $\pi\notin\closure{\region{\CWwin}{\cor}}$. Therefore, part (ii) is equivalent to
$$\exists \pi\in \conv(\Pi) \text{ s.t. }  \condition{\text{NRS}}(\cor, \pi) $$
Summing up, the U case holds if and only if the condition in the statement of the theorem holds, i.e.
$$ 2\mid n,  \text{ and (1) }  \forall \pi\in \conv(\Pi),  \neg \condition{\text{RD}}(\cor, \pi), \text{ and (2) }  \exists \pi\in \conv(\Pi) \text{ s.t. }  \condition{\text{NRS}}(\cor, \pi) $$

\item {\bf \boldmath The L case}  never holds when $n\ge m^4$, because according to Table~\ref{tab:dim-max-summary}, $\alpha_n^* = \max_{\pi\in\conv(\Pi)} \md{\upolyCWlose}{\pi}$ is either $-\frac{n}{\log n}$ or $m!$, which means that it is never in $[0,m!)$.

\item {\bf \boldmath The VL case.} According to the $\inf$ part of Lemma~\ref{lem:categorization}, the VL case holds if and only if the $1$ case does not hold and $\alpha_n^* = -\frac{n}{\log n}$. According to Table~\ref{tab:dim-max-summary}, this happens in the ``$\ast$YN'' column or  the ``YNN'' column, which is equivalent to only requiring that the last ``N'' holds (because ``NNN'' is impossible), i.e.~for all  $\pi\in\conv(\Pi)$, $\pi\notin \Pi_{\upolyCWlose,n}$. By Claim~\ref{claim:chara-Pi-CWW-CWL-n},  the VL case holds if and only if
if and only if the condition in the statement of the theorem holds, i.e.
$$\neg \condition{\text{AS}}(\cor,n) \text{ and }\forall \pi\in\conv(\Pi),  \condition{\text{RS}}(\cor,\pi)$$

\item {\bf \boldmath The M case} corresponds to the remaining cases.
\end{itemize}

\paragraph{\bf \boldmath Proof for general refinement $r$ of $\cor$.} We now turn to the proof of the theorem for an arbitrary refinement of $\cor$, denoted by $r$. We first define a slightly different version of $\CC$, denoted by $\sat{\CC^*}$, which will be used as the lower bound on the (smoothed) satisfaction of the regular $\CC$. For any GISR $\cor$  and any profile $P$, we define
$$\sat{\CC^*}(\cor,P)=\left\{\begin{array}{ll}
1&\text{if } \cwinner(P)=\emptyset \text{ or } \cwinner(P) = \cor(P)\\
0& \text{otherwise}
\end{array}\right.$$
In words, $\sat{\CC^*}(\cor,P)=$ if and only if (1) the Condorcet winner does not exist, or (2) the Condorcet winner exists and is the {\em unique} winner under $P$ according to $\cor$. Compared to the standard Condorcet criterion $\sat{\CC}$,   $\sat{\CC^*}$ rules out profiles $P$ where a Condorcet winner exists and is not the unique   winner. $\sat{\CC^*}$ and $\sat{\CC}$ coincide with each other when $\cor$ is a resolute rule.   Because for any profile $P$ we have $r(P)\subseteq \cor(P)$, for any $\vec \pi\in\Pi^n$ we have 
$$\Pr\nolimits_{P\sim\vec \pi}(\sat{\CC^*}(\cor,P)=1)\le \Pr\nolimits_{P\sim\vec \pi}(\sat{\CC}(r,P)=1) \le \Pr\nolimits_{P\sim\vec \pi}(\sat{\CC}(\cor,P)=1)$$
Therefore,
\begin{equation}
\label{eq:scc-r-bounds}
\satmin{\CC^*}{\Pi}(\cor,n)\le \satmin{\CC}{\Pi}(r,n)\le \satmin{\CC}{\Pi}(\cor,n)
\end{equation}
n order to prove the theorem, it suffices to prove that the lower  bound in (\ref{eq:scc-r-bounds}), i.e., $\satmin{\CC^*}{\Pi}(\cor,n)$, has the same dichotomous characterization as $\satmin{\CC}{\Pi}(\cor,n)$. To this end, we first define a union of polyhedra, denoted by $\upoly'$, and its almost complement $\upolyCWlose'$ that are similar to Definition~\ref{dfn:C-CWlose-GISR} as follows.

\begin{dfn}[\bf \boldmath $\upoly'$ and $\upolyCWlose'$]\label{dfn:C-prime-GISR}
Given an int-GISR characterized by $\vH$ and $g$, we define
\begin{align*}
&\upoly' = \upolynoCW\cup\upolyCWwin', \hspace{3mm}\text{where } \upolyCWwin' = \bigcup\nolimits_{a\in\ma, \vec t\in \fs: \cor(\vec t)=\{a\}}\ppoly{a,\vec t}\\
&\upolyCWlose' = \bigcup\nolimits_{a\in\ma, \vec t\in \fs:\cor(\vec t)\ne\{a\}}\ppoly{a,\vec t}
\end{align*}
\end{dfn}
Notice that $\upolynoCW$ used in Definition~\ref{dfn:C-prime-GISR}  was define in Definition~\ref{dfn:C-CWlose-GISR}. Just  like $\upolyCWlose$ is an almost complement of $\upoly$,  $\upolyCWlose'$ is   an almost complement of $\upoly'$. Formally, we first note that $\upoly'\cap \upolyCWlose' = \emptyset$, and for any integer vector $\vec x$, 
\begin{itemize}
\item  if $\vec x$ does not have a Condorcet winner then $\vec x\in \upolynoCW\subseteq \upoly'$; 
\item if $\vec x$ has a Condorcet winner $a$, which is the unique $\cor$ winner, then $\vec x\in \ppoly{a,\signH(\vec x)}\subseteq \upolyCWwin'\subseteq \upoly$;
\item otherwise $\vec x$ has a Condorcet winner $a$, which is either not a $\cor$ co-winner or $|\cor(\vec x)|\ge 2$. In both cases  $\vec x\in \ppoly{a,\signH(\vec x)}\subseteq   \upolyCWlose'$.
\end{itemize}
Therefore, ${\mathbb Z}^q \subseteq \upoly'\cup \upolyCWlose'$. The proof for  $\satmin{\CC^*}{\Pi}(\cor,n)$ is similar to the proof for $\satmin{\CC}{\Pi}(\cor,n)$ presented earlier. The main difference is that $\upoly$, $\upolyCWwin$, and $\upolyCWlose$ are replaced by $\upoly'$, $\upolyCWwin'$, and $\upolyCWlose'$, respectively. The key part is to prove a counterpart to Table~\ref{tab:dim-max-summary}, which follows after proving $\Pi_{\upolyCWwin',n} = \Pi_{\upolyCWwin,n}$ and $\Pi_{\upolyCWlose',n} =\Pi_{\upolyCWlose,n}$ for every $n>N_{\cor}$, as formally shown in the following claim.

\begin{claim}
\label{claim:Pi-prime-equivalence}
For any $n>N_{\cor}$, we have $\Pi_{\upolyCWwin',n} = \Pi_{\upolyCWwin,n}$ and $\Pi_{\upolyCWlose',n} =\Pi_{\upolyCWlose,n}$.
\end{claim}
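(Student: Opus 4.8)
The plan is to derive Claim~\ref{claim:Pi-prime-equivalence} from Claim~\ref{claim:chara-Pi-CWW-CWL-n} together with two elementary containments between the polyhedra that define the primed and unprimed unions, working under the standing assumption $\neg\condition{\text{AS}}(\cor,n)$ of this part of the proof. First I would record the set-level inclusions. Since $\cor(\vec t)=\{a\}$ implies $a\in\cor(\vec t)$, every polyhedron $\ppoly{a,\vec t}$ appearing in $\upolyCWwin'$ already appears in $\upolyCWwin$; and since any GISR outputs a non-empty winner set (Definition~\ref{dfn:GISR}), $a\notin\cor(\vec t)$ implies $\cor(\vec t)\neq\{a\}$, so every $\ppoly{a,\vec t}$ in $\upolyCWlose$ appears in $\upolyCWlose'$. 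By the definition of $\Pi_{\upoly,n}$ (Definition~\ref{dfn:Pi-C-n}), passing to a sub-collection of polyhedra can only shrink the associated set of distributions, so $\Pi_{\upolyCWwin',n}\subseteq\Pi_{\upolyCWwin,n}$ and $\Pi_{\upolyCWlose,n}\subseteq\Pi_{\upolyCWlose',n}$. I would also note that every $\ppoly{a,\vec t}$ lies in $\upolyCWwin\cup\upolyCWlose$, so the constant $N_\cor$ of Definition~\ref{dfn:N-cor} already bounds the activation thresholds of all polyhedra used to build the primed unions and needs no enlargement.

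For the two reverse inclusions I would reuse the explicit constructions inside the proof of Claim~\ref{claim:chara-Pi-CWW-CWL-n}, observing that they in fact establish the primed statements verbatim. For $\Pi_{\upolyCWwin',n}\supseteq\Pi_{\upolyCWwin,n}$: by Claim~\ref{claim:chara-Pi-CWW-CWL-n}, $\pi\in\Pi_{\upolyCWwin,n}$ holds if and only if $\pi\in\closure{\region{\CWwin}{\cor}}$, and the proof of the latter implication produces, via minimal continuity of $\cor$ (Proposition~\ref{prop:char-continuity}), an atomic signature $\vec t_a\in\fsatomic$ with $\cor(\vec t_a)=\{a\}$ and $\vec t_a\unlhd\signH(\pi)$ such that $\ppoly{a,\vec t_a}$ is active at every $n>N_\cor$ and contains $\pi$ in its characteristic cone; since $\cor(\vec t_a)$ is exactly $\{a\}$, $\ppoly{a,\vec t_a}$ is one of the blocks defining $\upolyCWwin'$, hence $\pi\in\Pi_{\upolyCWwin',n}$. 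For $\Pi_{\upolyCWlose',n}\subseteq\Pi_{\upolyCWlose,n}$: if $\pi\in\Pi_{\upolyCWlose',n}$ then there exist $a\in\ma$ and $\vec t\in\fs$ with $\cor(\vec t)\neq\{a\}$, $\ppoly{a,\vec t}$ active at $n$, and $\pi\in\ppolyz{a,\vec t}$; since $\cor(\vec t)$ is non-empty and unequal to $\{a\}$, there is some $b\in\cor(\vec t)$ with $b\neq a$, and the existence of such a second winner is the only consequence of the hypothesis $a\notin\cor(\vec t)$ that the proof of the corresponding implication in the $\Pi_{\upolyCWlose,n}$ part of Claim~\ref{claim:chara-Pi-CWW-CWL-n} uses. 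Running that argument --- choose an atomic refinement $\vec t_b\unlhd\vec t$ with $\cor(\vec t_b)=\{b\}$ and perturb $\pi$ so that $a$ becomes the strict Condorcet winner while the signature becomes $\vec t_b$, so that $a\notin\{b\}=\cor(\pi+\delta\vec x)$ --- exhibits a sequence in $\region{\CWlose}{\cor}$ converging to $\pi$, that is $\pi\in\closure{\region{\CWlose}{\cor}}$; one more application of Claim~\ref{claim:chara-Pi-CWW-CWL-n} then gives $\pi\in\Pi_{\upolyCWlose,n}$. Combining the two inclusions on each side yields both equalities.

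The only nontrivial point, and hence the main obstacle, is to check carefully that every step imported from the proof of Claim~\ref{claim:chara-Pi-CWW-CWL-n} uses only the weakened hypotheses --- $\cor(\vec t)=\{a\}$ in place of $a\in\cor(\vec t)$ on the ``win'' side, and the mere existence of $b\neq a$ in $\cor(\vec t)$ in place of $a\notin\cor(\vec t)$ on the ``lose'' side --- and that the activation and cone-membership facts hold at the same threshold $N_\cor$. I would make this airtight by isolating the relevant steps of that proof as small auxiliary statements phrased in terms of these weaker hypotheses, so that both the unprimed and the primed claims become instances of a single lemma; with that refactoring in place, Claim~\ref{claim:Pi-prime-equivalence} reduces to the short argument sketched above.
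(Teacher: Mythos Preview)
Your proposal is correct and follows essentially the same approach as the paper. Both arguments hinge on the minimal continuity of $\cor$ (Proposition~\ref{prop:char-continuity}) to produce atomic refinements $\vec t_a,\vec t_b$ with $\cor(\vec t_a)=\{a\}$ and $\cor(\vec t_b)=\{b\}$ for some $b\neq a$, and then verify that the polyhedra $\ppoly{a,\vec t_a}$ and $\ppoly{a,\vec t_b}$ are active at every $n>N_{\cor}$ and contain $\pi$ in their characteristic cones. The only packaging difference is that the paper argues directly at the level of the ``borderline'' polyhedra $\ppoly{a,\vec t}$ with $a\in\cor(\vec t)$ and $|\cor(\vec t)|\ge 2$, showing each is redundant for membership in $\Pi_{\cdot,n}$, whereas you first take the two easy containments $\Pi_{\upolyCWwin',n}\subseteq\Pi_{\upolyCWwin,n}$ and $\Pi_{\upolyCWlose,n}\subseteq\Pi_{\upolyCWlose',n}$ and then route the reverse inclusions through the closure characterization of Claim~\ref{claim:chara-Pi-CWW-CWL-n}; the underlying construction is identical.
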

\begin{proof}
The main difference between $\upolyCWwin' $ (respectively, $\upolyCWlose'$) and $\upolyCWwin$ (respectively, $ \upolyCWlose$) is the memberships of polyhedra $\ppoly{a,\vec t}$, where $a\in \cor(\vec t)$ and $\cor(\vec t)\ge 2$. Therefore, to prove the claim, it suffices to show that the membership of $\ppoly{a,\vec t}$ does not affect $\Pi_{\upolyCWwin',n}$ (respectively, $\Pi_{\upolyCWlose',n}$) compared to $\Pi_{\upolyCWwin,n}$ (respectively, $\Pi_{\upolyCWlose,n}$). 


It suffices to show that for any polyhedron $\ppoly{a,\vec t}$, where $a\in \cor(\vec t)$ and $\cor(\vec t)\ge 2$, for any $\pi\in \conv(\Pi)$ and any $n>N_{\cor}$, if $\ppoly{a,\vec t}$ is active and $\pi\in \ppolyz{a,\vec t}$, then there exist $\ppolyz{a,\vec t_a}\subseteq \upolyCWwin\cap \upolyCWwin'$ and $\ppolyz{a,\vec t_b}\subseteq \upolyCWlose\cap \upolyCWlose'$ such that (1) $\ppolyz{a,\vec t_a}$ and $\ppolyz{a,\vec t_b}$ are active at $n$, and (2) $\pi\in \ppolyz{a,\vec t_a}\cap \ppolyz{a,\vec t_b}$. In other words, if a distribution $\pi\in\conv(\Pi)$ is in $\upolyCWwin' $, $\upolyCWlose'$, $\upolyCWwin$, or $ \upolyCWlose$ due to $\ppoly{a,\vec t}$, then it is also in the same set without considering its edge to $\ppoly{a,\vec t}$ in the activation graph.  As we will see soon, (1) follows after the assumption that $n>N_{\cor}$ and (2) follows after the minimal continuity of $\cor$. Formally, the proof proceeds in the following three steps.
\begin{itemize}
\item [\bf (i)] {\bf \boldmath Define $\vec t_a$ and $\vec t_b$.} Let $b\ne a$ denote a co-winner under $\pi$, i.e.,~$\{a,b\}\subseteq \cor(\pi)$.  Because $\cor$ is minimally continuous, by Proposition~\ref{prop:char-continuity}, there exists a feasible atomic signature $\vec t_a\in\fsatomic$ (respectively,  $\vec t_b\in\fsatomic$) such that $\vec t_a\unlhd \vec t$ (respectively,  $\vec t_b\unlhd \vec t$) and  $\cor(\vec t_a) = \{a\}$ (respectively,  $\cor(\vec t_b) = \{b\}$).

\item [\bf (ii)] {\bf \boldmath Prove that $\ppolyz{a,\vec t_a}$ and $\ppolyz{a,\vec t_b}$ are active at any $n>N_{\cor}$.} 
Because $\vec t_a$ is feasible, there exists $\vec x\in\mathbb R^{m!}$ such that $\signH(\vec x) = \vec t_a$. Therefore, recall that $\pi$ is strictly positive (by $\epsilon$), for some  sufficiently small $\delta>0$, we have $\pi+\delta\vec x\in \mathbb R_{>0}^{m!}$, $\cwinner(\pi+\delta\vec x) = \{a\}$, and $\signH(\pi+\delta\vec x) = \vec t_a$. This means that $\pi+\delta\vec x$ is an interior point of $\ppoly{a,\vec t_a}$ (which also means that $\dim(\ppoly{a,\vec t_a})=m!$). Recall that the $\vbb$ part of $\ppoly{a,\vec t_a}$ (Definition~\ref{dfn:poly-H-t} and~\ref{dfn:poly-a-t}) is non-positive, we have $\ppoly{a,\vec t_a}\subseteq \ppolyz{a,\vec t_a}$, which means that $\dim(\ppolyz{a,\vec t_a})=m!$ as well.  Therefore, according to Claim~\ref{claim:poly-active-at-all-n} and the definition of $N_{\cor}$ (Definition~\ref{dfn:N-cor}), $\ppoly{a,\vec t_a}$ is active at any $n>N_{\cor}$. Similarly, we have that $\ppoly{a,\vec t_b}$ is active at any $n>N_{\cor}$. 

\item [\bf (iii)] {\bf \boldmath Prove that $\pi\in \ppolyz{a,\vec t_a}\cap \ppolyz{a,\vec t_b}$.}  Recall that $\pi\in \ppolyz{a,\vec t}$. Therefore, according to Claim~\ref{claim:poly-a-t} (ii), we have $\vec t\unlhd \signH(\pi)$, which means that $\vec t_a\unlhd \signH(\pi)$, because $\vec t_a\unlhd \vec t$. By Claim~\ref{claim:poly-a-t} (ii) again, we have $\pi\in \ppolyz{a,\vec t_a}$. Similarly, we can prove that $\pi\in \ppolyz{a,\vec t_b}$.
\end{itemize}
This completes the proof of Claim~\ref{claim:Pi-prime-equivalence}.
\end{proof}
 
Therefore, $\satmin{\CC^*}{\Pi}(\cor,n)$ has the same characterization as $\satmin{\CC}{\Pi}(\cor,n)$, which concludes the proof of Lemma~\ref{lem:sCC-GISR} due to (\ref{eq:scc-r-bounds}).
\end{proof}

\subsection{Proof of Theorem~\ref{thm:sCC-scoring}}
\label{app:proof-thm:sCC-scoring}

\appThm{thm:sCC-scoring}
{Smoothed $\CC$: Integer Positional Scoring Rules}{Let $\mm= (\Theta,\ml(\ma),\Pi)$ be a strictly positive and closed single-agent preference model, let $\cor_{\vec s}$ be a  minimally continuous int-GISR and let $r_{\vec s}$ be a refinement of $\cor_{\vec s}$. For any $n\ge 8m+49$ with $2\mid n$, we have
$$\satmin{\CC}{\Pi}(r_{\vec s},n) = \left\{\begin{array}{ll}
1- \exp(-\Theta(n)) &\text{if } \forall \pi\in\conv(\Pi),  |\wcw(\pi)|\times |\cor(\pi)\cup \wcw(\pi)|\le 1\\
\Theta(n^{-0.5}) &\text{if }  
\left\{\begin{array}{l}   \text{(1) }\forall  \pi\in \conv(\Pi), \cwinner(\pi)\cap (\ma\setminus \cor_{\vec s}(\pi))=\emptyset \text{ and} \\
\text{(2) }  \exists \pi\in \conv(\Pi)\text{ s.t. } |\almostCW(\pi)\cap (\ma\setminus \cor_{\vec s}(\pi))|=2\end{array}\right.
\\
\exp(-\Theta(n)) &\text{if }\exists \pi\in \conv(\Pi)\text{ s.t. } \cwinner(\pi)\cap (\ma\setminus \cor_{\vec s}(\pi))\ne \emptyset\\
\Theta(1)\text{ and }1-\Theta(1) &\text{otherwise}
\end{array}\right.$$
For any $n\ge 8m+49$ with $2\nmid n$, we have
$$\satmin{\CC}{\Pi}(r_{\vec s},n) = \left\{\begin{array}{ll}
1- \exp(-\Theta(n)) &\text{same as the }2\mid n\text{ case}\\
\exp(-\Theta(n)) &\text{if }\exists \pi\in \conv(\Pi)\text{ s.t. } \left\{\begin{array}{l}   \text{(1) }\cwinner(\pi)\cap (\ma\setminus \cor_{\vec s}(\pi))\ne \emptyset \text{ or} \\
\text{(2) }  |\almostCW(\pi)\cap (\ma\setminus \cor_{\vec s}(\pi))|=2\end{array}\right.
\\
\Theta(1)\text{ and }1-\Theta(1) &\text{otherwise}
\end{array}\right.$$
}
\begin{proof}We apply Lemma~\ref{lem:sCC-GISR} to prove the theorem. For any integer irresolute positional scoring rule $\cor_{\vec s}$, we prove the following claim to simplify $\closure{\region{\CWwin}{\cor_{\vec s}}}$ and $\closure{\region{\CWlose}{\cor_{\vec s}}}$.
\begin{claim}\label{claim:closure-CWW-CWL-scoring}
For any $\pi\in\conv(\Pi)$,  
\begin{align*}
\left[\pi\in \closure{\region{\CWwin}{\cor_{\vec s}}} \right]&\Leftrightarrow \left[\wcw(\pi)\cap  \cor_{\vec s}(\pi)\ne \emptyset\right]\\
\left[\pi\in \closure{\region{\CWlose}{\cor_{\vec s}}}\right] &\Leftrightarrow \left[\exists a\ne b \text{ s.t. } a\in \wcw(\pi)\text{ and }b\in \cor_{\vec s}(\pi)\right]
\end{align*}
\end{claim}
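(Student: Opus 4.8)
The plan is to prove Claim~\ref{claim:closure-CWW-CWL-scoring} by exploiting the minimal continuity of $\cor_{\vec s}$ together with the elementary geometry of positional scoring rules, namely that the score of each alternative is a linear functional of the histogram and that the ``weak Condorcet winner'' conditions are closed (non-strict) linear inequalities while the ``Condorcet winner'' conditions are open (strict) ones. First I would set up notation: for a fractional profile $\vec x$, write $\wcw(\vec x)$ for the set of alternatives with no incoming edge in $\umg(\vec x)$, which is exactly $\{a : \forall b,\ \pair_{b,a}\cdot \vec x \le 0\}$, and $\cwinner(\vec x)$ for $\{a : \forall b\ne a,\ \pair_{b,a}\cdot\vec x < 0\}$. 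Recall also that $\region{\CWwin}{\cor_{\vec s}} = \{\vec x : \cwinner(\vec x)\cap\cor_{\vec s}(\vec x)\ne\emptyset\}$ and $\region{\CWlose}{\cor_{\vec s}} = \{\vec x : \cwinner(\vec x)\cap(\ma\setminus\cor_{\vec s}(\vec x))\ne\emptyset\}$.

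For the $\CWwin$ equivalence I would argue both directions. ($\Leftarrow$) Suppose $a\in\wcw(\pi)\cap\cor_{\vec s}(\pi)$. Since $\cor_{\vec s}$ is a minimally continuous int-GISR (Proposition~\ref{prop:common-min-cont} and Proposition~\ref{prop:char-continuity}), there is a feasible atomic signature $\vec t_a\unlhd\signs{\vH}(\pi)$ with $\cor_{\vec s}(\vec t_a)=\{a\}$; pick $\vec y\in\ppoly{\vec t_a}$. For small $\delta>0$, perturbing $\pi$ towards $\vec y$ makes $\signs{\vH}(\pi+\delta\vec y)=\vec t_a$ (so $a$ is the unique scoring winner) and, because $\pair_{b,a}\cdot\pi\le 0$ for all $b$, one can further tilt within this signature cell by adding a tiny multiple of a profile that strictly helps $a$ in every pairwise contest to force $\pair_{b,a}\cdot(\pi+\delta'\vec z)<0$ for all $b\ne a$; this needs the observation that no pairwise tie can be ``locked'' to zero while staying in an open scoring cell, which follows because the relevant pairwise-difference vectors are not among the scoring hyperplanes. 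Then $\pi$ is a limit of points in $\region{\CWwin}{\cor_{\vec s}}$. ($\Rightarrow$) If $\pi\in\closure{\region{\CWwin}{\cor_{\vec s}}}$, take a sequence $\vec x_j\to\pi$ in $\region{\CWwin}{\cor_{\vec s}}$; passing to a subsequence, fix $a$ with $\cwinner(\vec x_j)=\{a\}$ and $a\in\cor_{\vec s}(\vec x_j)$ for all $j$. Taking limits, $\pair_{b,a}\cdot\pi\le 0$ for all $b$, so $a\in\wcw(\pi)$; and by continuity of $\cor_{\vec s}$ (limits of winners are winners) $a\in\cor_{\vec s}(\pi)$. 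Hence $\wcw(\pi)\cap\cor_{\vec s}(\pi)\ne\emptyset$.

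The $\CWlose$ equivalence is analogous, the only change being that the witnessing scoring winner is a different alternative $b$ from the Condorcet/weak-Condorcet winner $a$. ($\Leftarrow$) Given $a\ne b$ with $a\in\wcw(\pi)$ and $b\in\cor_{\vec s}(\pi)$, use minimal continuity to get $\vec t_b\unlhd\signs{\vH}(\pi)$ atomic with $\cor_{\vec s}(\vec t_b)=\{b\}$, then perturb $\pi$ slightly so that $a$ becomes the strict Condorcet winner (possible since $a\in\wcw(\pi)$) while the scoring signature stays $\vec t_b$ (so $b\ne a$ remains the unique scoring winner); this puts the perturbed point in $\region{\CWlose}{\cor_{\vec s}}$, and $\pi$ is its limit. ($\Rightarrow$) From a sequence $\vec x_j\to\pi$ with $\cwinner(\vec x_j)=\{a\}$ and $a\notin\cor_{\vec s}(\vec x_j)$, pick some $b_j\in\cor_{\vec s}(\vec x_j)$; passing to a subsequence fix $b\ne a$ with $b\in\cor_{\vec s}(\vec x_j)$ for all $j$, so in the limit $a\in\wcw(\pi)$ and, by continuity, $b\in\cor_{\vec s}(\pi)$, with $b\ne a$.

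I expect the main obstacle to be the ``tilting'' sub-argument in the ($\Leftarrow$) directions: showing that from a point $\pi$ on the boundary (where some pairwise margins are exactly zero and/or the scoring winner set has size $>1$) one can simultaneously (i) break into an atomic scoring cell making a prescribed alternative the unique scoring winner, and (ii) strictly orient the required pairwise edges, without the two requirements conflicting. The key structural fact I would isolate and prove is that the scoring hyperplanes $\vH$ and the pairwise-difference hyperplanes $\pair_{a,b}$ are ``independent enough'': for any feasible atomic scoring signature $\vec t$ and any target orientation of the pairwise edges consistent with the closed constraints already satisfied by $\pi$, the intersection of the corresponding open cone with $\ppolyz{\vec t}$ is nonempty. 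This is essentially a dimension/genericity statement and should follow from the fact that positional scoring and pairwise-majority depend on genuinely different linear functionals of the histogram for $m\ge 3$; once this is in hand, the rest of Claim~\ref{claim:closure-CWW-CWL-scoring} is routine limiting arguments combined with the continuity characterization in Proposition~\ref{prop:char-continuity}.
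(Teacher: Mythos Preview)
Your $(\Rightarrow)$ directions are essentially the paper's: subsequence, fix the witnessing alternatives, pass non-strict inequalities to the limit, and use continuity of $\cor_{\vec s}$ to keep the scoring co-winner. That part is fine.

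The gap is in your $(\Leftarrow)$ directions. You correctly identify the crux as the ``tilting'' step---moving into a prescribed atomic scoring cell while simultaneously turning the weak Condorcet winner into a strict one---and you assert that the scoring hyperplanes and the pairwise-difference hyperplanes are ``independent enough'' to make this possible. But you do not prove this, and it is not immediate: a priori, the open scoring cell $\ppolyz{\vec t_a}$ (or $\ppolyz{\vec t_b}$) restricted to a neighborhood of $\pi$ could lie entirely on the wrong side of some pairwise hyperplane $\pair_{c,a}\cdot\vec x = 0$. Your proposed two-step perturbation (first into the atomic cell, then tilt the pairwise edges) does not work as stated, because the second tilt could knock you out of the atomic scoring cell; you would need a single perturbation direction that does both jobs at once, and exhibiting one is precisely the missing content.

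The paper sidesteps this genericity question entirely by explicit construction. For $\CWwin$, it takes the cyclic profile $P = \{\sigma^i(a\succ\others): 1\le i\le m-1\}$ and observes directly that $\cwinner(P)=\cor_{\vec s}(P)=\{a\}$; then $\pi+\delta\hist(P)$ has $a$ as strict Condorcet winner (since $a\in\wcw(\pi)$ gives $\pair_{b,a}\cdot\pi\le 0$ and $P$ contributes a strictly negative term) and as unique scoring winner (same reasoning on scores), so $\pi+\delta\hist(P)\in\region{\CWwin}{\cor_{\vec s}}$. For $\CWlose$, the paper proves a separate claim (Claim~\ref{claim:CW-score-winner-diff}) that for any $a\ne b$ and any sufficiently large $n$ there is an $n$-profile $P$ with $\cwinner(P)=\{a\}$ and $\cor_{\vec s}(P)=\{b\}$; adding $\delta\hist(P)$ to $\pi$ then works by the same additivity argument. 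The construction in Claim~\ref{claim:CW-score-winner-diff} is where the real work is (it splits on whether $\vec s$ is plurality-like), and it is what your ``independence'' assertion would have to reduce to anyway. The paper's route is more elementary and fully constructive; your route, if completed, would be more abstract but would ultimately require producing the same kind of witness profile.
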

\begin{proof} The proof is done in the following steps.
\paragraph{\bf \boldmath $\left[\pi\in \closure{\region{\CWwin}{\cor_{\vec s}}} \right]   \Rightarrow \left[\wcw(\pi)\cap  \cor_{\vec s}(\pi)\ne \emptyset\right]$.} Suppose $\pi\in \closure{\region{\CWwin}{\cor_{\vec s}}} $, which means that there exists a sequence $(\vec x_1,\vec x_2,\ldots)$ in $\region{\CWwin}{\cor_{\vec s}}$ that converges to $\pi$. It follows that there exists an alternative $a\in\mathbb \ma$ and a subsequence of $(\vec x_1,\vec x_2,\ldots)$, denoted by $(\vec x_1',\vec x_2', \ldots)$ such that for every $j\in\mathbb N$, $\cwinner(\vec x_j')=\{a\}$ and $a\in \cor_{\vec s}(\vec x_j')$. This means that the following holds. 
\begin{itemize}
\item $a$ is a weak Condorcet winner under $\pi$. Notice that for any $b\ne a$ and any $j\in\mathbb N$, we have $\pair_{b,a}\cdot \vec x_j'<0$, which means that $\pair_{b,a}\cdot \pi\le 0$.
\item $a\in \cor_{\vec s}(\pi)$. Notice that for any $b\ne a$ and any $j\in\mathbb N$, the total score of $a$ is higher than or equal to the total score of $b$ in $\vec x_j'$. Therefore, the same holds for $\pi$, which means that $a\in \cor_{\vec s}(\pi)$.
\end{itemize}
Therefore, $a$ is a weak Condorcet winner as well as a $\cor_{\vec s}$ co-winner, which implies $\wcw(\pi)\cap  \cor_{\vec s}(\pi)\ne \emptyset$.

\paragraph{\bf \boldmath $\left[\pi\in \closure{\region{\CWwin}{\cor_{\vec s}}}  \right] \Leftarrow \left[ \wcw(\pi)\cap  \cor_{\vec s}(\pi)\ne \emptyset \right] $.} Suppose $\wcw(\pi)\cap  \cor_{\vec s}(\pi)\ne \emptyset$ and let $a\in \wcw(\pi)\cap  \cor_{\vec s}(\pi)$. We will explicitly construct a sequence of vectors in $\region{\CWwin}{\cor_{\vec s}}$ that converges to $\pi$. Let $\sigma$ denote a cyclic permutation among $\ma\setminus\{a\}$ and let $P$ denote the following $(m-1)$-profile
\begin{equation}\label{eq:P-cyclic}
P = \{\sigma^i( a\succ \others ): 1\le i\le m-1\}
\end{equation}
It is not hard to verify that $\cwinner(P)=\cor_{\vec s}(P)=\{a\}$. Therefore, for any $\delta>0$ we have 
$$\cwinner(\pi+\delta\cdot \hist(P))=\cor_{\vec s}(\pi+\delta\cdot \hist(P))=\{a\},$$
which means that $\pi+\delta \cdot \hist(P)\in \closure{\region{\CWwin}{\cor_{\vec s}}}$. It follows that $(\pi+\frac1j  \hist(P): j\in\mathbb N)$ is a sequence in $\closure{\region{\CWwin}{\cor_{\vec s}}}$ that converges to $\pi$, which means that $\pi\in \closure{\region{\CWwin}{\cor_{\vec s}}} $.

\paragraph{\bf \boldmath $\left[\pi\in \closure{\region{\CWlose}{\cor_{\vec s}}}   \right] \Rightarrow \left[\exists a\ne b \text{ s.t. } a\in \wcw(\pi)\text{ and }b\in \cor_{\vec s}(\pi) \right] $.} Suppose $\pi\in \closure{\region{\CWlose}{\cor_{\vec s}}} $, which means that there exists a sequence $(\vec x_1,\vec x_2,\ldots)$ in $\region{\CWlose}{\cor_{\vec s}}$ that converges to $\pi$. It follows that there exists a pair of different alternatives $a,b\in\mathbb \ma$ and a subsequence of $(\vec x_1,\vec x_2,\ldots)$, denoted by $(\vec x_1',\vec x_2',\ldots)$ such that for every $j\in\mathbb N$, $\cwinner(\vec x_j')=\{a\}$ and $b\in \cor_{\vec s}(\vec x_j')$. Following a similar proof as in the $\region{\CWlose}{\cor_{\vec s}}$ part, we have that $a$ is a weak Condorcet winner under $\pi$ and $b\in \cor_{\vec s}(\pi)$. 

\paragraph{\bf \boldmath $\left[\pi\in \closure{\region{\CWlose}{\cor_{\vec s}}}   \right] \Leftarrow \left[\exists a\ne b \text{ s.t. } a\in \wcw(\pi)\text{ and }b\in \cor_{\vec s}(\pi) \right] $.}  Let $a\ne b$ be two alternatives such that $a\in \wcw(\pi)\text{ and }b\in \cor_{\vec s}(\pi)$. We define a profile $P$ where $\cwinner(P)=\{a\}$ and $\cor_{\vec s}(P)=\{b\}$, whose existence is guaranteed by the following claim, which is slightly different from~\cite[Theorem~6]{Fishburn74:Paradoxes}   for scoring vectors $\vec s = (s_1,\ldots,s_m)$ with $s_1>s_2>\cdots>s_m$. 
\begin{claim}
\label{claim:CW-score-winner-diff}
For any $m\ge 3$, any positional scoring rule with scoring vector $\vec s = (s_1,\ldots,s_m)$ where $s_1>s_m$,  any $n\ge 8m+ 49$, and any pair of different alternatives $a\ne b$, there exists an $n$-profile $P$ such that $\cwinner(P)=\{a\}$ and $\cor_{\vec s}(P) = \{b\}$.
\end{claim}
\begin{proof} We explicitly construct an $n$-profile $P$ where the Condorcet winner exists and is different from the unique $\cor_{\vec s}$ winner. Then, we apply a permutation over $\ma$ to $P$ to  make $a$ the Condorcet and $b$ the unique $\cor_{\vec s}$ winner.
The construction is done in two cases: $s_2 = s_m$ and $s_2>s_m$. 
\begin{itemize}
\item {\bf Case 1: \boldmath $s_2 = s_m$.} In this case  $\cor_{\vec s}$ corresponds to the plurality rule. We let 
\begin{align*}
P = \left\lfloor \frac{n-1}{2}\right\rfloor\times [2\succ 1\succ 3\succ \others]+ \left\lfloor \frac{n-3}{2}\right\rfloor\times [3\succ 1\succ 2\succ \others]&\\
+ \left(n+1-2\left\lfloor \frac{n-1}{2}\right\rfloor\right)\times [1\succ 2\succ 3\succ \others]&
\end{align*}
It is not hard to verify that the alternative $1$ is the Condorcet winner and $2$ is the unique plurality winner.

\item {\bf Case 2: \boldmath $s_2 >s_m$.} Let $2\le k\le m-1$ denote the smallest number such that $s_k>s_{k+1}$. Let $A_1 = [4\succ \cdots\succ k+1]$ and $A_2 = [k+2\succ \cdots\succ m]$, and let $P^*$ denote the following $7$-profile.
\begin{align*}
P^* = \{3\times [1\succ 2\succ A_1\succ 3\succ A_2] +2\times [2\succ 3\succ A_1\succ1\succ A_2] &\\
 +   [3\succ 1\succ A_1\succ 2\succ A_2]+[2\succ 1 \succ A_1\succ 3\succ A_2]&\}
\end{align*}
It is not hard to verify that $1$ is the Condorcet winner under $P^*$, and the total score of $1$ is $3s_1+2s_2+2s_{k+1}<3s_1+3s_2+s_{k+1}$, which is the total score of $2$. Note that the total score of any alternative in $A_1$ is $7s_k$, which might be larger than the score of $2$. If $3s_1+3s_2+s_{k+1}\ge 7s_k$, then we let $b=2$; otherwise we let $b=4$.  Let $P_b$ denote the following $(m-1)$-profile  that will be used as a tie-breaker. Let $\sigma$ denote an arbitrary cyclic permutation among $\ma\setminus\{b\}$.
$$P_b = \{\sigma^i([b\succ \others]): 1\le i\le m-1\}$$
Let 
$$P = \left\lfloor \frac{n-m+1}{7}\right\rfloor\times P^* + P_b+\left(n-m+1-7\left\lfloor \frac{n-m+1}{7}\right\rfloor\right)\times [b\succ \others]$$
It is not hard to verify that when $n\ge 8m+49 $, $\cwinner(P)=\{1\}$, $\cor_{\vec s}(P) = \{b\}$,  and $b\ne 1$. 
\end{itemize}
This proves Claim~\ref{claim:CW-score-winner-diff}.
\end{proof}
Let $P$ denote the profile guaranteed by Claim~\ref{claim:CW-score-winner-diff}. For any $\delta>0$ we have 
$$\cwinner(\pi+\delta \cdot \hist(P))=\{a\}\text{ and }\cor(\pi+\delta \cdot \hist(P))=\{b\},$$
which means that $\pi+\delta \cdot \hist(P)\in \closure{\region{\CWlose}{\cor_{\vec s}}}$. It follows that $(\pi+\frac1j  \hist(P): j\in\mathbb N)$ is a sequence in $ {\region{\CWlose}{\cor_{\vec s}}}$ that converges to $\pi$, which means that $\pi\in \closure{\region{\CWlose}{\cor_{\vec s}}} $. This proves Claim~\ref{claim:closure-CWW-CWL-scoring}.
\end{proof}

Claim~\ref{claim:closure-CWW-CWL-scoring} implies that for all $n\ge 8m+49$, the $1$ case doe not hold, i.e., $\condition{\text{AS}}(\cor_{\vec s}, n)=0$. We now apply Claim~\ref{claim:closure-CWW-CWL-scoring} to simplify the conditions in Lemma~\ref{lem:sCC-GISR}. 
\begin{itemize}
\item $\condition{\text{RS}}(\cor_{\vec s}, \pi)$. By definition, this holds if and only if $\pi\notin  \closure{\region{\CWlose}{\cor_{\vec s}}}$, which is equivalent to $\nexists a\ne b \text{ s.t. } a\in \wcw(\pi)\text{ and }b\in \cor_{\vec s}(\pi)$. In other words, either $\wcw(\pi)=\emptyset$ or  ($\wcw(\pi) = \cor_{\vec s}(\pi)$ and $|\wcw(\pi)| =1$). Notice that $\cor_{\vec s}(\pi)\ne\emptyset$. Therefore, $\condition{\text{RS}}(\cor_{\vec s}, \pi)$ is equivalent to $|\wcw(\pi)|\times |\cor_{\vec s}(\pi)\cup \wcw(\pi)|\le 1$.
\item $\condition{\text{NRS}}(\cor_{\vec s}, \pi)$. By definition, this holds if and only if  $\almostCW(\pi)\ne \emptyset$ and $\pi \notin \closure{\region{\CWwin}{\cor_{\vec s}}}$, which is equivalent to $\almostCW(\pi)\ne \emptyset$ and $\wcw(\pi)\cap  \cor_{\vec s}(\pi)= \emptyset$. The latter is equivalent to $\wcw(\pi)\cap (\ma\setminus \cor_{\vec s}(\pi))=\wcw(\pi)$. We note that when $\almostCW(\pi)\ne \emptyset$, we have $\wcw(\pi)=\almostCW(\pi)$. Therefore, $\condition{\text{NRS}}(\cor_{\vec s}, \pi)$ is equivalent to $|\almostCW(\pi)\cap (\ma\setminus \cor_{\vec s}(\pi))|=2$.
\end{itemize}
Theorem~\ref{thm:sCC-scoring} follows after Lemma~\ref{lem:sCC-GISR} with the simplified conditions discussed above.
\end{proof}

\subsection{Definitions, Full Statement, and Proof for Theorem~\ref{thm:sCC-MRSE}}
\label{app:proof-thm:sCC-MRSE}

For any $O\in \ml(\ma)$, any $1\le i<i'\le m$, and any $a\in\ma$, let $O[i]$ denote the alternative ranked at the $i$-th place in $O$, let $O[i,i']$ denote the set of alternatives ranked from the $i$-th place to the $i'$-th place in $O$, and let $O^{-1}[a]$ denote the rank of $a$ in $O$. For any $A\subseteq \ma$ and any $\vec x\in \mathbb R^{m!}$ that represents the histogram of a profile, let $\vec x|_{A}\in \mathbb R^{|A|!}$ denote the histogram of the profile restricted to alternatives in $A$. 
\begin{ex} Let $O = [3\rhd 1\rhd 2]$.\footnote{Again, we use $\rhd$ in contrast to $\succ$ to indicate that $O$ is a parallel universe instead of an agent's preferences.} We have $O[2] = 1$, $O^{-1}(2) = 3$, and $O[2,3] = \{1,2\}$.  Let $\hat\pi$ denote the (fractional) profile in Figure~\ref{fig:ex-m3}. We have $\hat\pi|_{O[2,3]} = (\underbrace{0.5}_{1\succ 2},\underbrace{0.5}_{2\succ 1})$.
\end{ex}

\begin{dfn}[\bf Parallel universes  and possible losing rounds under MRSE rules] 
\label{dfn:PU-LR}
For any MRSE rule $\cor = (\cor_2,\ldots,\cor_{m})$ and any $\vec x\in\mathbb R^{m!}$, the set of {\em parallel universes under $\cor$ at $\vec x$}, denoted by $\PU{\cor}{\vec x}\subseteq \ml(\ma)$, is the set of all elimination orders under PUT. Formally,
$$\PU{\cor}{\vec x} = \{O\in \ml(\ma):\forall 1\le i\le m-1, O[i]\in\arg\min\nolimits_{a}\score_{\cor_{m+1-i}}(\vec x|_{O[i,m]},a)\},$$
where $\score_{\cor_{m+1-i}}(\vec x|_{O[i,m]},a)$ is the total score of $a$ under the positional scoring rule $\cor_{m+1-i}$, where the profile is $\vec x|_{O[i,m]}$. 

For any alternative $a$, let the {\em possible losing rounds}, denoted by $\PULR{\cor}{\vec x,a}\subseteq [m-1]$, be the set of all rounds in the parallel universes where $a$ drops out. Formally,
$$\PULR{\cor}{\vec x,a} = \{O^{-1}[a]:O\in \PU{\cor}{\vec x}\}$$
\end{dfn}
See Example~\ref{ex:PU} for examples of parallel universes  and possible losing rounds under STV.

\appThm{thm:sCC-MRSE}{\bf Smoothed $\CC$: int-MRSE rules}{
Let $\mm= (\Theta,\ml(\ma),\Pi)$ be a strictly positive and closed single-agent preference model, let $\cor =(\cor_2,\ldots,\cor_{m})$ be an int-MRSE and let $r$ be a refinement of $\cor $. For any $n\in\mathbb N$ with $2\mid n$, we have
$$\satmin{\CC}{\Pi}(r,n) = \left\{\begin{array}{ll}
1 &\text{if } \forall 2\le i\le m, \sat{\CL}(\cor_i)=1\\
1- \exp(-\Theta(n)) &\text{if } \left\{\begin{array}{l}\text{(1) }\exists 2\le i\le m\text{ s.t. }\sat{\CL}(\cor_i)=0\text{ and }\\
\text{(2) } \forall \pi\in\conv(\Pi), \forall  a\in \wcw(\pi)\text{ and }\forall i^*\in \PULR{\cor}{\pi,a}, \\
\hfill \text{we have } \sat{\CL}(\cor_{m+1-i^*})=1  \end{array}\right.\\
\Theta(n^{-0.5}) &\text{if }  
\left\{\begin{array}{l}   \text{(1) }\forall  \pi\in \conv(\Pi), \cwinner(\pi)\cap (\ma\setminus \cor (\pi))=\emptyset \text{ and} \\
\text{(2) }  \exists \pi\in \conv(\Pi)\text{ s.t. } |\almostCW(\pi)\cap (\ma\setminus \cor (\pi))|=2\end{array}\right.
\\
\exp(-\Theta(n)) &\text{if }\exists \pi\in \conv(\Pi)\text{ s.t. } \cwinner(\pi)\cap (\ma\setminus \cor (\pi))\ne \emptyset\\
\Theta(1)\text{ and }1-\Theta(1) &\text{otherwise}
\end{array}\right.$$
For any $n\in\mathbb N$ with $2\nmid n$, we have
$$\satmin{\CC}{\Pi}(r,n) = \left\{\begin{array}{ll}
1 &\text{same as the }2\mid n\text{ case}\\
1- \exp(-\Theta(n)) &\text{same as the }2\mid n\text{ case}\\
\exp(-\Theta(n)) &\text{if }\exists \pi\in \conv(\Pi)\text{ s.t. } \left\{\begin{array}{l}   \text{(1) }\cwinner(\pi)\cap (\ma\setminus \cor(\pi))\ne \emptyset \text{ or} \\
\text{(2) }  |\almostCW(\pi)\cap (\ma\setminus \cor(\pi))|=2\end{array}\right.
\\
\Theta(1)\text{ and }1-\Theta(1) &\text{otherwise}
\end{array}\right.$$
}

\paragraph{\bf Intuitive explanations.}  The conditions for U, VU, and M cases are the same as their counterparts in Theorem~\ref{thm:sCC-scoring}. The most interesting cases are the $1$ case and the VL case. The $1$ case happens when  all positional scoring rule used in $\cor$ satisfy {\sc Condorcet loser}. This is true because for any positional scoring rule that satisfies {\sc Condorcet loser}, the Condorcet winner, when it exists, cannot have the lowest score among all alternatives. Therefore, like in Baldwin's rule, the Condorcet winner never loses in any round, which means that it must be the unique winner under $\cor$. 

The VL case happens when (1) the $1$ case does not happen, and (2) for every distribution $\pi\in\conv(\Pi)$, every weak Condorcet winner $a$, and every round $i^*$ where $a$ is eliminated in a parallel universe, the positional scoring rule used in round $i^*$, i.e.~$\cor_{m+1-i^*}$ for $m+1-i^*$ alternatives, must satisfy {\sc Condorcet loser}. (2) makes sense because it guarantees that  when a small permutation is added to $\pi$, if a weak Condorcet winner $a$ becomes the Condorcet winner, then it will be the unique winner under $\cor$, because in every round $i^*$ where $a$ can possibly be eliminated before the perturbation (i.e.~$i^*$ is a possible losing round), the voting rule used in that round, i.e.~$\cor_{m+1-i^*}$, will not eliminate $a$ after $a$ has become a Condorcet winner. The following example shows the VL case under $\istv$.

\begin{proof} We apply Lemma~\ref{lem:sCC-GISR} to prove the theorem.  We first prove the following claim, which states that when $n$ is sufficiently large, $\condition{\text{AS}}(\cor, n)=1$ if and only if all scoring rules used in $\cor$ satisfy the Condorcet loser criterion.
\begin{claim}
\label{claim:MRSE-AS}
For int-MRSE $\cor$, there exists $N\in n$ such that for  every $n>N$, $\condition{\text{AS}}(\cor, n)$ holds if and only if for all $2\le i\le m$, $\sat{\CL}(\cor_i)=1$.
\end{claim}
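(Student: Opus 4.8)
\textbf{Proof plan for Claim~\ref{claim:MRSE-AS}.} The plan is to prove the two implications separately, the nontrivial one being ``$\exists i_0:\sat{\CL}(\cor_{i_0})=0 \Rightarrow \neg\condition{\text{AS}}(\cor,n)$ for all large $n$''. The bridge between the two sides of the equivalence is a statement purely about positional scoring rules: for every $2\le j\le m$, $\sat{\CL}(\cor_j)=1$ if and only if the Condorcet winner of any $j$-alternative profile is never a $\cor_j$-\emph{loser} (an alternative of strictly lowest $\cor_j$-score). I would derive this from a reversal argument: reversing every vote in a profile turns the Condorcet winner into the Condorcet loser and, after an affine renormalization of the scoring vector, turns ``having the lowest score'' into ``having the highest score'', so that ``$\cor_j$ never gives its top score to a Condorcet loser'' and ``$\cor_j$ never gives its bottom score to a Condorcet winner'' are the same condition; the only subtlety is tie handling, which is why the loser condition is phrased with a strict inequality. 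Call this property $\mathrm{NoElim}(\cor_j)$.

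\textbf{The ``$\Leftarrow$'' direction (for every $n$).} Assume $\mathrm{NoElim}(\cor_j)$ holds for all $j$. Fix any $P\in\ml(\ma)^n$. If $P$ has no Condorcet winner then $\sat{\CC}(\cor,P)=1$ by definition, so assume $c\in\cwinner(P)$. Run $\cor$ along any parallel universe $O\in\PU{\cor}{P}$: just before the round in which $O$ would eliminate $c$, the surviving set $A\ni c$ has $|A|=j$ for some $j$, and $c$ is still the Condorcet winner of $P|_A$ (it beats everyone in $P$, hence everyone in $A$). By $\mathrm{NoElim}(\cor_j)$, $c$ is not a $\cor_j$-loser of $P|_A$, so $c$ is not eliminated in that round -- a contradiction. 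Hence $c$ is eliminated in no parallel universe, so $\cor(P)=\{c\}$ and $\sat{\CC}(\cor,P)=1$. Thus $\condition{\text{AS}}(\cor,n)$ holds, and this argument is independent of $n$.

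\textbf{The ``$\Rightarrow$'' direction (large $n$).} Assume $\mathrm{NoElim}(\cor_{i_0})$ fails, so there is an $i_0$-alternative profile, on a set $\{a_1,\dots,a_{i_0}\}$ with Condorcet winner $c:=a_1$, in which $c$ is a $\cor_{i_0}$-loser; a small perturbation makes $c$ the unique strict $\cor_{i_0}$-loser there, and scaling up yields a profile $Q^0$ on $\{a_1,\dots,a_{i_0}\}$ in which $c$ is the Condorcet winner with head-to-head margins $\Theta(n)$ and also the strict $\cor_{i_0}$-loser with score gap $\Theta(n)$. The goal is to embed $Q^0$ into an $m$-alternative profile $P$ with exactly $n$ votes such that (i) $c$ is the overall Condorcet winner; (ii) in \emph{every} parallel universe the first $m-i_0$ rounds eliminate the $m-i_0$ ``dummy'' alternatives $a_{i_0+1},\dots,a_m$; (iii) $P$ restricted to $\{a_1,\dots,a_{i_0}\}$ is $Q^0$. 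Then $c$ is eliminated in round $m-i_0+1$ of every parallel universe, so $c\notin\cor(P)$, giving $\sat{\CC}(\cor,P)=0$, i.e.\ $\neg\condition{\text{AS}}(\cor,n)$. To build $P$ I would stack the dummies at the bottom of every vote on top of $Q^0$ (this keeps $c$ the overall Condorcet winner and makes $a_m,a_{m-1},\dots$ head-to-head losers), then add a short ``rotation'' block so that, in each of the first $m-i_0$ rounds, the designated dummy is the \emph{unique} lowest-scoring alternative under the relevant $\cor_{m+1-\cdot}$: a dummy pinned to the bottom can tie in score with the alternative at the penultimate position when the scoring vector has equal low entries, so rotating the non-dummies through the top positions forces every non-dummy strictly above the bottom score. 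Blowing the whole construction up by a factor $\Theta(n)$ turns all the needed inequalities (Condorcet margins, per-round score gaps, the $\cor_{i_0}$-loser gap for $c$) into $\Theta(n)$ gaps, so a bounded padding -- a partial copy of the gadget plus a short majority- and score-neutral block to fix the parity of the vote count -- brings the total to exactly $n$ without flipping any inequality. This is exactly what forces $n>N_{\cor}$ for a constant depending only on $\cor$, and it disposes of both parities of $n$.

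\textbf{Main obstacle.} The hard part is the ``$\Rightarrow$'' construction: one must force the first $m-i_0$ eliminations to be dummies in \emph{all} parallel universes (not merely one), for \emph{arbitrary} component scoring vectors -- including ones like veto or $k$-approval whose lowest entries are tied over long ranges -- while simultaneously keeping $c$ safe in those early rounds, keeping $c$ the overall Condorcet winner, and making the surviving $i_0$-profile exactly $Q^0$. Reconciling all of this in a single profile of exactly $n$ votes, via the rotation trick plus the multiplicative blow-up and a careful parity pad, is where essentially all the work lies; establishing the reversal-symmetry sub-lemma $\mathrm{NoElim}(\cor_j)\Leftrightarrow\sat{\CL}(\cor_j)=1$ cleanly (tie handling, affine normalization of scoring vectors) is a smaller but still non-vacuous step.
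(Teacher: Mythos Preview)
Your overall plan matches the paper's proof closely: both directions are argued the same way, the ``$\Rightarrow$'' construction (gadget with a Condorcet winner that is the strict round-$i_0$ loser, dummies stacked at the bottom, a cyclic/rotation block to force the dummies out in every parallel universe, then scale and pad to hit exactly $n$) is essentially identical to the paper's $P^*$/$P^1$/$P^2$/$P^3$ construction.

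There is, however, a genuine gap in your bridging sub-lemma $\mathrm{NoElim}(\cor_j)\Leftrightarrow\sat{\CL}(\cor_j)=1$. Reversal does \emph{not} do what you claim: if $\vec s=(s_1,\dots,s_j)$, then for any alternative $a$ one has $\score_{\vec s}(\text{Rev}(Q),a)=\score_{(s_j,\dots,s_1)}(Q,a)$, and there is no affine renormalization that turns ``lowest $\vec s$-score in $Q$'' into ``highest $\vec s$-score in $\text{Rev}(Q)$'' for the \emph{same} rule $\cor_j$ unless $\vec s$ happens to satisfy $s_i+s_{j+1-i}=\text{const}$. For instance, with plurality $\vec s=(1,0,\dots,0)$, reversing the profile relates plurality-loser status to \emph{veto}-winner status, not plurality-winner status, so your argument would prove $\neg\mathrm{NoElim}(\text{plurality})\Rightarrow\neg\sat{\CL}(\text{veto})$, which is not what you need.

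The correct tool---and what the paper uses---is \emph{complementation}: set $Q'=(|Q|{+}1)\cdot\ml(A)-Q$. Then every pairwise margin flips sign (Condorcet winner $\leftrightarrow$ Condorcet loser) and $\score_{\vec s}(Q',a)=\text{const}-\score_{\vec s}(Q,a)$ for the \emph{same} $\vec s$, so (tied) lowest in $Q$ becomes (tied) highest in $Q'$. This gives both implications of your sub-lemma cleanly, with the tie handling falling out automatically (and the small perturbation you mention is then only needed, as you say, to upgrade ``tied lowest'' to ``strictly lowest'' for the $\Rightarrow$ construction). Once you swap reversal for complementation, your plan goes through and coincides with the paper's argument.
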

\begin{proof}
{\bf The \boldmath $\Leftarrow$ direction.}  Suppose for all $2\le i\le m$, $\sat{\CL}(\cor_i)=1$ and for the sake of contradiction, suppose $\condition{\text{AS}}(\cor, n)=0$, which means that there exists an $n$-profile $P$ such that $\cwinner(P)=\{a\}$ and $a\notin \cor(P)$. This means that $\PULR{\cor}{\pi,a}\ne\emptyset$. Let $O\in \PULR{\cor}{\pi,a}$ denote an arbitrary  possible losing round of $a$ and let $i^*=O^{-1}[a]$, which means that $a$ has the lowest total score in the restriction of $P$ on the remaining alternatives (i.e.~$O[i^*,m]$), when $\cor_{m+1-i^*}$ is used. In other words,
$$a\in\arg\min\nolimits_{b}\score_{\cor_{m+1-i^*}}(P|_{O[i^*,m]},b)$$
Notice that $a$ is the Condorcet winner under $P$, which means that $a$ is also the Condorcet winner under $P|_{O[i^*,m]}$. We now obtain a profile $P_{i^*}$ over $O[i^*,m]$ from $P|_{O[i^*,m]}$, which constitutes  a violation of  {\sc Condorcet loser} for $\cor_{m+1-i^*}$. Let $n'=|P|$.
$$P_{i^*} = (n'+1)\times\ml(O[i^*,m]) - P$$
That is, $P_{i^*}$ is obtained from $(n'+1)$ copies of all linear orders over $O[i^*,m])$ by subtracting linear orders in $P$. It is not hard to verify that $a$ is the Condorcet loser as well as an $\cor_{m+1-i^*}$ co-winner in $P_{i^*} $, because all alternatives are tied in the WMG of $(n'+1)\times\ml(O[i^*,m])$ and are tied w.r.t.~their total $\cor_{m+1-i^*}$ scores under $(n'+1)\ml(O[i^*,m])$. This is a contradiction to the assumption that all $\cor_i$'s satisfies the Condorcet loser criterion.

{\bf The \boldmath $\Rightarrow$ direction.}  For the sake of contradiction, suppose $\sat{\CL}(\cor_{i^*})=1$ for some $2\le i^*\le m$, which means that there exist a profile $P_1$ over $m+1-i^*$ alternatives $\{i^*,\ldots, m\}$, such that alternative $i^*$ is the Condorcet loser and a co-winner of $\cor_{m+1-i^*}$ under $P_1$. We will construct a profile $P$ over $\ma$  to show that $\condition{\text{AS}}(\cor, n)=0$ for every sufficiently large $n$. We will show that alternatives in $O[1,i^*-1]$ are eliminated in the first $i^*-1$ round of executing $\cor$ on $P$. Then $i^*$ will be eliminated in the next round.  

First, we define a  profile $P'$ over $O[i^*,m]$ where  $i^*$ is the Condorcet winner as well as the unique  $\cor_{m+1-i^*}$ loser. Let $\sigma$ denote an arbitrary cyclic permutation among $O[i^*+1,m]$, and let 
$$P_2 = \{\sigma^i( a\succ O[i^*+1,m]): 1\le i\le m-i^*\},$$
where alternatives in $O[i^*+1,m]$ are ranked alphabetically. Let $n_1=|P_1|$  and 
$$P'=m(n_1+1)\times\ml(O[i^*,m])-m\times P_1-P_2$$
It is not hard to verify that $P'$ is indeed a profile, i.e., the weight on each ranking is a non-negative integer. $i^*$ is the Condorcet winner under $P'$ because $i^*$ is the Condorcet loser in $P_1$, and $|P_2|<m$. $i^*$ is the unique loser under $P'$ because for any other alternative $a\in O[i^*,m]$, we have  
\begin{align*}
&\score_{\cor_{m+1-i^*}}(m(n'+1)\times\ml(O[i^*,m]),i^*) = \score_{\cor_{m+1-i^*}}(m(n'+1)\times\ml(O[i^*,m]),a),\\
&\score_{\cor_{m+1-i^*}}(P_1,i^*) \ge \score_{\cor_{m+1-i^*}}(P_1,a), \text{ and}\\
&\score_{\cor_{m+1-i^*}}(P_2,i^*) > \score_{\cor_{m+1-i^*}}(P_2,a).
\end{align*}

Next, we let $P^*$ denote the profile obtained from $P'$ by appending $O[1]\succ O[2]\succ \cdots\succ O[i^*-1]$ in the bottom. More precisely, we let
$$P^* = \{R\succ O[1]\succ O[2]\succ \cdots\succ O[i^*-1]: R\in P'\}$$

Finally, we are ready to define $P$. Let $\sigma_1$ denote an arbitrary cyclic permutation among alternatives in $O[1,i^*-1]$. Let $n'=|P'|$ and $P=P^1\cup P^2\cup P^3$, defined as follows.
\begin{itemize}
\item  $P^1$ consists of $n'$ copies of $\{\sigma_1^i(P^*):1\le i\le i^*-1\}$. This part has  $(n')^2(i^*-1)$ rankings and is mainly used to guarantee that $O[1,i^*-1]$ are removed in the first $i^*-1$ rounds.
\item  $P^2$ consists of $\left\lfloor \frac{n-(n')^2(i^*-1)}{n'}\right\rfloor$ copies of $P^*$. This part guarantees that $i^*$ is the Condorcet winner. We require  $n$ to be sufficiently large so that $\lfloor \frac{n-(n')^2(i^*-1)}{n'}\rfloor>n'$.
\item  $P^3$  consists of  $n-|P_1|-|P_2|$ copies of $[O[m]\succ O[m-1]\succ \cdots \succ O[1]]$, which guarantees that $|P|=n$. Note that the number of rankings in this part is no more than $n'$.
\end{itemize}
 Let $N=(n')^2$. For any $n>N$, notice that  the second part has at least $n'$ copies of $P^*$, where $i^*$ is the Condorcet winner. Therefore, $i^*$ is the Condorcet winner under $P$. It is not hard to verify that $O[1,i^*-1]$ are removed in the first $i^*-1$ rounds under $\cor$, and in the $i^*$-th round alternative $i^*$ is unique $\cor_{m+1-i^*}$ loser, which means that $i^*\notin\cor(P)$.   This concludes the proof of Claim~\ref{claim:MRSE-AS}.
\end{proof}

We prove the following claim to simplify $\closure{\region{\CWwin}{\cor }}$ and $\closure{\region{\CWlose}{\cor }}$.
\begin{claim}\label{claim:closure-CWW-CWL-MRSE}
For any int-MRSE $\cor$  and any $\pi\in\conv(\Pi)$,  
\begin{align*}
\left[\pi\in \closure{\region{\CWwin}{\cor }} \right]&\Leftrightarrow \left[\wcw(\pi)\cap  \cor (\pi)\ne \emptyset\right]\\
\left[\pi\in \closure{\region{\CWlose}{\cor }}\right] &\Leftrightarrow \left[\exists  a\in \wcw(\pi)\text{ and }i^*\in \PULR{\cor}{\pi,a}\text{ s.t. }\sat{\CL}(\cor_{m+1-i^*})=0 \right]
\end{align*}
\end{claim}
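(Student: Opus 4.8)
The plan is to prove the two biconditionals by splitting each into its two implications, following closely the structure of the proof of Claim~\ref{claim:closure-CWW-CWL-scoring} for positional scoring rules, but replacing the additivity‑of‑scores step by the minimal continuity of int‑MRSE rules (Proposition~\ref{prop:common-min-cont}) together with explicit perturbation profiles adapted to the round structure of $\cor$. We will freely use that $\cor$, being a continuous GISR, has $\{\vec x:a\in\cor(\vec x)\}$ closed for every $a$; that $\cwinner$, $\cor$, $\region{\CWwin}{\cor}$, and $\region{\CWlose}{\cor}$ are scale‑invariant; and that the ``achieves the minimum score'' conditions defining $\PU{\cor}{\cdot}$ (Definition~\ref{dfn:PU-LR}) pass to limits, so that if an elimination order $O$ lies in $\PU{\cor}{\vec x_j}$ for all $j$ and $\vec x_j\to\pi$ then $O\in\PU{\cor}{\pi}$.

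For the biconditional on $\region{\CWwin}{\cor}$, the ``$\Rightarrow$'' direction is the easy one: from $\pi\in\closure{\region{\CWwin}{\cor}}$ take $\vec x_j\to\pi$ with $\cwinner(\vec x_j)\cap\cor(\vec x_j)\ne\emptyset$, pass to a subsequence fixing $a$ with $\cwinner(\vec x_j)=\{a\}$ and $a\in\cor(\vec x_j)$, take limits of the strict pairwise inequalities to get $a\in\wcw(\pi)$, and use continuity of $\cor$ to get $a\in\cor(\pi)$. For ``$\Leftarrow$'', given $a\in\wcw(\pi)\cap\cor(\pi)$, let $P^*=\{a\succ R:R\in\ml(\ma\setminus\{a\})\}$ be the fully symmetric ``$a$‑on‑top'' profile. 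Since $a$ is above every other alternative in every vote of $P^*$, all of $a$'s pairwise margins in $\pi+\delta\hist(P^*)$ are positive for every $\delta>0$, so $\cwinner(\pi+\delta\hist(P^*))=\{a\}$. We claim moreover that $a\in\cor(\pi+\delta\hist(P^*))$ for every $\delta>0$: fixing $O\in\PU{\cor}{\pi}$ that eliminates $a$ last, in each restriction $S\ni a$ reached along $O$ the profile $P^*|_S$ has $a$ on top everywhere, so $a$ attains the unique maximum $\cor_{|S|}$‑score while the alternatives of $S\setminus\{a\}$ are all tied; hence adding $\delta\hist(P^*|_S)$ raises $a$'s score strictly more than any other alternative's and leaves the $\pi$‑order among $S\setminus\{a\}$ unchanged, so in the round whose restriction is $S$ the intended loser under $O$ remains a score‑minimizer in $(\pi+\delta\hist(P^*))|_S$ while $a$ is strictly above it. Thus $O\in\PU{\cor}{\pi+\delta\hist(P^*)}$ and $a\in\cor(\pi+\delta\hist(P^*))$, and letting $\delta\to0$ produces a sequence in $\region{\CWwin}{\cor}$ converging to $\pi$.

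For the biconditional on $\region{\CWlose}{\cor}$, the ``$\Rightarrow$'' direction again starts from $\vec x_j\to\pi$ with $\cwinner(\vec x_j)=\{a\}$, $a\notin\cor(\vec x_j)$; passing to a subsequence we also fix an elimination order $O$ valid for all $\vec x_j$ with $O^{-1}[a]=:i^*$, whence $a\in\wcw(\pi)$, $O\in\PU{\cor}{\pi}$, and $i^*\in\PULR{\cor}{\pi,a}$. Since $a$ is the Condorcet winner of $\vec x_j$ it is the Condorcet winner of the restriction $\vec x_j|_{O[i^*,m]}$, where it also has the minimum $\cor_{m+1-i^*}$‑score; the ``reversal'' $Q=N\,\hist(\ml(O[i^*,m]))-\vec x_j|_{O[i^*,m]}$ with $N$ large (and a routine integer approximation of $\vec x_j$ if needed) then makes $a$ the Condorcet loser and a $\cor_{m+1-i^*}$‑co‑winner, so $\sat{\CL}(\cor_{m+1-i^*})=0$. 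For ``$\Leftarrow$'', assume $a\in\wcw(\pi)$, fix $O\in\PU{\cor}{\pi}$ with $O^{-1}[a]=i^*$, and assume $\sat{\CL}(\cor_{m+1-i^*})=0$. We will construct a single profile $P$ such that (i) $a$ beats every other alternative in $P$ pairwise; (ii) for each round $j<i^*$, restricted to the alternatives still present along $O$, the unique $\cor_{m+1-j}$‑score‑minimizer inside the current $\pi$‑minimizer set is $O[j]$; and (iii) restricted to $O[i^*,m]$, the unique $\cor_{m+1-i^*}$‑score‑minimizer inside the $\pi$‑minimizer set is $a$ — obtaining (iii) from a profile witnessing $\sat{\CL}(\cor_{m+1-i^*})=0$, reversed so that $a$ is the Condorcet winner and a loser, and (ii) from cyclic forcing gadgets as in the construction in the proof of Claim~\ref{claim:MRSE-AS}. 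Then for all sufficiently small $\delta>0$, in $\pi+\delta\hist(P)$ the alternative $a$ is the strict Condorcet winner and the process is forced to eliminate $O[1],\dots,O[i^*-1]$ and then $a$ in round $i^*$, so $a\notin\cor(\pi+\delta\hist(P))$; letting $\delta\to0$ yields a sequence in $\region{\CWlose}{\cor}$ converging to $\pi$.

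The hard part will be this last ``$\Leftarrow$'' direction. Because MRSE rules are not additive one cannot simply add an overwhelming perturbation: to converge to $\pi$ the perturbation $\delta\hist(P)$ must be $o(1)$ relative to $\pi$, so it can only break ties inside each round's $\pi$‑minimizer set, never override $\pi$'s strict preferences. The delicate points are therefore to exhibit a single $P$ that simultaneously controls the pairwise Condorcet structure and, round by round, steers the eliminations along $O$ and forces $a$ out in round $i^*$, and to verify that $\pi$'s own contribution cannot defeat this steering — which it cannot, precisely because $a\in\wcw(\pi)$ and $O\in\PU{\cor}{\pi}$ guarantee that at round $j$ the intended loser $O[j]$ (resp.\ $a$ at round $i^*$) already belongs to the $\pi$‑minimizer set, leaving $P$ only to pin it down uniquely. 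Making this round‑by‑round bookkeeping precise, including the stability of $\PU{\cor}{\cdot}$ and $\PULR{\cor}{\cdot}$ under the perturbation, is the technical core; the remaining implications are comparatively routine once the preservation‑under‑limits facts are set up.
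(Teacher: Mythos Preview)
Your proposal is correct and takes essentially the same approach as the paper: the four implications are handled identically (limits along a fixed elimination order for the ``$\Rightarrow$'' directions, explicit perturbations for the ``$\Leftarrow$'' directions, and the same reversal trick converting a Condorcet-loser-as-winner profile into a Condorcet-winner-as-loser one). The only differences are in the specific perturbation profiles---you use the positive symmetric ``$a$-on-top'' profile $P^*$ for $\region{\CWwin}{\cor}$ and a positive $P$ for $\region{\CWlose}{\cor}$, whereas the paper uses the negative perturbations $-\hist(\{O\})$ and $-\hist(P')$ (stacking $O[1],\dots,O[i^*-1]$ on top of the $\CL$-witness) respectively---but these are equivalent implementations of the same round-by-round forcing idea.
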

\begin{proof} The proof for the $\region{\CWwin}{\cor }$ part is similar to the proof of Claim~\ref{claim:closure-CWW-CWL-scoring}. We present the formal proof below for completeness. 

\paragraph{\bf \boldmath $\left[\pi\in \closure{\region{\CWwin}{\cor}} \right] \Rightarrow \left[\wcw(\pi)\cap  \cor(\pi)\ne \emptyset\right]$.} Suppose $\pi\in \closure{\region{\CWwin}{\cor}} $, which means that exists a sequence $(\vec x_1,x_2,\ldots)$ in $\region{\CWwin}{\cor}$ that converges to $\pi$. It follows that there exists an alternative $a\in\mathbb \ma$ and a subsequence of $(\vec x_1,\vec x_2,\ldots)$, denoted by $(\vec x_1',x_2',\ldots)$, and $O\in\ml(\ma)$ where $O[m]=a$, such that for every $j\in\mathbb N$, $\cwinner(\vec x_j')=\{a\}$ and $O\in\PU{\cor}{\vec x_j'}$. This means that the following holds. 
\begin{itemize}
\item $a$ is a weak Condorcet winner under $\pi$. 
\item $a\in \cor(\pi)$. More precisely, $O\in \PU{\cor}{\pi}$. To see this, recall that $O\in\PU{\cor}{\vec x_j'}$ is equivalent to
$$\forall 2\le i\le m, O[i]\in\arg\min\nolimits_{b}\score_{\cor_i}(\vec x_j'|_{O[i,m]},b)$$
Therefore, the same relationship holds for $\pi$, namely
$$\forall 2\le i\le m, O[i]\in\arg\min\nolimits_{b}\score_{\cor_i}(\pi|_{O[i,m]},b),$$
which means that $O\in \PU{\cor}{\pi}$.
\end{itemize}
Therefore, $a$ is a weak Condorcet winner as well as a $\cor$ co-winner, which implies that $\wcw(\pi)\cap  \cor(\pi)\ne \emptyset$.

\paragraph{\bf \boldmath $\left[\pi\in \closure{\region{\CWwin}{\cor }}\right]  \Leftarrow \left[\wcw(\pi)\cap  \cor(\pi)\ne \emptyset\right]$.} Suppose $\wcw(\pi)\cap  \cor (\pi)\ne \emptyset$ and let $a\in \wcw(\pi)\cap  \cor (\pi)$.  We will explicitly construct a sequence of vectors in $\region{\CWwin}{\cor }$ that converges to $\pi$. 
Because $a\in \cor (\pi)$, there exists a parallel universe $O\in\PU{\cor}{\pi}$ such that $O[m]=a$. Let $\vec x = -\hist(\{O\})$, i.e.~we will use  ``negative''  $O$ to break ties, so that for every $1\le i\le m-1$, $O[i]$ is eliminated in round $i$.  For any $\delta>0$, it is not hard to verify that $O\in\PU{\cor}{\pi+\delta \vec x}$. In fact, $\PU{\cor}{\pi+\delta \vec x} = \{O\}$, i.e.
$$\forall 2\le i\le m, \{O[i]\}=\arg\min\nolimits_{b}\score_{\cor_i}((\pi+\delta \vec x)|_{O[i,m]},b),$$
which means that   $\{a\}=\cor(\pi+\delta \vec x)$. Notice that $a$ is the Condorcet winner under $\pi+\delta \vec x$ for any sufficiently small $\delta>0$. Therefore, for any sufficiently small $\delta>0$ we have $\pi+\delta \vec x\in \region{\CWwin}{\cor }$. Because the sequence $(\pi+\vec x, \pi+\frac 12\vec x,\ldots)$ in $\region{\CWwin}{\cor }$ converges to $\pi$, we have $\pi\in \closure{\region{\CWwin}{\cor }}$. 

\paragraph{\bf \boldmath $\left[\pi\in \closure{\region{\CWlose}{\cor }}\right] \Rightarrow \left[\exists  a\in \wcw(\pi)\text{ and }i^*\in \PULR{\cor}{\pi,a}\text{ s.t. }\sat{\CL}(\cor_{m+1-i^*})=0\right]$.} Suppose $\pi\in \closure{\region{\CWlose}{\cor }} $, which means that there exists a sequence $(\vec x_1,\vec x_2,\ldots)$ in $\region{\CWlose}{\cor }$ that converges to $\pi$. It follows that there exists $a\in\mathbb \ma$, $O\in\ml(\ma)$ with $O[m]\ne a$, and a subsequence of $(\vec x_1,\vec x_2,\ldots)$, denoted by $(\vec x_1',\vec x_2',\ldots)$ such that for every $j\in\mathbb N$, $\cwinner(\vec x_j')=\{a\}$ and $O\in\PU{\cor}{\vec x_j'}$. Let $i^*=O^{-1}[a]$, i.e.~$i^*$ is the round where $a$ loses in the parallel universe $O$, which means that for every $j\in\mathbb N$,
$$ a\in \arg\min\nolimits_{b}\score_{\cor_{m+1-i^*}}(\vec x_j'|_{O[i^*,m]},b).$$
Notice that $a$ is the Condorcet winner among $O[i^*,m]$. This means that $\cor_{m+1-i^*}$ does not satisfy the Condorcet loser criterion, because for any sufficiently large $\psi>0$, $a$ is the Condorcet loser as well as a co-winner in  $\psi\cdot \hist(O[i^*,m])-\vec x_j'|_{O[i^*,m]}$.
Because $(\vec x_1',\vec x_2',\ldots)$ converges to $\pi$, it is not hard to verify that $a\in \wcw(\pi)$ and $O\in \PU{\cor}{\pi}$. Therefore, we have $a\in \wcw(\pi)$,  $i^*\in \PULR{\cor}{\pi,a}$, and $\sat{\CL}(\cor_{m+1-i^*})=0$.

\paragraph{\bf \boldmath $\left[\pi\in \closure{\region{\CWlose}{\cor }}\right] \Leftarrow \left[\exists  a\in \wcw(\pi)\text{ and }i^*\in \PULR{\cor}{\pi,a}\text{ s.t. }\sat{\CL}(\cor_{m+1-i^*})=0\right]$.}   Let $a\in \wcw(\pi)$ and $i^*\in \PULR{\cor}{\pi,a}$ such that $\sat{\CL}(\cor )=0$. Furthermore, we let $O^*\in \PU{\cor}{\pi}$ denote the parallel universe such that $O[i^*]=a$. Because $\cor_{m+1-i^*}$ does not satisfy the Condorcet loser criterion, there exists profile $P_a$ over $O[i^*,m]$ where $a$ is the Condorcet loser but $a\in \cor_{m+1-i^*}(P_a)$. In fact, there exists a profile $P_a^*$ where $a$ is the Condorcet loser but $\{a\}=\cor_{m+1-i^*}(P^*)$, i.e.~$a$ is the unique winner under $P_a^*$. To see this, let $\sigma$ denote an arbitrary cyclic permutation among $O[i^*+1,m]$, and let 
$$P = \{\sigma^i( a\succ O[i^*+1,m]): 1\le i\le m-i^*\}$$
It is not hard to verify that the score of $a$ is strictly larger than the score of any other alternative under $P$. Therefore, when $\delta>0$ is sufficiently small, $a$ is the Condorcet loser as well as the unique winner under $P_a^*=P_a+\delta\cdot P$.  Now, we define a profile $P'$ over $\ma$ by stacking $O[1,i^*-1]$ on top of each (fractional) ranking in $P_a^*$. In other words, a ranking $[O[1]\succ \cdots\succ O[i^*-1]\succ R^*]$ is in $P'$ if and only if $R^*\in P_a^*$, and the two rankings have the same weights (in $P'$ and $P_a^*$, respectively).

Let $\vec x = -\hist(P')$. It is not hard to verify that for any $\delta>0$, $a$ is the Condorcet winner under $\pi+\delta\vec x$ and in the first $i^*$ rounds of the execution of $\cor$, $O[1],O[2],\ldots,O[i^*]$ are eliminated in order. In particular, $O[i^*] =a$ is eliminated in the $i^*$-th round, which means that $a\notin \cor(\pi+\delta\vec x)$. Consequently, $\pi+\delta\vec x\in  {\region{\CWlose}{\cor }}$. Notice that $(\pi+\frac1j \vec x: j\in\mathbb N)$ is a sequence in $ \region{\CWlose}{\cor }$ that converges to $\pi$, which means that $\pi\in \closure{\region{\CWlose}{\cor }} $. This proves Claim~\ref{claim:closure-CWW-CWL-MRSE}.
\end{proof}

We now apply Claim~\ref{claim:closure-CWW-CWL-MRSE} to simplify the conditions in Lemma~\ref{lem:sCC-GISR}. 
\begin{itemize}
\item $\condition{\text{RS}}(\cor, \pi)$. By definition, this holds if and only if $\pi\notin  \closure{\region{\CWlose}{\cor}}$, which is equivalent to $\nexists  a\in \wcw(\pi)\text{ and }i^*\in \PULR{\cor}{\pi,a}\text{ s.t. }\sat{\CL}(\cor_{m+1-i^*})=0$. In other words,  for all $a\in\wcw(\pi)$ and all $i^*\in\PULR{\cor}{\pi,a}$, $\cor_{m+1-i^*}$ satisfies {\sc Condorcet loser}, or equivalently, $\forall  a\in \wcw(\pi)\text{ and }\forall i^*\in \PULR{\cor}{\pi,a}, \sat{\CL}(\cor_{m+1-i^*})=1$.  
\item $\condition{\text{NRS}}(\cor, \pi)$. By definition, this holds if and only if  $\almostCW(\pi)\ne \emptyset$ and $\pi \notin \closure{\region{\CWwin}{\cor}}$, which is equivalent to $\almostCW(\pi)\ne \emptyset$ and $\wcw(\pi)\cap  \cor(\pi)= \emptyset$. The latter is equivalent to $\wcw(\pi)\cap (\ma\setminus \cor(\pi))=\wcw(\pi)$. We note that when $\almostCW(\pi)\ne \emptyset$, we have $\wcw(\pi)=\almostCW(\pi)$. Therefore, $\condition{\text{NRS}}(\cor, \pi)$ is equivalent to $|\almostCW(\pi)\cap (\ma\setminus \cor(\pi))|=2$.
\end{itemize}
Theorem~\ref{thm:sCC-MRSE} follows after Lemma~\ref{lem:sCC-GISR} with the simplified conditions discussed above.
\end{proof}

\section{Materials for Section~\ref{sec:CC}: Smoothed  {\sc Participation}}

\subsection{Lemma~\ref{lem:sPar-GSR} and Its Proof}
\label{app:proof-lem:sPar-GSR}
We first introduce some notation to present the theorem. 
\begin{dfn}[\bf \boldmath $\oplus$ operator]
For any pair of signatures $\vec t_1,\vec t_2\in \sk$, we define $\vec t_1\oplus\vec t_2$ to be the following signature:
$$\forall k\le K, [\vec t_1\oplus\vec t_2]_k = \left\{\begin{array}{ll} 
[\vec t_1]_k &\text{if }[\vec t_1]_k = [\vec t_2]_k\\
0 &\text{otherwise}
\end{array}
\right.
$$
\end{dfn}
For example, when $K=3$, $\vec t_1 = (+,-,0)$, and $\vec t_2 = (+,0,0)$, we have $\vec t_1\oplus\vec t_2 = (+,0,0)$. By definition, we have $\vec t_1\unlhd \vec t_1\oplus\vec t_2$ and  $\vec t_2\unlhd \vec t_1\oplus\vec t_2$.
 
\begin{dfn}[\bf \boldmath $\violation{\Par}{r}{n}$ and $\ell_n$]
\label{dfn:vio}
For any GSR $r$ and any $n\in\mathbb N$, we define
$$\violation{\Par}{r}{n} = \left\{\signH(P)\oplus\signH(P\setminus\{R\}):P\in\ml(\ma)^n, R\in \ml(\ma), r(P\setminus\{R\}) \succ_R r(P)\right\}$$
$$\ell_n = m!-\max\nolimits_{\vec t\in \violation{\Par}{r}{n}: \exists \pi\in\conv(\Pi),\text{ s.t. } \vec t \unlhd \signH(\pi)}\dim(\ppolyz{\vec t})$$
\end{dfn}
In words, $\violation{\Par}{r}{n}$ consists of all signatures $\vec t$ that is obtained by combining two feasible signatures, i.e.,  $\signH(P)$ and $\signH(P\setminus\{R\})$, by the $\oplus$ operator, where $P$ and $R$ constitutes an violation of $\Par$.  Notice that $r(P\setminus\{R\}) \succ_R r(P)$ implicitly assumes that $P$ contains an $R$ vote. Then, $\ell_n$ is defined to be $m!$ minus the maximum dimension of polyhedron $\ppoly{\vec t}$, among all $\vec t$  in $\violation{\Par}{r}{n}$ that refines $\signH(\pi)$ for some $\pi\in\conv(\Pi)$.

\begin{lem}[\bf Smoothed $\Par$:  Int-GSR]
\label{lem:sPar-GSR}
{
Let $\mm= (\Theta,\ml(\ma),\Pi)$ be a strictly positive and closed single-agent preference model, let $r$ be an int-GSR. For any $n\in\mathbb N$, 
$$\satmin{\Par}{\Pi}(r,n) = \left\{\begin{array}{ll}
1&\text{if } \violation{\Par}{r}{n} =\emptyset\\
1- \exp(-\Theta(n)) &\text{otherwise if }\forall \pi\in\conv(\Pi)\text{ and } \vec t\in \violation{\Par}{r}{n}, \vec t \ntrianglelefteq \signH(\pi)\\ 
1- \Theta(n^{-\ell_n/2}) &\text{otherwise, i.e. }\exists \pi\in\conv(\Pi)\text{ and } \vec t\in \violation{\Par}{r}{n}\text{ s.t. } \vec t \unlhd \signH(\pi)
\end{array}\right.$$
}
\end{lem}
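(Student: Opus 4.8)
The plan is to treat $\Par$-violation exactly as $\CC$-violation was treated in Lemma~\ref{lem:sCC-GISR}: model the set of violating histograms as a finite union of polyhedra $\upoly$ and invoke the $\sup$-part of \cite[Theorem~2]{Xia2021:How-Likely}, using $\satmin{\Par}{\Pi}(r,n) = 1 - \sup_{\vec\pi\in\Pi^n}\Pr(\vXp\in\upoly)$. Write $r = g\circ\signH$ with integer $\vH = (\vec h_1,\ldots,\vec h_K)$ and put $q = m!$; since $r$ is a GSR, $r(P)$ depends only on $\signH(\hist(P))$ and $r(P\setminus\{R\})$ only on $\signH(\hist(P) - \vec e_R)$, where $\vec e_R$ is the unit vector indexed by $R\in\ml(\ma)$. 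First I would, for every pair of feasible signatures $\vec t_1,\vec t_2\in\fs$ and every $R$ with $g(\vec t_2)\succ_R g(\vec t_1)$, define the polyhedron $\ppoly{\vec t_1,\vec t_2,R}$ of all $\vec x\in\mathbb R^{q}$ with $\signH(\vec x)=\vec t_1$, $\signH(\vec x - \vec e_R)=\vec t_2$, and $x_R\ge 1$; its defining inequalities are those of $\ppoly{\vec t_1}$, a copy of those of $\ppoly{\vec t_2}$ with $\vec x$ replaced by $\vec x-\vec e_R$ (which merely shifts the right-hand sides by integers), and $-x_R\le -1$, in complete analogy with $\ppoly{G_1,R,G_2}$ of Appendix~\ref{sec:modeling-par}. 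One then checks that $\sat{\Par}(r,P)=0$ iff $\hist(P)$ lies in the finite union $\upoly$ of all such polyhedra, so that the reduction holds and \cite[Theorem~2]{Xia2021:How-Likely} applies ($\vH$ integral, $\Pi$ closed and strictly positive).

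The heart of the argument is then to identify $\alpha_n$ of $\upoly$ with the data in the statement. I would show, first, that a triple $(\vec t_1,\vec t_2,R)$ gives a polyhedron that is \emph{active at $n$} iff some $P\in\ml(\ma)^n$ containing an $R$-vote has $\signH(P)=\vec t_1$, $\signH(P\setminus\{R\})=\vec t_2$ and $g(\vec t_2)\succ_R g(\vec t_1)$ --- i.e.\ iff $\vec t_1\oplus\vec t_2$ belongs to $\violation{\Par}{r}{n}$; this is why that set is defined through the $\oplus$-operator and indexed by $n$. Second, passing to recession cones annihilates the bounded inequalities, so the characteristic cone of $\ppoly{\vec t_1,\vec t_2,R}$ is $\ppolyz{\vec t_1}\cap\ppolyz{\vec t_2}\cap\{x_R\ge 0\}$; using the $\oplus$-operator and Proposition~\ref{prop:refine-oplus-property} one verifies $\ppolyz{\vec t_1}\cap\ppolyz{\vec t_2} = \ppolyz{\vec t_1\oplus\vec t_2}$, and since every $\pi\in\conv(\Pi)$ is strictly positive, whenever $\pi$ lies in $\ppolyz{\vec t_1\oplus\vec t_2}$ the half-space $\{x_R\ge 0\}$ is dimension-neutral, so $\dim(\ppolyz{\vec t_1,\vec t_2,R}) = \dim(\ppolyz{\vec t})$ for $\vec t=\vec t_1\oplus\vec t_2$. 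Combining this with Claim~\ref{claim:poly-t}(ii) ($\pi\in\ppolyz{\vec t}$ iff $\vec t\unlhd\signH(\pi)$) and the weights of the activation graph (Definition~\ref{dfn:activation-graph}), I obtain: $\alpha_n = -\infty$ iff $\violation{\Par}{r}{n}=\emptyset$; $-\infty<\alpha_n<0$ iff $\violation{\Par}{r}{n}\ne\emptyset$ but no $\vec t\in\violation{\Par}{r}{n}$ satisfies $\vec t\unlhd\signH(\pi)$ for any $\pi\in\conv(\Pi)$; and otherwise $\alpha_n = \max\{\dim(\ppolyz{\vec t}) : \vec t\in\violation{\Par}{r}{n},\ \exists\pi\in\conv(\Pi),\ \vec t\unlhd\signH(\pi)\} = q - \ell_n$.

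Finally, feeding these three cases into the $\sup$-part of \cite[Theorem~2]{Xia2021:How-Likely} gives $\sup_{\vec\pi}\Pr(\vXp\in\upoly)$ equal to $0$, $\exp(-\Theta(n))$, and $\Theta(n^{(\alpha_n - q)/2}) = \Theta(n^{-\ell_n/2})$ respectively, and subtracting from $1$ yields the three cases of Lemma~\ref{lem:sPar-GSR}. The main obstacle I anticipate is the bookkeeping in the middle paragraph: verifying cleanly that the single-vote shift by $\vec e_R$ and the constraint $x_R\ge 1$ leave the recession cone equal to $\ppolyz{\vec t_1\oplus\vec t_2}$ up to a dimension-neutral half-space, and --- more delicately --- that the family of triples yielding polyhedra active at $n$ is \emph{exactly} encoded by $\violation{\Par}{r}{n}$, so that the $n$-dependence of $\violation{\Par}{r}{n}$ and of $\ell_n$ lines up with that of $\alpha_n$. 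Everything else is a routine transcription of the machinery already used for Lemma~\ref{lem:sCC-GISR} and Lemma~\ref{lem:categorization}.
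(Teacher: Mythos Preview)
Your proposal is correct and follows essentially the same route as the paper. The paper likewise builds the polyhedra $\ppoly{\vec t_1,R,\vec t_2}$ (its Definition~\ref{dfn:H-t-R-t}) with exactly the three blocks you describe, proves the analogue of your recession-cone identity and dimension statement as Claim~\ref{claim:H-t-R-t}, and then computes $\alpha_n^*$ in the same three cases; the only cosmetic difference is that the paper phrases the final step through the categorization lemma (Lemma~\ref{lem:categorization}) applied to the pair $(\upoly,\aupoly)$ and then observes that $\beta_n$ is irrelevant, whereas you go straight to the $\sup$-part of \cite[Theorem~2]{Xia2021:How-Likely} on the violation set---these are equivalent. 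One small imprecision worth tightening: your ``iff'' between activity of a specific triple $(\vec t_1,\vec t_2,R)$ and membership of $\vec t_1\oplus\vec t_2$ in $\violation{\Par}{r}{n}$ is only a one-way implication (different triples can yield the same $\oplus$-sum), but the $\alpha_n$ computation only needs the two one-sided statements, which you have.
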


Applying Lemma~\ref{lem:sPar-GSR} to a voting rule $r$ often involves the following steps. First, we choose an GSR representation of $r$ by specifying the $\vH$ and $g$, though according to Lemma~\ref{lem:sPar-GSR} the asymptotic bound does not depend on such choice. Second, we characterize  $\violation{\Par}{r}{n}$ and verify whether it is empty. If $ \violation{\Par}{r}{n}$ is empty then the $1$ case holds. Third, if $\violation{\Par}{r}{n}$ is non-empty but none of $\vec t\in\violation{\Par}{r}{n}$ refines $\signH(\pi)$ for any $\pi\in\conv(\Pi)$, then the VL case holds. Finally, if neither $1$ nor VL case holds, then the L case holds, where the degree of polynomial depends on $\ell_n$. Characterizing $\violation{\Par}{r}{n}$ and $\ell_n$ can be highly challenging, as it aims at summarizing all violations of $\Par$ for $n$-profiles (using signatures under $\vH$). 

\begin{proof} The high-level idea of the proof is similar to the proof of Lemma~\ref{lem:sCC-GISR}. In light of Lemma~\ref{lem:categorization}, the proof proceeds in the following three steps. {\bf Step 1.} Define $\upoly$ that characterizes the satisfaction of  {\sc Participation} of $r$, and an almost complement $\aupoly$ of $\upoly$. {\bf Step 2.}  Characterize possible values of  $\alpha^*_n$ and their conditions, and then notice that $\alpha^*_n$ is at most $m!-1$, which means that only the $1$, VL, or L case in Lemma~\ref{lem:categorization} hold. This means that the value of $\beta_n$  does not matter. {\bf Step 3.} Apply Lemma~\ref{lem:categorization}.

\paragraph{\bf Step 1.} Given two feasible signatures $\vec t_1,\vec t_2\in \fs$ and a ranking $R\in\ml(\ma)$, we first formally define a polyhedron $\ppoly{\vec t_1,R,\vec t_2}$  
to characterize the profiles whose signature is $\vec t_1$ and after removing a voter whose preferences are $R$, the signature of the new profile becomes $\vec t_2$.
\begin{dfn}[\bf \boldmath $\ppoly{\vec t_1,R,\vec t_2}$]
\label{dfn:H-t-R-t}
Given $\vH = (\vec h_1,\ldots,\vec h_K)\in (\mathbb Z^{d})^K$,  $\vec t_1,\vec t_2\in \fs$, and $R\in\ml(\ma)$, we let 
 $\pba{\vec t_1,R,\vec t_2}=\left[\begin{array}{l}-\hist(R)\\ \pba{\vec t_1}\\ \pba{\vec t_2}\end{array}\right]$,   $\pvbb{\vec t_1,R,\vec t_2} = [-1, \underbrace{\pvbb{\vec t_1}}_{\text{for }\pba{\vec t_1}},\underbrace{\pvbb{\vec t_2}+\hist(R)\cdot \pba{\vec t_2}}_{\text{for }\pba{\vec t_2}} ]$ and
$$\ppoly{\vec t_1,R,\vec t_2} = \{\vec x\in {\mathbb R}^{m!}: \pba{\vec t_1,R,\vec t_2}\cdot \invert{\vec x}\le \invert{\pvbb{\vec t_1,R,\vec t_2}} \}$$
\end{dfn}
Notice that $\hist(R)\in \{0,1\}^{m!}$ is the vector whose $R$-component is $1$ and all other components are $0$'s. The $\pba{\vec t_2}$ part  in Definition~\ref{dfn:H-t-R-t} is equivalent to $\pba{ \vec t_2}\cdot \invert{\vec x-\hist(R)}\le \invert{\pvbb{\vec t_2}}$.  We prove properties of $\ppoly{\vec t_1,R,\vec t_2}$  in the following claim.

\begin{claim}[\bf \boldmath Properties of $\ppoly{\vec t_1,R,\vec t_2}$]
\label{claim:H-t-R-t}
Given integer $\vH$. For  any $\vec t_1,\vec t_2\in \fs$, any $R\in\ml(\ma)$, 
\begin{enumerate}[label=(\roman*)]
\item for any integral profile $P$,  $\hist(P)\in \ppoly{\vec t_1,R,\vec t_2}$ if and only if $\signH(P)=\vec t_1$ and  $\signH(P\setminus\{R\})=\vec t_2$;
\item for any $\vec x\in\mathbb R_{\ge 0}^{m!}$, $\vec x\in {\ppolyz{\vec t_1,R,\vec t_2}}$ if and only if $ \vec t_1\oplus \vec t_2\unlhd \signH(\vec x)$;
\item If there exists $\vec x\in \ppolyz{\vec t_1,R,\vec t_2}$ such that $[\vec x]_R>0$, then $\dim(\ppolyz{\vec t_1,R,\vec t_2}) = \dim(\ppolyz{\vec t_1\oplus\vec t_2})$. Moreover, if $\vec t_1\ne \vec t_2$ and $\ppolyz{\vec t_1,R,\vec t_2}\ne\emptyset$, then $\dim(\ppolyz{\vec t_1,R,\vec t_2})\le m!-1$.
\end{enumerate}
\end{claim}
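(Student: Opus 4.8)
The plan is to prove Claim~\ref{claim:H-t-R-t} in three parts, mirroring the structure of the analogous results Claim~\ref{claim:poly-t}, Claim~\ref{claim:poly-a-t}, and Claim~\ref{claim:h-umg} that are already established in the excerpt. Part (i) is a direct unfolding of Definition~\ref{dfn:H-t-R-t}: the first row $-\hist(R)\cdot\invert{\vec x}\le -1$ encodes that $P$ contains an $R$-vote (equivalently $[\hist(P)]_R\ge 1$); the block $\pba{\vec t_1}\cdot\invert{\vec x}\le\invert{\pvbb{\vec t_1}}$ is exactly the condition $\signH(\hist(P)) = \vec t_1$ by Claim~\ref{claim:poly-t}(i); and the block $\pba{\vec t_2}\cdot\invert{\vec x}\le\invert{\pvbb{\vec t_2}+\hist(R)\cdot\pba{\vec t_2}}$ rewrites as $\pba{\vec t_2}\cdot\invert{(\vec x-\hist(R))}\le\invert{\pvbb{\vec t_2}}$, which by Claim~\ref{claim:poly-t}(i) applied to the profile $P\setminus\{R\}$ is exactly $\signH(\hist(P\setminus\{R\})) = \vec t_2$. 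I would write this out carefully as it is the base case, but it is routine.

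Part (ii) characterizes the characteristic cone. By definition $\ppolyz{\vec t_1,R,\vec t_2}$ drops all the $\vbb$ terms, so a vector $\vec x$ lies in it iff $-\hist(R)\cdot\invert{\vec x}\le 0$, $\pba{\vec t_1}\cdot\invert{\vec x}\le\invert{\vec 0}$, and $\pba{\vec t_2}\cdot\invert{\vec x}\le\invert{\vec 0}$. For $\vec x\in\mathbb R_{\ge 0}^{m!}$ the first inequality is automatic, so the condition reduces to $\vec x\in\ppolyz{\vec t_1}\cap\ppolyz{\vec t_2}$. I then need the identity $\ppolyz{\vec t_1}\cap\ppolyz{\vec t_2} = \ppolyz{\vec t_1\oplus\vec t_2}$, which follows from Claim~\ref{claim:poly-t}(ii): $\vec x\in\ppolyz{\vec t_j}$ iff $\vec t_j\unlhd\signH(\vec x)$, and $\vec t_1\unlhd\signH(\vec x)$ and $\vec t_2\unlhd\signH(\vec x)$ together are equivalent to $\vec t_1\oplus\vec t_2\unlhd\signH(\vec x)$ (on every coordinate $k$ where $\vec t_1$ and $\vec t_2$ agree on a nonzero value, that value must appear in $\signH(\vec x)$; on the coordinates where they disagree, $[\vec t_1\oplus\vec t_2]_k = 0$ imposes nothing). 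Combining gives $\vec x\in\ppolyz{\vec t_1,R,\vec t_2}$ iff $\vec t_1\oplus\vec t_2\unlhd\signH(\vec x)$.

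Part (iii) is where the real work lies. For the first assertion, I would argue both inclusions of dimension. The inclusion $\ppolyz{\vec t_1,R,\vec t_2}\subseteq\ppolyz{\vec t_1\oplus\vec t_2}$ is immediate from the defining inequalities (just drop the $-\hist(R)$ row and combine the $\pba{\vec t_1},\pba{\vec t_2}$ rows using part (ii)'s identity at the cone level), giving $\dim(\ppolyz{\vec t_1,R,\vec t_2})\le\dim(\ppolyz{\vec t_1\oplus\vec t_2})$. For the reverse, I use the hypothesis that there is $\vec x_0\in\ppolyz{\vec t_1,R,\vec t_2}$ with $[\vec x_0]_R>0$: given any $\vec y\in\ppolyz{\vec t_1\oplus\vec t_2}$, I want to show a full-dimensional piece around $\vec x_0$ inside $\ppolyz{\vec t_1,R,\vec t_2}$ translates; concretely, $\vec x_0 + \epsilon\,\vec y$ for small $\epsilon>0$ still satisfies $-\hist(R)\cdot\invert{(\vec x_0+\epsilon\vec y)}<0$ (since $[\vec x_0]_R>0$ is a strict inequality with slack) and still satisfies the $\pba{\vec t_1},\pba{\vec t_2}$ inequalities (both $\vec x_0$ and $\vec y$ satisfy them, and cones are closed under addition). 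So $\vec x_0+\epsilon\ppolyz{\vec t_1\oplus\vec t_2}\subseteq\ppolyz{\vec t_1,R,\vec t_2}$ for small $\epsilon$, giving $\dim(\ppolyz{\vec t_1,R,\vec t_2})\ge\dim(\ppolyz{\vec t_1\oplus\vec t_2})$, hence equality. For the ``moreover'' part, when $\vec t_1\ne\vec t_2$ there is a coordinate $k$ with $[\vec t_1]_k\ne[\vec t_2]_k$; WLOG $[\vec t_1]_k\ne 0$, say $[\vec t_1]_k = +$, so every $\vec x\in\ppoly{\vec t_1,R,\vec t_2}$ has $\vec h_k\cdot\vec x>0$ while $[\vec t_2]_k\ne +$ forces $\vec h_k\cdot(\vec x-\hist(R))\le 0$, i.e. $0<\vec h_k\cdot\vec x\le\vec h_k\cdot\hist(R)$; this pins $\vec h_k\cdot\vec x$ to a bounded interval. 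In the characteristic cone this becomes $\vec h_k\cdot\vec x = 0$ for all $\vec x\in\ppolyz{\vec t_1,R,\vec t_2}$ (the cone is the recession directions, and a bounded-below-and-above linear functional must vanish on it), so the cone lies in the hyperplane $\{\vec h_k\cdot\vec x = 0\}$ and hence $\dim(\ppolyz{\vec t_1,R,\vec t_2})\le m!-1$. The main obstacle is getting the dimension-counting in the first part of (iii) fully rigorous — in particular handling the case where $\ppolyz{\vec t_1\oplus\vec t_2}$ is not full-dimensional and making sure the perturbation argument respects the (possibly lower-dimensional) affine hull — but the slack coming from $[\vec x_0]_R>0$ is exactly what makes the $-\hist(R)$ constraint inactive and lets the argument go through cleanly.
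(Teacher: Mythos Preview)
Your proposal is correct. Parts (i) and (ii) match the paper's argument essentially verbatim: unfold Definition~\ref{dfn:H-t-R-t}, invoke Claim~\ref{claim:poly-t}, and observe that the $-\hist(R)$ row is vacuous on $\mathbb R_{\ge 0}^{m!}$.

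For the first assertion of Part (iii) you take a genuinely different route from the paper. The paper argues algebraically via \emph{essential equalities}: it observes that the row set of $\pba{\vec t_1,R,\vec t_2}$ coincides (as a set) with that of $\pba{\vec t_1\oplus\vec t_2}$ except for the extra row $-\hist(R)$, and the hypothesis $[\vec x_0]_R>0$ means $-\hist(R)$ is not an implicit equality; hence the two cones have the same rank of implicit equalities and thus the same dimension. Your argument is geometric: you show $\ppolyz{\vec t_1,R,\vec t_2}\subseteq\ppolyz{\vec t_1\oplus\vec t_2}$ directly, and for the reverse you perturb $\vec x_0$ along directions $\vec y\in\ppolyz{\vec t_1\oplus\vec t_2}$, using the slack $[\vec x_0]_R>0$ to keep the $-\hist(R)$ constraint satisfied. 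Both work; yours avoids the ``dimension $=$ ambient $-$ rank of essential equalities'' formula at the cost of a slightly longer perturbation step. One small point: your perturbation uses the identity $\ppolyz{\vec t_1\oplus\vec t_2}=\ppolyz{\vec t_1}\cap\ppolyz{\vec t_2}$ for arbitrary $\vec y$ (not just $\vec y\ge 0$), which you only established on $\mathbb R_{\ge 0}^{m!}$ in Part (ii); the extension is immediate from the row-set equality, but you should say so explicitly.

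For the ``moreover'' clause, your recession-cone argument and the paper's direct argument (that $[\vec t_1\oplus\vec t_2]_k=0$ forces both $\vec h_k$ and $-\vec h_k$ among the cone constraints) are two phrasings of the same observation. Your version as written passes through the polyhedron $\ppoly{\vec t_1,R,\vec t_2}$ and uses boundedness of $\vec h_k\cdot\vec x$ there, which tacitly assumes the polyhedron is nonempty; it is cleaner (and matches the paper) to argue directly on the cone constraints, where $[\vec t_1]_k=+$ gives $\vec h_k\cdot\vec x\ge 0$ and $[\vec t_2]_k\in\{-,0\}$ gives $\vec h_k\cdot\vec x\le 0$, forcing equality.
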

\begin{proof} 
Part (i) follows after the definition.  
Part (ii) also follows after the definition. Recall that $\vec x\in \ppolyz{\vec t_1,R,\vec t_2}$ if and only if  $\pba{\vec t_1}\cdot \invert{\vec x}\le \invert{\vec 0}$,  $\pba{\vec t_2}\cdot \invert{\vec x}\le \invert{\vec 0}$, and the $R$ component of $\vec x$ is non-negative, which is automatically satisfied for every $\vec x\in\mathbb R_{\ge 0}^{m!}$. The first sets of inequalities holds if and only if  $\pba{\vec t_1\oplus \vec t_2}\cdot \invert{\vec x}\le \invert{\vec 0}$.

To prove the first part of Part (iii), let $\ba_1^+$ and $\ba_2^+$ denote the essential equalities of $\pba{\vec t_1,R,\vec t_2} $ and $\pba{\vec t_1\oplus \vec t_2}$, respectively. We show that $\ba_1^+$ and $\ba_2^+$ contains the same set of row vectors (while some rows may appear different number of times in $\ba_1^+$ and $\ba_2^+$). As noted in the proof of Part (ii), the set of row vectors in $\pba{\vec t_1,R,\vec t_2} $ is the same as the set of row vectors in $\pba{\vec t_1\oplus \vec t_2}$, except that the former contains $-\hist(R)$. Recall that we have assumed that there exists  $\vec x\in \ppolyz{\vec t_1,R,\vec t_2}$ such that $[\vec x]_R>0$, which means that $-\hist(R)\cdot\invert{\vec x}$ does not hold for every vector in $\ppolyz{\vec t_1,R,\vec t_2}$. Therefore, $-\hist(R)$ is not a row in $\ba_1^+$, which means that $\ba_1^+$ and $\ba_2^+$ contains the same set of row vectors. Then, we have 
$$\dim(\ppolyz{\vec t_1,R,\vec t_2}) = m!-\rank(\ba_1^+)= m!-\rank(\ba_2^+)=\dim(\ppolyz{\vec t_1\oplus\vec t_2})$$

The second part of Part (iii) is proved by noticing that when $\vec t_1\ne \vec t_2$, $\vec t_1\oplus \vec t_2$ contains at least one $0$. Suppose $[\vec t_1\oplus \vec t_2]_k=0$.  This means that for all $\vec x\in \ppolyz{\vec t_1,R,\vec t_2}$, we have $\vec h_k\cdot \vec x =0$, which means that $\dim(\ppolyz{\vec t_1,R,\vec t_2})\le m!-1$.
\end{proof}

We now use $\ppoly{\vec t_1,R,\vec t_2}$ to define $\upoly$ and $\aupoly$.
\begin{dfn}[\bf \boldmath $\upoly$ and  $\aupoly$ for {\sc Participation}]
Given an int-GSR $r$ characterized by $\vH$ and $g$, we define
\begin{align*}
&\upoly =  \bigcup\nolimits_{\vec t_1,\vec t_2\in\fs, R\in\ml(\ma): r(\vec t_1)\succeq_R r(\vec t_2)}\ppoly{\vec t_1,R,\vec t_2}\\
&\aupoly =  \bigcup\nolimits_{\vec t_1,\vec t_2\in\fs, R\in\ml(\ma): r(\vec t_1)\prec_R r(\vec t_2)}\ppoly{\vec t_1,R,\vec t_2}
\end{align*}
\end{dfn}
In words, $\upoly$ consists of polyhedra $\ppoly{\vec t_1,R,\vec t_2}$ that characterize the histograms of profiles such that after any $R$-vote is removed, the winner under $r$ is not improved w.r.t.~$R$. $\aupoly$ consists of polyhedra $\ppoly{\vec t_1,R,\vec t_2}$ that characterize the histograms of profiles such that after removing an $R$-vote, the winner under $r$ is strictly improved w.r.t.~$R$. It is not hard to see that $\aupoly$ is an almost complement of $\upoly$. 

It follows from Claim~\ref{claim:H-t-R-t} (i) that for any $n$-profile $P$, $\Par$ is satisfied (respectively, dissatisfied) at $P$ if and only if $\hist(P)\in \upoly$ (respectively, $\hist(P)\in \aupoly$).

\paragraph{\bf \boldmath Step 2: Characterize  $\alpha^*_n$.} In this step we discuss the values and conditions for $\alpha_n^*$ (for $\aupoly$) in the following three cases.

\paragraph{\bf \boldmath $\alpha_n^* = -\infty$.} This case holds if and only if $\Par$ holds for all $n$-profiles, which is equivalent to $\violation{\Par}{r}{n} = \emptyset$.

\paragraph{\bf \boldmath $\alpha_n^* = -\frac{n}{\log n}$.} This case holds if and only if (1) $\Par$ is not satisfied at all $n$-profiles, which is equivalent to $\violation{\Par}{r}{n} \ne \emptyset$, and (2) the activation graph $\calG_{\Pi,\aupoly,n}$ does not contain any non-negative edges, which is equivalent to $\forall \pi\in\conv(\Pi)$ and $\forall \ppoly{\vec t_1,R,\vec t_2}\subseteq \aupoly$ that is active at $n$, we have $\pi\notin \ppolyz{\vec t_1,R,\vec t_2}$. We will prove that part (2) is equivalent to the following:
\begin{equation}\label{eq:alpha-n-star-cond-two}
\text{(2)} \Longleftrightarrow \left[\forall \pi\in\conv(\Pi)\text{ and } \vec t\in \violation{\Par}{r}{n}, \vec t \ntrianglelefteq \signH(\pi)\right]
\end{equation}

We first prove the ``$\Rightarrow$'' direction of (\ref{eq:alpha-n-star-cond-two}). Suppose for the sake of contradiction that this is not true. That is, $\calG_{\Pi,\aupoly,n}$ does not contain any non-negative edges and there exist $ \pi\in\conv(\Pi)$ and $\vec t\in \violation{\Par}{r}{n}$ such that $\vec t \ntrianglelefteq \signH(\pi)$. Let $P$ denote the $n$-profile such that $\signH(P)=\vec t_1$, $\signH(P\setminus\{R\})=\vec t_2$,  $r(P\setminus\{R\})\succ_R r(P)$, and $\vec t = \vec t_1\oplus \vec t_2$.  By Claim~\ref{claim:H-t-R-t} (i), $\hist(P)\in \ppoly{\vec t_1,R,\vec t_2}$, which means that $\ppoly{\vec t_1,R,\vec t_2}$ is active at $n$. By Claim~\ref{claim:H-t-R-t} (ii), $\hist(P)\in \ppolyz{\vec t_1,R,\vec t_2}$. These imply that the weight on the edge $(\pi, \ppoly{\vec t_1,R,\vec t_2})$ in $\calG_{\Pi,\aupoly,n}$ is non-negative (whose weight is $\dim(\ppolyz{\vec t_1,R,\vec t_2})$), which contradicts the assumption that (2) holds. 

Next, we prove the ``$\Leftarrow$'' direction of (\ref{eq:alpha-n-star-cond-two}). Suppose for the sake of contradiction that (2) does not hold, which means that there exists an edge $(\pi,\ppoly{\vec t_1,R,\vec t_2})$ in $\calG_{\Pi,\aupoly,n}$ whose weight is non-negative. Equivalently, $\ppoly{\vec t_1,R,\vec t_2}$ is active at $n$ and $\pi\in \ppolyz{\vec t_1,R,\vec t_2}$. By Claim~\ref{claim:H-t-R-t} (ii), $\vec t_1\oplus \vec t_2\in  \violation{\Par}{r}{n}$. Recall that  $\pi$ is strictly positive, and then
by Claim~\ref{claim:H-t-R-t} (ii), we have $t_1\oplus \vec t_2\unlhd\signH(\pi)$. However, this contradict the assumption. 

These prove (\ref{eq:alpha-n-star-cond-two}).

\paragraph{\bf \boldmath $\alpha_n^* >0$.}  For this case, we   prove
\begin{equation}
\label{eq:alpha-n-star-par}
\alpha_n^*  = \max\nolimits_{\vec t\in \violation{\Par}{r}{n}: \exists \pi\in\conv(\Pi),\text{ s.t. } \vec t \unlhd \signH(\pi)}\dim(\ppolyz{\vec t}),
\end{equation}
We first prove the ``$\le$'' direction in (\ref{eq:alpha-n-star-par}). For any  edge $(\pi,\ppoly{\vec t_1,R,\vec t_2})$ in $\calG_{\Pi,\aupoly,n}$ whose weight is non-negative, $\ppoly{\vec t_1,R,\vec t_2}$ is active at $n$. Therefore, there exists an $n$-profile $P$ such that $\hist(P)\in \ppoly{\vec t_1,R,\vec t_2}$. Let $\vec t = \vec t_1\oplus \vec t_2$. We have $\vec t \in \violation{\Par}{r}{n}$. By Claim~\ref{claim:H-t-R-t} (ii), we have $\vec t\unlhd \signH(\pi)$. By Claim~\ref{claim:H-t-R-t} (iii), we have $\dim(\ppolyz{\vec t_1,R,\vec t_2}) = \dim(\ppolyz{\vec t})$. Therefore,  
the ``$\le$'' direction in (\ref{eq:alpha-n-star-par}) holds.

Next, we prove the $\ge$ direction of (\ref{eq:alpha-n-star-par}). For any $\vec t\in  \violation{\Par}{r}{n}$ and $\pi\in\conv(\Pi)$ such that $\vec t \unlhd \signH(\pi)$, let $P$ denote an $n$-profile and let $R$ denote a ranking that justify $\vec t$'s membership in $\violation{\Par}{r}{n}$, and let $\vec t_1 = \signH(P)$ and $\vec t_2 = \signH(P\setminus\{R\})$, which means that $\vec t = \vec t_1\oplus\vec t_2$. By Claim~\ref{claim:H-t-R-t} (i), $\hist(P)\in \ppoly{\vec t_1,R,\vec t_2}\subseteq \aupoly$, which means that $\ppoly{\vec t_1,R,\vec t_2}$ is active at $n$. By Claim~\ref{claim:H-t-R-t} (ii), $\pi\in \ppolyz{\vec t_1,R,\vec t_2}$. By Claim~\ref{claim:H-t-R-t} (iii),  $\dim(\ppolyz{\vec t_1,R,\vec t_2}) = \dim(\ppolyz{\vec t})$. This means that the weight on the edge $(\pi,\ppoly{\vec t_1,R,\vec t_2})$ in $\calG_{\Pi,\aupoly,n}$ is $\dim(\ppolyz{\vec t})$, which implies    
the ``$\ge$'' direction in (\ref{eq:alpha-n-star-par}) holds.

Therefore,  (\ref{eq:alpha-n-star-par}) holds. Notice that by Claim~\ref{claim:H-t-R-t} (iii), $\alpha_n^*\le m!-1$.

\paragraph{\bf Step 3: Applying Lemma~\ref{lem:categorization}.}  Lemma~\ref{lem:sPar-GSR} follows after a straightforward application of Lemma~\ref{lem:categorization} and Step 2. Notice that $ \Pi_{\upoly,n}$ and $\beta_n$ are irrelevant in this proof because only the $1$, $1-\exp(n)$, and $1-\poly(n)$ cases will happen. This completes the proof of Lemma~\ref{lem:sPar-GSR}.\end{proof}

\subsection{Proof of Theorem~\ref{thm:sPar-mm-rp-sch}}
\label{app:proof-thm:sPar-mm-rp-sch}
Recall from Definition~\ref{dfn:eo-rule} that an EO-based rule is determined by the total preorder over edges in WMG w.r.t.~their weights. Theorem~\ref{thm:sPar-mm-rp-sch} characterizes smoothed $\Par$ for any EO-based int-GSR refinements of maximin, Ranked Pairs,  and Schulze. 

\begin{thm}[Smoothed $\Par$: maximin, Ranked Pairs,  Schulze]\label{thm:sPar-mm-rp-sch}
{
For any $m\ge 4$, any EO-based  int-GSR $r$ 
that is a refinement of maximin, STV, Schulze, or ranked Pairs,  and any strictly positive and closed $\Pi$ over $\ml(\ma)$ with $\piuni\in \conv(\Pi)$, there exists   $N\in\mathbb N$ such that for every $n\ge N$,  
$$\satmin{ \Par}{\Pi}(r,n ) = 1-  \Theta({\frac{1}{\sqrt n}})$$
}
\end{thm}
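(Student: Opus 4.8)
# Proof Proposal for Theorem~\ref{thm:sPar-mm-rp-sch}

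\textbf{Overall strategy.} The plan is to apply Lemma~\ref{lem:sPar-GSR} and show that for each rule $r$ in the list, the relevant case is the L case with $\ell_n = 1$, so that $\satmin{\Par}{\Pi}(r,n) = 1 - \Theta(n^{-1/2})$. Since $\piuni \in \conv(\Pi)$ and $\mm$ is strictly positive and closed, the three-way dichotomy in Lemma~\ref{lem:sPar-GSR} reduces to two tasks: (i) rule out the ``$1$'' and ``VL'' cases by exhibiting, for all sufficiently large $n$, a $\Par$-violating $n$-profile $P$ with a removed vote $R$ such that $\vec t_1 \oplus \vec t_2 \unlhd \signH(\piuni)$, where $\vec t_1 = \signH(P)$, $\vec t_2 = \signH(P\setminus\{R\})$; and (ii) pin down $\ell_n$ by showing that the maximum of $\dim(\ppolyz{\vec t})$ over the relevant signatures $\vec t \in \violation{\Par}{r}{n}$ is exactly $m!-1$, i.e.\ $\ell_n = 1$.

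\textbf{Step 1: the no-show construction near $\piuni$.} For each of maximin, ranked pairs, Schulze (and STV), I would build a ``base'' profile $Q$ on a constant number of votes exhibiting the no-show paradox, then pad it toward uniformity. Concretely, take a fixed profile $Q$ and a vote $R \in Q$ with $r(Q\setminus\{R\}) \succ_R r(Q)$; such $Q$ exist for all these rules by the classical incompatibility with $\Par$ for $m \ge 4$ (Moulin), and in fact one can choose $Q$ whose WMG has all pairwise margins nonzero and mutually distinct, so that removing $R$ changes the winner while keeping the edge-order ``generic.'' Then set $P = \lfloor n/|Q| - c\rfloor \cdot Q + (\text{copies of a cyclic filler block})$ adjusted so $|P| = n$; for $n$ large this keeps $P$ a no-show instance with the same signature pattern. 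The key point is that $\signH(P)$ and $\signH(P\setminus\{R\})$ lie in $\fsatomic$ (all hyperplane inner products strictly nonzero), hence $\vec t_1 \oplus \vec t_2 = \vec t_1 \unlhd \signH(\piuni)$ trivially since $\signH(\piuni) = \vec 0$ and $\vec 0$ is refined by everything. This is where I would need to be careful that the GSR representations used (for maximin $\vH_{\eo}$; for Copeland $\vH_{\copeland}$; for ranked pairs/Schulze $\vH_{\eo}$; for MRSE the $\scorediff{B,a,b}$ hyperplanes) all have the property that the base construction avoids every hyperplane, so the filler padding does not create spurious ties.

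\textbf{Step 2: computing $\ell_n$.} By Definition~\ref{dfn:vio}, $\ell_n = m! - \max_{\vec t} \dim(\ppolyz{\vec t})$ over $\vec t \in \violation{\Par}{r}{n}$ with $\vec t \unlhd \signH(\piuni) = \vec 0$, i.e.\ over \emph{all} $\vec t \in \violation{\Par}{r}{n}$. I claim $\max_{\vec t} \dim(\ppolyz{\vec t}) = m!-1$. The lower bound: the signature $\vec t = \vec t_1 \oplus \vec t_2$ from Step~1 has exactly one zero coordinate (the hyperplane separating $r(P)$ from $r(P\setminus\{R\})$ on the relevant comparison flips sign when $R$ is removed, and by genericity only one such hyperplane can be ``crossed'' by a single-vote deletion when everything else is strictly ordered), so by Claim~\ref{claim:H-t-R-t}(iii) (second part) $\dim(\ppolyz{\vec t_1,R,\vec t_2}) = \dim(\ppolyz{\vec t}) = m!-1$. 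The upper bound: for \emph{any} $\vec t \in \violation{\Par}{r}{n}$ we have $\vec t = \signH(P) \oplus \signH(P\setminus\{R\})$ with $\signH(P) \ne \signH(P\setminus\{R\})$ (else removing $R$ couldn't change the winner), so $\vec t$ has at least one zero coordinate, giving $\dim(\ppolyz{\vec t}) \le m!-1$. Hence $\ell_n = 1$ and Lemma~\ref{lem:sPar-GSR} yields $1 - \Theta(n^{-1/2})$.

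\textbf{Main obstacle.} The hard part is Step~1 done uniformly across the four rule families: one must verify that a single vote deletion, against an otherwise strictly-ordered background (edge-order for maximin/RP/Schulze, elimination-score gaps for STV), flips \emph{exactly one} relevant hyperplane, and that such a no-show instance can be padded toward $\piuni$ without either destroying the paradox or introducing additional ties. For STV one must also handle the PUT structure so that the deletion changes the unique eliminated loser in exactly one round. I expect this to require a separate short lemma per rule, each essentially a ``generic no-show witness'' lemma, analogous in spirit to Claim~\ref{claim:CW-score-winner-diff}; the rest then follows mechanically from Lemma~\ref{lem:sPar-GSR} and Claim~\ref{claim:H-t-R-t}.
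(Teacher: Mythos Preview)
Your high-level plan---apply Lemma~\ref{lem:sPar-GSR} and show the L case holds with $\ell_n=1$---matches the paper, and your upper bound $\dim(\ppolyz{\vec t})\le m!-1$ is correct. But the core construction in Step~1 is broken in two linked ways. First, the algebra: if $\vec t_1,\vec t_2\in\fsatomic$ and $\vec t_1\ne\vec t_2$, then by definition $\vec t_1\oplus\vec t_2$ has a $0$ exactly at the coordinates where they differ, so $\vec t_1\oplus\vec t_2\ne\vec t_1$; you silently correct this in Step~2, contradicting Step~1. Second, and fatally for $\vH_{\eo}$: a strict sign flip from removing one vote is impossible. Each entry of $\vec h_{e_1,e_2}=\pair_{a_1,b_1}-\pair_{a_2,b_2}$ lies in $\{-2,0,2\}$, so deleting one vote changes the inner product by at most $2$ in absolute value; moreover $w_P(e_1)-w_P(e_2)=2\bigl(P[a_1\succ b_1]-P[a_2\succ b_2]\bigr)$ is always even. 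Hence one cannot go from a strictly positive value to a strictly negative one without landing on $0$. Consequently there is no $\Par$-violation under $\vH_{\eo}$ with $\vec t_1,\vec t_2$ both atomic, and your ``generic no-show witness'' cannot exist for maximin, ranked pairs, or Schulze in the form you describe.

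The paper takes the complementary route: it builds (via McGarvey) a profile $P$ whose WMG has \emph{exactly one} pair of edges with equal weight and all other edge-weight comparisons strict (with gaps at least $4$), so $\vec t_1=\signs{\vH_{\eo}}(P)$ has exactly one $0$ component. The single tie is placed so that the irresolute rule has two co-winners at $P$; the profile is also seeded with a vote $R$ and its reverse, and whichever of the two co-winners $r$ selects, removing the appropriate one of $R,\rev{R}$ breaks the tie in the direction that makes the $R$-preferred alternative the unique winner. Then $\vec t_2$ is an atomic refinement of $\vec t_1$, so $\vec t_1\oplus\vec t_2=\vec t_1$ with exactly one zero and $\dim(\ppolyz{\vec t_1})=m!-1$. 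No padding toward $\piuni$ is needed, since every signature refines $\signs{\vH_{\eo}}(\piuni)=\vec 0$. Your strategy is salvageable if you replace ``both atomic with one sign flipping'' by ``one built-in edge-weight tie that removal breaks''; the rule-specific work is then to locate that tie so that breaking it one way versus the other switches the maximin/RP/Schulze winner, which is exactly what the paper's three explicit WMG constructions do.
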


\begin{proof} Because $r$ is EO-based, w.l.o.g., we assume that its int-GSR representation uses $\vH_{\eo}$ (Definition~\ref{dfn:heo}). 

\paragraph{\bf Overview.}  The proof is done by applying Lemma~\ref{lem:sPar-GSR} to show that for  any sufficiently large $n$, the $1$ case and the VL case do not happen, and $\ell_n=1$ in the L case. This is done by explicitly constructing an $n$-profile $P$, under which $\Par$ is violated when an $R$-vote is removed (which means that $\vec t= \signs{\vH_{\eo}}(P)\oplus \signs{\vH_{\eo}}(P\setminus\{R\})\in \violation{\Par}{r}{n}$ and therefore the $1$ case does not hold), then show that $\vec t\unlhd\piuni$, or more generally, any signature refines $\signs{\vH_{\eo}}(\piuni)$   (which means that the VL case does not hold),  and finally prove that $\dim(\ppolyz{\vec t}) = m!-1$, which means that $\ell_n = 1$.

\paragraph{\bf Maximin: $r$ refines $\overline{\maximin}$.}  We first prove the proposition for $2\nmid n$, then show how to modify the proof for $2\mid n$. As mentioned in the  overview, the proof proceeds in the following steps.

\paragraph{\bf \boldmath Constructing $P_{\maximin}$ and $R_{\maximin}$ that violates $\Par$.}  Let $G_{\maximin}$ denote the following  weighted directed graph with weights $w_{\maximin}$, where the weights on all edges are odd and different, except on $4\ra 1$ and $3\ra 2$.
\begin{itemize} 
\item $w_{\maximin}(4,1) = w_{\maximin}(3,2) = 5$, $w_{\maximin}(1,2)=1$, $w_{\maximin}(1,3)=9$,  $w_{\maximin}(2,4) = 13$, and  $w_{\maximin}(3,4) = 17$;
\item for every $5\le i\le m$, $w_{\maximin}(1,i)\ge 21$, $w_{\maximin}(2,i)\ge 21$, $w_{\maximin}(3,i)\ge 21$, and $w_{\maximin}(4,i)\ge 21$;
\item the weights on other edges are assigned arbitrarily. Moreover, the difference between any pair of edges is at least $4$, except that the weights on $4\ra 1$ and $3\ra 2$ are the same.
\end{itemize}
See the middle graph in Figure~\ref{fig:Par-MM}  for an example of $m=5$.

\begin{figure}[htp]\centering
\includegraphics[width=1\textwidth]{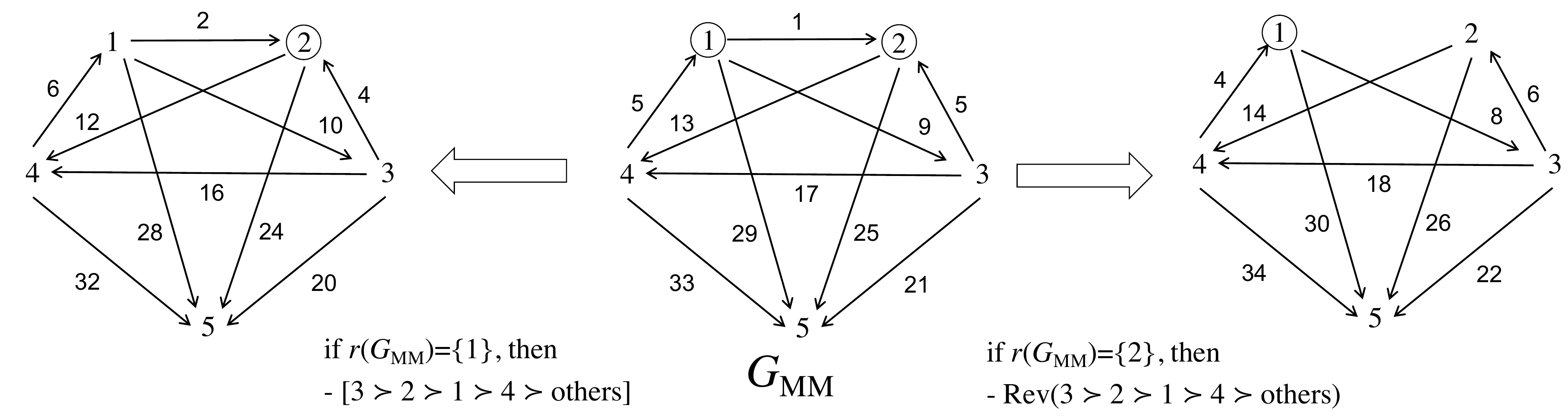} 
   \caption{WMGs for minimax. $\imaximin$ (co)-winners are circled.\label{fig:Par-MM}}
\end{figure}


It follows from McGarvey's theorem~\citep{McGarvey53:Theorem} that for any $n>m^4$ and $2\nmid n$, there exists an $n$-profile $P_{\maximin}$ whose WMG is $G_{\maximin}$. Therefore, for any $n>m^4+2$ and $2\nmid n$, there exists an $n$-profile $P_{\maximin}$ whose WMG is $G_{\maximin}$, and $P_{\maximin}$ includes the following two rankings:
$$[3\succ 2\succ 1\succ 4\succ \others], \rev{3\succ 2\succ 1\succ 4\succ \others},$$
where for any ranking $R$, $\rev{R}$ denotes its reverse ranking. 
We now show that $\sat{\Par}(r,P_{\maximin})=0$, which implies that the $1$ case does not happen. Notice that the min-score of alternatives $1$ and $2$ are the highest, which means that $r(P_{\maximin})\subseteq \{1,2\}$. 
\begin{itemize}
\item If $r(P_{\maximin})=\{1\}$, then we let $R_{\maximin}=[3\succ 2\succ 1\succ 4\succ \others]$. It follows that in $P_{\maximin}-R_{\maximin}$, the min-score of $2$ is strictly higher than   the min-score of any other alternative, which means that $r(P_{\maximin}\setminus\{R_{\maximin}\})=\{2\}$. Notice that $2\succ_{R_{\maximin}} 1$, which means that $\sat{\Par}(r,P_{\maximin})=0$.   See the left graph in Figure~\ref{fig:Par-MM} for an illustration.
\item If $r(P_{\maximin})=\{2\}$, then we let $R_{\maximin}=\rev{3\succ 2\succ 1\succ 4\succ \others}$. It follows that in $P_{\maximin}-R_{\maximin}$, the min-score of $1$ is strictly higher than any the min-score of other alternatives, which mean that $r(P_{\maximin}\setminus\{R_{\maximin}\})=\{1\}$. Notice that $1\succ_{R_{\maximin}} 2$, which again means that $\sat{\Par}(r,P_{\maximin})=0$. See the right graph in Figure~\ref{fig:Par-MM} for an illustration.
\end{itemize}

Let $\vec t_1=\signs{\vH_\eo}(P_{\maximin})$, $R=R_{\maximin}$ and $\vec t_2=\signs{\vH_\eo}(P_{\maximin}\setminus\{R_{\maximin}\})$ .  We have  $\vec t_1\oplus\vec t_2\in \violation{\Par}{r}{n}\ne\emptyset$, which means that the $1$ case of Lemma~\ref{lem:sPar-GSR} does not hold.  
The VL case of Lemma~\ref{lem:sPar-GSR} does not hold because $\vec t_1\oplus\vec t_2\unlhd \signs{\vH_{\eo}}(\piuni)$ and $\piuni\in\conv(\Pi)$.

\paragraph{\bf\boldmath Prove $\dim(\ppolyz{\vec t_{\maximin}})=m!-1$.} Let $e_1=(4,1)$ and $e_2=(3,2)$.
Notice $[\vec t_1]_{(e_1,e_2)}=[\vec t_1]_{(e_2,e_1)}=0$, where  $[\vec t_1]_{(e_1,e_2)}$ is the  $(e_1,e_2)$ component of $\vec t_1$, and all other components of $\vec t_1$ are non-zero.  Also notice that $\vec t_2$ is a refinement of $\vec t_1$. This means that $\vec t_1 \oplus \vec t_2 = \vec t_1$. Notice that $\hist(P_{\maximin})$ is an inner point of $\ppolyz{\vec t_1}$, such that all inequalities are strict except the two inequalities about $e_1$ and $e_2$. This means that the essential equalities of $\pba{\vec t_1\oplus\vec t_2}$ are equivalent to 
$$(\pair_{4,1} - \pair_{3,2})\cdot \vec x = \vec 0$$ 
Therefore, $ \dim(\ppolyz{\vec t_1\oplus \vec t_2}) = m!-1$. 

The maximin part of the proposition when $2\nmid n$ then follows after Lemma~\ref{lem:sPar-GSR}. When $2\mid n$, we only need to modify $G_{\maximin}$ in Figure~\ref{fig:Par-MM} by increasing all positive weights by $1$. 

\paragraph{\bf Ranked Pairs: $r$ refines $\overline{\rp}$.} The proof is similar to the proof of the maximin part, except that a different graph $G_{\rp}$ (with weight $w_{\rp}$) is used, as shown in the middle graph in Figure~\ref{fig:Par-RP}. Formally, when $2\nmid n$,  let $G_{\rp}$ denote the following  weighted directed graph, where the weights on all edges are odd and different, except on $4\ra 1$ and $3\ra 4$.
\begin{itemize} 
\item $w_{\rp}(4,1) = w_{\rp}(3,4) = 9$, $w_{\rp}(1,2)=5$, $w_{\rp}(1,3)=13$,  $w_{\rp}(2,4) = 17$, and  $w_{\rp}(2,3) = 21$;
\item for any $5\le i\le m$, $w_{\rp}(1,i)\ge 25$,  $w_{\rp}(2,i)\ge 25$, $w_{\rp}(3,i)\ge 25$, and $w_{\rp}(4,i)\ge 25$;
\item the weights on other edges are assigned arbitrarily. Moreover, the difference between any pair of edges is at least $4$, except that the weights on $4\ra 1$ and $3\ra 4$ are the same.
\end{itemize}

See the middle graph in Figure~\ref{fig:Par-RP}  for an example of $m=5$.

\begin{figure}[htp]\centering
\includegraphics[width=1\textwidth]{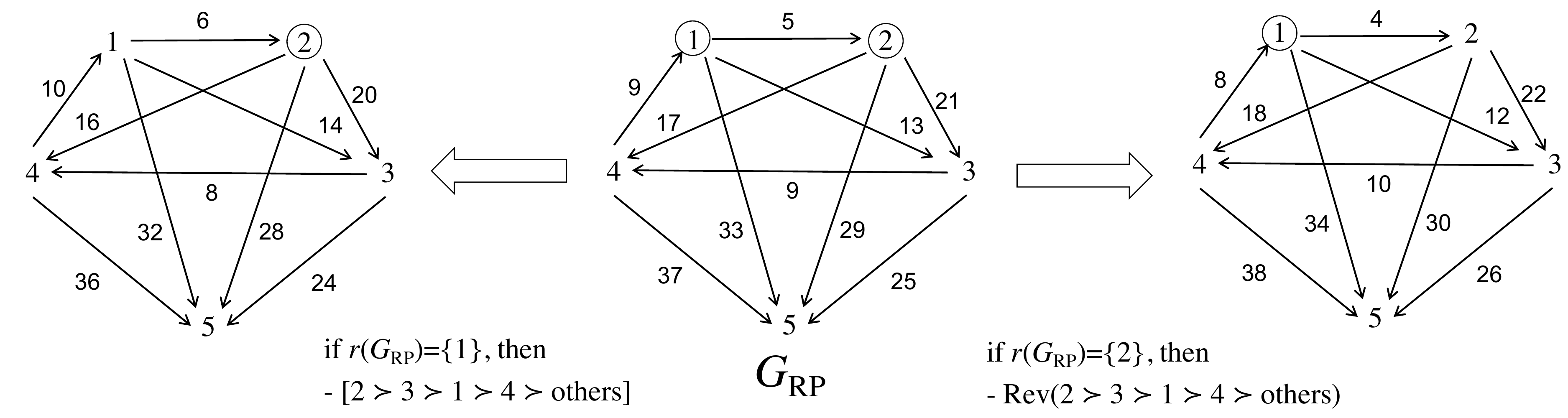} 
   \caption{WMGs for ranked pairs. $\irp$ (co)-winners are circled. \label{fig:Par-RP}}
\end{figure}

Again, according to McGarvey's theorem~\citep{McGarvey53:Theorem} that for any $n>m^4$ and $2\nmid n$, there exists an $n$-profile $P_{\rp}$ whose WMG is $G_{\rp}$. Therefore, for any $n>m^4+2$ and $2\nmid n$, there exists an $n$-profile $P_{\rp}$ whose WMG is $G_{\rp}$, and $P_{\rp}$ includes the following two rankings:
$$[2\succ 3\succ 1\succ 4\succ \others], \rev{3\succ 2\succ 1\succ 4\succ \others}$$
We now show that $\sat{\Par}(r,P_{\rp})=0$, which implies that the $1$ case does not happen. Notice that depending on how the tie between $3\ra 4$ and $4\ra 1$ are broken, the $\overline{\rp}$ winner can be $1$ or $2$, which means that $\overline{\rp}(P_{\rp}) = \{1,2\}$. 
\begin{itemize}
\item If $r(P_{\rp})=\{1\}$, then we let $R_{\rp}=[2\succ 3\succ 1\succ 4\succ \others]$. It follows that in $\wmg(P_{\rp}-R_{\rp})$, $4\ra 1$ has higher weight than $3\ra 4$, which means that $4\ra 1$ is fixed before $3\ra 4$, and therefore  $r(P_{\rp}\setminus\{R_{\rp}\})=\{2\}$. Notice that $2\succ_{R_{\rp}} 1$, which means that $\sat{\Par}(r,P_{\rp})=0$.  See the left graph in Figure~\ref{fig:Par-RP} for an illustration.
\item If $r(P_{\rp})=\{2\}$, then we let $R_{\rp}=\rev{2\succ 3\succ 1\succ 4\succ \others}$. It follows that in $\wmg(P_{\rp}\setminus\{R_{\rp}\})$, $3\ra 4$ has higher weight than $4\ra 1$, which means  $r(P_{\rp}-R_{\rp})=\{1\}$. Notice that $1\succ_{R_{\rp}} 2$, which means that $\sat{\Par}(r,P_{\rp})=0$.   See the right graph in Figure~\ref{fig:Par-RP} for an illustration.
\end{itemize}
The proof for $\ell_n=1$ is similar to the proof for the maximin part. The only difference is that now let $e_1=(4,1)$, $e_2=(3,4)$, $\vec t_1=\sign_{\vH_\eo}(P_{\rp})$, and $\vec t_2=\sign_{\vH_\eo}(P_{\rp}\setminus\{R_{\rp}\})$. When $2\mid n$, we only need to modify $G$ in Figure~\ref{fig:Par-MM} (b) such that all positive weights are increased by $1$.

\paragraph{\bf Schulze: $r$ refines $\overline{\schulze}$.} The proof is similar to the proof of the maximin part, except that a different graph $G_{\schulze}$ is used, as shown in the  middle graph in Figure~\ref{fig:Par-Schulze}. Formally, when $2\nmid n$,  let $G_{\schulze}$ denote the following  weighted directed graph, where the weights on all edges are odd and different, except on $4\ra 1$ and $2\ra 3$.
\begin{itemize} 
\item $w_{\schulze}(4,1) = w_{\schulze}(2,3) = 9$, $w_{\schulze}(1,2)=13$, $w_{\schulze}(1,3)=5$,  $w_{\schulze}(2,4) = 1$, and  $w_{\schulze}(3,4) = 17$;
\item for any $5\le i\le m$, $w_{\schulze}(1,i)\ge 21$, $w_{\schulze}(2,i)\ge 21$, $w_{\schulze}(3,i)\ge 21$, and $w_{\schulze}(4,i)\ge 21$;
\item the weights on other edges are assigned arbitrarily. Moreover, the difference between any pair of edges is at least $4$, except that the weights on $4\ra 1$ and $3\ra 4$ are the same.
\end{itemize}
See the middle graph in Figure~\ref{fig:Par-Schulze}  for an example of $m=5$.

\begin{figure}[htp]\centering
\includegraphics[width=1\textwidth]{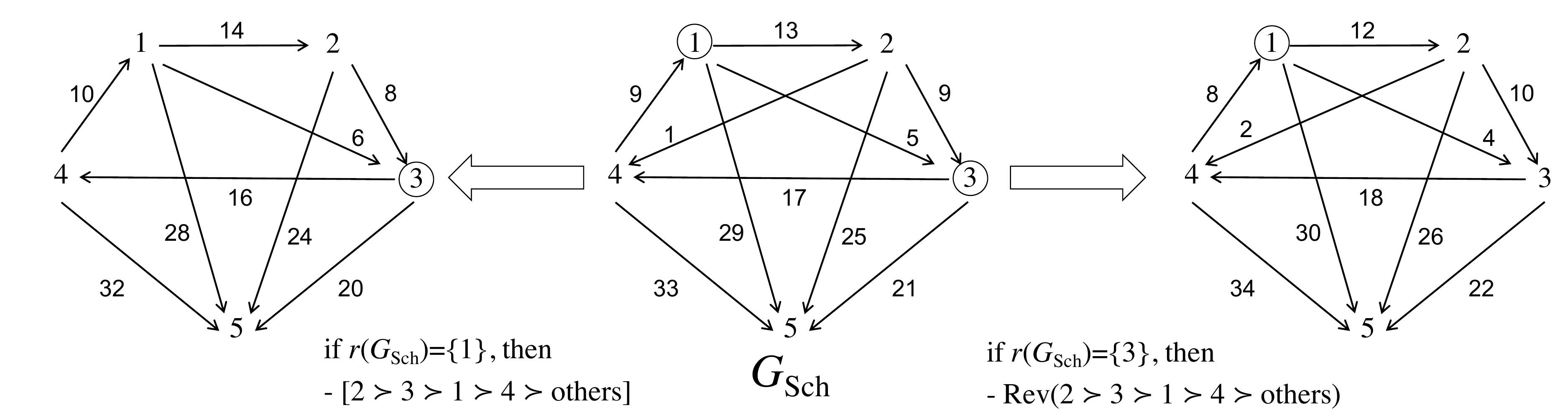} 
   \caption{WMGs for Schulze. $\ischulze$ (co)-winners are circled.\label{fig:Par-Schulze}}
\end{figure}

Again, according to McGarvey's theorem~\citep{McGarvey53:Theorem} that for any $n>m^4$ and $2\nmid n$, there exists an $n$-profile $P_{\schulze}$ whose WMG is $G_{\schulze}$. Therefore, for any $n>m^4+2$ and $2\nmid n$, there exists an $n$-profile $P_{\schulze}$ whose WMG is $G_{\schulze}$ and $P_{\schulze}$ includes the following two rankings:
$$[2\succ 3\succ 1\succ 4\succ \others], \rev{3\succ 2\succ 1\succ 4\succ \others}$$
We now show that $\sat{\Par}(r,P_{\schulze})=0$, which implies that the $1$ case does not happen. Notice that $s[1,3]=s[3,1]=9$, and for any alternative $a\in\ma\setminus\{1,3\}$ we have $s[1,a]>s[a,1]$. Therefore,  $\overline{\schulze}(P_{\schulze}) = \{1,3\}$. 
\begin{itemize}
\item If $r(P_{\schulze})=\{1\}$, then we let $R_{\schulze}=[2\succ 3\succ 1\succ 4\succ \others]$. It follows that in $ P_{\schulze}-R_{\schulze}$ we have $s[1,3] = 8<10 = s[3,1]$, which means that  $r(P_{\schulze}\setminus\{R_{\schulze}\})=\{3\}$. Notice that $3\succ_{R_{\schulze}} 1$, which means that $\sat{\Par}(r,P_{\schulze})=0$.  See the left graph in Figure~\ref{fig:Par-Schulze} for an illustration.
\item If $r(P_{\schulze})=\{3\}$, then we let $R_{\schulze}=\rev{2\succ 3\succ 1\succ 4\succ \others}$. It follows that in $ P_{\schulze}\setminus\{R_{\schulze}\}$, we have $s[1,3] = 10>9 = s[3,1]$, which means that  $r(P_{\schulze}-R_{\schulze})=\{1\}$.   Notice that $1\succ_{R_{\schulze}} 3$, which means that $\sat{\Par}(r,P_{\schulze})=0$.    See the right graph in Figure~\ref{fig:Par-Schulze} for an illustration.
\end{itemize}
The proof for $\ell_n=1$ is similar to the proof for the maximin part. The only difference is that now let $e_1=(4,1)$, $e_2=(2,3)$, $\vec t_1=\sign_{\vH_\eo}(P_{\schulze})$, and $\vec t_2=\sign_{\vH_\eo}(P_{\schulze}\setminus\{R_{\schulze}\})$. When $2\mid n$, we only need to modify $G_{\schulze}$ in  Figure~\ref{fig:Par-Schulze}  such that all positive weights are increased by $1$. 

This completes the proof of Theorem~\ref{thm:sPar-mm-rp-sch}.
\end{proof}

\subsection{Proof of Theorem~\ref{thm:sPar-copeland}}
\label{app:proof-thm:sPar-copeland}
A voting rule $r$ is said to be {\em UMG-based}, if the winner only depends on UMG of the profile. Formally, $r$ is UMG-based if for all pairs of profiles $P_1$ and $P_2$ such that $\umg(P_1)=\umg(P_2)$, we have $r(P_1)=r(P_2)$.

\begin{thm}[\bf Smoothed $\Par$: Copeland$_\alpha$]
\label{thm:sPar-copeland}
{
For any $m\ge 4$, any UMG-based int-GSR refinement of $\overline{\copeland}$, denoted by  $\copeland$, and any strictly positive and closed $\Pi$ over $\ml(\ma)$ with $\piuni\in \conv(\Pi)$, there exists $N\in\mathbb N$ such that for every $n\ge N$,  
$$\satmin{ \Par}{\Pi}(\copeland,n ) = 1-  \Theta({\frac{1}{\sqrt n}})$$
}
\end{thm}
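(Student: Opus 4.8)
The plan is to derive the result from the categorization of smoothed $\Par$ for int-GSRs, Lemma~\ref{lem:sPar-GSR}, exactly as Theorem~\ref{thm:sPar-mm-rp-sch} is derived from it, taking for $\copeland$ the UMG-based int-GSR representation given by $\vH_{\copeland}$ and $g_{\copeland}$ in Definition~\ref{dfn:copeland-GISR}. Recall that Lemma~\ref{lem:sPar-GSR} gives $\satmin{\Par}{\Pi}(\copeland,n)=1$ if $\violation{\Par}{\copeland}{n}=\emptyset$, gives $1-\exp(-\Theta(n))$ if $\violation{\Par}{\copeland}{n}\ne\emptyset$ but no $\vec t$ in it refines $\signs{\vH_{\copeland}}(\pi)$ for any $\pi\in\conv(\Pi)$, and gives $1-\Theta(n^{-\ell_n/2})$ otherwise. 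So the target $1-\Theta(n^{-1/2})$ is equivalent to proving, for all sufficiently large $n$, that (i) $\violation{\Par}{\copeland}{n}\ne\emptyset$, (ii) some $\vec t\in\violation{\Par}{\copeland}{n}$ satisfies $\vec t\unlhd\signs{\vH_{\copeland}}(\pi)$ for some $\pi\in\conv(\Pi)$, and (iii) $\ell_n=1$.

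\textbf{The easy parts (ii) and half of (iii).} Step (ii) is where the hypothesis $\piuni\in\conv(\Pi)$ is used: every pairwise margin of $\piuni$ is $0$, so $\signs{\vH_{\copeland}}(\piuni)=\vec 0$, and hence \emph{every} signature $\vec t\in\sk$ trivially refines $\signs{\vH_{\copeland}}(\piuni)$; thus as soon as $\violation{\Par}{\copeland}{n}\ne\emptyset$, the $1-\exp(-\Theta(n))$ case of Lemma~\ref{lem:sPar-GSR} is excluded and $\ell_n=m!-\max_{\vec t\in\violation{\Par}{\copeland}{n}}\dim(\ppolyz{\vec t})$. Moreover $\ell_n\ge 1$ always: since Copeland$_\alpha$ and the refinement $\copeland$ are UMG-based, any $\Par$ violation $(P,R)$ must change the UMG, so $\vec t_1=\signs{\vH_{\copeland}}(P)$ and $\vec t_2=\signs{\vH_{\copeland}}(P\setminus\{R\})$ differ, whence $\vec t_1\oplus\vec t_2$ has a zero coordinate and $\dim(\ppolyz{\vec t_1\oplus\vec t_2})\le m!-1$. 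So everything reduces to exhibiting, for every large $n$ of either parity and the given $\alpha\in[0,1]$, a no-show paradox instance $(P,R)$ \emph{for every} refinement $\copeland$ such that $\umg(P)$ and $\umg(P\setminus\{R\})$ differ in \emph{exactly one} unordered pair; then $\vec t=\vec t_1\oplus\vec t_2=\vec t_1$ (one zero coordinate), $\ppolyz{\vec t}$ is cut by the single row $\vec h_{(a,b)}=2\pair_{a,b}$ and thus $\dim(\ppolyz{\vec t})=m!-1$, giving $\ell_n=1$.

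\textbf{The construction.} Following the template of the maximin/ranked pairs/Schulze proof, I would fix a weighted majority graph $G$ on $\{1,\dots,m\}$ in which two alternatives $1$ and $2$ are tied for the top Copeland$_\alpha$ score, all other alternatives are strictly below, and every margin is large and of the parity forced by $n$ \emph{except} at one ``fragile'' pair whose margin is $0$ (when $2\mid n$) or $\pm 1$ (when $2\nmid n$). By McGarvey's theorem~\citep{McGarvey53:Theorem}, $G$ is the WMG of some $n$-profile $P$ for all large $n$ of the right parity, and by padding the realizing profile with a vote $R$ and its reverse (or a companion ranking) we may assume the needed votes occur in $P$. Removing one vote then changes the UMG only at the fragile pair, and $R$ is chosen so that removing it collapses the Copeland$_\alpha$ co-winner set to $\{2\}$ (resp.\ $\{1\}$) and ranks the new winner above the old one, making abstention profitable for $R$ no matter how the refinement broke the $\{1,2\}$-tie at $P$ --- using, as in Theorem~\ref{thm:sPar-mm-rp-sch}, two companion removals, one for each way the refinement can break that tie. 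Conditions (i)--(iii) then hold and Lemma~\ref{lem:sPar-GSR} yields $\satmin{\Par}{\Pi}(\copeland,n)=1-\Theta(n^{-1/2})$.

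\textbf{Main obstacle.} The hard part is designing this WMG gadget. Unlike maximin, ranked pairs and Schulze, toggling a single edge incident to $1$ or $2$ between ``tie'' and ``win/loss'' shifts a Copeland$_\alpha$ score by an $\alpha$-dependent amount ($\alpha$, $1-\alpha$, or $0$), and when $2\mid n$ the only pairs a one-vote removal can toggle are the ties of $G$, of which (to keep the change confined to a single pair) $G$ may contain only one. Consequently the extreme values $\alpha\in\{0,1\}$, each parity of $n$, and the small case $m=4$ must be handled by separately tuned gadgets, exploiting in the $2\nmid n$ case the extra freedom that the removed vote can be made to agree with only one of several margin-$1$ edges, and ensuring throughout that the $\{1,2\}$-tie argument goes through for \emph{arbitrary} (not necessarily neutral) tie-breaking; verifying all these cases is the bulk of the work.
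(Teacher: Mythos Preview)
Your framework is exactly the paper's: apply Lemma~\ref{lem:sPar-GSR} to the UMG-based representation $(\vH_{\copeland},g_{\copeland})$, use $\piuni\in\conv(\Pi)$ and $\signs{\vH_{\copeland}}(\piuni)=\vec 0$ to rule out the $1$ and $1-\exp(-\Theta(n))$ cases, and then exhibit for every large $n$ a $\Par$-violation $(P,R)$ in which $\umg(P)$ and $\umg(P\setminus\{R\})$ differ at exactly one pair, forcing $\ell_n=1$. (One harmless slip: $\vec t_1\oplus\vec t_2$ equals whichever of $\vec t_1,\vec t_2$ already carries the zero at the fragile coordinate, not always $\vec t_1$; your dimension claim is unaffected.)

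The gadget, however, is where the paper diverges from your plan. Instead of two co-winners and two companion removals, the paper fixes the UMG $G^*$ with a $3$-cycle $1\to 2\to 3\to 1$ and $1,2,3$ each beating $4,\dots,m$, so $\overline{\copeland}(G^*)=\{1,2,3\}$. Because the refinement is UMG-based, $\copeland(G^*)$ is already determined; by relabeling the cycle one may assume it is $\{1\}$. One then realizes $G^*$ with all margins $\ge 3$ except a single fragile margin (weight $1$ on $2\to 3$ for odd $n$, $\alpha>0$), and removes one vote $R=[4\succ 2\succ 3\succ 1\succ\cdots]$ that turns $2\to 3$ into a tie; now $3$ is the unique Copeland$_\alpha$ winner with score $m-2+\alpha$, and $3\succ_R 1$. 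The $\alpha=0$ and even-$n$ subcases are minor variants of this picture.

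This three-co-winner design sidesteps precisely the difficulty you flag for the two-co-winner plan. With $\vH_{\copeland}$, a single fragile edge of margin $\pm 1$ (odd $n$) can be toggled to a tie only by removing a vote that \emph{agrees} with the current majority on that edge, so one fragile pair supports only one of your two companion removals; your hedge ``several margin-$1$ edges'' is a valid fix, but it multiplies the case analysis. The paper's $3$-cycle needs just one removal per subcase because the ``which co-winner did the refinement pick'' branching is absorbed by permuting $\{1,2,3\}$, and the abstaining voter prefers the \emph{next} vertex along the cycle to the refinement's pick rather than the other co-winner. So your route works, but the paper's is shorter.
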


\begin{proof} 
Because $\copeland$ is UMG-based, we can represent  $\copeland$ as a GSR with  the $\vH_{\copeland}$ defined in Definition~\ref{dfn:copeland-GISR}, which consists of $m\choose 2$ hyperplanes that represents the UMG of the profile. The high-level idea behind the proof is similar to the proof of Theorem~\ref{thm:sPar-mm-rp-sch}: we first explicitly construct a violation of $\Par$ under $\copeland$, then show that the dimension of the characteristic cone of the corresponding polyhedron is $m!-1$. 

Let $G^*$ denote the complete unweighted directed graph over $\ma$ that consists of the following edges.
\begin{itemize}
\item $1\ra 2$, $2\ra 3$, $3\ra 1$.
\item For any $i\in \{4,\ldots,m\}$, there are three edges $1\ra i$, $2\ra i$, $3\ra i$.
\item The edges among alternatives in $i\in \{4,\ldots,m\}$ are assigned arbitrarily. 
\end{itemize}
For example, Figure~\ref{fig:Par-Copeland} (a) illustrates $G^*$ for $m=4$.
\begin{figure}[htp]\centering
\includegraphics[width=.2\textwidth]{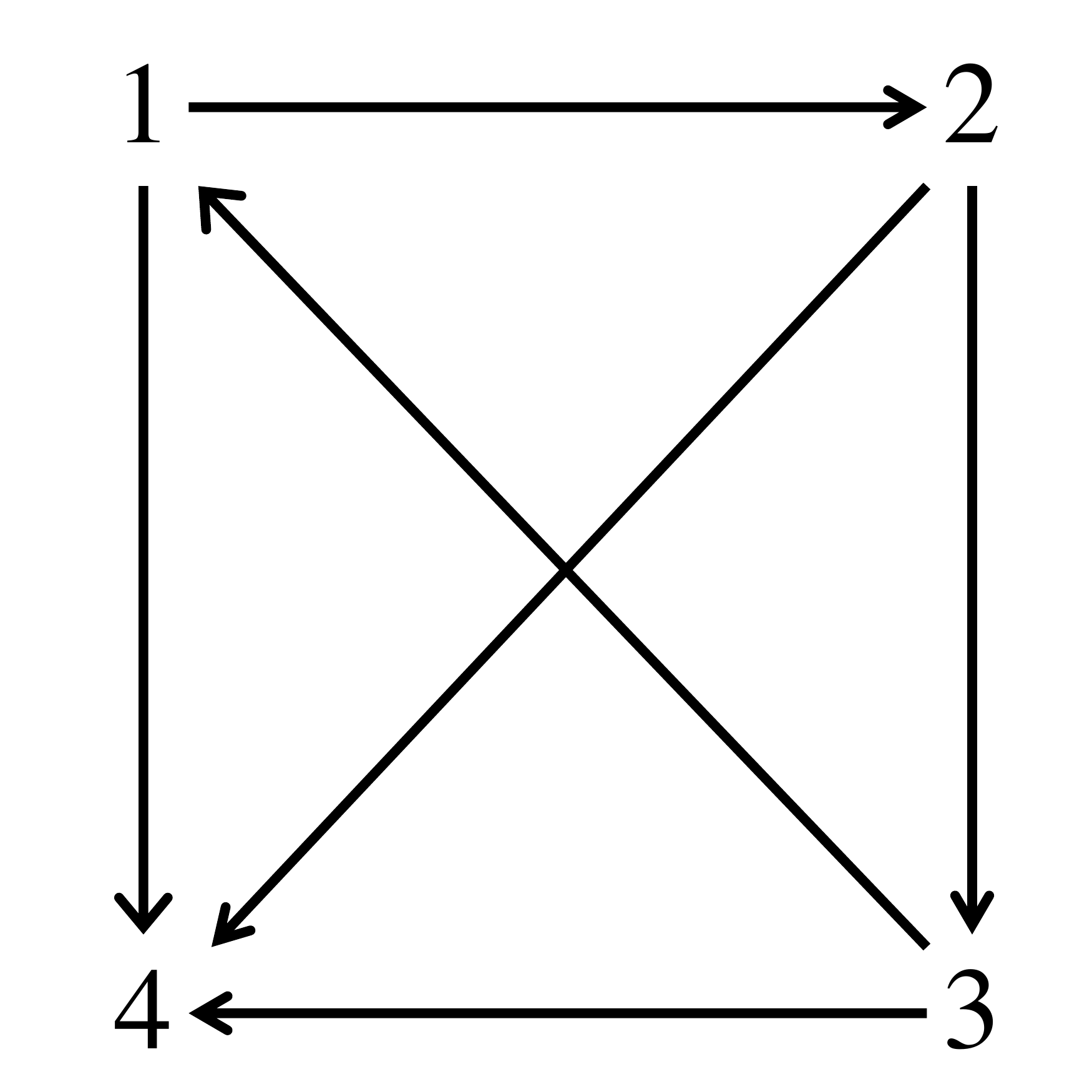}
   \caption{$G^*$ for Copeland with $m=4$.\label{fig:Par-Copeland}}
\end{figure}
Let $P$ denote any profile whose UMG is $G^*$. It is not hard to verify that $\overline{\copeland}(P) = \{1,2,3\}$. 
W.l.o.g.~let $\copeland(P)=\{1\}$. 

\paragraph{\bf\boldmath $2\nmid n$ case.} The proof is done  for the following two sub-cases: $\alpha>0$ and $\alpha=0$. 
\paragraph{\bf \boldmath $2\nmid n$ and $\alpha>0$.} Let $G_{\copeland}$ (with weights $w_{\copeland}$) denote the following weighted directed graph over $\ma$ whose UMG is $G^*$,  the weight on $2\ra 3$ is $1$, and the weights on other edges are $3$ or $-3$.
\begin{itemize}
\item $w_{\copeland}(2,3)=1$ and $w_{\copeland}(3,1)= w_{\copeland}(1,2)=3$.
\item For any $4\le i\le m$, $w_{\copeland}(1,i)=w_{\copeland}(2,i)=w_{\copeland}(3,i)=3$.
\item The weights on other edges are $3$ or $-3$. 
\end{itemize}
\begin{figure}[htp]\centering 
\begin{tabular}{ccc} 
  \includegraphics[width=.2 \textwidth]{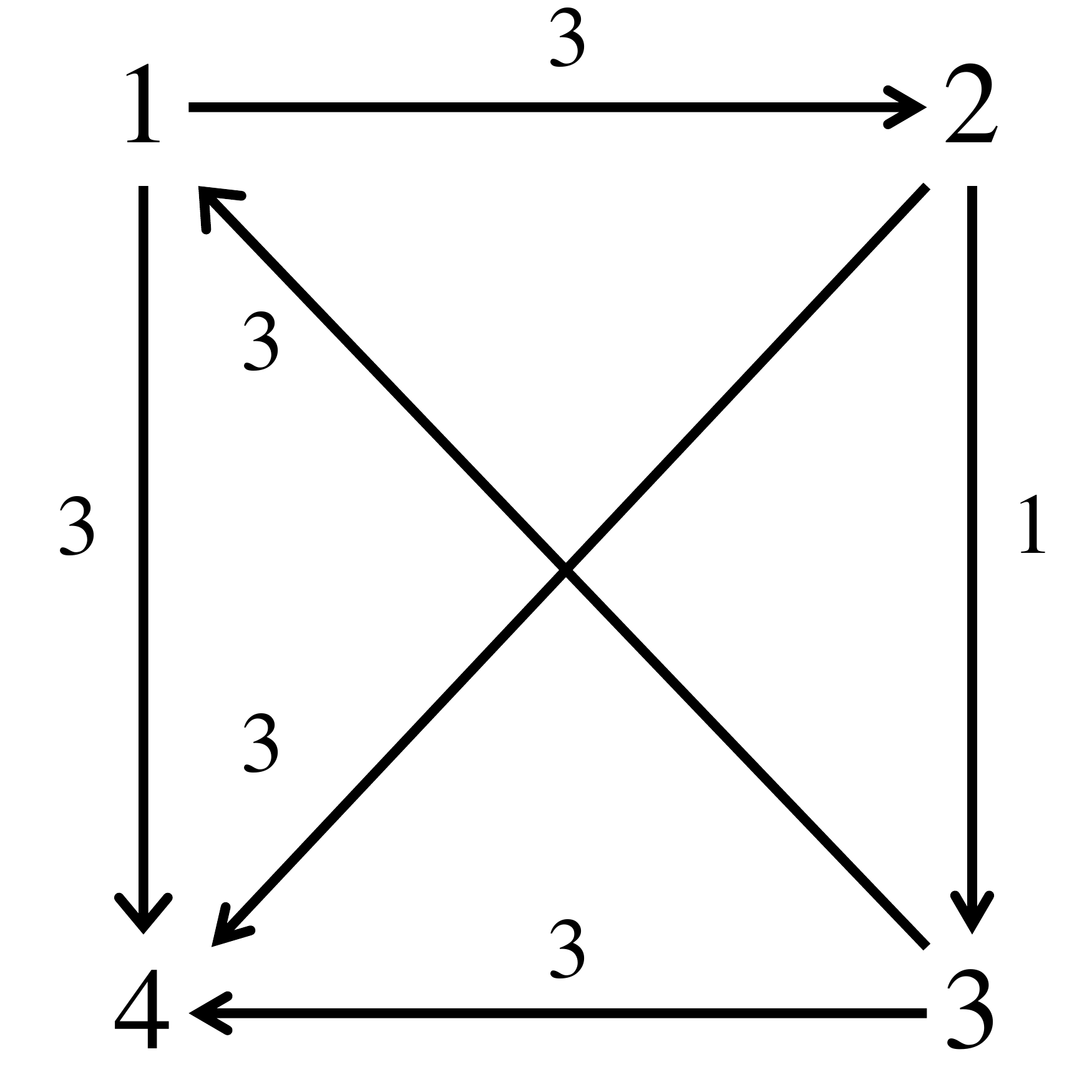} & 
  \includegraphics[width=.2 \textwidth]{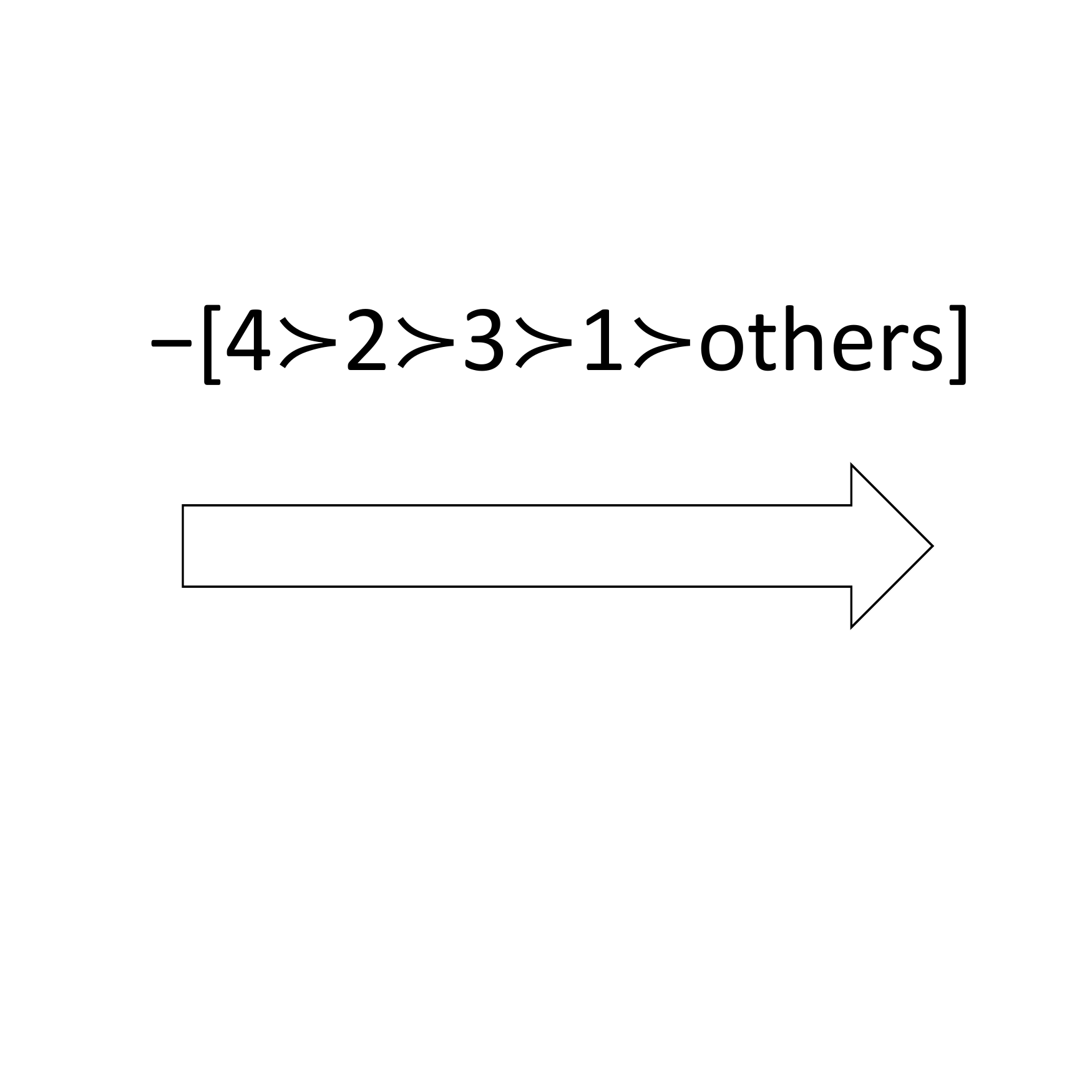} & 
  \includegraphics[width=.2 \textwidth]{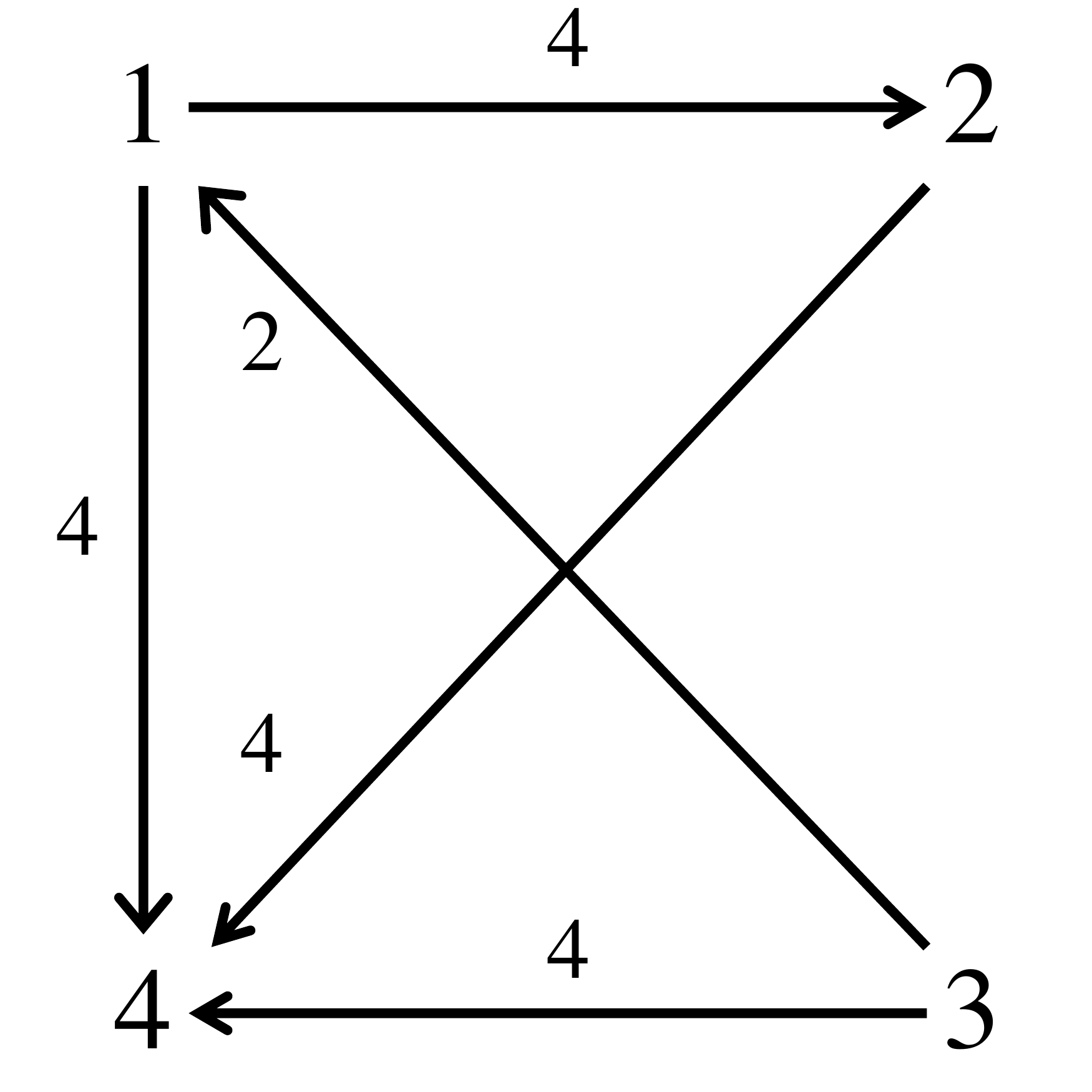} 
  \\
  (a) $G_{\copeland} = \wmg(P_{\copeland})$. & & (b) $\wmg(P_{\copeland}\setminus\{R_{\copeland}\})$.
  \end{tabular}
   \caption{$G_{\copeland}$ and $\wmg(P_{\copeland}\setminus\{P_{\copeland}\})$ for $2\nmid n$ and $\alpha>0$.\label{fig:Par-Copeland2}}
\end{figure}

See  Figure~\ref{fig:Par-Copeland2} (a) for an example of $G_{\copeland}$. According to McGarvey's theorem~\citep{McGarvey53:Theorem} that for any $n>m^4$ and $2\nmid n$, there exists an $n$-profile $P_{\copeland}$ whose WMG is $G_{\copeland}$. Therefore, for any $n>m^4+2$ and $2\nmid n$, there exists an $n$-profile $P_{\copeland}$ whose WMG is $G_{\copeland}$, and $P_{\copeland}$ includes the following two rankings.
$$[4\succ 2\succ 3\succ 1\succ \others],\rev{4\succ 2\succ 3\succ 1\succ \others}$$
We now show that $\sat{\Par}(r,P_{\copeland})=0$, which implies that the $1$ case Lemma~\ref{lem:sPar-GSR} does not hold. Let $R_{\copeland} = [4\succ 2\succ 3\succ 1\succ \others]$. Notice that in the profile $P_{\copeland}-R_{\copeland}$, the Copeland$_\alpha$ score of alternative $3$ is $m-2+\alpha$, which is strictly higher than the Copeland$_\alpha$ score of alternative $1$, which is $m-2$. Therefore,  $\copeland(P_{\copeland}\setminus\{R_{\copeland}\})=\{3\}$. See  Figure~\ref{fig:Par-Copeland2} (b) for   $\wmg(P_{\copeland}\setminus\{R_{\copeland}\})$.  Notice that $3\succ_{R_{\copeland}} 1$, which means that the $\sat{\Par}(r,P_{\copeland})=0$. 

Therefore, the $1$ case of Lemma~\ref{lem:sPar-GSR}  does not hold. Let $\vec t_1 =\signs{\vH_{\copeland}}(P_{\copeland})$ and $\vec t_2 =\signs{\vH_{\copeland}}(P_{\copeland}\setminus\{R_{\copeland}\})$. The VL case of Lemma~\ref{lem:sPar-GSR} does not hold because $\vec t_1\oplus\vec t_2\unlhd \signs{\vH_{\copeland}}(\piuni)$ and $\piuni\in \conv(\Pi)$. 

Next, we prove that    $\dim(\ppolyz{\vec t_1\oplus\vec t_2})=m!-1$.  
Notice that $[\vec t_1]_{(2,3)}=+$ and $[\vec t_2]_{(2,3)}=0$, and all other components of $\vec t_1$ and $\vec t_2$ are the same and are non-zero. Therefore, $\vec t_1$ is a refinement of $\vec t_2$, which means that $\vec t_1 \oplus \vec t_2 = \vec t_2$. Notice that $\hist(P_{\copeland})$ is an inner point of $\ppolyz{\vec t_2}$, in the sense that all inequalities are strict except the inequalities about $(2,3)$. This means that the essential equalities of $\pba{\vec t_1\oplus\vec t_2}$ are equivalent to  $\pair_{2,3}  \cdot \vec x = \vec 0$.
Therefore, $\dim(\ppolyz{\vec t_2})=\dim(\ppolyz{\vec t_1\oplus \vec t_2}) = m!-1$.  This proves the proposition when  $2\nmid n$, $\alpha>0$, and $\copeland(P)=\{1\}$.  

If $\copeland(P)=\{2\}$ (respectively, $\copeland(P)=\{3\}$), then we simply switch the weights on $2\ra 3$ and $3\ra 1$ (respectively, $2\ra 3$ and $1\ra 2$) in Figure~\ref{fig:Par-Copeland} (b), and the rest of the proof is similar to the $\copeland(P)=\{1\}$ case. This proves Theorem~\ref{thm:sPar-copeland} for $2\nmid n$ and $\alpha>0$.

\paragraph{\bf \boldmath $2\nmid n$ and $\alpha=0$.} Let $G_{\copeland}$ (with weights $w_{\copeland}$) denote the following weighted directed graph over $\ma$ whose UMG is $G^*$ as illustrated in Figure~\ref{fig:Par-Copeland} (a).
\begin{itemize}
\item $w_{\copeland}(2,3)= w_{\copeland}(3,1)= w_{\copeland}(1,2)=3$. 
\item For any $4\le i\le m$, $w_{\copeland}(1,i)=w_{\copeland}(2,i)=w_{\copeland}(3,i)=3$, except $w_{\copeland}(4,1)=1$.
\item The weights on edge between $\{4,\ldots,m\}$ are $3$ or $-3$. 
\end{itemize}
\begin{figure}[htp]\centering 
\begin{tabular}{ccc} 
  \includegraphics[width=.2 \textwidth]{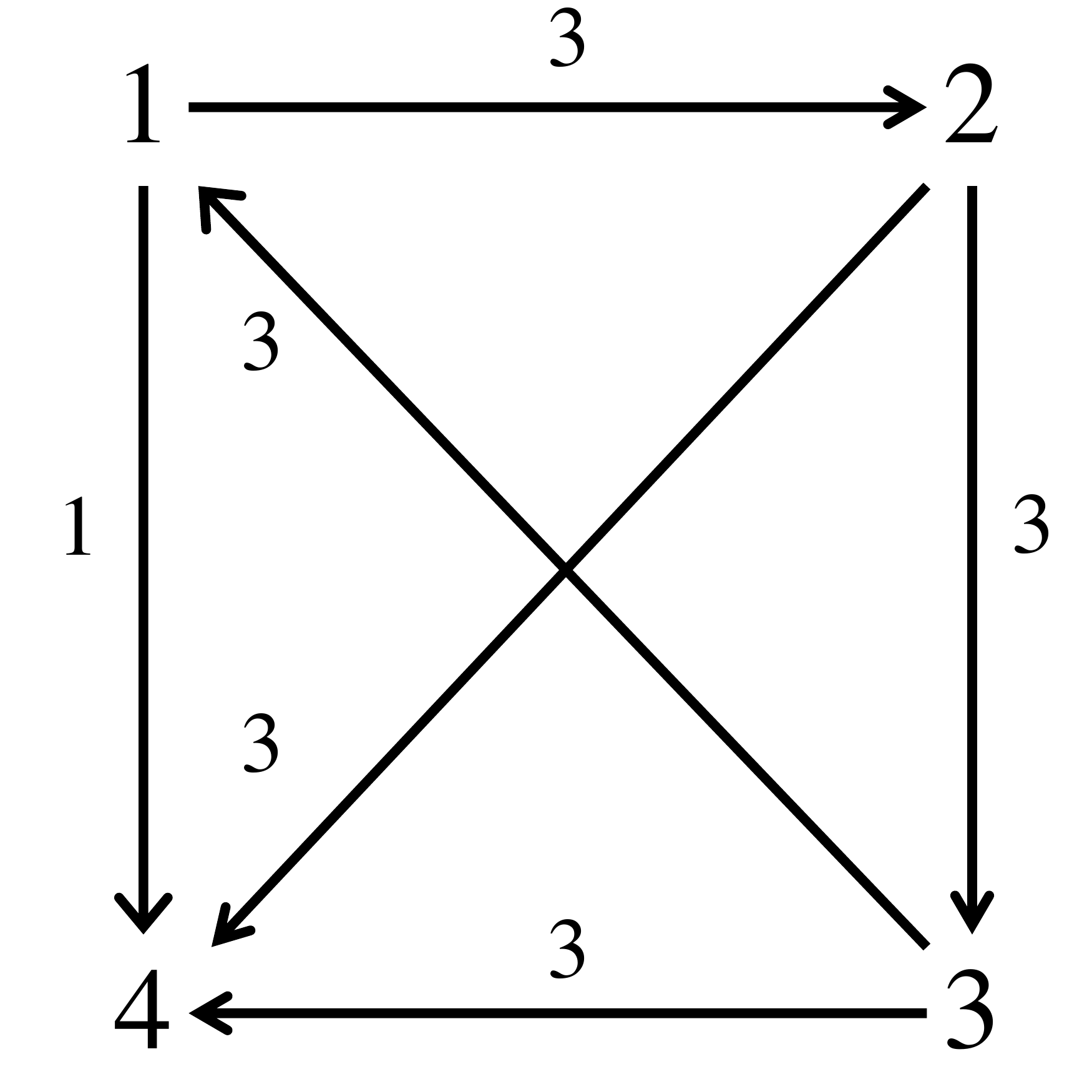} & 
   \includegraphics[width=.2 \textwidth]{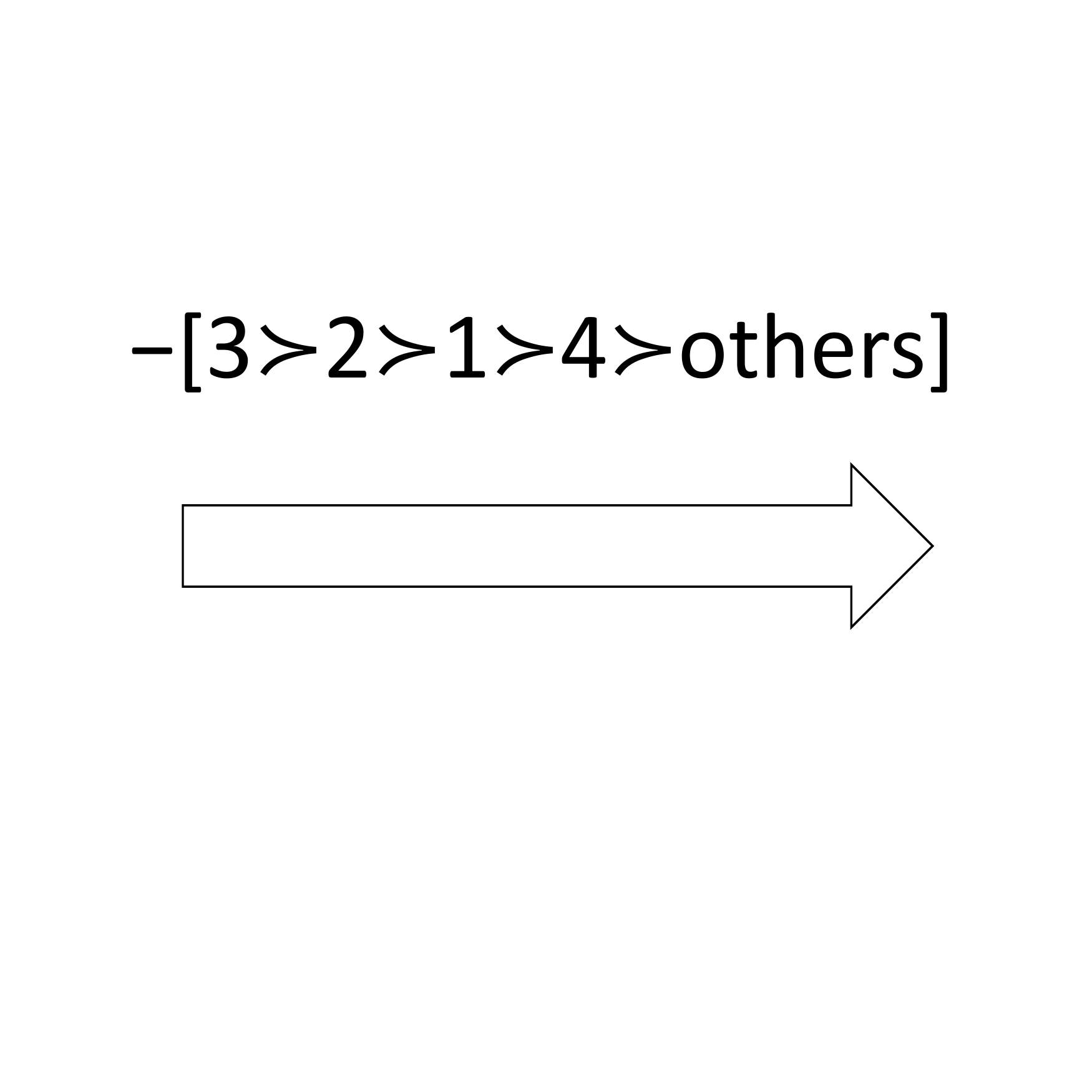} & 
     \includegraphics[width=.2 \textwidth]{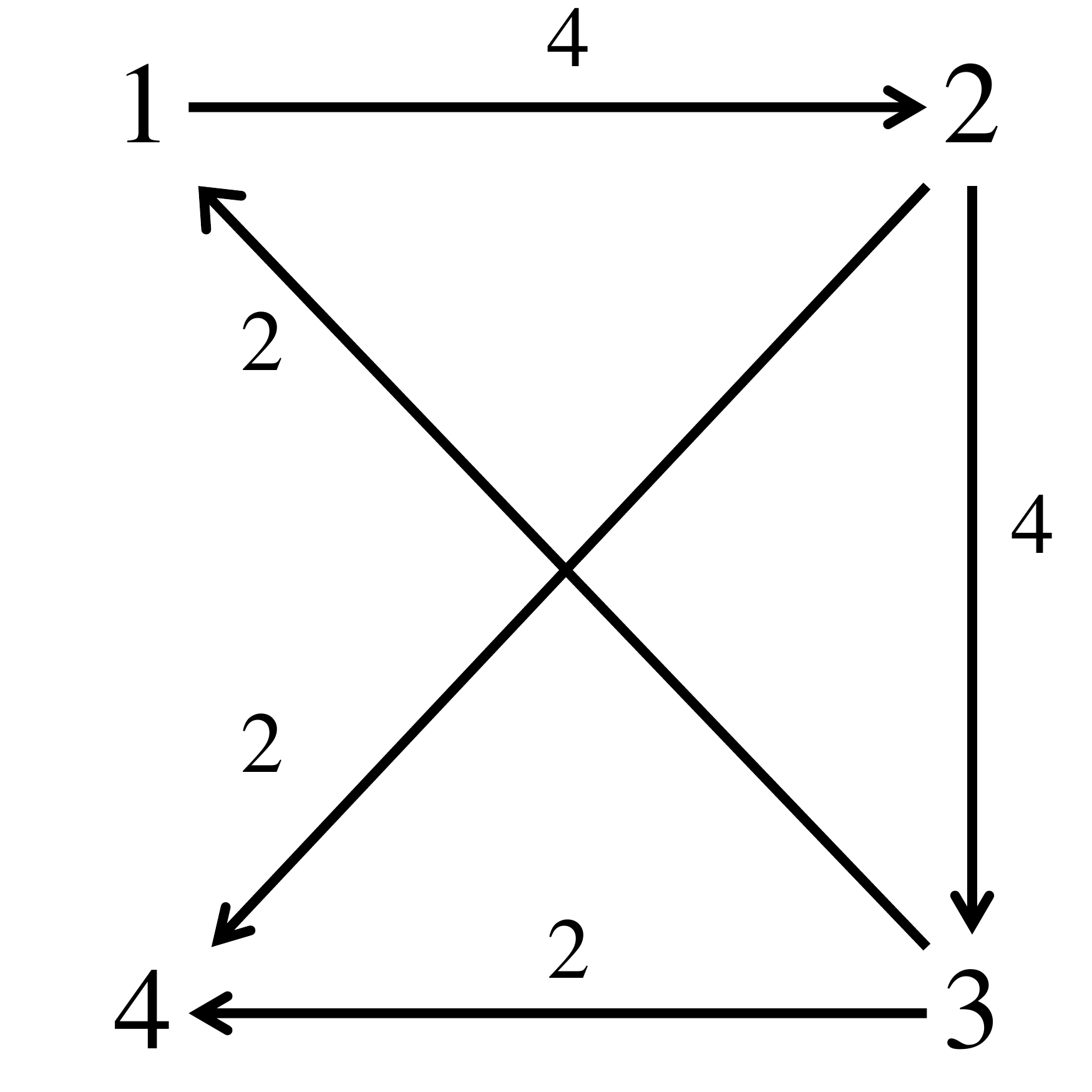} 
  \\
  (a) $G_{\copeland} = \wmg(P_{\copeland})$. & & (b) $\wmg(P_{\copeland}\setminus\{R_{\copeland}\})$.
  \end{tabular}
   \caption{$G_{\copeland}$ and $\wmg(P_{\copeland}\setminus\{P_{\copeland}\})$ for $2\nmid n$ and $\alpha=0$.\label{fig:Par-Copeland3}}
\end{figure}

See  Figure~\ref{fig:Par-Copeland3} (a) for an example of $G_{\copeland}$. According to McGarvey's theorem~\citep{McGarvey53:Theorem} that for any $n>m^4$ and $2\nmid n$, there exists an $n$-profile $P_{\copeland}$ whose WMG is $G_{\copeland}$. Therefore, for any $n>m^4+2$ and $2\nmid n$, there exists an $n$-profile $P_{\copeland}$ whose WMG is $G_{\copeland}$ and $P_{\copeland}$ includes the following two rankings.
$$[3\succ 2\succ 1\succ 4\succ \others],\rev{3\succ 2\succ 1\succ 4\succ \others}$$
We now show that $\sat{\Par}(\copeland,P_{\copeland})=0$, which implies that the $1$ case Lemma~\ref{lem:sPar-GSR} does not hold. Let $R_{\copeland} = [3\succ 2\succ 1\succ 4\succ \others]$. Notice that in the profile $P_{\copeland}\setminus\{R_{\copeland}\}$, the Copeland$_\alpha$ score of alternative $1$ is $m-3+\alpha=m-3$, which is strictly higher than the Copeland$_\alpha$ score of alternative $2$ and $3$, which means that  $\copeland(P_{\copeland}-R_{\copeland})\subseteq\{2,3\}$. See  Figure~\ref{fig:Par-Copeland3} (b) for an example of $\wmg(P_{\copeland}\setminus\{R_{\copeland}\})$.   Notice that $2\succ_{R_{\copeland}} 1$ and $3\succ_{R_{\copeland}} 1$, which means that $\sat{\Par}(\copeland,P_{\copeland})=0$.

The proofs for  $\ell_n=1$, the $\copeland(P)=\{2\}$ case, and the $\copeland(P)=\{3\}$ case are similar to their counterparts for the ``$2\nmid n$ and $\alpha=0$'' case above.

\paragraph{\bf\boldmath $2\mid n$.} The proof for the $2\mid n$ case is similar to the proof of the $2\nmid n$ case with the following modifications. The $n$-profile $P_{\copeland}$ where $\Par$ is violated is obtained from the profile in the $2\nmid n$ plus $\rev{R_{\copeland}}$. Below we present the full proof for the case of $2\mid n$ and $\alpha>0$ for example. The other cases can be proved similarly.
\paragraph{\bf \boldmath $2\mid n$ and $\alpha>0$.} W.l.o.g.~suppose $\copeland(G^*)=\{1\}$. Let $G_{\copeland}$ (with weights $w_{\copeland}$) denote the weighted directed graph in Figure~\ref{fig:Par-Copeland2} (a). According to McGarvey's theorem~\citep{McGarvey53:Theorem} that for any $n>m^4$ and $2\mid n$, there exists an $(n-1)$-profile $P_{\copeland}'$ whose WMG is $G_{\copeland}$. Let
$$P_{\copeland} = P_{\copeland}'+  \rev{4\succ 2\succ 3\succ 1\succ \others}$$
It is not hard to verify that in  $P_{\copeland}$, the Copeland$_\alpha$ score of alternative $3$ is $m-2+\alpha$, which is strictly higher than the Copeland$_\alpha$ score of alternative $1$, which is $m-2$. Therefore, $\copeland(P_{\copeland}) = \{3\}$. Let $R_{\copeland} =   \rev{4\succ 2\succ 3\succ 1\succ \others}$. Notice that $\copeland(P_{\copeland}\setminus\{R_{\copeland}\}) =\copeland(G^*)=\{1\}$ and $1\succ_{R_{\copeland}} 3$, which means that $\sat{\Par}({\copeland},P_{\copeland})=0$. Therefore, the $1$ case in Lemma~\ref{lem:sPar-GSR} does not hold. Let $\vec t_1=\sign_{\vH_{\copeland}}(P_{\copeland})$ and $\vec t_2=\sign_{\vH_{\copeland}}(P_{\copeland}\setminus\{R_{\copeland}\})$.  Like in other cases, the VL case of Lemma~\ref{lem:sPar-GSR} does not holds because $\vec t_1\oplus\vec t_2\unlhd \signs{\vH_{\copeland}}(\piuni)$.

Next, we prove that   $\dim(\ppolyz{\vec t_1\oplus \vec t_2})=m!-1$. 
Notice that $[\vec t_1]_{(2,3)}=0$ and $[\vec t_2]_{(2,3)}=+$, and all other components of $\vec t_1$ and $\vec t_2$ are the same and are non-zero. Therefore, $\vec t_1$ is a refinement of $\vec t_2$, which means that $\vec t_1 \oplus \vec t_2 = \vec t_1$. Notice that $\hist(P_{\copeland})$ is an inner point of $\ppolyz{\vec t_1}$, in the sense that all inequalities are strict except the inequalities about $(2,3)$. This means that the essential equalities of $\pba{\vec t_1\oplus\vec t_2}$ are equivalent to 
$$\pair_{2,3}  \cdot \vec x = \vec 0\text{ and }-\pair_{2,3}  \cdot \vec x = \vec 0$$

Therefore, $ \dim(\ppolyz{\vec t_1\oplus \vec t_2}) = m!-1$, which means that $\ell_n=-(m!-(m!-1))=1$. The $2\mid n$ and $\alpha>0$ case follows after Lemma~\ref{lem:sPar-GSR}.

The proof for other subcases of $2\mid n$ are similar to the proof of $2\mid n$ and $\alpha>0$ case above. This completes the proof of Theorem~\ref{thm:sPar-copeland}.
\end{proof}

\subsection{Proof of Theorem~\ref{thm:sPar-MRSE}}
\label{app:proof-thm:sPar-MRSE}
\begin{thm}[\bf Smoothed $\Par$: int-MRSE]
\label{thm:sPar-MRSE}
Given $m\ge 4$, any int-MRSE $\cor$,  any int-GSR $r$ that is a refinement of $\cor=(\cor_2,\ldots,\cor_m)$, and any strictly positive and closed $\Pi$ over $\ml(\ma)$ with $\piuni\in \conv(\Pi)$, there exists $N\in\mathbb N$ such that for every $n\ge N$,  
$$\satmin{ \Par}{\Pi}(r,n ) = 1-  \Theta({\frac{1}{\sqrt n}})$$
\end{thm}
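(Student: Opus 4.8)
\textbf{Proof proposal for Theorem~\ref{thm:sPar-MRSE}.}
The plan is to apply Lemma~\ref{lem:sPar-GSR} to an int-GSR representation of an arbitrary refinement $r$ of $\cor=(\cor_2,\ldots,\cor_m)$, following exactly the three-step template that was used for Theorem~\ref{thm:sPar-mm-rp-sch} and Theorem~\ref{thm:sPar-copeland}. Recall from Definition~\ref{dfn:MRSE-GISR} that $\cor$ (and hence $r$) is representable as an int-GSR using the hyperplanes $\scorediff{B,a,b}$ for all $B\subseteq\ma$ and $a\ne b\in\ma\setminus B$; I will fix this representation throughout. By Lemma~\ref{lem:sPar-GSR}, it then suffices to establish three facts for all sufficiently large $n$: (i) $\violation{\Par}{r}{n}\ne\emptyset$, so the ``$1$'' case does not occur; (ii) there exist a signature $\vec t\in\violation{\Par}{r}{n}$ with $\vec t\unlhd\signs{\vH}(\piuni)$, so the ``$1-\exp(-\Theta(n))$'' (VL) case does not occur; and (iii) $\ell_n=1$, i.e.\ the witnessing signature $\vec t=\vec t_1\oplus\vec t_2$ satisfies $\dim(\ppolyz{\vec t})=m!-1$. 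Together these force the ``$1-\Theta(n^{-1/2})$'' case of Lemma~\ref{lem:sPar-GSR}.

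First I would construct the witnessing profile. The idea, as in the other $\Par$ proofs, is to build an $n$-profile $P$ and a ranking $R\in P$ such that removing the $R$-vote changes the scores in exactly one head-to-head-like comparison relevant to one round of the MRSE execution, flipping the winner from some $a$ to some $b$ with $b\succ_R a$. Concretely, since $\cor_m$ is a positional scoring rule over $m\ge 4$ alternatives, I would arrange that in the first round two alternatives $a$ and $b$ are tied for the lowest $\cor_m$-score, with all other alternatives safely above them, and that removing a single carefully chosen vote $R$ breaks this tie one way or the other; by symmetry of McGarvey-type constructions (and using a small cyclic-permutation ``padding'' profile as a deterministic tie-breaker, as in the proof of Proposition~\ref{prop:common-min-cont}) I can ensure the remaining $m-2$ rounds proceed identically in $P$ and $P\setminus\{R\}$, so that $r(P)$ and $r(P\setminus\{R\})$ are exactly the two candidates $\{a,b\}$ in some order, with $R$ ranking the ``wrong'' one on top. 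One must also include both $R$ and $\rev{R}$ (when $2\mid n$) or the appropriate parity-adjusted variant, exactly as in the earlier proofs, so that the construction is realizable for every sufficiently large $n$ of the required parity and so that, regardless of which of $a,b$ the tie-breaking rule inside $r$ selects, one of the two choices of $R$ yields a $\Par$ violation. This establishes (i). For (ii), the key observation is that the only hyperplanes on which $\signs{\vH}(P)$ and $\signs{\vH}(P\setminus\{R\})$ differ are the score-difference hyperplanes $\scorediff{\emptyset,a,b}$ (and its negation), which pass through $\hist(\piuni)$ since $\piuni$ is symmetric; more generally every hyperplane in $\vH$ has $\piuni$ on it, so $\signs{\vH}(\piuni)=\vec 0$ and every signature refines it, giving $\vec t_1\oplus\vec t_2\unlhd\signs{\vH}(\piuni)$ automatically. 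For (iii), following the pattern in the earlier proofs, I would check that $\hist(P)$ is an inner point of $\ppolyz{\vec t_1\oplus\vec t_2}$ in the sense that all defining inequalities are strict except the single pair coming from $\scorediff{\emptyset,a,b}$, so the essential equalities reduce to $\scorediff{\emptyset,a,b}\cdot\vec x=\vec 0$, whence $\dim(\ppolyz{\vec t_1\oplus\vec t_2})=m!-1$ and $\ell_n=1$.

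The main obstacle I anticipate is the combinatorial bookkeeping in step (i): unlike maximin, ranked pairs, Schulze, and Copeland, an MRSE rule with a \emph{general} integer scoring vector $\vec s^{m}$ (for the first round) may have $s^m_2=\cdots=s^m_m$, i.e.\ behave like plurality, or may have many distinct values, and the construction must be robust to this — in particular I need the ``all other alternatives are safely above $a$ and $b$'' condition to survive the removal of one vote and to persist through the elimination of $a$ or $b$ in round one, then to re-run the argument on the sub-rule $(\cor_2,\ldots,\cor_{m-1})$ over the surviving alternatives. This likely requires a case split on whether the round-one scoring rule is essentially plurality (handled as in the $s_2=s_m$ case of Claim~\ref{claim:CW-score-winner-diff}) or not, plus a McGarvey-style lemma asserting that the desired weighted/score configuration is realizable by some $n$-profile for every large $n$ of each parity. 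The remaining steps (ii)–(iii) and the final invocation of Lemma~\ref{lem:sPar-GSR} are routine once the witness profile is in hand, mirroring the proofs of Theorems~\ref{thm:sPar-mm-rp-sch} and~\ref{thm:sPar-copeland} almost verbatim. Note also that $\istv$ is simultaneously covered by Theorem~\ref{thm:sPar-mm-rp-sch} and by this theorem, so consistency of the two constructions is a useful sanity check.
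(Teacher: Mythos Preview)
Your proposal has the right high-level shape but contains a genuine gap at step~(iii), and the paper takes a different route precisely to avoid it.

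The core problem is your claim that ``all defining inequalities are strict except the single pair coming from $\scorediff{\emptyset,a,b}$.'' The GSR representation of an int-MRSE rule from Definition~\ref{dfn:MRSE-GISR} contains a hyperplane $\scorediff{B,c,d}$ for \emph{every} subset $B\subseteq\ma$ and every pair $c,d\notin B$ --- exponentially many in $m$. Removing a single vote $R$ shifts the scores on all of these simultaneously, and there is no reason your construction forces $\signs{\vH}(P)$ to be nonzero on all of them except one; nor is there any reason that the set of hyperplanes on which $\vec t_1$ and $\vec t_2$ disagree spans only a one-dimensional subspace. You never argue this, and it is exactly the obstacle the paper flags: ``Indeed, Lemma~\ref{lem:sPar-GSR} can be applied to $r$, but it is unclear how to characterize $\ell_n$.'' There is also a confusion in your step~(i): you place the tie between $a$ and $b$ in the \emph{first} round with both at the bottom, then say $r(P)$ and $r(P\setminus\{R\})$ are ``exactly the two candidates $\{a,b\}$.'' But whichever of $a,b$ is eliminated in round~1 cannot be the eventual winner, and the remaining rounds cannot ``proceed identically'' since the surviving alternative sets differ.

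The paper sidesteps both issues. It does \emph{not} try to prove $\ell_n=1$ via the full signature. Instead it (a) places the tie between alternatives $3$ and $4$ in round $m-3$, after $5,\ldots,m$ have been eliminated, arranging that eliminating $3$ makes $1$ win while eliminating $4$ makes $2$ win (so the tied alternatives are distinct from the winners); (b) packages the relevant constraints into a small \emph{custom} polyhedron $\ppoly{\cor}$ (Definition~\ref{dfn:H-r}) with a single equality $\scorediff{\{5,\ldots,m\},3,4}\cdot\vec x=0$, so $\dim(\ppolyz{\cor})=m!-1$ is immediate; and (c) obtains the $1-\Omega(n^{-1/2})$ direction by applying \cite[Theorem~2]{Xia2021:How-Likely} to $\ppoly{\cor}$ directly, while the $1-O(n^{-1/2})$ direction comes from Lemma~\ref{lem:sPar-GSR} using only $\ell_n\ge 1$. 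Your template from Theorems~\ref{thm:sPar-mm-rp-sch} and~\ref{thm:sPar-copeland} worked there because $\vH_{\eo}$ and $\vH_{\copeland}$ have only polynomially many, well-structured hyperplanes; it does not transfer cleanly to the MRSE representation.
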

\begin{proof}
The intuition behind the proof is similar to the proof of Theorem~\ref{thm:sPar-mm-rp-sch}. Indeed, Lemma~\ref{lem:sPar-GSR} can be applied to $r$, but it is unclear how to characterize $\ell_n$. Therefore, in this proof we do not directly characterize $dim(\ppolyz{\vec t})$ as in the proof of Theorem~\ref{thm:sPar-mm-rp-sch}, but will instead define another polyhedron $\ppoly{r}$ to characterize a set of sufficient conditions for  $\Par$ to be violated---and the dimension of the new polyhedron is easy to analyze.  Let us start with defining sufficient conditions on a profile $P$ for $\Par$ to be violated under any refinement of $\cor$. 
\begin{cond}[\bf Sufficient conditions: violation of $\Par$ under an MRSE rule]
\label{cond:Par-MRSE}
Given an MRSE $\cor$, a profile $P$ satisfies the following conditions during the execution of $\cor$.
\begin{enumerate}[label=(\arabic*)]
\item For every $1\le i\le m-4$, in the $i$-th round, alternative $i+4$ drops out. 
\item In round $m-3$, $1$ has the highest score, $2$ has the second highest score, and $3$ and $4$ are tied for the last place.
\item If $3$ is eliminated in round $m-3$, then $2$ and $4$ are eliminated in round $m-2$ and $m-1$, respectively, which means that the winner is $1$.
\item If $4$ is eliminated in round $m-3$, then $1$ and $3$ are eliminated in round $m-2$ and $m-1$, respectively, which means that the winner is $2$.
\item $P$ contains at least one vote $[4\succ 2\succ 1\succ 3\succ \others]$ and at least one vote $[3\succ 1\succ 2\succ 4\succ \others]$, where ``$\others$'' represents $5\succ \cdots\succ m$.
\item All losers described above, except in (2), are ``robust'' , in the sense that after removing any vote from $P$,  they are still the unique losers.
\end{enumerate}
\end{cond}
Let us verify that for any profile $P$ that satisfies Condition~\ref{cond:Par-MRSE}, $\sat{\Par}(r,P)=0$. It is not hard to see that $\cor(P)=\{1,2\}$. If $r(P)=\{1\}$, then let $R_r=[4\succ 2\succ 1\succ 3\succ \others]$. This means that when any voter whose preferences are $R_r$ abstain  from voting, alternative $4$ drops out in round $m-3$ of $(P\setminus\{R_r\})$, and consequently $2$ becomes the winner. Notice that $2\succ _{R_r}1$, which means that $\sat{\Par}(r,P)=0$. Similarly, if $r(P)=\{2\}$, then let $R_r=[3\succ 1\succ 2\succ 4\succ \others]$, which means that  $3$ drops out in round $m-3$ of $(P\setminus\{R_r\})$, and $1$ becomes the winner. Notice that $1\succ _{R_r}2$. Again, we have $\sat{\Par}(r,P)=0$. The procedures of executing $\cor$ under $P$ and $(P\setminus\{R_r\})$ are represented in Figure~\ref{fig:par-mrse-P}.

\begin{figure}[htp]
\centering
\includegraphics[width = 0.8\textwidth]{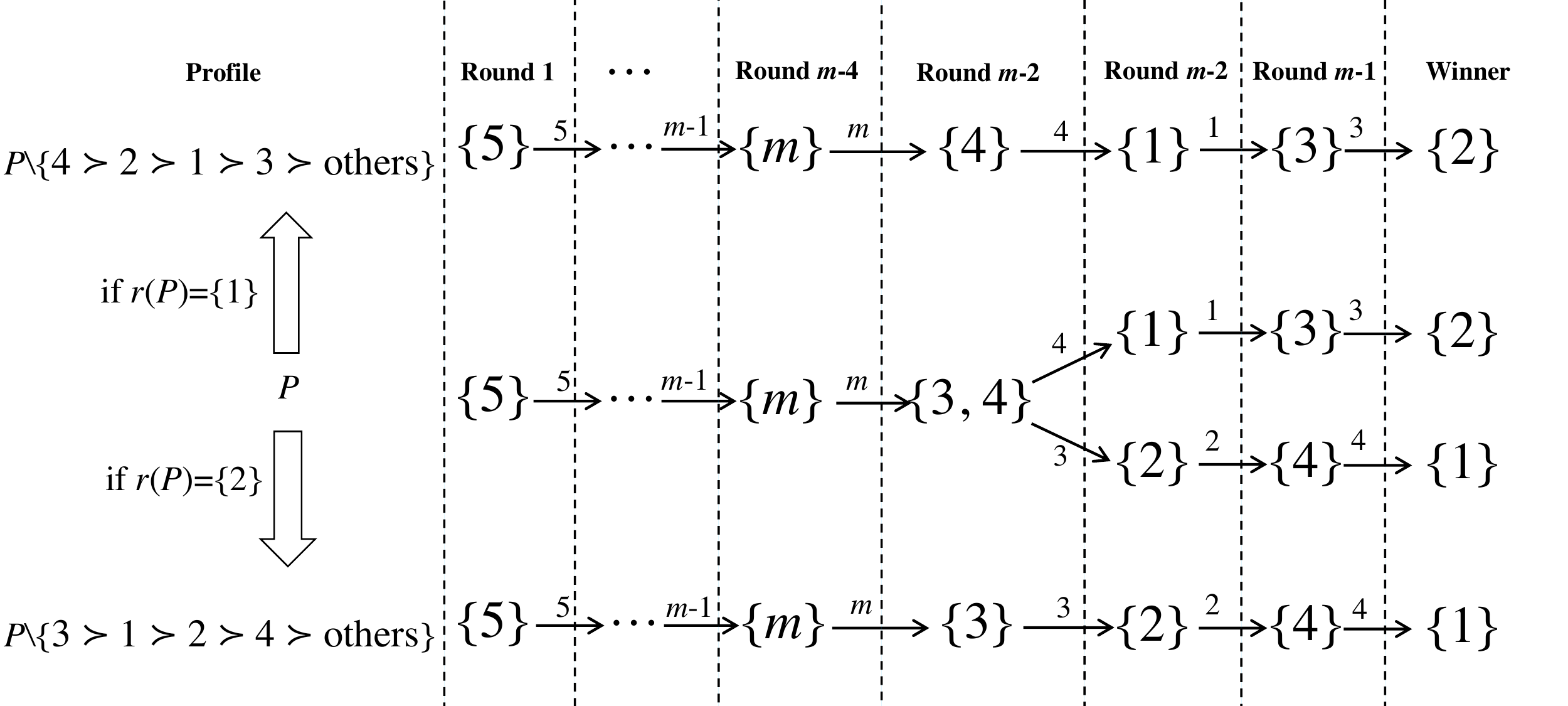}
\caption{Executing $\cor$ for a profile that satisfies Condition~\ref{cond:Par-MRSE}.\label{fig:par-mrse-P}}
\end{figure} 

The rest of the proof proceeds as follows. In Step 1 below, We will  prove by construction that for every sufficiently large $n$, there exists an $n$-profile $P_{r}$ that satisfies Condition~\ref{cond:Par-MRSE}. Then in Step 2, we formally  define $\ppoly{\cor}$ to represent profiles that satisfy Condition~\ref{cond:Par-MRSE}. Finally, in Step 3, we show that $\dim(\ppolyz{\cor})=m!-1$ because there is essentially only one equality  (in Condition~\ref{cond:Par-MRSE} (2)). Theorem~\ref{thm:sPar-MRSE} then follows after $1$ minus the polynomial case of the inf part of~\cite[Theorem~2]{Xia2021:How-Likely}.


\paragraph{\bf \boldmath Step 1: define $P_{r}$.} Before defining $P_r$, we first define a profile $P^*$ that consists of a constant and odd number of votes in Steps 1.1--1.3. We then prove that $\Par$ is violated at $P^*$ in Step 1.4 and 1.5, where in Step 1.4 we show that $\cor(P)=\{1,2\}$ and in Step 1.5 we point out a violation of $\Par$ depending on $r(P^*)$. Then in Step 1.6, we show how to expand $P^*$ to an  $n$-profile $P_r$ for any sufficiently large $n$. 

Let $P^*=P_1+P_2+P_3$, where $P_1$ consists of even number of votes and is designed to guarantee Condition~\ref{cond:Par-MRSE} (1), i.e., $ 5,\ldots, m$ are eliminated in the first $m-4$ rounds, respectively. This means that in the beginning of round $m-3$, the remaining alternatives are $\{1,2,3,4\}$. $P_2$ consists of an odd number of votes and is designed   to guarantee Condition~\ref{cond:Par-MRSE} (2), i.e.,  in round $m-3$, $\cor_{4}$ outputs the weak order  $[1\succ 2\succ 3=4]$.  $P_3$ consists of an even number of votes and is designed to guarantee Condition~\ref{cond:Par-MRSE} (3) and (4), i.e.,  if $3$ (respectively, $4$) is eliminated then $1$ (respectively, $2$) wins. 

\paragraph{\bf \boldmath Step 1.1: define $P_1$.} 
Let $P_1^1$ denote the following profile of $(24m(m-4)!+\frac{(m+5)(m-4)}{2}(m-1)!)$ votes.
$$P_1^1= m\times \{[R_1\succ R_2: \forall R_1\in \ml(\{1,2,3,4\}), R_2\in \ml(\{5,\ldots,m\})\}\cup \bigcup_{i=5}^m i\times \{[i\succ R_2]: \forall R_2\in\ml(\ma\setminus\{i\})\}$$
For every $2\le i\le m$, let the scoring vector of $\cor_i$ be $(s_1^i,\ldots,s_i^i)$. For example, the scoring vector of $\cor_{4}$ is   $(s_1^{4},s_2^{4},s_3^{4},s_4^{4})$. We let $P_1= (s_1^{4}-s_4^{4}+1)|P_2|\times P_1^1$, where $|P_2|$ is the  number of votes in  $P_2$, which is a constant and will become clear after Step 1.2.

\paragraph{\bf \boldmath Step 1.2: define $P_2$.} The main challenge in this step is to use an odd number of votes to define  $P_2$ such that in round $m-3$, the score of $1$ is strictly higher than  the score of $2$, which is strictly higher than the score of $3$ and $4$. 
We first define the following $8$-profile, denoted by $P_2^1$.
\begin{align*}
P_2^1= \{&[1\succ \others\succ 3\succ 4\succ 2], [1\succ \others\succ 4\succ 3\succ 2], \\
&3\times [1\succ \others\succ 2\succ 4\succ 3], 3\times [2\succ \others\succ 1\succ 3\succ 4]\}
\end{align*}
The numbers of times alternatives $\{1,2,3,4\}$ are ranked in each position in $ P_2^1|_{\{1,2,3,4\}}$ are indicated in Table~\ref{tab:Par-P21}. 
\begin{table}[htp]
\centering  
\begin{tabular}{|c|c|c|c|c|}
\hline Alternative & 1st & 2nd & 3rd& 4th\\
 \hline 1& 5 & 3&0 & 0\\
\hline 2 & 3& 3 & 0&2\\
\hline 3 & 0 & 1 & 4& 3\\
\hline 4 & 0 & 1 & 4& 3\\
\hline 
\end{tabular}  
\caption{Number of times each alternative is ranked in each position in $P_2^1|_{\{1,2,3,4\}}$. \label{tab:Par-P21}}
\end{table}

Next, we define a profile $P_2^2$ that consists of an odd number of votes where the scores of $3$ and $4$ are equal.  Let $d_1 = s_1^{4}-s_2^{4}$ and $d_2=s_2^{4}-s_3^{4}$. The construction is done in the following three cases.
\begin{itemize}
\item  If $d_1=0$, then we let $P_2^2$ consist of a single vote $[3\succ 4\succ 1\succ 2\succ \others]$.
\item If $d_1\ne 0$ and $d_2=0$, then we let $P_2^2$ consist of a single vote $[1\succ 3\succ 4\succ 2\succ \others]$. 
\item If $d_1\ne 0$ and $d_2\ne 0$, then we let $d_1' = d_1/\gcd(d_1,d_2)$ and $d_2' = d_2/\gcd(d_1,d_2)$, where $\gcd(d_1,d_2)$ is the greatest common divisor of $d_1$ and $d_2$. It follows that at least one of $d_1'$ and $d_2'$ is an odd number. 
\begin{itemize}
\item If $d_1'$ is odd, then we let
$$P_2^2 = (d_1'+d_2')\times [1\succ 3\succ 4\succ 2\succ \others] +  d_2'\times [4\succ 1\succ 3\succ 2\succ \others ]$$

\item Otherwise, we must have $d_1'$ is even and  $d_2'$ is odd. Then, we let
$$P_2^2 = (d_1'+d_2')\times [3\succ 4\succ 1\succ 2\succ \others] +  d_1'\times [4\succ 1\succ 3\succ 2\succ \others ]$$
\end{itemize}
\end{itemize}
It is not hard to verify that in either case $P_2^2$ consists of an odd number of votes, and the score of $3$ and $4$ are equal under $P_2^2$.
To guarantee that $3$ and $4$ have the lowest $\cor_4$ scores in $P_2|_{\{1,2,3,4\}}$, we include  sufficiently many copies of $P_2^1$ in $P_2$. Formally, let 
$$P_2 = (|P_2^2|+1)\times P_2^1 + P_2^2$$

\paragraph{\bf \boldmath Step 1.3: define $P_3$.}  We let $P_3= ((s_1-s_3)|P_2| +1)\times P_3^*$, where $P_3^*= P_3^{*1}+P_3^{*2}$ is the $36$-profile defined as follows. $P_3^{*1}$ consists of $12$ votes, where each alternative in $\{1,2,3,4\}$ is ranked in the top in three votes, followed by the remaining three alternatives in a cyclic order.
\begin{align*}
P_3^{*1}= \{&[1\succ 2\succ 3\succ 4\succ \others],  [1\succ 3\succ 4\succ 2\succ \others],[1\succ 4\succ 2\succ 3\succ \others],\\
&[2\succ 1\succ 4\succ 3\succ \others],  [2\succ 4\succ 3\succ 1\succ \others],[2\succ 3\succ 1\succ 4\succ \others],\\
&[3\succ 1\succ 4\succ 2\succ \others],  [3\succ 4\succ 2\succ 1\succ \others],[3\succ 2\succ 1\succ 4\succ \others],\\
&[4\succ 1\succ 2\succ 3\succ \others],  [4\succ 2\succ 3\succ 1\succ \others],[4\succ 3\succ 1\succ 2\succ \others]\}
\end{align*} 
$P_3^{*2}$ consists of $24$ votes that are defined in  the following three steps. First, we start with $\ml(\{1,2,3,4\})$, which consists of $24$ votes. Second, we replace $[3 \succ 2\succ 4\succ 1]$ and $[4 \succ 1\succ 3\succ 2]$ by $[3 \succ 1\succ 4\succ 2]$ and $[4 \succ 2\succ 3\succ 1]$, respectively. That is, the locations of $1$ and $2$ are exchanged in the two votes. This is designed to guarantee that the $\cor_4$ scores of all alternative are the same in $P_3^{*2}|_{\{1,2,3,4\}}$, and after $3$ is removed, $1$'s $\cor_3$ score is higher than $2$'s $\cor_3$ score; and after $4$ is removed, $2$'s $\cor_3$ score is higher than $1$'s $\cor_3$score. Third, we append the lexicographic order of $\{5,\ldots,m\}$ to the end of each of the $24$ rankings. Formally, we define
\begin{align*}
P_3^{*2}= \{&R_4\succ 5\succ \cdots \succ m: R_4\in\ml(\{1,2,3,4\})\}-[3 \succ 2\succ 4\succ 1\succ \others]\\
&- [4 \succ 1\succ 3\succ 2\succ \others]+[3 \succ 1\succ 4\succ 2\succ \others]+[4 \succ 2\succ 3\succ 1\succ \others]
\end{align*}

\paragraph{\bf \boldmath Step 1.4: Prove  $\cor(P^*)=\{1,2\}$.}   Recall that $P^*=P_1+P_2+P_3$.   Notice that the $P_1$ part guarantees that $\{5,\ldots,m\}$ are dropped out in the first $m-4$ rounds, and the scores of all alternatives in $\{1,2,3,4\}$ are the same under $P_1$ no matter what alternatives are dropped out. Therefore, it suffices to calculate the results of the last three rounds based on $P_2+P_3$, which is done as follows.

In round $m-3$, it is not hard to check that every alternative in $\{1,2,3,4\}$ gets the same total score under $P_3$, where each of them is ranked at each position for $9$ times. Therefore, due to   $P_2$, alternative $3$ and $4$ are tied for the last place in round $m-3$.

\paragraph{\bf \boldmath If $3$ is eliminated in round $m-3$,} then $P_3^*|_{\{1,2,4\}}=P_3^{*1}|_{\{1,2,4\}}+P_3^{*2}|_{\{1,2,4\}}$ becomes the following.
\begin{align*}
P_3^{*1}|_{\{1,2,4\}}= &\{2\times  [1\succ 4 \succ 2 ],  [1 \succ 2\succ 4],  2\times [2 \succ 1\succ 4],  [2\succ 4 \succ 1 ] ,\\
&  [1\succ 4\succ  2], [4\succ 2\succ  1] , [2\succ 1\succ  4],  2\times [4 \succ 1\succ 2],  [4\succ 2 \succ 1 ]\}\\
P_3^{*2}|_{\{1,2,4\}}=  &4\times \ml(\{1,2,4\}) - [2\succ 4\succ 1]- [4\succ 1\succ 2] +[1\succ 4\succ 2]+ [4\succ 2\succ 1] 
\end{align*}

It is not hard to verify that the numbers of times alternatives $\{1,2,4\}$ are ranked in each position in $ P_3^*|_{\{1,2,4\}}$ are as indicated in Table~\ref{tab:MRSE-Par} (a). 

\begin{table}[htp]
\centering 
\begin{tabular}{cc}
\begin{tabular}{|c|c|c|c|}
\hline Alternative & 1st & 2nd & 3rd\\
 \hline 1 & 13&12 & 11\\
\hline 2 & 11& 12 & 13\\
\hline 4 & 12& 12 & 12\\
\hline 
\end{tabular}
&
\begin{tabular}{|c|c|c|c|}
\hline Alternative  & 1st & 2nd & 3rd\\
 \hline 1 & 11& 12 & 13\\
\hline 2 & 13& 12 & 11\\
\hline 3 & 12& 12 & 12\\
\hline 
\end{tabular}\\
(a) $3$ is removed. & (b) $4$ is removed.
\end{tabular}
\caption{Number of times each alternative is ranked in each position in round $m-2$. \label{tab:MRSE-Par}}
\end{table}

This means that the score of alternative $2$  is strictly lower than the score of $1$ or $3$, because $s^{3}_1-s^{3}_3\ge 1$, where the score vector for $\cor_3$ is $(s^{3}_1,s^{3}_2,s^{3}_3)$. Recall that $P_3$ consists of sufficiently large number of copies of $P_3^*$. Therefore, even considering the score difference between alternatives in $P_2$, the score of $2$ is still the  strictly  lowest among $\{1,2,4\}$ in $P^*$ in round $m-2$. This means that  alternative $2$ drops in round $m-2$, and it is easy to check that $1\succ 4$ in $20$  votes in $P_3^*$, which is strictly more than half ($=16$). This means that  $1$ is the $r$ winner if $3$ is eliminated in round $m-3$.

\paragraph{\bf \boldmath If $4$ is eliminated in round $m-3$,} then  $P_3^*|_{\{1,2,3\}}=P_3^{*1}|_{\{1,2,3\}}+P_3^{*2}|_{\{1,2,3\}}$ becomes the following.\begin{align*}
P_3^{*1}|_{\{1,2,3\}}= &\{2\times  [1\succ 2 \succ 3 ], [1 \succ 3\succ 2],   2\times [2 \succ 3\succ 1],  [2\succ 1 \succ 3 ] ,\\
& 2\times [3 \succ 2\succ 1],  [3\succ 1 \succ 2 ],  [1\succ 2\succ  3], [2\succ 3\succ  1] , [3\succ 1\succ  2]\}\\
P_3^{*2}|_{\{1,2,3\}}=  &4\times \ml(\{1,2,3\}) - [3\succ 2\succ 1]- [1\succ 3\succ 2] +[3\succ 1\succ 2]+ [2\succ 3\succ 1] 
\end{align*}
The numbers of times alternatives $\{1,2,3\}$ are ranked in each position in $ P_3^*|_{\{1,2,3\}}$ are as indicated in Table~\ref{tab:MRSE-Par} (b). Again, it is not hard to verify that  alternative $1$ drops in round $m-2$, and  $2$ beats $3$ in the last round to become the $r$ winner in this case.

\paragraph{\bf \boldmath Step 1.5: Prove that $\Par$ is violated at $P^*$.} At a high-level the proof is similar to Step 1.4, and the absent vote is effectively used as a tie breaker between alternatives $3$ and $4$. Recall that $r$ is a refinement of $\cor$ and  it was shown in Step 1.4 that $\cor(P^*)=\{1,2\}$. Therefore, either $r(P^*)=\{1\}$ or $r(P^*)=\{2\}$. The proof is done in the follow two cases.
\begin{itemize}
\item If $r(P^*)=\{1\}$, then we let 
$$R_{r} = [4\succ 2\succ 1\succ 3\succ\others],$$
which is a vote in $P_3^2$. Then in $(P^*\setminus\{R_r\})$, alternative $4$ is eliminated in round $m-3$, and following a similar reasoning as in Step 1.4, we have $r(P^*\setminus\{R_r\})=\{2\}$. Notice that $2\succ_{R_r} 1$, which means that $\Par$ is violated at $P^*$.

\item If $r(P^*)=\{2\}$, then we let 
$$R_{r} = [3\succ 1\succ 2 \succ 4\succ\others],$$
which is a vote in $P_3^2$. Then in $(P^*\setminus\{R_r\})$, alternative $3$ is eliminated in round $m-3$, and following a similar reasoning as in Step 1.4, we have $r(P^*\setminus\{R_r\})=\{1\}$. Notice that $1\succ_{R_r} 2$, which means that $\Par$ is violated at $P^*$.
\end{itemize}

\paragraph{\bf \boldmath Step 1.6: Construct an $n$-profile $P_r$.}  The intuition behind the construction is the following. $P_r$ consists of three parts: $P_r^1$, $P_r^2$, and $P_r^3$. $P_r^1$ consists of multiple copies of $P^*$ defined in Steps 1.1-1.3 above, which is used to guarantee that 
$\Par$ is violated at $P_r$ and the score difference between any pair of alternatives is sufficiently large so that votes in $P_r^3$ does not affect the execution of $r$. $P_r^2$ consists of multiple copies of $\ml(\ma)$. $P_r^3$ consists of no more than $m!-1$ votes, and $|P_r^3|$ is an even number.

\paragraph{\bf \boldmath Define $P_r^1$.} To guarantee that $|P_r^3|$ is even, the definition of $P_r^1$ depends on the parity of $n$. Recall that $P^*$ consists of an odd number of votes. When $2\mid n$, we let 
$$P_r^1 = m!\left(s^{3}_1-s^{3}_3\right)\times P^*$$
When $2\nmid n$, we let 
$$P_r^1 = \left(m!\left(s^{3}_1-s^{3}_3\right)+1\right)\times P^*$$
\paragraph{\bf \boldmath Define $P_r^2$.} Let $n_1=|P_r^1|$. $P_r^2$ consists of as many copies of $\ml(\ma)$ as possible, i.e.
$$P_r^2 = \left\lfloor \frac{n-n_1}{m!} \right\rfloor \times \ml(\ma)$$
\paragraph{\bf \boldmath Define $P_r^3$.} $P_r^3$ consists of multiple copies of pairs of rankings  defined as follows.
$$P_r^3 = \left(\frac{n-n_1 - |P_r^2|}{2}\right)\times \{[1\succ 2\succ 3\succ 4\succ \others], [2\succ 1\succ 4\succ 3\succ \others]\}$$
It is not hard to verify that $P_r=P_r^1+P_r^2+P_r^3$ share the same properties as $P^*$: $\cor(P_r) = \{1,2\}$;  if $[4\succ 2\succ 1\succ 3\succ\others]$ is removed, then $2$ is the unique winner; and if $[3\succ 1\succ 2 \succ 4\succ\others]$  is removed, then $1$ is the unique winner. This means that $\Par$ is violated at $P_r$.

\paragraph{\bf \boldmath Step 2: define a polyhedron $\ppoly{\cor}$ to represent profiles that satisfy Condition~\ref{cond:Par-MRSE}.} 
%
To define $\ppoly{\cor}$, we recall from Definition~\ref{dfn:MRSE-GISR} that  
for any $a,b$, any $B\subseteq \ma\setminus\{a,b\}$, and any profile $P$, $\scorediff{B,a,b}\cdot \hist(P)$ is the difference between the $\cor_{m-|B|}$ score of $a$ and the $\cor_{m-|B|}$ score of $b$ in $P|_{\ma\setminus B}$.  We are now ready to define $\ppoly{\cor}$ whose $\ba$ matrix has five parts that correspond to  Condition~\ref{cond:Par-MRSE} (1)--(5). Condition~\ref{cond:Par-MRSE} (6) will be incorporated in the $\vbb$ vector of $\ppoly{\cor}$.
\begin{dfn}[\boldmath $\ppoly{\cor}$]
\label{dfn:H-r}
Given $\cor =(\cor_2,\ldots,\cor_{m})$,   we let  
 $\pba{\cor}=\left[\begin{array}{l}\pba{(1)}\\ \pba{(2)}\\ \pba{(3)}\\ \pba{(4)}\\ \pba{(5)}\end{array}\right]$, where 
\begin{itemize}
\item $\pba{(1)}$: for every $1\le i\le m-4$ and every $j\in\ma\setminus\{i+4\}$, $\pba{(1)}$ has a row $\scorediff{\{5,\ldots, i+3\},i+4,j}$.
\item $\pba{(2)}$, $\pba{(3)}$, and $\pba{(4)}$ are defined as follows.
$$\pba{(2)} = \left[\begin{array}{l}\scorediff{\{5,\ldots,m\}, 2,1}\\ \scorediff{\{5,\ldots,m\}, 3,2}\\ \scorediff{\{5,\ldots,m\}, 4,3}\\ \scorediff{\{5,\ldots,m\}, 3,4}\end{array}\right], 
\pba{(3)} = \left[\begin{array}{l}\scorediff{\{3,5,\ldots,m\}, 4,1}\\ \scorediff{\{3,5,\ldots,m\}, 2,4} \\ \scorediff{\{2,3,5,\ldots,m\}, 4,1}\end{array}\right],
\pba{(4)} = \left[\begin{array}{l}\scorediff{\{4,5,\ldots,m\}, 3,2}\\ \scorediff{\{4,5,\ldots,m\}, 1,3}\\ \scorediff{\{1,4,5,\ldots,m\}, 3,2}\end{array}\right]$$
\item $\pba{(5)}$ consists of two rows defined as follows.
$$\pba{(5)} = \left[\begin{array}{l}-\hist(4\succ 2\succ 1\succ 3\succ \others)\\ -\hist(3\succ 1\succ 2\succ 4\succ \others)\end{array}\right]$$
\end{itemize}
 
\begin{align*}
\text{Let \hspace{5mm}}&\pvbb{\cor} = [\underbrace{\pvbb{(1)}}_{\text{for }\pba{(1)}},\underbrace{(s_4^4-s_1^4-1,s_4^4-s_1^4-1,0,0)}_{\text{for }\pba{(2)}},  \underbrace{(s_3^3-s_1^3-1,s_3^3-s_1^3-1,s_2^2-s_1^2-1)}_{\text{for }\pba{(3)}}, \\
&\hspace{20mm}\underbrace{(s_3^3-s_1^3-1,s_3^3-s_1^3-1,s_2^2-s_1^2-1)}_{\text{for }\pba{(4)}}, \underbrace{(-1,-1)}_{\text{for }\pba{(5)}}],
\end{align*}
where for every $1\le i\le m-4$ and every $j\in\ma\setminus\{i+4\}$, $\pvbb{(1)}$ contains a row $s_{m+1-i}^{m+1-i}-s^{m+1-i}_{1}-1$. Let 
\begin{align*}
 \ppoly{\cor} =\left \{\vec x\in {\mathbb R}^{m!}: \pba{\cor}\cdot \invert{\vec x}\le \invert{\pvbb{\cor}}\right \}.
 \end{align*}
\end{dfn}

\paragraph{\bf \boldmath Step 3: Apply Lemma~\ref{lem:sPar-GSR} and~\cite[Theorem~2]{Xia2021:How-Likely}.} We first prove the following properties of $\ppoly{\cor}$.
\begin{claim}[\bf \boldmath Properties of  $\ppoly{\cor}$]
\label{claim:H-MRS}
Given any integer MRSE rule $\cor$, 
\begin{enumerate}[label=(\roman*)]
\item for any integral profile $P$, if $\hist(P)\in \ppoly{\cor}$ then $\sat{\Par}(r,P)=0$;
\item $\piuni\in \ppolyz{\cor}$; 
\item  $\dim(\ppolyz{\cor})= m!-1$.
\end{enumerate}
\end{claim}
\begin{proof}
Part (i) follows after  a similar reasoning as in Step 1 of the proof of Theorem~\ref{thm:sPar-MRSE}. To prove Part (ii), notice that for any $B\subseteq \ma$  and $a,b\in (\ma\setminus B)$, we have $\scorediff{B,a,b}\cdot \vec 1 = 0$. Also notice that for any $R\in\ml(\ma)$ we have $-\hist(R)\cdot \vec 1 =-1<0$. Therefore, $\pba{\cor}\cdot\invert{\vec 1} \le  \invert{\vec 0}$, which means that $\piuni\in \ppolyz{\cor}$. To prove Part (iii), notice that $\pba{\cor}\cdot\invert{\vec x} \le \invert{\vec 0}$ contains one equality in $\pba{(2)}$, i.e.
\begin{equation}
\label{equ:34}
\scorediff{\{5,\ldots,m\},3,4}\cdot\invert{\vec x} = 0
\end{equation}
This means that $\dim(\ppolyz{\cor})\le  m!-1$. Recall that $P_r$ is the $n$-profile defined in Step 1 that satisfies Condition~\ref{cond:Par-MRSE}. Notice that  $\hist(P_r)$ is an inner point of $\ppolyz{\cor}$ in the sense that all  inequalities in $\pba{\cor}\cdot\invert{\vec x}\le \invert{\vec 0}$ except Equation (\ref{equ:34}) are strict, which means that $\dim(\ppolyz{\cor})\ge  m!-1$. This proves Claim~\ref{claim:H-MRS}.
\end{proof}
Because of the existence of $P_r$ defined in Step 1, and Claim~\ref{claim:H-MRS} (i) and (ii), the $1$ case and the VL case of  Lemma~\ref{lem:sPar-GSR} do not hold for any sufficiently large $n$. Therefore, it follows from the L case of Lemma~\ref{lem:sPar-GSR} that $\satmin{ \Par}{\Pi}(r,n )$ is at least $1-O(n^{-0.5})$, because $\ell_n\ge 1$. It remains to show that  $\satmin{ \Par}{\Pi}(r,n )$ is upper-bounded by $1-\Omega(n^{-0.5})$. We have the following calculations.
\begin{align*}
1- \satmin{ \Par}{\Pi}(r,n) = &\sup_{\vec\pi\in\Pi^n}\Pr\nolimits_{P\sim\vec\pi}(\sat{\Par}(r,P)=0)\\
\ge & \sup_{\vec\pi\in\Pi^n}\Pr\nolimits_{P\sim\vec\pi}(\hist(P)\in\ppoly{\cor}) &\text{Claim~\ref{claim:H-MRS} (i)}\\
 = & \Theta(n^{-0.5})&\text{Claim~\ref{claim:H-MRS} (ii), (iii), and~\cite[Theorem~2]{Xia2021:How-Likely}}
\end{align*}
The last equation follows after applying the sup part of~\cite[Theorem~2]{Xia2021:How-Likely} to $\ppoly{\cor}$. More concretely, recall that in Step 1 above we have constructed an $n$-profile $P_r$ for any sufficiently large $n$ and it is not hard to verify that $\hist(P_r)\in \ppoly{\cor}$, which means that $\ppoly{\cor}$ is active at any sufficiently large $n$. Claim~\ref{claim:H-MRS} (ii) implies that the polynomial case of~\cite[Theorem~2]{Xia2021:How-Likely} holds, and Claim~\ref{claim:H-MRS} (iii) implies that $\alpha_n= m!-1$ for $\ppoly{\cor}$.

This proves Theorem~\ref{thm:sPar-MRSE}.
\end{proof}

\subsection{Proof of Theorem~\ref{thm:sPar-Cond-Pos}}
\label{app:proof-thm:sPar-Cond-Pos}

\begin{thm}[\bf Smoothed $\Par$: Condorcetified Integer Positional Scoring Rules]
\label{thm:sPar-Cond-Pos}
{
Given $m\ge 4$, an integer positional irresolute scoring rule $\cor_{\vec s}$, any Condocetified positional scoring rule $\Condorcet{\vec s}$ that is a refinement of $\iCondorcet{ \vec s}$, and any strictly positive and closed $\Pi$ over $\ml(\ma)$ with $\piuni\in \conv(\Pi)$, there exists $N\in\mathbb N$ such that for every $n\ge N$,  
$$\satmin{ \Par}{\Pi}(\Condorcet{\vec s},n ) = 1-  \Theta({\frac{1}{\sqrt n}})$$
}
\end{thm}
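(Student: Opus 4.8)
The plan is to apply the int-GSR characterization of smoothed $\Par$ (Lemma~\ref{lem:sPar-GSR}) to $\Condorcet{\vec s}$, following the template of the proofs of Theorems~\ref{thm:sPar-mm-rp-sch}--\ref{thm:sPar-MRSE}. Fix the GSR representation of $\Condorcet{\vec s}$ whose hyperplane set $\vH$ consists of the pairwise-difference vectors $\pair_{a,b}$ (one per ordered pair), which detect whether a Condorcet winner exists, together with the positional score-difference vectors $\vec s(\cdot,a)-\vec s(\cdot,b)$ (one per pair), which read off the $\cor_{\vec s}$-co-winners of the whole profile; the decoding map returns the Condorcet winner when the majority signs determine one, and otherwise the fixed tie-break of the $\cor_{\vec s}$-co-winner set. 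Every hyperplane in $\vH$ evaluates to $0$ at $\piuni$, so $\signs{\vH}(\piuni)=\vec 0$, hence $\vec t\unlhd\signs{\vH}(\piuni)$ for \emph{every} signature $\vec t$. Thus it suffices to prove, for all sufficiently large $n$: (a) $\violation{\Par}{\Condorcet{\vec s}}{n}\ne\emptyset$, so the $1$-case of Lemma~\ref{lem:sPar-GSR} fails; and (b) some $\vec t\in\violation{\Par}{\Condorcet{\vec s}}{n}$ has $\dim(\ppolyz{\vec t})=m!-1$, which (since $\piuni\in\conv(\Pi)$ and $\vec t\unlhd\signs{\vH}(\piuni)$) rules out the VL-case and forces $\ell_n\le 1$. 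The matching bound $\ell_n\ge 1$ is automatic: any $\vec t=\signs{\vH}(P)\oplus\signs{\vH}(P\setminus\{R\})$ witnessing a violation of $\Par$ has $\Condorcet{\vec s}(P)\ne\Condorcet{\vec s}(P\setminus\{R\})$, hence $\signs{\vH}(P)\ne\signs{\vH}(P\setminus\{R\})$, so $\vec t$ has a zero coordinate and $\ppolyz{\vec t}$ lies in a hyperplane. Lemma~\ref{lem:sPar-GSR} then lands in its third case with $\ell_n=1$, giving $\satmin{\Par}{\Pi}(\Condorcet{\vec s},n)=1-\Theta(n^{-1/2})$.

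The heart of the argument is an explicit construction, for each sufficiently large $n$, of an $n$-profile $P$ and a vote $R$ realizing a ``near-generic'' no-show paradox, split by parity. For odd $n$ I would arrange that alternative $1$ is the Condorcet winner of $P$ with $w_P(1,3)=1$ and $|w_P(a,b)|\ge 3$ for every other pair; that $2$ is the unique $\cor_{\vec s}$-winner of $P$, leading every other alternative in $\cor_{\vec s}$-score by more than $s_1-s_m$; and that $P$ contains $R=[2\succ 1\succ 3\succ 4\succ\others]$ with $\others=5\succ\cdots\succ m$. Then $\Condorcet{\vec s}(P)=\{1\}$, and removing $R$ lowers $w(1,3)$ to $0$ (killing the Condorcet winner, while every other margin drops only to magnitude $\ge 2$ and no new Condorcet winner arises because every $a\ne 1$ already lost to $1$) and perturbs the $\cor_{\vec s}$-scores by less than $s_1-s_m$, so $\Condorcet{\vec s}(P\setminus\{R\})=\cor_{\vec s}(P\setminus\{R\})=\{2\}$; since $2\succ_R 1$, $\Par$ is violated at $P$. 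For even $n$ I would instead arrange that $P$ has \emph{no} Condorcet winner with $w_P(1,3)=0$, $|w_P(a,b)|\ge 2$ for every other pair, and $1$ beating every $a\notin\{1,3\}$ by margin $\ge 2$; that $2$ is the unique $\cor_{\vec s}$-winner of $P$ (so $\Condorcet{\vec s}(P)=\{2\}$); and that $P$ contains $R=[3\succ 1\succ 2\succ 4\succ\others]$. Then removing $R$ raises $w(1,3)$ to $1$, making $1$ the Condorcet winner of $P\setminus\{R\}$, so $\Condorcet{\vec s}(P\setminus\{R\})=\{1\}$, and since $1\succ_R 2$, $\Par$ is again violated.

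To build these profiles I would, as in Claim~\ref{claim:CW-score-winner-diff} and the gadgets of Theorem~\ref{thm:sPar-MRSE}'s proof, start from a constant-size core simultaneously realizing the prescribed majority pattern around $\{1,3\}$, making $2$ the (robustly leading) $\cor_{\vec s}$-winner, and containing $R$: when $\vec s$ is Borda-type the positional scores are a linear function of the weighted majority graph, so McGarvey's theorem applied to the target WMG already yields such a core (as in Theorems~\ref{thm:sPar-mm-rp-sch}--\ref{thm:sPar-copeland}); for general $\vec s$ one additionally appends a majority-neutral ``score-boosting'' gadget built from cyclic balanced sub-profiles to push $2$'s score ahead without touching any majority margin. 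One then pads with $\lfloor\cdot\rfloor$ copies of $\ml(\ma)$ (contributing $0$ to every margin and leaving all score differences unchanged) plus a bounded even-sized balanced remainder of pairs $\{R',\rev{R'}\}$ (majority-neutral, only an $O_m(1)$ score perturbation) to reach exactly $n$ with the right parity; choosing the core's margins $\ge 3$ (resp.\ $\ge 2$) and its score gaps large relative to these $O_m(1)$ perturbations preserves all the promised inequalities.

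Finally I would put $\vec t_1=\signs{\vH}(P)$, $\vec t_2=\signs{\vH}(P\setminus\{R\})$, $\vec t=\vec t_1\oplus\vec t_2\in\violation{\Par}{\Condorcet{\vec s}}{n}$, and note that $\vec t_1$ and $\vec t_2$ agree on every coordinate of $\vH$ except the $\pair_{1,3}$ one: all other majority margins and all score differences have magnitude large enough to keep their sign after one vote deletion, while $\pair_{1,3}$ is $+$ in $\vec t_1$ and $0$ in $\vec t_2$ (odd case), or $0$ in $\vec t_1$ and $+$ in $\vec t_2$ (even case). Hence $\vec t$ has exactly one zero coordinate, $\ppolyz{\vec t}$ is the cone cut out by the single essential equality $\pair_{1,3}\cdot\vec x=0$ together with strict-type inequalities, and $\hist(P\setminus\{R\})$ (odd) or $\hist(P)$ (even) satisfies that equality while making all the other constraints strict, so $\dim(\ppolyz{\vec t})=m!-1$; with (a), $\ell_n\ge 1$, and Lemma~\ref{lem:sPar-GSR}, this gives $\satmin{\Par}{\Pi}(\Condorcet{\vec s},n)=1-\Theta(n^{-1/2})$ for all $n\ge N$. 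I expect the main obstacle to be exactly the core-profile construction: realizing with one fixed-size gadget a configuration in which \emph{exactly one} majority edge sits at margin $\pm 1$ (or $0$) while every other margin and every positional score gap is robustly large, the $\cor_{\vec s}$-winner is a prescribed alternative preferred by $R$ to the Condorcet completion's output, and $R$ is present — uniformly over all large $n$ of each parity. The tension between pinning one margin small and keeping all other margins and all scores large (and the Borda-versus-general-$\vec s$ case split it forces) is the delicate point; the rest is a routine application of Lemma~\ref{lem:sPar-GSR} in the style of the companion theorems.
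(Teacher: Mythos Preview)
Your proposal is correct and follows essentially the same approach as the paper: the same parity split (even $n$: no Condorcet winner with $w_P(1,3)=0$ and $2$ the robust $\cor_{\vec s}$-winner, remove $R=[3\succ 1\succ 2\succ\others]$ to create the Condorcet winner $1$; odd $n$: $1$ is the Condorcet winner with $w_P(1,3)=1$, remove a $[2\succ 1\succ 3\succ\others]$-type vote to destroy it), the same single essential equality $\pair_{1,3}\cdot\vec x=0$, and the same conclusion $\ell_n=1$. The paper's proof is more explicit on the core construction (it splits into plurality versus non-plurality $\vec s$ and writes down concrete profiles rather than invoking McGarvey plus a ``score-boosting'' gadget), and it packages the $1-\Omega(n^{-1/2})$ direction via a dedicated polyhedron $\ppoly{\iCondorcet{\vec s}}$ to which \cite[Theorem~2]{Xia2021:How-Likely} is applied directly, whereas you route everything through the $\ell_n$ parameter of Lemma~\ref{lem:sPar-GSR}; these are cosmetic differences.
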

\begin{proof}
The proof follows the same logic in the proof of Theorem~\ref{thm:sPar-MRSE}. We first prove the theorem for even $n$ then show how to extend the proof to odd $n$'s.

\paragraph{\bf \boldmath Intuition for $2\mid n$.} Let $\vec s = (s_1,\ldots,s_m)$. We first identify a set of sufficient conditions for $\Par$ to be violated.  
\begin{cond}[\bf Sufficient conditions for the violation of $\Par$]
\label{cond:Par-Cond-Pos}
Given a Condorcetified irresolute integer positional scoring rule $\iCondorcet{ \vec s}$, $P$ satisfies the following conditions.
\begin{enumerate}[label=(\arabic*)]
\item $\iCondorcet{ \vec s}(P) = \{2\}$, and the score of $2$ is higher than the score of any other alternative by at least $s_1-s_m+1$.
\item Alternative $1$ is a weak Condorcet winner, $w_{P}(1,3)=0$, and for every $i\in\ma\setminus\{1,3\}$, $w_{P}(1,i)\ge 2$.
\item $P$ contains at least one vote of $[3\succ 1\succ 2\succ \others]$.
\end{enumerate}
\end{cond}
Recall that $\Condorcet{ \vec s}$ is a refinement of $\iCondorcet{ \vec s}$ and due to Condition~\ref{cond:Par-Cond-Pos} (2), $P$ does not contain a Condorcet winner. Therefore, according to  Condition~\ref{cond:Par-Cond-Pos} (1), we have $\Condorcet{ \vec s}=\{2\}$. Any voter whose preferences are $[3\succ 1\succ 2\succ \others]$ has incentive to abstain from voting, because the voter prefers $1$ to $2$, and $\{1\}$ is the Condorcet winner in $P-[3\succ 1\succ 2\succ \others]$, which means that 
$$\Condorcet{ \vec s}(P-[3\succ 1\succ 2\succ \others])=\{1\}$$
This means that $\sat{\Par}(\Condorcet{ \vec s},P)=0$ for any profile $P$ that satisfies Condition~\ref{cond:Par-Cond-Pos}. The rest of the proof proceeds as follows. 
In Step 1, for any $n$ that is sufficiently large, we construct an $n$-profile $P_{\vec s}$ that satisfies Condition~\ref{cond:Par-Cond-Pos}. Then in Step 2, we formally  define $\ppoly{\iCondorcet{ \vec s}}$ to represent profile that satisfy Condition~\ref{cond:Par-Cond-Pos}.  Finally, in Step 3 we formally prove properties about $\ppoly{\iCondorcet{ \vec s}}$ and apply Lemma~\ref{lem:sPar-GSR} and~\cite[Theorem~2]{Xia2021:How-Likely} to prove Theorem~\ref{thm:sPar-MRSE}.  

\paragraph{\bf \boldmath Step 1  for $2\mid n$: define $P_{\vec s}$.} The construction is similar to the construction in the proof of Claim~\ref{claim:CW-score-winner-diff}, which is done for the following two cases: $\cor_{\vec s}$ is the plurality rule and  $\cor_{\vec s}$ is not the plurality rule.

\begin{itemize}
\item {\bf\boldmath  When $\cor_{\vec s}$ is the plurality rule,} i.e.~$s_2 = s_m$, we let 
\begin{align*}
P_{\vec s} =  \left(\frac n2-6\right)\times [2\succ 1\succ 3\succ \others]+  4\times [2\succ 3\succ 1\succ \others]&\\
+  \left(\frac n2-6\right)\times [3\succ 1\succ 2\succ \others]+ 6\times [1\succ 2\succ 3\succ \others]&
\end{align*}
It is not hard to verify that $P_{\vec s}$ satisfies Condition~\ref{cond:Par-Cond-Pos} for any even number $n\ge 28$.

\item {\bf\boldmath  When $\cor_{\vec s}$ is not the plurality rule,} i.e., $s_2>s_m$, like Step~1 in the proof of Theorem~\ref{thm:sPar-MRSE}, we first construct a profile $P^*$ that consists of a constant number of votes and satisfies Condition~\ref{cond:Par-Cond-Pos}, then extend it to arbitrary odd number $n$. 
Let $2\le k\le m-1$ denote the smallest number such that $s_k>s_{k+1}$. Let $A_1 = [4\succ \cdots\succ k+1]$ and $A_2 = [k+2\succ \cdots\succ m]$, and let $P^*=P_1^*+P_2^*$, where $P_1^*$ is the following $10$-profile that is used to guarantee~Condition~\ref{cond:Par-Cond-Pos} (2) and (3).
\begin{align*}
P_1^* = \{4\times [1\succ 2\succ A_1\succ 3\succ A_2] +3\times [2\succ 3\succ A_1\succ1\succ A_2] &\\
+   2\times [3\succ 1\succ A_1\succ 2\succ A_2] +[2\succ 1 \succ A_1\succ 3\succ A_2] &\}
\end{align*}
And let $P_2^*$ denote the following $36(m-3)!$-profile, which is used to guarantee that $2$ is the unique winner under $P^*$, i.e., Condition~\ref{cond:Par-Cond-Pos} (1).
$$P_2^* = 6\times \{[R_1\succ R_2]:\forall R_1\in \ml(\{1,2,3\}), R_2\in \ml(\{4,\ldots,m\}), \}$$
It is not hard to verify that the following observations hold for $P_1^*$.
\begin{itemize}
\item  $1$ is the Condorcet winner, $w_{P_1^*}(1,3)=0$, and for any $i\in\ma\setminus\{1,3\}$, we have $w_{P_1^*}(1,i)\ge 2$. 
\item The total score of $1$ under $P_1^*$ is $4s_1+3s_2+3s_{k+1}$, the  total score of $2$ under $P_1^*$ is $4s_1+4s_2+2s_{k+1}$, and   the total score of $3$ under $P_1^*$ is $2s_1+3s_2+5s_{k+1}$. Recall that we have assumed that $s_2>s_{k+1}$. Therefore, 
$$4s_1+4s_2+2s_{k+1}>4s_1+3s_2+3s_{k+1}> 2s_1+3s_2+5s_{k+1},$$
which means that the score of $2$ is strictly higher than the scores of $1$ and $3$ in $P_1^*$.
\end{itemize}
Given these observations, it is not hard to verify that $P^*=P_1^*+P_2^*$ satisfies Condition~\ref{cond:Par-Cond-Pos}. Let $P_{\vec s}$ denote as many copies of $P^*$ as possible, plus pairs of rankings $\{[2\succ 1\succ 3\succ \others], [2\succ 3\succ 1\succ \others]\}$. More precisely, let
$$P_{\vec s} = \left\lfloor \frac{n}{|P^*|}\right\rfloor\times P^*  + \left(\frac{n- |P^*|\cdot \lfloor \frac{n}{|P^*|}\rfloor}{2} \right)\times \{[2\succ 1\succ 3\succ \others], [2\succ 3\succ 1\succ \others]\}$$
\end{itemize}
It is not hard to verify that $P_{\vec s}$ satisfies Condition~\ref{cond:Par-Cond-Pos}, which concludes Step 1 for the $2\mid n$ case.

\paragraph{\bf \boldmath Step 2 for $2\mid n$: define a polyhedron $\ppoly{\iCondorcet{\vec s}}$ to represent profiles that satisfy Condition~\ref{cond:Par-Cond-Pos}.} 

\begin{dfn}[\boldmath $\ppoly{\iCondorcet{\vec s}}$]
\label{dfn:H-Cond-scoring}
Given an irresolute integer positional scoring rule $\cor_{\vec s} =(s_1,\ldots,s_m)$,   we let  
 $\pba{\vec s}=\left[\begin{array}{l}\pba{(1)}\\ \pba{(2)}\\ \pba{(3)}\end{array}\right]$, where 
\begin{itemize}
\item $\pba{(1)}$: for every $i\in\ma\setminus\{2\}$, $\pba{(1)}$ contains a row $\score_{i,2}$.
\item $\pba{(2)}$  contains two rows $\pair_{1,3}$ and $\pair_{3,1}$, and for every  $i\in\ma\setminus\{1,3\}$, $\pba{(1)}$ contains a row $\pair_{i,1}$.
\item $\pba{(3)}$ consists of a single row $-\hist(3\succ 1\succ 2\succ \others)$. 
\end{itemize}
 
\begin{align*}
\text{Let \hspace{5mm}}&\pvbb{\vec s} = \left[\underbrace{(s_m-s_1-1)\cdot\vec 1}_{\text{for }\pba{(1)}},\underbrace{(0,0, -2,\ldots,-2)}_{\text{for }\pba{(2)}},  \underbrace{-1}_{\text{for }\pba{(3)}} \right]\\
\text{and \hspace{5mm}}& \ppoly{\vec s} = \left \{\vec x\in {\mathbb R}^{m!}: \pba{\vec s}\cdot \invert{\vec x}\le \invert{\pvbb{\vec s}} \right\}.
\end{align*} 
\end{dfn}

\paragraph{\bf \boldmath Step 3 for $2\mid n$: Apply Lemma~\ref{lem:sPar-GSR} and~\cite[Theorem~2]{Xia2021:How-Likely}.}   We first prove the following properties of $\ppoly{\iCondorcet{\vec s}}$.
\begin{claim}[\bf \boldmath Properties of  $\ppoly{\iCondorcet{\vec s}}$]
\label{claim:H-Condorcetification}
Given any integer positional scoring rule $\vec s$, 
\begin{enumerate}[label=(\roman*)]
\item for any integral profile $P$, if $\hist(P)\in \ppoly{\iCondorcet{\vec s}}$ then $\sat{\Par}(\Condorcet{\vec s},P)=0$;
\item $\piuni\in \ppolyz{\iCondorcet{\vec s}}$; 
\item  $\dim(\ppolyz{\iCondorcet{\vec s}})= m!-1$.
\end{enumerate}
\end{claim}
\begin{proof}
The proof for Part (i) and (ii) are similar to the proof of Claim~\ref{claim:H-MRS}. To prove Part (iii), notice that $\pba{\vec s}\cdot\invert{\vec x} \le \invert{\vec 0}$ contains one equality in $\pba{(2)}$, i.e.
\begin{equation}
\label{equ:13}
\pair_{1,3}\cdot\invert{\vec x} = \invert{0}
\end{equation}
This means that $\dim(\ppolyz{\iCondorcet{\vec s}})\le  m!-1$. Notice that   $\hist(P_{\vec s})$ is an inner point of $\ppolyz{\iCondorcet{\vec s}}$ in the sense that all other inequalities except Equation (\ref{equ:13}) are strict, which means that $\dim(\ppolyz{\iCondorcet{\vec s}})\ge  m!-1$. This proves Claim~\ref{claim:H-Condorcetification}.
\end{proof}
Therefore, we have the following bound.
\begin{align*}
&1- \satmin{ \Par}{\Pi}(\Condorcet{\vec s},n) \\
= &\sup_{\vec\pi\in\Pi^n}\Pr\nolimits_{P\sim\vec\pi}(\sat{\Par}(\Condorcet{\vec s},P)=0)\\
\ge & \sup_{\vec\pi\in\Pi^n}\Pr\nolimits_{P\sim\vec\pi}(\hist(P)\in\ppoly{\iCondorcet{\vec s}}) &\text{Claim~\ref{claim:H-Condorcetification} (i)}\\
 = & \Theta(n^{-0.5})&\text{Claim~\ref{claim:H-Condorcetification} (ii), (iii), and~\cite[Theorem~2]{Xia2021:How-Likely}}
\end{align*}

Consequently, $\satmin{ \Par}{\Pi}(\Condorcet{\vec s},n) = 1-\Omega(n^{-0.5})$. Notice that the $1$ case and VL case Lemma~\ref{lem:sPar-GSR} do not hold because of the existence of $P_{\vec s}$ and Claim~\ref{claim:H-Condorcetification} (ii). Therefore, Theorem~\ref{thm:sPar-Cond-Pos} for the $2\mid n$ case follows after the $1-O(n^{-0.5})$ upper bound proved in Lemma~\ref{lem:sPar-GSR}.

\paragraph{\bf\boldmath Proof for the $2\nmid n$ case.} When $2\nmid n$, we modify the proof as follows. 
\begin{itemize}
\item First, Condition~\ref{cond:Par-Cond-Pos} (2) is replaced by the following condition:

(2$'$):  Alternative $1$ is the Condorcet winner under $P$, $w_{P}(1,3)=1$, and for every $i\in\ma\setminus\{1,3\}$, $w_{P}(1,i)\ge 3$.

\item Second, in Step 1, $P_{\vec s}$ has an additional vote $[2\succ 1\succ 3\succ\others]$.

\item Third, in Step 2 Definition~\ref{dfn:H-Cond-scoring}, the $\pvbb{\vec s}$ components corresponding to $\pba{2}$ is $(1,-1,-3,\ldots,-3)$.
\end{itemize}
A similar claim as Claim~\ref{claim:H-Condorcetification} can be proved for the $2\nmid n$ case. This proves Theorem~\ref{thm:sPar-Cond-Pos}.
\end{proof}

\section{Experimental Results}
\label{app:exp}
We report  satisfaction of $\CC$ and $\Par$ using   simulated data and Preflib linear-order data~\citep{Mattei13:Preflib} under four classes of commonly-used  voting rules studied in this paper, namely positional scoring rules (plurality, Borda,  and veto),  voting rules that satisfy {\sc Condorcet Criterion} (maximin, ranked pairs, Schulze, and Copeland$_{0.5}$),  MRSE (STV), and Condorcetified positional scoring rule (Black's rule).  All experiments were implemented in Python 3 and were run on a MacOS laptop with 3.1 GHz Intel Core i7 CPU and 16 GB memory.

\begin{figure}[htp]
\centering
\begin{tabular}{cc}
\includegraphics[width = 0.45\textwidth]{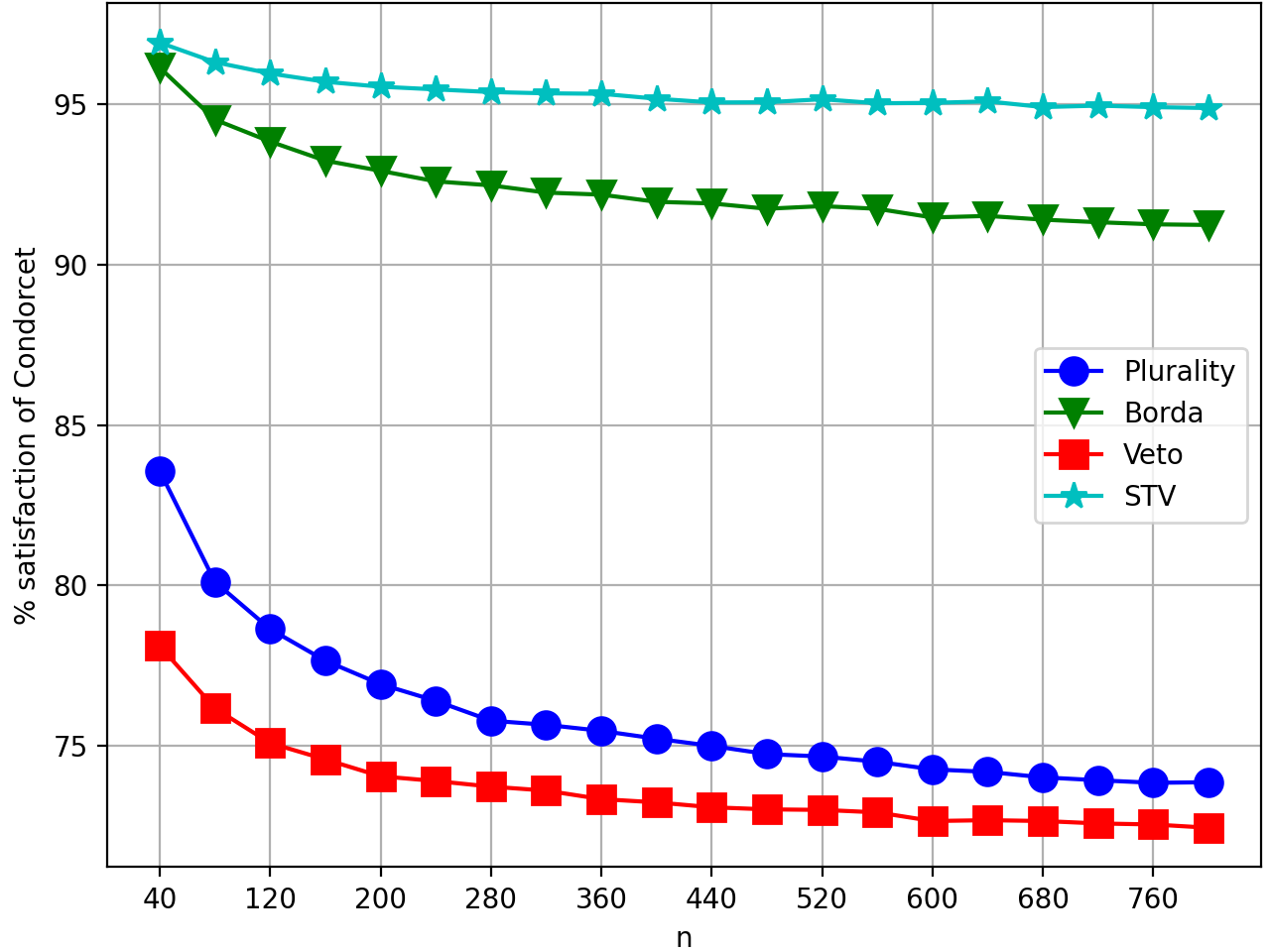}&
\includegraphics[width = 0.45\textwidth]{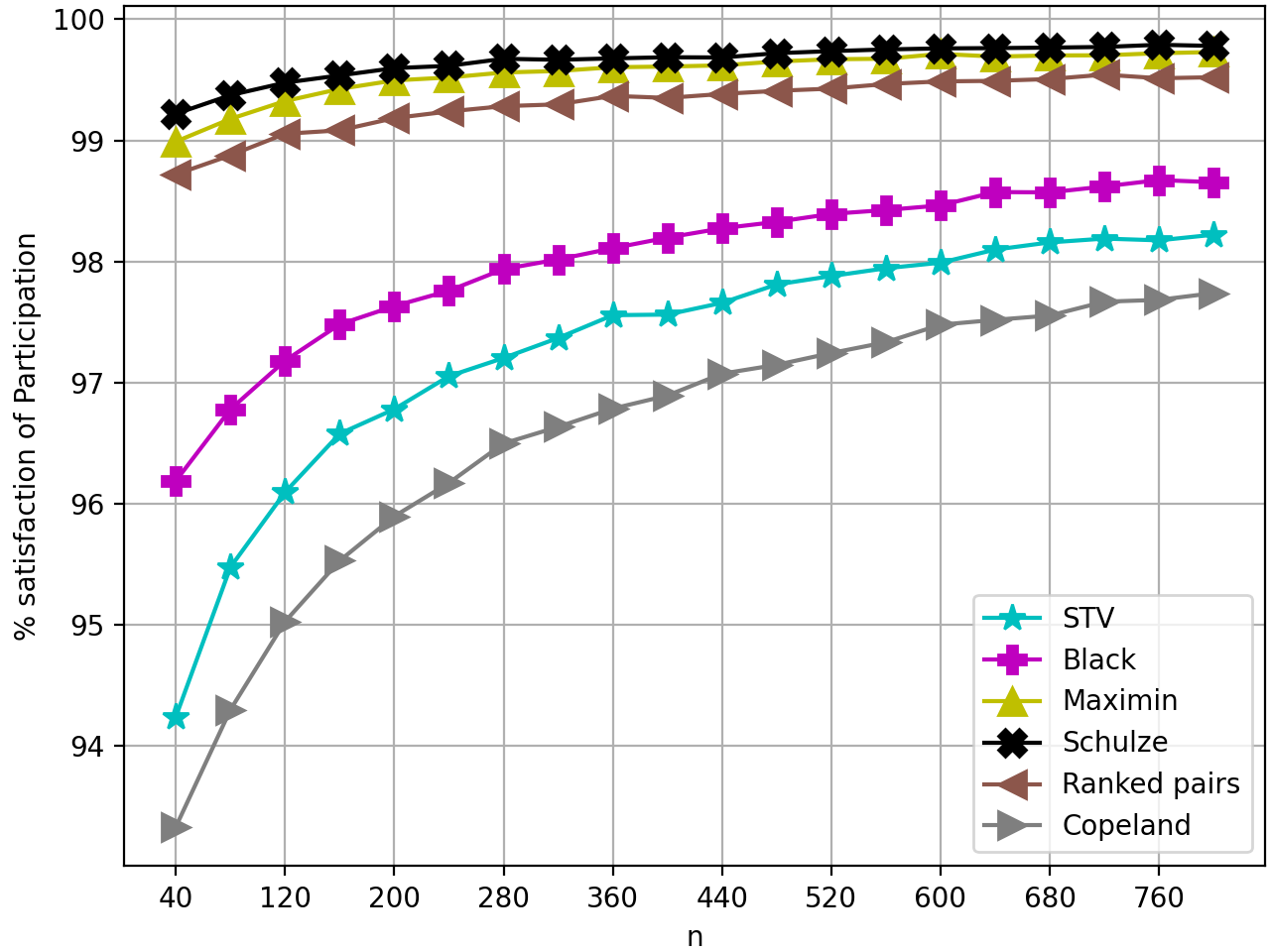}\\
(a) $\CC$. & (b) $\Par$.
\end{tabular}
\caption{\small Satisfaction of $\CC$ and $\Par$  under IC for $m=4$, $n=40$ to $800$, $200000$ trials. \label{fig:exp-sat}}
\end{figure}

\paragraph{\bf Synthetic data.} We  generate profiles of $m=4$ alternatives under IC.\footnote{See~\cite{Brandt2021:Exploring} for theoretical results and extensive simulation studies of $\Par$ under the IAC model.} The number of alternatives $n$ ranges from  $40$ to $800$. In each setting we generate $200000$ profiles. The satisfaction of $\CC$ under plurality, Borda, veto, and STV are presented in Figure~\ref{fig:exp-sat} (a), and the satisfaction of $\Par$ under
STV, maximin, ranked pairs, Schulze, Black, and Copeland$_{0.5}$ are presented in Figure~\ref{fig:exp-sat} (b). Notice that voting rules not  in Figure~\ref{fig:exp-sat} (a) always satisfy $\CC$ and voting rules not  in Figure~\ref{fig:exp-sat} (b) always satisfy $\Par$. 

The  results provide a sanity check for the theoretical results proved in this paper. In particular, Figure~\ref{fig:exp-sat} (a) confirms that the  satisfaction of $\CC$ is $\Theta(1)$ and $1-\Theta(1)$ under positional scoring rules (Theorem~\ref{thm:sCC-scoring}) and STV (Corollary~\ref{Coro:sCC-STV-IC}) w.r.t.~IC. Figure~\ref{fig:exp-sat} (b) confirms that  the satisfaction of $\Par$ is $1-\Theta(n^{-0.5})$ under maximin, ranked pairs, Schulze (Theorem~\ref{thm:sPar-mm-rp-sch}), Copeland$_\alpha$ (Theorem~\ref{thm:sPar-copeland}), STV (Theorem~\ref{thm:sPar-MRSE}),  and Black (Theorem~\ref{thm:sPar-Cond-Pos}).  Figure~\ref{fig:exp-sat-large-n} in Appendix~\ref{app:exp} summarizes results with large  $n$ ($1000$ to $10000$) that further confirm the asymptotic observations described above.

\begin{figure}[htp]
\centering
\begin{tabular}{cc}
\includegraphics[width = 0.45\linewidth]{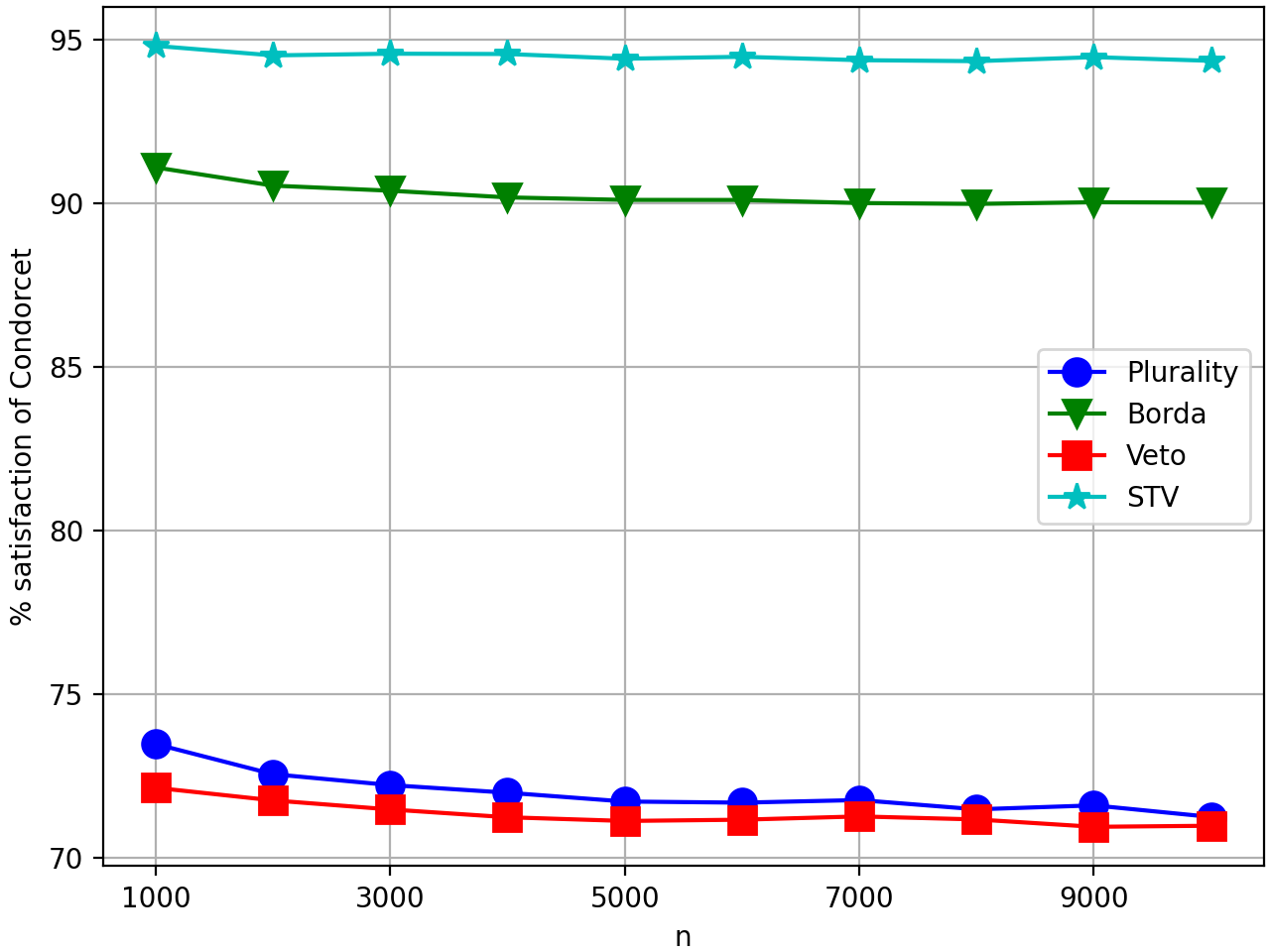}&
\includegraphics[width = 0.45\linewidth]{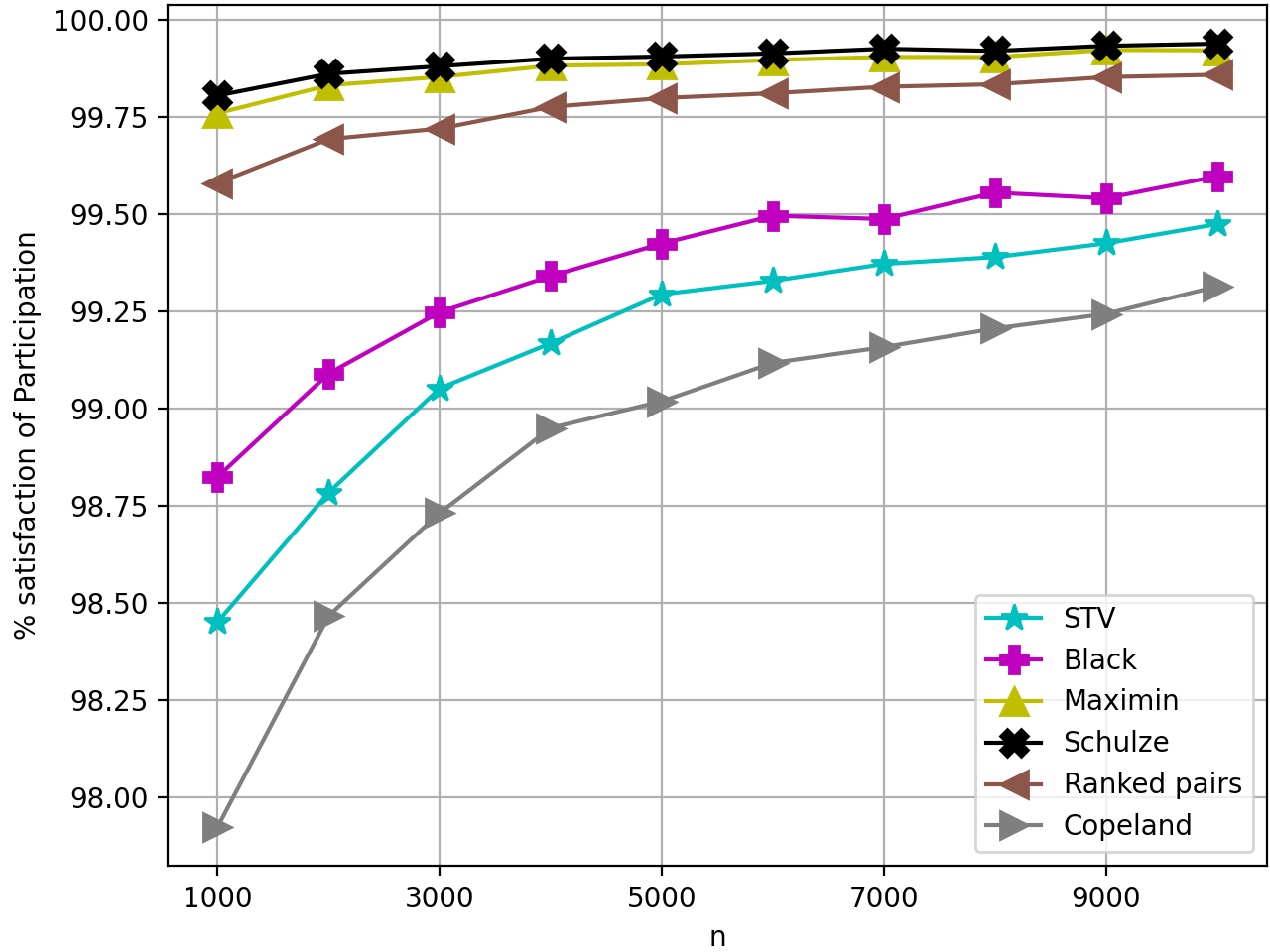}\\
(a) $\CC$. & (b) $\Par$
\end{tabular}
\caption{\small Satisfaction of $\CC$ and $\Par$  under IC for $m=4$, $n=1000$ to $10000$, $200000$ trials. \label{fig:exp-sat-large-n}}
\end{figure}

\paragraph{\bf Preflib data.} We also calculate the satisfaction of $\CC$ and $\Par$ under all voting rules studied in this paper with lexicographic tie-breaking for all 315  Strict Order-Complete Lists (SOC) under election data category  from Preflib~\citep{Mattei13:Preflib}. The results are summarized in Table~\ref{tab:sat-preflib}, which is the bottom part of Table~\ref{tab:summary}. 

\begin{table}[htp]
\centering
\caption{\small Satisfaction of $\CC$ and $\Par$ in 315 Preflib SOC profiles. Some statistics of the data are shown in Figure~\ref{fig:ex-histograms-all}. \label{tab:sat-preflib}}
\resizebox{\textwidth}{!}{
\begin{tabular}{|c|c|c|c|c|c|c|c|c|c|}
\hline
 &\small  Plurality &\small  Borda & \small Veto&\small  STV &\small Black &\small  Maximin&\small  Schulze&\small   Ranked pairs&\small   Copeland$_{0.5}$\\
\hline
$\CC$ & 96.8\% &  92.4\% &  74.2\% & 99.7\% &  100\% &  100\%  &  100\% &  100\% &  100\% \\
\hline $\Par$ & 100\% &  100\% &  100\% & 99.7\% & 99.4\% & 100\%  &  100\% &  100\% &  99.7\% \\
\hline
\end{tabular}
}
\end{table}

\begin{figure}[htp]
\centering
\begin{tabular}{cc}
  \includegraphics[width = 0.5\linewidth]{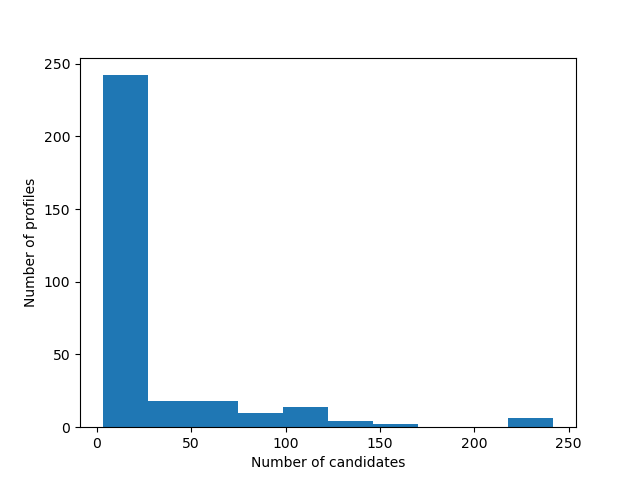} &
    \includegraphics[width = 0.5\linewidth]{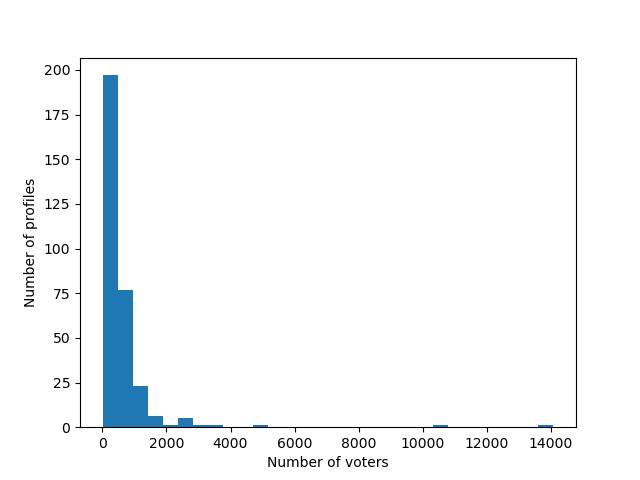} 
\end{tabular}
\caption{\small Histograms of number of candidates and number of voters in the 315 Preflib SOC data studied in this paper. \label{fig:ex-histograms-all}}
\end{figure}

Table~\ref{tab:sat-preflib} delivers the following message, that $\Par$ is less of a concern than $\CC$ in Preflib data---all voting rules have close to $100\%$ satisfaction of $\Par$, while the satisfaction of $\CC$ is much lower for plurality, Borda, and Veto. The most interesting observations are: first, maximin, Schulze, and ranked pairs achieve 100\% satisfaction of $\CC$ and $\Par$ in Preflib data, which is consistent with the belief that Schulze and ranked pairs are superior in satisfying voting axioms, and maximin is doing well in $\Par$ (and indeed, maximin satisfies $\Par$ when $m=3$). Second, STV  does  well in  $\CC$ and $\Par$, though it does not satisfy either in the worst case. Third, veto has poor satisfaction of $\CC$ ($74.2\%$), which is mainly due to the profiles where the number of alternatives is more than the number of voters, so that a Condorcet winner exists and is also a veto co-winner, but loses due to the  tie-breaking mechanism.

\end{document}